\documentclass[A4,10pt]{article}
\usepackage[margin=1in,dvips]{geometry}
\usepackage{graphicx,cancel}
\usepackage{graphics}
\DeclareGraphicsRule{.pstex}{eps}{*}{}
\usepackage{color, soul}
\usepackage{amsmath, amsthm, slashed}
\usepackage{amssymb}
\usepackage{marvosym}
\usepackage{rotating}
\usepackage{mathrsfs}
\usepackage{epsfig}
\usepackage{comment}
\usepackage[affil-it]{authblk}
\usepackage{rotating}
\usepackage{graphicx}
\usepackage{accents}
\usepackage{harmony, slashed}
\usepackage{tikz}
\usetikzlibrary{patterns}
\usepackage{scalerel}

\usepackage{enumerate}
\usepackage{stackengine,wasysym}

\newtheorem{definition}{Definition}[section]
\newtheorem{theorem}{Theorem}[section]

\newtheorem*{conjecture*}{Conjecture}
\newtheorem{question}{Question}
\newtheorem{corollary}{Corollary}[section]
\newtheorem*{theorem*}{Theorem}
\newtheorem*{corollary*}{Corollary}
\newtheorem{proposition}{Proposition}[section]
\newtheorem{lemma}{Lemma}[section]
\newtheorem{remark}{Remark}[section]

\setcounter{introtheorem}{-1}

\hyphenation{Schwarz-schild}
\hyphenation{Pas-qua-lotto}

\DeclareMathAlphabet\mathbfcal{OMS}{cmsy}{b}{n}


\setcounter{tocdepth}{2}


\title{Twisted Self-Similarity and the Einstein Vacuum Equations}

\date\today
\author[1,2]{Yakov Shlapentokh-Rothman}

\affil[1]{\small University of Toronto, Department of Mathematics, 40 St.~George~Street, Toronto, ON, Canada\vskip.2pc \ }
\begin{document}
\affil[2]{\small University of Toronto Mississauga, Department of Mathematical and Computational Sciences, 3359 Mississauga Road, Mississauga, ON, Canada\vskip.2pc \ }

\maketitle
\begin{abstract}In the previous works~\cite{nakedone,nakedinterior} we have introduced a new type of self-similarity for the Einstein vacuum equations characterized by the fact that the homothetic vector field may be spacelike on the past light cone of the singularity. In this work we give a systematic treatment of this new self-similarity. In particular, we provide geometric characterizations of spacetimes admitting the new symmetry and show the existence and uniqueness of formal expansions around the past null cone of the singularity which may be considered analogues of the well-known Fefferman--Graham expansions. In combination with results from~\cite{nakedone} our analysis will show that the twisted self-similar solutions are sufficiently general to describe all possible asymptotic behaviors for spacetimes in the small data regime which are self-similar and whose homothetic vector field is everywhere spacelike on an initial spacelike hypersurface. We present an application of this later fact to the understanding of the global structure of Fefferman--Graham spacetimes and the naked singularities of~\cite{nakedone,nakedinterior}. Lastly, we observe that by an amalgamation of the techniques from~\cite{scaleinvariant,nakedone}, one may associate true solutions to the Einstein vacuum equations to each of our formal expansions in a suitable region of spacetime.
\end{abstract}
\tableofcontents
\section{Introduction}
We say that a $3+1$ dimensional Lorentzian manifold $\left(\mathcal{M},g\right)$ is a \emph{self-similar} solution to the Einstein vacuum equations if 
\begin{equation}\label{eve}
{\rm Ric}\left(g\right) = 0,
\end{equation}
and if it possesses a \emph{homothetic} vector field $K$, that is,
\begin{equation}\label{3okpo982}
\mathcal{L}_Kg = 2g.
\end{equation}
It is moreover straightforward to adapt this definition to suitable Einstein matter systems.

Originally motivated by the longtime use of self-similar variables in fluid mechanics, a substantial literature developed, initiated by Cahill and Taub~\cite{cahilltaub}, concerning self-similar solutions to Einstein matter systems with various additional symmetries imposed, for example, spherical symmetry. Two of the most important themes that emerged are the construction of naked singularities: see, for example,~\cite{dustnaked,nakedfluidoripiran,nakedfluidjoshidwivedi,nakedeulereinstin} for Einstein fluid systems and~\cite{ChristNaked} for the Einstein-scalar field system, and the connection of self-similar solutions with \emph{critical phenomena},\footnote{It should be noted, however, that in many cases it is \underline{discrete} self-similarity which is truly relevant for critical phenomena.} see~\cite{chop,critrotate,critsurv}. (So-called scale invariant spacetimes also play an important role in Christodoulou's theory of BV-solutions~\cite{ChristBV}.)

In this paper, however, we will be interested in the study of self-similar solutions to the Einstein vacuum equations with no additional symmetry imposed. Though some general aspects of these spacetimes had been previously studied (see, for example,~\cite{eardleyss}) the first work which provided an avenue for a systematic study of a class of self-similar solutions (with no additional symmetry) was the Ambient Metric construction of Fefferman--Graham~\cite{FG1,FG2}. An important geometric property of these Ambient Metrics is that they possess a dilation invariant null hypersurface whose normal vector coincides with the homothetic vector field $K$. For the purposes of this paper, the key result from the work of Fefferman--Graham is the existence and uniqueness of suitable formal power series expansions corresponding to these metrics. (Later, in the work~\cite{scaleinvariant}, we showed that, in appropriate regions of the spacetime, these power series expansions correspond to true solutions of the Einstein vacuum equations.)

While the Fefferman--Graham theory has been very successful, there is no reason to expect the vector field $K$ of a self-similar solution to satisfy the property that $\left\{|K| = 0\right\}$ is a null hypersurface.\footnote{For example, one may check that any vacuum Kasner spacetime  does not satisfy this property.} A  less rigid assumption is to instead assume the existence of a null hypersurface where $K$ is only required to be \emph{tangent}, but may be either null or spacelike depending on the precise location.\footnote{In this paper we will not systematically explore necessary and sufficient conditions for a given self-similar spacetime to admit such a hypersurface. However, the reader may check that one sufficient condition is for there to exist a point $P$ in a suitably regular conformal compactification of (a portion of) the spacetime such that there is a non-trivial past null cone of $P$ and so that the homothetic vector field $K$ extends to $P$  in a sufficiently regular fashion where it then vanishes.} We will use the term \emph{twisted self-similarity} to refer to this new type of self-similarity, where the ``twisting'' refers to the fact that the flow of the vector field $K$ is allowed to wind around the null generators of the null hypersurface.

Particular instances of this new type of self-similarity have played an important role in the works~\cite{nakedone,nakedinterior} where we gave the first construction of naked singularities for the Einstein vacuum equations. The goal of the current paper is to give a systematic treatment of this new class of self-similar solutions in the \underline{small data regime}. In particular we will carry out the following:
\begin{itemize}
	\item Provide explicit geometric characterizations of spacetimes admitting the new twisted self-similarity.
	\item Establish the existence and uniqueness of suitable formal expansions for twisted self-similar solutions.
\end{itemize}
Having carried out these two steps we will furthermore observe that, in appropriate regions of spacetime, the arguments from~\cite{scaleinvariant,nakedone} \emph{mutatis mutandis} show that there always exist true solutions to the Einstein vacuum equations corresponding to our formal expansions and that our formal expansions describe the possible asymptotic behaviors in the entire small data regime for solutions arising from spacelike self-similar data.

 As a particular application of this final point above, we may use these twisted self-similar expansions to describe certain aspects of the global structure of Fefferman--Graham spacetimes (in the small data regime and in $3+1$ dimensions) and also the naked singularity exteriors constructed in~\cite{nakedone}.

Lastly, we note that for simplicity of the presentation and because these cover the primary situations of interest, we have restricted our discussions in this paper to spacetimes in $3+1$ dimensions and which are diffeomorphic to $\mathbb{R}^2 \times \mathbb{S}^2$. However, neither of these dimension/topological restrictions are in fact necessary, and, in view of the potential applications to understanding the global structure of Fefferman--Graham spacetimes, it may be of interest to develop the analogues of our twisted self-similar theory in higher dimensions for spacetimes with topology $\mathbb{R}^2\times \mathcal{S}$, for $\mathcal{S}$ a $d$-dimensional compact manifold ($d \geq 2$). 

In the remaining sections of the introduction we will give an overview of the main results established in the paper. Throughout these remaining sections, when we discuss various definitions and results in an informal fashion we will also indicate where later in the paper to find a precise statement.
\subsection{Canonical Gauges}
In this introductory section we will skip the precise definition that we shall use for a twisted self-similar spacetime (see Definition~\ref{ijo93u92}); instead we will just state the two most important facts.
\begin{enumerate}
	\item There exists a null hypersurface $\mathcal{H}$ where the homothetic vector field $K$ is tangent and has complete non-compact orbits. We moreover have that $\mathcal{H}$ may be covered by coordinates  $\left(s,\theta^A\right) \in (-\infty,0) \times \mathbb{S}^2$ where $K|_{\mathcal{H}} = \partial_s$, and the submanifolds of constant $s$ are spacelike.
	\item For our later applications, we must allow for the possibility that a twisted self-similar spacetime has limited regularity on $\mathcal{H}$ in the null direction which is transversal to $\mathcal{H}$, where it may be only H\"{o}lder continuous.\footnote{The possibility of losing regularity in the transversal direction to $\mathcal{H}$ is already familiar from the solutions constructed in~\cite{nakedone,nakedinterior}; however, the solutions from~\cite{nakedone,nakedinterior}  are $C^{1,\alpha}$ across $\mathcal{H}$ and thus are better than the worst case scenario for our expansions.}
\end{enumerate}
Before one can undertake any sort of detailed analysis of twisted self-similar solutions, it will be necessary to fix a gauge. It will in fact turn out to be convenient to define two different gauges. We note that in this discussion of gauges we do not need to assume that the metric $g$ is a solution to the Einstein vacuum equations.

The first and simplest gauge we define is what we call the \emph{homothetic gauge} (see Definition~\ref{3moi09989}). In this gauge the metric $g$ takes the form
\begin{equation*}\label{formhomotheticform}
g = P dt^2 + 2t\left(dt\otimes d\rho + d\rho \otimes dt\right) - h_A\left(dt\otimes d\theta^A + d\theta^A\otimes dt\right) + \slashed{g}_{AB}d\theta^A\otimes d\theta^B, 
\end{equation*}
and the homothetic vector field $K$ takes the form
\[K = t\partial_t.\]
Here $\left\{\theta^A\right\}$ denote local coordinates on $\mathbb{S}^2$ and $\left(\rho,t\right) \in (-c,c) \times (-\infty,0)$ for some constant $c > 0$.\footnote{One may also consider ``one-sided'' versions of the gauge where $\rho \in [0,c)$ or $\rho \in (-c,0]$, but in this introductory section we will just consider the ``two-sided'' case.} The fact that $\mathcal{L}_Kg = 2g$ forces the following constraints on $P$, $h$, and $\slashed{g}$:
\begin{equation}\label{oieiu123123123}
P\left(\rho,t,\theta^A\right) = \tilde{P}\left(\rho,\theta^A\right),\qquad h_A\left(\rho,t,\theta^B\right) = t\tilde{h}_A\left(\rho,\theta^B\right),\qquad \slashed{g}_{AB}\left(\rho,t,\theta^C\right) = t^2\tilde{\slashed{g}}_{AB}\left(\rho,\theta^C\right),
\end{equation}
for suitable $\tilde{P}$, $\tilde{h}_A$, and $\tilde{\slashed{g}}_{AB}$. The null hypersurface $\mathcal{H}$ will correspond to $\{\rho = 0\}$, and the requirement that $\mathcal{H}$ is null and that $\slashed{g}_{AB}$ is positive definite forces
\[\left(P - \left|h\right|^2_{\slashed{g}}\right)|_{\rho = 0} = 0.\]

 We will show that any twisted self-similar spacetime may be put into the homothetic gauge by considering the foliation of the spacetime by spheres which lie at the intersection of a family of dilation invariant hypersurfaces  and a family of null hypersurfaces which intersect $\mathcal{H}$ transversally (see Proposition~\ref{homotheticgaugeexist}). These two families of hypersurfaces then correspond to the constant $\rho$ and constant $t$ hypersurfaces respectively. We note that this gauge is the natural generalization of the \emph{normal form} gauge defined by Fefferman--Graham in Definition 2.7 of~\cite{FG2}. If we wish to consider the case when $K$ is null  along $\mathcal{H}$ (which corresponds to the Fefferman--Graham geometries) then we must have that both $P$ and $h$ vanish when $\rho = 0$.

Despite the apparent relative simplicity of the homothetic gauge, most of our study of the twisted self-similar spacetimes will take place in more analytically flexible double-null gauges (see Section~\ref{nullstreqn}) where the metric takes either the form
\begin{equation}\label{doubledoubleone}
g = -2\Omega^2\left(du\otimes dv + dv\otimes du\right) + \slashed{g}_{AB}\left(d\theta^A - b^Adu\right)\otimes \left(d\theta^B - b^Bdu\right)
\end{equation}
or
\begin{equation}\label{doubledoubletwo}
g = -2\Omega^2\left(du\otimes dv + dv\otimes du\right) + \slashed{g}_{AB}\left(d\theta^A - b^Adv\right)\otimes \left(d\theta^B - b^Bdv\right),
\end{equation}
and the homothetic vector field $K$ takes the form
\begin{equation}\label{3oj1oi491}
K = u\partial_u+v\partial_v.
\end{equation}
Here $\left\{\theta^A\right\}$ again denote local coordinates along $\mathbb{S}^2$, and $(u,v)$ lie in a suitable subset of $\mathbb{R}^2$. We refer to~\eqref{doubledoubleone} as a double-null coordinate system with the ``shift in the $u$-direction'' and to~\eqref{doubledoubletwo} as a double-null coordinate system with the ``shift in the $v$-direction.'' If a spacetime has been equipped with one of these coordinate systems and the homothetic vector field takes the form~\eqref{3oj1oi491}, then we say that the spacetime has been put into the ``self-similar double-null gauge'' (see Definition~\ref{thisisselfdoublenull}). The fact that $\mathcal{L}_Kg = 2g$ forces the following constraints on $\Omega$, $b$, and $\slashed{g}$:
\begin{equation}\label{constrtobeself}
\Omega\left(u,v,\theta^A\right) = \tilde{\Omega}\left(\frac{v}{u},\theta^A\right),\qquad b^A\left(u,v,\theta^B\right) = u^{-1}\tilde{b}^A\left(\frac{v}{u},\theta^B\right),\qquad \slashed{g}_{AB}\left(u,v,\theta^C\right) = u^2\tilde{\slashed{g}}_{AB}\left(\frac{v}{u},\theta^C\right),
\end{equation}
for suitable $\tilde{\Omega}$, $\tilde{b}^A$, and $\tilde{\slashed{g}}_{AB}$. 

It turns out that one cannot in general expect to put a twisted self-similar spacetime into a self-similar double-null gauge in a region which includes the important null hypersurface $\mathcal{H}$. However, by constructing a suitable solution to a self-similar eikonal equation, we will show that given a twisted self-similar solution in the homothetic gauge, one may find a dilation invariant neighborhood $\mathcal{U}$ of $\mathcal{H}$ such that $\mathcal{U}\setminus \mathcal{H}$ may be put in the self-similar double-null gauge with the shift in the $u$-direction and with the $(u,v)$ coordinates ranging over 
\begin{equation}\label{3ojoj43221211}
\left\{(u,v) \in \mathbb{R}^2 : u \in (-\infty,0) : \frac{v}{-u} \in (-\tilde{c},\tilde{c})\setminus \{0\}\right\},
\end{equation}
for some $\tilde{c} > 0$ (see Proposition~\ref{makeitadouble}). Furthermore, we will identify precise assumptions on the behavior of the metric and Ricci coefficients as $v\to 0$ which allow us to reverse this process and go from a spacetime in the self-similar double-null gauge with coordinates covering the range~\eqref{3ojoj43221211} to a self-similar spacetime in the homothetic gauge (see Definition~\ref{thisdefissomething} and Proposition~\ref{nonononononoo}). In the following diagram, the gray area (but not the bolded line) corresponds to the region of validity for the self-similar double-null gauge:
\begin{center}
\begin{tikzpicture}
\fill[lightgray] (0,0)--(3,-3) -- (4,-2) -- (0,0);
\draw [white](0,0) -- (4,-2) node[black, sloped,above,midway]{\footnotesize $\{\frac{v}{-u} = \tilde{c}\}$};
\fill[lightgray] (0,0)--(3,-3) -- (2,-4) -- (0,0);
\draw [white](0,0) -- (2,-4) node[black, sloped,below,midway]{\footnotesize $\{\frac{v}{-u} = -\tilde{c}\}$};
\draw[thick] (0,0) -- (3,-3) node[sloped,below,midway]{\footnotesize $\mathcal{H} = \{ v= 0\}$}; 
\path [draw=black,fill=white] (0,0) circle (1/16); 
\draw[dotted,thick] (3,-3)--(3.75,-3.75);
\end{tikzpicture}
\end{center}
\begin{center}
\footnotesize{figure 1}
\end{center}
Finally, we note that in the case of a Fefferman--Graham type solution to the Einstein vacuum equations where $K$ is null along all of $\mathcal{H}$, then one can always construct a self-similar double-null gauge which includes the hypersurface $\mathcal{H}$ (see Appendix~\ref{feffgrahamreggauge}). It is also possible to identify explicit conditions on the behavior of the metric in the self-similar double-null gauge which guarantee that the solution is one of the Fefferman--Graham type geometries. 
\subsection{Expansions}
Having constructed our canonical gauges, we then turn to the computation of the formal expansions for twisted self-similar spacetimes which solve the Einstein vacuum equations (to high order as $\frac{v}{-u} \to 0$). These expansions will be computed in the self-similar double-null gauge where we have access to the standard null structure equations for the Ricci coefficients and metric components (see Section~\ref{nullstreqn}). However, by using the proof of the equivalence between the homothetic and double-null gauge, having computed the expansions in the double-null gauge, one may deduce corresponding expansions in the homothetic gauge in an algorithmic fashion.

We now provide a heuristic overview of these formal expansions. We first explain in Section~\ref{regreg} how one may determine the expected regularity of the metric in the self-similar double-null gauge as $v\to 0$. In Section~\ref{whatisthepoint} we explain in more detail what the desired properties of our expansions are. Lastly in Section~\ref{computecompute} we will see how one then actually carries out the computation of the formal expansions and verifies their desired properties.
\subsubsection{Regularity as $v\to 0$ in the Double-Null Gauge}\label{regreg}

Even though our formal expansions will be done with respect to the distance from the null hypersurface $\mathcal{H}$, the quickest way to understand where these expansions come from is to switch perspectives and consider the problem of evolution towards $\mathcal{H}$ solving to the past from dilation invariant spacelike data along a dilation invariant hypersurface:
\begin{center}
\begin{tikzpicture}
\fill[lightgray] (0,0)--(3,-3) -- (4,-2) -- (0,0);
\draw [thick](0,0) -- (4,-2) node[black, sloped,above,midway]{\footnotesize \text{spacelike initial data}};
\draw[dashed] (0,0) -- (3,-3) node[sloped,below,midway]{\footnotesize $\mathcal{H} = \{ v= 0\}$}; 
\path [draw=black,fill=white] (0,0) circle (1/16); 
\draw[dotted,thick] (3.5,-2.5)--(4.5,-3.5);
\end{tikzpicture}
\end{center}
\begin{center}
\footnotesize{figure 2}
\end{center}
The basic question we wish to understand is, assuming such a self-similar spacetime exists in the double-null gauge, what kind of regularity can we expect from the metric components and Ricci coefficients as we approach $\mathcal{H}$ (which corresponds to the limit $v\to 0$). We start by observing that every Ricci coefficient $\psi$ or metric coefficient $\phi$ except for $\hat{\chi}$, $\underline{\eta}$, and $\omega$ satisfies an equation which relates $\mathcal{L}_{\partial_v}\psi$ or $\mathcal{L}_{\partial_v}\phi$ to contractions of angular derivatives of other metric components and Ricci coefficients. Thus, as long as the right hand sides of these equations are integrable in $v$, we could expect to simply integrate them and derive the leading order asymptotics of $\psi$ or $\phi$ as $v\to 0$. 

While $\omega$ does not satisfy a $\mathcal{L}_{\partial_v}$ equation, we may use self-similarity (see~\eqref{constrtobeself}) to relate $\omega$ with $\underline{\omega}$ and $\log\Omega$. For $\underline{\omega}$, we may compute (see already~\eqref{4uomega} and~\eqref{slashr}) that $\mathcal{L}_{\partial_v}\left(\Omega\underline{\omega}\right)$ satisfies an equation whose right hand side involves contractions of angular derivatives of Ricci coefficients and metric components, none of which are $\omega$. Thus, if we optimistically assume that this right hand is integrable in $v$, we may then conclude that it is reasonable to expect that $\Omega\underline{\omega}$ will continuously extend to $\{v = 0\}$ and satisfies the following bound $\left|\Omega\underline{\omega}\right| \lesssim |u|^{-1}$ (which is consistent with~\eqref{constrtobeself}). We may then relate $\log\Omega$ to $\Omega\underline{\omega}$ using self-similarity (see~\eqref{constrtobeself}) and obtain
\begin{equation}\label{lapfrom}
\left(-\frac{v}{u}\mathcal{L}_{\partial_v}+\mathcal{L}_b\right)\log\Omega = 2\Omega\underline{\omega}.
\end{equation}
On any constant $u$-hypersurface this is a transport equation which degenerates as $v\to 0$. In particular, it is clear from~\eqref{lapfrom} that $\log\Omega$ may blow-up as $v\to 0$ no matter how regular $\Omega\underline{\omega}$ is as $v\to 0$. However, if we assume that $b$ is suitably small, $b^A \sim  u^{-1}O\left(\epsilon\right)$, then the blow-up of the lapse (and angular derivatives thereof) will be ``mild'' in that, for example, $\left|\slashed{\partial}\log\Omega\right| \lesssim \left(\frac{v}{-u}\right)^{-O\left(\epsilon\right)}$ for any given coordinate derivative $\slashed{\partial}$ along $\mathbb{S}^2$. However, the blow-up of $\omega$ as $v\to 0$ is ``violent'' in that we do not expect it to be integrable as $v\to 0$. Thus, when integrating a $\mathcal{L}_{\partial_v}$ equation for any Ricci coefficient $\psi$ or metric component $\phi$, it is crucial that one conjugates the equation with a correct power of the lapse $\Omega$ so as to remove any occurrences of $\omega$. The upshot of this observation is that if we assume that our initial data is suitable small, then (suppressing the need to understand $\hat{\chi}$ for the moment) we expect that $\eta$, $\Omega^{-1}{\rm tr}\chi$, $\Omega{\rm tr}\underline{\chi}$, $\Omega\hat{\underline{\chi}}$, $\Omega\underline{\omega}$, $b$, and $\slashed{g}$ all extend in a H\"{o}lder continuous fashion to $\{v = 0\}$. Furthermore, for any large integer $N$, if we assume that $\epsilon$ is sufficiently small depending on $N$, we expect that $N$ coordinate angular derivatives of these quantities will also extend in a H\"{o}lder continuous fashion to $\{v = 0\}$. 

Since $\underline{\eta}$ and $\zeta$ may be recovered from angular derivatives of $\log\Omega$ and $\eta$, it only remains to discuss the regularity of $\hat{\chi}$ as $v\to 0$. For $\hat{\chi}$ we may use self-similarity and the $\nabla_3$ equation for $\hat{\chi}$ (see~\eqref{3hatchi}) to eventually derive an equation of the form
\begin{align}\label{123456789}
& -\frac{v}{u}\nabla_v\left(\Omega^{-1}\hat{\chi}\right)_{AB} +\mathcal{L}_b\left(\Omega^{-1}\hat{\chi}\right)_{AB}- \left(\slashed{\nabla}\hat{\otimes}b\right)^C_{\ \ (A}\left(\Omega^{-1}\hat{\chi}\right)_{B)C} -\frac{1}{2}\slashed{\rm div}b\left(\Omega^{-1}\hat{\chi}\right)_{AB}
\\ \nonumber&\qquad -\frac{v}{u}\Omega^{-1}{\rm tr}\chi\left(\Omega\hat{\chi}\right)_{AB} - 4\left(\Omega\underline{\omega}\right)\Omega^{-1}\hat{\chi}_{AB}= H_{AB},
 \end{align}
 where, by the previous heuristics, we expect $H_{AB}$ to extend in a H\"{o}lder continuous fashion to $\{v = 0\}$. As with the equation~\eqref{lapfrom}, we see that if $b$ and $\Omega\underline{\omega}$ are sufficiently small, then while $\hat{\chi}$  may in principle blow-up as $v\to 0$, the rate of blow-up will be ``mild.''
 
 We now quickly observe how the presence of a Fefferman--Graham geometry may be detected by this analysis. Namely,  it is possible that we have $\lim_{v\to 0}b = 0$. In this case, it is then possible to show that we must also have that $\lim_{v\to 0}\left(\Omega\underline{\omega}\right) = 0$ and, in view of~\eqref{lapfrom}, one is able to see that $\Omega$ extends continuously to $\{v = 0\}$. Similarly, $b|_{v=0} = 0$ ends up implying that the tensor $H$ on the right hand side of~\eqref{123456789} vanishes in the limit as $v\to 0$. We then eventually conclude that $\hat{\chi}$ also extends regularly to $\{v = 0\}$. By repeating the analysis \emph{mutatis mutandis} one eventually concludes that, as long as the initial data is sufficiently differentiable, any sequence of derivatives of a metric component or Ricci component must extend continuously to $\{v = 0\}$. (See Appendix~\ref{feffgrahamreggauge}.)
 
 We close the section by observing that even though it may appear at first glance as if our above analysis could only be expected to hold to the future of the null hypersurface $\mathcal{H}$, it is in fact equally valid in a region to the past of $\mathcal{H}$. The point is that the above arguments just involve integration of the null structure equations as ODE's with all angular derivatives treated as errors on the right hand side (as opposed to energy estimates). It thus makes no difference whether this argument is done to the past or future of $\mathcal{H}$. 

\subsubsection{What is a Formal Expansion?}\label{whatisthepoint}
Having determined the expected regularity of our twisted self-similar solutions as $v\to 0$, we now discuss the formal expansions. While we will leave the precise definitions to Section~\ref{sectionwithmaintheorems} (see Theorem~\ref{theoremexpand}), it may be useful for the reader if we informally(!)  describe in a bit more detail what we mean by ``formal expansion.'' Let $N$ be any sufficiently large positive integer and $\epsilon > 0$ be a small constant, possibly depending on $N$, which will govern all of the smallness necessary for our constructions. We then want a procedure which produces metrics $g$ so that
\begin{enumerate}
	\item Each metric $g$ is a twisted self-similar metric in the self-similar double-null gauge.
	\item Along any constant $u$ hypersurface, we have that 
	\begin{equation}\label{riccivanishvanish}
	{\rm Ric}\left(g\right) = \epsilon O\left(|v|^{N-1-O\left(\epsilon\right)}\right).
	\end{equation}
	\item We  may write $g = \sum_{j=0}^Ng^{(j)}$ so that 
	\begin{enumerate}
		\item Along any constant $u$ hypersurface, $g^{(j)} = \epsilon O\left(|v|^{j-O\left(\epsilon\right)}\right)$.
		\item Each $g^{(j)}$ is determined by inductively solving (possibly degenerate) explicit linear transport equations.
		\item  The base case of the induction is determined by an explicit choice of ``seed data'' which must satisfy a suitable smallness assumption.
	\end{enumerate}
	\item\label{describeeventually} If $\hat{g}$ is any twisted self-similar metric in the self-similar double-null gauge which satisfies the Einstein vacuum equations~\eqref{eve}, a suitable smallness condition (depending on $N$), and suitable regularity conditions as $v\to 0$, then there exists a metric $g$ produced by our procedure so that along any constant $u$-hypersurface $g - \hat{g} =  \epsilon O\left(|v|^{N-O\left(\epsilon\right)}\right)$.
\end{enumerate}
(If one does not restrict to a constant $u$-hypersurface, then for any given metric component or Ricci coefficient, one may add the $u$-dependence in these various estimate by use of~\eqref{constrtobeself}.) 

If we are successful in this constrution, then our procedure may be considered to exhaustively describe all possible asymptotic behaviors towards the null hypersurface $\mathcal{H}$ for twisted self-similar metrics which satisfy a suitable smallness and regularity assumption as $v\to 0$. We note, however, a significant difference with the formal expansions for Fefferman--Graham spacetimes from~\cite{FG1,FG2}; namely, our smallness assumption on $\epsilon$ depends on how far out in the expansion we wish to go. In particular, for any given choice of seed data, we do not produce expansions to infinite order; instead, for any given $N$, we can introduce a suitable smallness assumption such for all seed data satisfying the smallness assumption we may produce an expansion up to order $N$. Relatedly, our smallness assumption will also turn out to depend on the total number of derivatives of the metrics $g$ that we wish to estimate.

\subsubsection{Seed Data, Computing the Expansion, and Propagation of Constraints}\label{computecompute}
As in Section~\ref{regreg} we split the Ricci and metric coefficients into two groups, $\log\Omega$ and $\Omega^{-1}\hat{\chi}$ which only satisfy $\nabla_3$ equations, and all of the other metric and Ricci coefficients which satisfy $\nabla_4$ equations. (We drop $\underline{\eta}$, $\zeta$, and $\omega$ from this dichotomy since their behavior may be deduced from those of the other quantities.) Any of these $\nabla_4$ equations will imply an equation of the following form for a metric component $\phi$ or Ricci coefficient $\psi$:
\begin{equation}\label{fourofoureqn}
\mathcal{L}_{\partial_v}\left(\Omega^s\left(\phi,\psi\right)\right) = F,
\end{equation}
where $F$ consists of contractions of metric and Ricci coefficients and their angular derivatives, and $\Omega^s$ represents the power of the lapse which, after conjugating through by, leads to an equation which does not involve $\omega$. If we had an equation like~\eqref{fourofoureqn} for all metric components and Ricci coefficients, the computation of the expansions would be simple; we would choose suitable values for $\Omega^s\left(\phi,\psi\right)|_{v=0}$ and then inductively compute a power series in $v$ for $\Omega^s\left(\phi,\psi\right)$ via the equation~\eqref{fourofoureqn}. It would then be natural to take as seed data the values $\Omega^s\left(\phi,\psi\right)|_{v=0}$ (in turn, by self-similarity, these are in fact determined by their values on the sphere at $(u,v) = (-1,0)$). However, due to the presence of the null constraint equations, we do not expect to be able to freely pose the values of all of these along $\{v = 0\}$. Eventually it turns out to be reasonable to take as (part of our) seed data a choice of triple $\left(b,\slashed{g},\Omega\underline{\omega}\right)|_{(u,v) = (-1,0)}$ which are required to satisfy a suitable constraint equation. (See Definition~\ref{thisistheseed}.) It then turns out that $\Omega^{-1}{\rm tr}\chi|_{v=0}$ and $\eta|_{v=0}$ may be computed from these using the Einstein equations along $\{v = 0\}$.

We now discuss $\log\Omega$ and $\Omega^{-1}\hat{\chi}$. For these we re-write~\eqref{lapfrom} and~\eqref{123456789} as  
\begin{equation}\label{lapfromsecond}
\left(-\frac{v}{u}\mathcal{L}_{\partial_v}+\mathcal{L}_{b(0)}\right)\log\Omega = G_1,
\end{equation} 
\begin{align}\label{123456789second}
& -\frac{v}{u}\nabla_v\left(\Omega^{-1}\hat{\chi}\right)_{AB} +\mathscr{L}|_{v=0}\left(\Omega^{-1}\hat{\chi}\right)_{AB}- 4\left(\Omega\underline{\omega}\right)|_{v=0}\left(\Omega^{-1}\hat{\chi}\right)_{AB}= \left(G_2\right)_{AB},
 \end{align}
 where $\mathscr{L}$ is the operator
 \[ \mathscr{L}f_{AB} \doteq \mathcal{L}_bf_{AB}- \left(\slashed{\nabla}\hat{\otimes}b\right)^C_{\ \ (A}f_{B)C} -\frac{1}{2}\slashed{\rm div}bf_{AB},\]
 and $G_1$ and $G_2$ involve either suitable contractions of angular derivatives of metric components and Ricci coefficients which satisfy equations like~\eqref{fourofoureqn} or involve  $\Omega^{-1}\hat{\chi}$ or $\log\Omega$ contracted with a quantity which we expect to vanish at $\{v = 0\}$. If we consider $G_1$ and $G_2$ as given inhomogeneities, then the solutions $\log\Omega$ and $\Omega^{-1}\hat{\chi}$ to~\eqref{lapfromsecond} and~\eqref{123456789second} would be uniquely determined up to the addition of an element in the kernel of the operators on the left hand sides. It is then natural to consider a choice of such kernel elements also as (a part of our) seed data for the expansion (see Definition~\ref{thisistheseed}). 
 
In conclusion, it is natural to take as the seed data for our expansion a choice of $\left(b,\slashed{g},\Omega\underline{\omega}\right)|_{(u,v) = (-1,0)}$ which satisfies a suitable constraint equation and then two solutions $\log\Omega$ and $\Omega^{-1}\hat{\chi}$ which lie in the kernel of the operators on the left hand side of~\eqref{lapfromsecond} and~\eqref{123456789second}. Given this, an expansion may be inductively computed by plugging in the previously computed metric into the right hand sides of~\eqref{fourofoureqn},~\eqref{lapfromsecond}, and~\eqref{123456789second} and then solving the corresponding equations to produce a new metric. 

There is one important complication we still have not mentioned. Namely, since the null-structure equations are overdetermined, the procedure we have outlined above does not manifestly lead to metrics $g$ which satisfy~\eqref{riccivanishvanish}. Instead, only the components of the Ricci tensor which correspond to the subset of the null structure equations we solved will have the desired vanishing property as $v\to 0$. To obtain the full statement~\eqref{riccivanishvanish} we will need to run an additional ``preservation of constraints'' argument which uses the fact that the Einstein tensor is divergence free, to upgrade the initial partial vanishing of the Ricci tensor as $v\to 0$ to the full desired statement~\eqref{riccivanishvanish}. (See Section~\ref{woahconstraintsarecool}.) A similar difficulty occurs in the work~\cite{nakedinterior} though the resolution here will be dramatically simpler.

\subsection{Applications}
In this final introductory section, we will discuss various applications of the existence of the formal expansions. 

We first observe that by an amalgamation of the techniques from~\cite{scaleinvariant,nakedone}, it is possible to show that one may associate true solutions to the Einstein vacuum equations to each of our formal expansions in a region to the future of $\mathcal{H}$ (it is only in this region that the homothetic vector field is everywhere spacelike):
\begin{center}
\begin{tikzpicture}
\fill[lightgray] (0,0)--(3,-3) -- (4,-2) -- (0,0);
\draw [dashed](0,0) -- (4,-2);
\draw [<-] (1.75,-1.25)--(3,0);
\draw (3,0) node[above]{\footnotesize ${\rm true\ solution}$};
\draw[thick] (0,0) -- (3,-3) node[sloped,below,midway]{\footnotesize $\mathcal{H} = \{ v= 0\}$}; 
\path [draw=black,fill=white] (0,0) circle (1/16); 
\draw[dotted,thick] (3.5,-2.5)--(4.5,-3.5);
\end{tikzpicture}
\end{center}
\begin{center}
\footnotesize{figure 3}
\end{center}
However, in order to not increase the length of this paper significantly we will not carry out the proof of this result. (See Theorem~\ref{notprovedhere}.)

We will next explain how our expansions can be used to understand the possible asymptotic behaviors in the entire small data regime for solutions arising from spacelike self-similar data. The first point is that by swapping the roles of the $u$ and $v$ variable, we may consider our expansions as being defined to the past of a null hypersurface at $\{u = 0\}$ and where the metric is now in a double-null gauge with the shift in the $v$-direction:
\begin{center}
\begin{tikzpicture}
\fill[lightgray] (0,0)--(3,3) -- (4,2) -- (0,0);
\draw [dashed](0,0) -- (4,2);
\draw[thick] (0,0) -- (3,3) node[sloped,above,midway]{\footnotesize $\{ u= 0\}$}; 
\path [draw=black,fill=white] (0,0) circle (1/16); 
\draw[dotted,thick] (3.5,2.5)--(4.5,3.5);
\end{tikzpicture}
\end{center}
\begin{center}
\footnotesize{figure 4}
\end{center}
In particular, from item~\ref{describeeventually} of Section~\ref{whatisthepoint} we will have that any twisted self-similar metric which is in the self-similar double-null gauge (with the shift in the $v$-direction), exists in a dilation invariant neighborhood to the past of $\{u=0\}$, satisfies a suitable smallness assumption, and satisfies suitable regularity conditions as $u\to 0$ will be well-approximated by one of our formal expansions. A priori, metrics which satisfy the smallness assumption and regularity requirements as $u\to 0$ could comprise a very special class. However, it is essentially an immediate consequence of the a priori estimates from Sections 7, 8, and 9 of~\cite{nakedone} and the technique for establishing the local existence of self-similar spacetimes from~\cite{scaleinvariant} that any spacetime arising from suitably small spacelike self-similar data may be put into a self-similar double-null gauge with the shift in the $v$-direction, will exist up to the hypersurface $\{u = 0\}$, and will always satisfy the necessary smallness and regularity conditions at $\{u=0\}$. Thus, these solutions will then be well-approximated by the formal expansions in the sense of item~\ref{describeeventually} of Section~\ref{whatisthepoint} with the role of $u$ and $v$ swapped (see Theorem~\ref{spacespaceself}):
\begin{center}
\begin{tikzpicture}
\fill[lightgray] (0,0)--(3,3)--(4,-1);
\draw [dashed](0,0) -- (3.4,1.7);
\draw[thick] (0,0) -- (3,3) node[sloped,above,midway]{\footnotesize $\{ u= 0\}$}; 
\draw[thick] (0,0) -- (4,-1) node[sloped,below,midway]{\footnotesize {\rm small\ spacelike\ self-similar\ data}};
\path [draw=black,fill=white] (0,0) circle (1/16); 
\draw[dotted,thick] (3.5,1)--(4.1,1.2);
\draw[<-] (1.5,1.1) -- (2.5,3);
\draw (2.5,3) node[above]{\footnotesize {\rm approximated\ by\ expansions}};
\end{tikzpicture}
\end{center}
\begin{center}
\footnotesize{figure 5}
\end{center}

This general description of the future asymptotics of solutions arising from small spacelike self-similar data then has the following two additional applications:
\begin{enumerate}
	\item By the results of~\cite{scaleinvariant} any Fefferman--Graham spacetime in $3+1$ dimensions associated to sufficiently small Dirichlet and Neumann data will exist in a region as depicted by figure 3 and will induce spacelike self-similar data which is sufficiently small so that the discussion above is valid. Hence the global structure up to $\{u = 0\}$ of these Fefferman--Graham is described by one of our formal expansions in the sense of item~\ref{describeeventually} of Section~\ref{whatisthepoint} with the role of $u$ and $v$ swapped (see Corollary~\ref{afirstcororor}).
	\item While the exterior region of the naked singularity produced in~\cite{nakedone} is not self-similar, one may apply the method of ``self-similar extraction'' as described by Theorem 1.2 of~\cite{scaleinvariant} to produce a corresponding self-similar metric; more specifically, let $g_{\rm naked}$ denote the metric produced by Theorem 1 of~\cite{nakedone} and then, for $\lambda > 0$,  denote the diffeomorphism $\left(u,v,\theta^A\right) \mapsto \left(\lambda u,\lambda v,\theta^A\right)$ by $\Phi_{\lambda}$. The proof of Theorem 1.2 \emph{mutatis mutandis} shows that $\lambda^{-2}\Phi_{\lambda}^*g_{\rm naked}$ will converge as $\lambda \to 0$ to a self-similar metric $\tilde{g}$ which moreover induces spacelike self-similar data which is sufficiently small so that the discussion above is valid, and thus $\tilde{g}$ is well-approximated by our formal expansions near $\{u = 0\}$ in the sense of item~\ref{describeeventually} of Section~\ref{whatisthepoint} with the role of $u$ and $v$ swapped (see Corollary~\ref{nakedcorcor}). Thus, in this sense, our formal expansions describe the asymptotic behavior behavior of the naked singularities from~\cite{nakedone,nakedinterior} as the metric approaches the future light cone of the singularity.\footnote{Though this is outside the scope of the current paper to discuss in detail, one expects that the self-similar metric $\tilde{g}$ captures the main singularities as $u\to 0$ in that one expects $g_{\rm naked} - \tilde{g}$ to be more regular as $u\to 0$ than $g_{\rm naked}$.}
\end{enumerate}

We close the introduction by stating one natural question which is left unaddressed by the above analysis.
\begin{question}\label{stableglob} By the above discussion, any Fefferman--Graham spacetime associated to sufficiently small Dirichlet and Neumann data will exist up to the outgoing cone $\{u = 0\}$ and will be well-approximated there by one of the formal expansions. Is it the case that the cone $\{u = 0\}$ is also described by a Fefferman--Graham geometry? 
\end{question}
One motivation behind this question is as follows:  An analogy may be drawn between the role of Fefferman--Graham self-similar solutions  within the class of general twisted self-similar solutions and the role of the ``scale-invariant'' solutions written down by Christodoulou in~\cite{ChristBV} within the more general class of self-similar solutions for the spherically symmetric Einstein-scalar field system studied by Christoudoulou in~\cite{ChristNaked}. The fact that small scale-invariant solutions exist up to and are regular along the analogue of $\{u = 0\}$ may be considered a harbinger of the well-posedness of the spherically symmetric Einstein-scalar field system in the class of ``solutions of bounded variation''~\cite{ChristBV}.
\subsection{Relation with Naked Singularities from~\cite{nakedone,nakedinterior}}
In this final introductory subsection, we briefly indicate how the general twisted self-similar solutions of this paper are related to the self-similarity which is relevant for the solutions considered in~\cite{nakedone,nakedinterior}. The key specialization is that in the works~\cite{nakedone,nakedinterior} we require that the choice of the seed data $\left(b,\slashed{g},\Omega\underline{\omega}\right)|_{(u,v) = (-1,0)}$ (see Section~\ref{computecompute}) take a certain specific form, the most important aspect of which is a very particular leading order ansatz for the vector field $b$. As a consequence of the seed data taking this special form, the solutions $\log\Omega$ and $\Omega^{-1}\hat{\chi}$ to the equations~\eqref{lapfromsecond} and~\eqref{123456789second} turn out to have additional regularity as $v\to 0$ than in the general case.  This is then ultimately responsible for the spacetimes produced in~\cite{nakedone,nakedinterior} being $C^{1,\alpha}$ across $\{v = 0\}$ as opposed to just $C^{0,\alpha}$.

\subsection{Acknowledgements}
The author thanks Igor Rodnianski for useful conversations. The author also acknowledges support from NSF grant DMS-1900288, from an Alfred P. Sloan Fellowship in Mathematics, and from NSERC discovery grants RGPIN-2021-02562 and DGECR-2021-00093. 
\subsection{Data Sharing and Conflict of Interest Statement} 
Data sharing is not applicable to this article as no datasets were generated or analyzed during the current study. The author has no competing interests to declare that are relevant to the content of this article.

\section{Null Structure Equations and Double-Null Formalism}\label{nullstreqn}
In this section we provide a quick review of the double-null gauge formalism. We will use all of the same normalizations and notation as in Section 3.2 of~\cite{nakedinterior}. Here we will just briefly recall the relevant notation and equations. (For a true introduction to the double-null gauge, see~\cite{KN} or~\cite{Chr}.)

We let $\mathcal{U}$ denote a suitable open set in $\mathbb{R}^2$ and consider spacetimes covered by coordinates $(u,v,\theta^A) \in \mathcal{U} \times \mathbb{S}^2$ where the metric $g$ takes the form
\begin{equation}\label{iooioi284}
g = -2\Omega^2\left(du\otimes dv + dv\otimes du\right) + \slashed{g}_{AB}\left(d\theta^A - b^Adu\right)\otimes \left(d\theta^B - b^Bdu\right).
\end{equation}
We do not assume that the Einstein vacuum equations hold. 

The null vectors $e_4$ and $e_3$ are defined by 
\[e_4 \doteq \Omega^{-1}\partial_v,\qquad e_3 \doteq \Omega^{-1}\left(\partial_u + b\cdot\slashed{\nabla}\right).\]
These satisfy $g\left(e_3,e_4\right) = -2$.  The $\mathbb{S}^2_{u,v}$-projected $e_4$ and $e_3$ derivatives are denoted by $\nabla_4$ and $\nabla_3$ respectively. We will use $\slashed{\nabla}$ to denote the induced covariant derivative along $\mathbb{S}^2$. As in~\cite{nakedinterior} slashed quantities will always denote an object defined with respect to $\slashed{g}$. We use $\mathcal{L}$ to denote Lie-derivatives.

The Ricci coefficients $\omega$, $\underline{\omega}$, $\zeta_A$, $\eta_A$, $\underline{\eta}_A$, $\chi_{AB}$, and $\underline{\chi}_{AB}$ are defined in terms of the metric via the following formulas:
\begin{equation}\label{blahblah1}
\omega = -\frac{1}{2}\nabla_4\log\Omega,\qquad \underline{\omega} = -\frac{1}{2}\nabla_3\log\Omega,\qquad \mathcal{L}_{\partial_v}b^A = -4\Omega^2\zeta^A,
\end{equation}
\begin{equation}\label{blahblah2}
\eta_A = \zeta_A + \slashed{\nabla}_A\log\Omega,\qquad \underline{\eta}_A = -\zeta_A + \slashed{\nabla}_A\log\Omega,
\end{equation}
\begin{equation}\label{blahblah3}
\mathcal{L}_{e_4}\slashed{g}_{AB} = 2\chi_{AB},\qquad \mathcal{L}_{e_3}\slashed{g}_{AB} = 2\underline{\chi}_{AB}.
\end{equation}
We also split $\chi$ and $\underline{\chi}$ into their trace and trace-free parts:
\[\chi_{AB} = \hat{\chi}_{AB} + \frac{1}{2}{\rm tr}\chi\slashed{g}_{AB},\qquad \underline{\chi}_{AB} = \hat{\underline{\chi}}_{AB} + \frac{1}{2}{\rm tr}\underline{\chi}\slashed{g}_{AB}.\]
In general, $\widehat{\Theta}_{AB}$ will denote the trace-free part of a symmetric $\mathbb{S}^2_{u,v}$ $(0,2)$-tensor. The norms and raising and lowering of indices of $\mathbb{S}^2_{u,v}$ tensors are computed with respect to $\slashed{g}$.

Now we present the null structure equations which relate the derivatives of Ricci coefficients to curvature components. The following is Proposition 3.1 from~\cite{nakedinterior}.
\begin{proposition}\label{thenullstructeqns}
\begin{align}
\label{4trchi}\nabla_4{\rm tr}\chi + \frac{1}{2}\left({\rm tr}\chi\right)^2 &=-{\rm Ric}_{44} -\left|\hat{\chi}\right|^2 - 2\omega{\rm tr}\chi,
\\ \label{4hatchi} \nabla_4\hat{\chi}_{AB}+{\rm tr}\chi \hat{\chi}_{AB} &= -\widehat{R_{A4B4}} -2\omega\hat{\chi}_{AB},
\\ \label{3truchi} \nabla_3{\rm tr}\underline{\chi} + \frac{1}{2}\left({\rm tr}\underline{\chi}\right)^2 &=-{\rm Ric}_{33} -\left|\hat{\underline{\chi}}\right|^2 - 2\underline{\omega}{\rm tr}\underline{\chi},
\\ \label{3hatuchi} \nabla_3\underline{\hat{\chi}}_{AB}+{\rm tr}\underline{\chi} \underline{\hat{\chi}}_{AB} &= -\widehat{R_{A3B3}} -2\underline{\omega}\underline{\hat{\chi}}_{AB},
\\ \label{3hatchi} \nabla_3\hat{\chi}_{AB} +\frac{1}{2}{\rm tr}\underline{\chi}\hat{\chi}_{AB}&= \widehat{{\rm Ric}}_{AB} + 2\underline{\omega}\hat{\chi}_{AB} + \left(\slashed{\nabla}\hat\otimes \eta\right)_{AB} + \left(\eta\hat\otimes \eta\right)_{AB} - \frac{1}{2}{\rm tr}\chi \hat{\underline{\chi}}_{AB},
\end{align}
\begin{align}
 \label{3trchi}\nabla_3{\rm tr}\chi + \frac{1}{2}{\rm tr}\chi{\rm tr}\underline{\chi} &= \slashed{g}^{AB}R_{3AB4}+ 2\underline{\omega}{\rm tr}\chi + 2\slashed{\rm div}\eta + 2\left|\eta\right|^2 - \hat{\chi}\cdot\hat{\underline{\chi}},
\\ \label{4hatuchi} \nabla_4\hat{\underline{\chi}}_{AB} + \frac{1}{2}{\rm tr}\chi \hat{\underline{\chi}}_{AB}&= \widehat{{\rm Ric}}_{AB} + 2\omega\hat{\underline{\chi}}_{AB} + \left(\slashed{\nabla}\hat\otimes \underline\eta\right)_{AB} + \left(\underline\eta\hat\otimes \underline\eta\right)_{AB} - \frac{1}{2}{\rm tr}\underline{\chi} \hat{\chi}_{AB},
\\ \label{4truchi} \nabla_4{\rm tr}\underline{\chi} + \frac{1}{2}{\rm tr}\chi{\rm tr}\underline{\chi} &= \slashed{g}^{AB}R_{3AB4}+ 2\omega{\rm tr}\underline{\chi} + 2\slashed{\rm div}\underline{\eta} + 2\left|\underline\eta\right|^2 - \hat{\chi}\cdot\hat{\underline{\chi}},
\\ \label{4eta} \nabla_4\eta_A &= -\chi_{AB}\cdot\left(\eta^B-\underline{\eta}^B\right) -\frac{1}{2}R_{A434},
\\ \label{3ueta} \nabla_3\underline{\eta}_A &= -\underline{\chi}_{AB}\cdot\left(\underline\eta^B-\eta^B\right) -\frac{1}{2}R_{A343},
\\ \label{curleta} \slashed{\nabla}_A\eta_B-\slashed{\nabla}_B\eta_A &= R_{4[AB]3} + \hat{\underline{\chi}}^C_{\ \ [A}\hat{\chi}_{B]C},
\\ \label{curlueta} \slashed{\nabla}_A\underline{\eta}_B-\slashed{\nabla}_B\underline{\eta}_A &= -R_{4[AB]3} - \hat{\underline{\chi}}^C_{\ \ [A}\hat{\chi}_{B]C},
\\ \label{4uomega} \nabla_4\underline{\omega} &= \frac{1}{4}\left({\rm Ric}_{34} + \slashed{g}^{AB}R_{3AB4}\right) + 2\underline{\omega}\omega + \frac{1}{2}\left|\eta\right|^2 - \eta\cdot\underline{\eta},
\\ \label{3omega} \nabla_3\omega &= \frac{1}{4}\left({\rm Ric}_{34} + \slashed{g}^{AB}R_{3AB4}\right) + 2\underline{\omega}\omega + \frac{1}{2}\left|\underline\eta\right|^2 - \eta\cdot\underline{\eta},
\end{align}
and

\begin{align}\label{genGauss}
\slashed{R}_{ABCD} &= R_{ABCD} + \frac{1}{2}\left(\underline{\chi}_{BC}\chi_{AD} + \chi_{BC}\underline{\chi}_{AD} - \underline{\chi}_{AC}\chi_{BD} - \chi_{AC}\underline{\chi}_{BD}\right),
\\ \label{useful} \widehat{\slashed{g}^{CD}R_{ACBD}} &= 0,
\\ \label{ricequality} \widehat{{\rm Ric}}_{AB} &= \widehat{R_{3(AB)4}},
\\ \label{slashr} K &= \frac{1}{2}\slashed{g}^{AB}R_{3A4B} + \frac{1}{2}R +\frac{1}{2}{\rm Ric}_{34}+\frac{1}{2}\hat{\chi}\cdot\hat{\underline{\chi}} -\frac{1}{4}{\rm tr}\chi{\rm tr}\underline{\chi},
\\ \label{cod1} \slashed{\nabla}_A\chi_{BC} - \slashed{\nabla}_B\chi_{AC} &= R_{ABC4} +\chi_{AC}\zeta_B - \chi_{BC}\zeta_A,
\\ \label{cod2} \slashed{\nabla}_A\underline{\chi}_{BC} - \slashed{\nabla}_B\underline{\chi}_{AC} &= R_{ABC3} -\underline{\chi}_{AC}\zeta_B + \underline{\chi}_{BC}\zeta_A,
\\ \label{tcod1} \slashed{\nabla}^B\hat{\chi}_{AB}-\frac{1}{2}\slashed{\nabla}_A{\rm tr}\chi &=  -\frac{1}{2}R_{A434}+ {\rm Ric}_{4A}+ \frac{1}{2}{\rm tr}\chi \zeta_A - \zeta^B\hat{\chi}_{AB},
\\ \label{tcod2} \slashed{\nabla}^B\hat{\underline\chi}_{AB}-\frac{1}{2}\slashed{\nabla}_A{\rm tr}\underline\chi  &= -\frac{1}{2}R_{A343} + {\rm Ric}_{3A}- \frac{1}{2}{\rm tr}\underline\chi \zeta_A + \zeta^B\underline{\hat{\chi}}_{AB}.
\end{align}
\end{proposition}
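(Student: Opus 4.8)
The plan is to derive every identity in the proposition from two inputs: the defining relations~\eqref{blahblah1}--\eqref{blahblah3} for the Ricci coefficients, and the definition of the Riemann tensor $R(X,Y)Z = \nabla_X\nabla_Y Z - \nabla_Y\nabla_X Z - \nabla_{[X,Y]}Z$ applied to the null frame $\{e_3,e_4,e_A\}$. The whole computation is organized around first producing a complete table of the frame connection coefficients $\nabla_{e_a}e_b$, and then reading off curvature components by commuting frame derivatives.

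First I would compute all covariant derivatives $\nabla_{e_a}e_b$ in terms of the Ricci coefficients. The normalizations $g(e_3,e_4)=-2$, $g(e_4,e_4)=g(e_3,e_3)=0$, $g(e_A,e_4)=g(e_A,e_3)=0$, together with metric compatibility and the torsion-free condition, pin these down once one knows $\chi,\underline{\chi},\zeta,\eta,\underline{\eta},\omega,\underline{\omega}$. Concretely one finds relations of the schematic form $\nabla_A e_4 = \chi_A{}^B e_B + \zeta_A e_4$, $\nabla_A e_3 = \underline{\chi}_A{}^B e_B - \zeta_A e_3$, $\nabla_4 e_4 = -2\omega e_4$, $\nabla_3 e_3 = -2\underline{\omega} e_3$, with the mixed derivatives $\nabla_4 e_3$, $\nabla_3 e_4$, $\nabla_4 e_A$, $\nabla_3 e_A$ likewise determined by $\eta,\underline{\eta},\omega,\underline{\omega}$ (all signs being fixed by the conventions of~\cite{nakedinterior}). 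The one point requiring genuine care is that the shift $b$ enters $e_3 = \Omega^{-1}(\partial_u + b\cdot\slashed{\nabla})$, so the coordinate commutators $[\partial_v,e_A]$ and $[e_3,e_A]$ carry $b$-dependent terms; this bookkeeping must be reconciled consistently with the defining relation $\mathcal{L}_{\partial_v}b^A = -4\Omega^2\zeta^A$ of~\eqref{blahblah1} and tracked throughout.

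With the connection coefficients in hand, each propagation equation follows by expanding the appropriate curvature component. The Raychaudhuri-type equations~\eqref{4trchi}--\eqref{3hatuchi} come from computing $R_{A4B4}$ and $R_{A3B3}$ via $\nabla_4(\nabla_A e_4)$ and its $e_3$-analogue, then splitting into trace and trace-free parts; the definitions~\eqref{blahblah3} of $\chi,\underline{\chi}$ as half-Lie derivatives of $\slashed{g}$ make this a direct second-variation computation. The mixed equations~\eqref{3hatchi}, \eqref{3trchi}, \eqref{4hatuchi}, \eqref{4truchi} arise from the projections $R_{A3B4}$ and $\slashed{g}^{AB}R_{3AB4}$, which also generate the $\slashed{\nabla}\hat\otimes\eta$ and $\slashed{\rm div}\,\eta$ terms through angular derivatives of the connection. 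Equations~\eqref{4eta}--\eqref{3omega} for $\eta,\underline{\eta},\omega,\underline{\omega}$ follow from the components $R_{A434}$, $R_{A343}$, $R_{4[AB]3}$, and ${\rm Ric}_{34}$ respectively, by commuting the relevant frame derivatives. Finally, the Gauss and Codazzi identities~\eqref{genGauss}--\eqref{tcod2} are obtained by comparing the intrinsic connection $\slashed{\nabla}$ of the spheres $\mathbb{S}^2_{u,v}$ with the restriction of the ambient $\nabla$: \eqref{genGauss} and~\eqref{slashr} are the Gauss equation relating $\slashed{R}$ (equivalently the Gauss curvature $K$) to the ambient curvature and the null second fundamental forms, \eqref{cod1}--\eqref{tcod2} are the Codazzi equations for the tangential divergence of $\chi,\underline{\chi}$, and \eqref{useful}, \eqref{ricequality} are purely algebraic symmetry identities of the decomposed curvature tensor.

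The main obstacle is not conceptual but organizational: it is the sheer volume of sign- and index-sensitive bookkeeping, compounded by the presence of the shift $b$, which distinguishes this double-null formalism from the shift-free case and forces one to retain extra commutator terms at every stage. Since the statement is quoted verbatim as Proposition 3.1 of~\cite{nakedinterior}, in practice I would not reproduce the full derivation; instead I would verify that the frame conventions and the normalization $g(e_3,e_4)=-2$ used here agree with those of~\cite{nakedinterior} so that the equations may be imported directly, and I would spot-check one representative transport equation (say~\eqref{4trchi}) and one constraint equation (say~\eqref{tcod1}) against the reference to confirm that all signs and coefficients are consistent with the conventions adopted in this paper.
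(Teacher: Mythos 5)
Your proposal is correct and, in its conclusion, matches what the paper actually does: the paper offers no derivation and simply imports the statement as Proposition 3.1 of~\cite{nakedinterior}, so verifying that the frame normalizations (in particular $g(e_3,e_4)=-2$ and the placement of the shift $b$ in $e_3$) agree with that reference is precisely the content of the paper's ``proof.'' Your preliminary sketch of the standard derivation via the frame connection coefficients and the Gauss--Codazzi decomposition is also a sound outline, though the paper does not carry it out.
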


\begin{remark}\label{shiftshift}(Shifting the Shift) As in our work~\cite{nakedone}, it will sometimes be useful to consider an alternative double-null coordinate system where~\eqref{iooioi284} is replaced by  
\begin{equation}\label{2l3kj2lijo42}
g = -2\Omega^2\left(du\otimes dv + dv\otimes du\right) + \slashed{g}_{AB}\left(d\theta^A - b^Adv\right)\otimes \left(d\theta^B - b^Bdv\right).
\end{equation}
All of the corresponding equations remain the same with following exceptions: we now have
\[e_4 \doteq \Omega^{-1}\left(\partial_v+b\cdot\slashed{\nabla}\right),\qquad e_3 \doteq \Omega^{-1}\partial_u,\]
and the relation between $b$ and the torsion $1$-form $\zeta$ becomes 
\[\mathcal{L}_{\partial_u}b^A = 4\Omega^2\zeta^A.\]

By reversing the roles of $u$ and $v$ we can easily go back and forth between double-null coordinate systems with the shift in the $u$ or $v$-direction. The following particular version of this will be important to us later: Suppose we have a metric $g$ covered by a double-null coordinate system with the shift in $u$-direction and $(u,v,\theta^A)$ lying in the set $\mathcal{P}  \doteq \left\{ \left(u,v,\theta^A\right) \in \mathbb{R}^2\times \mathbb{S}^2: u \in (-\infty,0)\text{ and }0 < \frac{v}{-u} < c\right\}$ for some $c > 0$. Let $\tilde{\mathcal{P}} \doteq \left\{ \left(u,v,\theta^A\right) \in \mathbb{R}^2\times \mathbb{S}^2: v \in (0,\infty)\text{ and }c^{-1} < \frac{v}{-u} < \infty \right\}$ denote the image of $\mathcal{P}$ under the map $\Phi$ which sends $\left(u,v,\theta^A\right) \mapsto \left(-v,-u,\theta^A\right)$. Then the pushforward of $g$ by $\Phi$ will be in a double-null coordinate system with the shift in the $v$-direction, and we will have the following correspondences of metric components and Ricci coefficients:
\[\left(\Omega,b,\slashed{g}\right) \mapsto \left(\Omega,-b,\slashed{g}\right),\]
\[\left(\omega,\underline{\omega},\eta,\underline{\eta},\chi,\underline{\chi}\right) \mapsto \left(-\underline{\omega},-\omega,\underline{\eta},\eta,-\underline{\chi},-\chi\right).\]
\end{remark}

\section{Degenerate Transport Equations}
 In this section we recall some previously proven results concerning degenerate transport equations along $\mathbb{S}^2$ which were proven in~\cite{nakedone}, and then also establish a few other results concerning linear degenerate transport equations along $(v,\theta^A) \in (0,c)\times\mathbb{S}^2$ for $c > 0$. We will then use this to study a particular nonlinear degenerate transport equation which will come up when we study the relation between the homothetic and self-similar double-null gauges. 

\subsection{Linear Degenerate Transport Equations}
As in the works~\cite{nakedone,nakedinterior} an important role will be played by various transport equations. We will now recall/establish the necessary linear estimates in this section.

The following proposition may be deduced from the proof of Proposition 4.5 in~\cite{nakedone}.
\begin{proposition}\label{somestuimdie}Let $H$ be a $(0,k)$ tensor on $\mathbb{S}^2$ and $q$ be a vector field on $\mathbb{S}^2$. We are interested in finding $(0,k)$-tensors $\psi$ on $\mathbb{S}^2$ so that
\begin{equation}\label{3o3oko03}
\mathcal{L}_q\psi + A\psi + W\cdot\psi = H,
\end{equation}
where $W$ is a tensor so that the contraction $W\cdot \psi$ produces a $(0,k)$-tensor, and $A$ is a constant satisfying $A \gtrsim 1$. 

Suppose that for some $j \geq 2$, $q$ lies in $\mathring{H}^{j+1}(\mathbb{S}^2)$  with $\left\vert\left\vert q\right\vert\right\vert_{C^1} \ll_j A$, $W$ lies in $\mathring{H}^j\left(\mathbb{S}^2\right)$ with $\left\vert\left\vert W\right\vert\right\vert_{L^{\infty}} \ll_j A$, and that $H$ lies in $\mathring{H}^j(\mathbb{S}^2)$, where $\mathring{H}^j\left(\mathbb{S}^2\right)$ is the Sobolev space with respect to the round metric.  Then,  there exists a unique $\mathring{H}^j(\mathbb{S}^2)$  solution $\psi$ to~\eqref{3o3oko03}. Furthermore, for any integer $s \in [0,j]$ we have
\begin{equation}\label{ij2938h482h4}
\left\vert\left\vert \left(\psi,\mathcal{L}_q\psi\right)\right\vert\right\vert_{\mathring{H}^s} \leq C\left(k,j,\left\vert\left\vert q\right\vert\right\vert_{\mathring{H}^{s+1}},\left\vert\left\vert W\right\vert\right\vert_{\mathring{H}^s}\right) \left\vert\left\vert H\right\vert\right\vert_{\mathring{H}^s}
\end{equation}
\end{proposition}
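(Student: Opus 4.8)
The plan is to treat the equation as a zeroth-order \emph{coercive} equation for $\psi$, the term $A\psi$ with $A\gtrsim1$ being dominant, perturbed by the transport operator $\mathcal{L}_q$ and the algebraic term $W\cdot\psi$, both of which are small relative to $A$ at the level of their sup-norms. Accordingly I would first establish the a priori estimate~\eqref{ij2938h482h4} assuming a solution exists, and then produce the solution by a regularization argument that preserves these estimates. For the base case $k=0$, pair the equation with $\psi$ in $L^2(\mathbb{S}^2)$ with respect to the round metric. The only term that threatens to lose a derivative is $\langle\mathcal{L}_q\psi,\psi\rangle$; writing $\mathcal{L}_q\psi = q\cdot\nabla\psi + (\nabla q)\ast\psi$ schematically, where $\ast$ denotes a contraction, and integrating the transport part by parts gives $\langle q\cdot\nabla\psi,\psi\rangle = -\tfrac12\int(\mathrm{div}\,q)\,|\psi|^2$, so the whole term is bounded by $C\|q\|_{C^1}\|\psi\|_{L^2}^2$. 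Together with $|\langle W\cdot\psi,\psi\rangle|\lesssim\|W\|_{L^\infty}\|\psi\|_{L^2}^2$ and the smallness hypotheses $\|q\|_{C^1},\|W\|_{L^\infty}\ll_j A$, the coercive term $A\|\psi\|_{L^2}^2$ absorbs these contributions, yielding $\|\psi\|_{L^2}\lesssim A^{-1}\|H\|_{L^2}$. In particular this already gives uniqueness.

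For $k\in[1,j]$ I would commute $k$ covariant derivatives of the round metric through the equation and pair $\nabla^k\psi$ with itself. The transport principal part again integrates by parts with no loss, contributing only $\mathrm{div}\,q$ terms. The decisive point, and what I expect to be the \emph{main technical obstacle}, is the top-order commutator $[\nabla^k,\mathcal{L}_q]\psi$, whose worst term is schematically $(\nabla^{k+1}q)\ast\psi$: this is exactly why the hypothesis demands one extra derivative of $q$, namely $q\in\mathring{H}^{j+1}$. To close in $L^2$ one estimates $\|(\nabla^{k+1}q)\ast\psi\|_{L^2}\lesssim\|q\|_{\mathring{H}^{k+1}}\|\psi\|_{L^\infty}$ and then invokes the Sobolev embedding $\mathring{H}^j(\mathbb{S}^2)\hookrightarrow L^\infty$, which is where the standing assumption $j\geq2$ is used. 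Similarly the commutator of $\nabla^k$ with $W\cdot(\,\cdot\,)$ produces terms up to $(\nabla^kW)\ast\psi$, controlled by $\|W\|_{\mathring{H}^k}\|\psi\|_{L^\infty}$. These higher-order terms are not small multiples of $A$; hence, after absorbing only the genuinely small ($C^1$- and $L^\infty$-sized) pieces into $A\|\nabla^k\psi\|_{L^2}^2$, they are kept on the right-hand side and produce exactly the dependence of the constant in~\eqref{ij2938h482h4} on $\|q\|_{\mathring{H}^{k+1}}$ and $\|W\|_{\mathring{H}^k}$. Inducting on $k$ gives the bound for $\psi$; the companion bound for $\mathcal{L}_q\psi$ then follows by reading $\mathcal{L}_q\psi = H - A\psi - W\cdot\psi$ and using the product estimate for $\mathring{H}^k\cap L^\infty$ on the term $W\cdot\psi$.

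It remains to construct a solution realizing these bounds. Since $\mathcal{L}_q+A+W\cdot$ does not gain regularity, a naive fixed point would lose a derivative through $\mathcal{L}_q$, so instead I would use elliptic regularization: for $\delta>0$ solve
\[ -\delta\,\Delta\psi_\delta+\mathcal{L}_q\psi_\delta+A\psi_\delta+W\cdot\psi_\delta=H, \]
where $\Delta$ is the connection Laplacian of the round metric on tensors. For fixed $\delta>0$ the associated bilinear form is bounded and, thanks to the large $A$, coercive on $\mathring{H}^1$, so Lax--Milgram together with standard elliptic regularity produces a solution $\psi_\delta\in\mathring{H}^{j+1}$. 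Crucially, in every energy identity the regularizer contributes $+\delta\|\nabla^{k+1}\psi_\delta\|_{L^2}^2\geq0$, so the a priori estimates of the previous step hold uniformly in $\delta$, giving $\|\psi_\delta\|_{\mathring{H}^j}\lesssim\|H\|_{\mathring{H}^j}$ with $\delta$-independent constant. Extracting a weak limit $\psi_\delta\rightharpoonup\psi$ in $\mathring{H}^j$, one checks that $\delta\Delta\psi_\delta\to0$ in $\mathring{H}^{j-1}$, because the energy identity also forces $\delta\|\psi_\delta\|_{\mathring{H}^{j+1}}^2\lesssim\|H\|_{\mathring{H}^j}^2$ and hence $\delta\|\psi_\delta\|_{\mathring{H}^{j+1}}\lesssim\delta^{1/2}\to0$. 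Thus $\psi$ solves the original equation and inherits~\eqref{ij2938h482h4} by weak lower semicontinuity. A Galerkin scheme in tensor spherical harmonics, or the method of continuity through the family $L_\tau\psi=\tau\mathcal{L}_q\psi+A\psi+\tau W\cdot\psi$ with $\tau\in[0,1]$ (along which the smallness of $\|\tau q\|_{C^1}$ and $\|\tau W\|_{L^\infty}$ is preserved, so the uniform estimates persist), would serve equally well. Uniqueness is already contained in the $k=0$ estimate, which completes the proof.
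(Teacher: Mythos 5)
Your argument is correct in its essentials, but it takes a genuinely different route from the paper, which in fact offers no self-contained proof of Proposition~3.1: it defers to the proof of Proposition~4.5 of~\cite{nakedone}. The method there --- consistent with how the closely related degenerate equations are handled in Lemma~3.2 of the present paper --- is to integrate along the integral curves of $q$: one represents the solution as a convergent integral of the form $\psi=\int_0^\infty e^{-As}\,\Phi_s^*\left(H-W\cdot\psi\right)ds$ over the flow $\Phi_s$ of $\mp q$ and closes by a contraction/Gr\"onwall argument, the hypotheses $\left\vert\left\vert q\right\vert\right\vert_{C^1},\left\vert\left\vert W\right\vert\right\vert_{L^\infty}\ll_j A$ appearing as the statement that the damping $e^{-As}$ beats the Gr\"onwall growth of $\Phi_s^*$ on $C^k$. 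Your route --- $L^2$ coercivity with the transport term integrated by parts, commutation of $\nabla^k$ with the extra derivative landing on $q$ (correctly identifying why $q\in\mathring{H}^{j+1}$ is needed), and existence by vanishing viscosity via Lax--Milgram with the $\delta$-uniform energy identity --- is equally valid and more self-contained; what it gives up is the explicit solution formula, and what it buys is that one never needs to control the flow map of $q$ itself. One detail to tighten: at order $k=1$ the top commutator term $\left(\nabla^{2}q\right)\ast\psi$ paired against $\nabla\psi$ cannot be closed via $\left\vert\left\vert\nabla^{2}q\right\vert\right\vert_{L^2}\left\vert\left\vert\psi\right\vert\right\vert_{L^\infty}$ using only the inductive $\mathring{H}^1$ control, since $\mathring{H}^1(\mathbb{S}^2)\not\hookrightarrow L^\infty$; use instead the H\"older pairing $\left\vert\left\vert\nabla^{2}q\right\vert\right\vert_{L^4}\left\vert\left\vert\psi\right\vert\right\vert_{L^4}\left\vert\left\vert\nabla\psi\right\vert\right\vert_{L^2}$ together with the two-dimensional embeddings $\mathring{H}^1\hookrightarrow L^4$ and $\mathring{H}^{3}\hookrightarrow W^{2,4}$ (available since $j\geq 2$), at the harmless cost of a constant depending on one more derivative of $q$ at low $k$; for $k\geq 2$ your $L^\infty$ pairing works as stated because the induction already controls $\left\vert\left\vert\psi\right\vert\right\vert_{\mathring{H}^2}$. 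With that adjustment the proof closes.
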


The following lemma, which also concerns transport equations, will also be useful.
\begin{lemma}\label{laptrans}Let $k \in \mathbb{Z}_{\geq 0}$, $j \in \mathbb{Z}_{\geq 1}$ be sufficiently large, $c > 0$, and $q(v)$ be a $1$-parameter family of vector fields along $\mathbb{S}^2$ satisfying 
\[\sup_v\sum_{j_1+j_2 \leq j+1}\left\vert\left\vert \left(v\mathcal{L}_{\partial_v}\right)^{j_1}q\right\vert\right\vert_{C^{j_2}\left(\mathbb{S}^2\right)} \lesssim \epsilon,\]
 and $p$ denotes any positive number so that $\epsilon p^{-1} \ll_j 1$. We will then consider certain transport equations for an unknown $(0,k)$-tensor $\psi_{A_1\cdots A_k}\left(v,\theta^A\right) : (0,c) \times \mathbb{S}^2 \to \mathbb{R}$ and known $(0,k)$-tensor $H_{A_1\cdots A_k}: (0,c) \times \mathbb{S}^2 \to \mathbb{R}$:
\begin{equation}\label{3oj2omo402}
v\mathcal{L}_{\partial_v}\psi + \mathcal{L}_q\psi - A\psi + W\cdot \psi = H,
\end{equation}
where $A \geq 0$, and $W$ is a tensor so that the contraction $W\cdot \psi$ is still a $(0,k)$ tensor and satisfies
\[\sup_v\sum_{j_1+j_2 \leq j}\left\vert\left\vert \left(v\mathcal{L}_{\partial_v}\right)^{j_1}W(v)\right\vert\right\vert_{C^{j_2}\left(\mathbb{S}^2\right)} \lesssim \epsilon,\]
where we use a fixed choice of a round metric $\mathring{\slashed{g}}$ to compute the norm above (and also below).

Then we have the following: 
\begin{enumerate}
\item If $k = A = 0$, $W = 0$, and the right hand side of~\eqref{2om3om2o4} is finite,  then for every choice of $\tilde{\psi} \doteq \psi|_{v=c}$, there exists a unique solution $\psi$ which will satisfy the following bounds for any $i \in [0,j]$:
\begin{align}\label{2om3om2o4}
&\sum_{i_1+i_2 \leq i}\left\vert\left\vert \left(v\mathcal{L}_{\partial_v}\right)^{i_1}\psi(v)\right\vert\right\vert_{C^{i_2}(\mathbb{S}^2)} \lesssim_i 
\\ \nonumber &\qquad v^{-D(i)\epsilon} \left(\left(1+\left|\log(v)\right|\right)\sup_v\sum_{i_1+i_2\leq i}\left\vert\left\vert\left(v\mathcal{L}_{\partial_v}\right)^{i_1} H\right\vert\right\vert_{C^{i_2}\left(\mathbb{S}^2\right)}+ \left\vert\left\vert \tilde{\psi}\right\vert\right\vert_{C^i\left(\mathbb{S}^2\right)}\right),
\end{align}
for a sequence of non-negative constants $D(i)$ so that $D(0) = 0$. 

Furthermore, if the right hand side of~\eqref{2om3omo2} is finite, then there also exists a solution $\psi$ such that 
\begin{equation}\label{2om3omo2}
\sum_{i_1+i_2\leq i}\left\vert\left\vert \left(v\mathcal{L}_{\partial_v}\right)^{i_1}\psi(v)\right\vert\right\vert_{C^{i_2}\left(\mathbb{S}^2\right)} \lesssim_{p^{-1}} v^p\sup_v\left[ v^{-p}\sum_{i_1+i_2 \leq i}\left\vert\left\vert \left(v\mathcal{L}_{\partial_v}\right)^{i_1}H\right\vert\right\vert_{C^{i_2}\left(\mathbb{S}^2\right)}\right].
\end{equation}
\item If $k = 0$, $A = 1$, $W = 0$, and the right hand side of~\eqref{oiio19281} is finite,  then for every choice of $\tilde{\psi} \doteq \psi|_{v=c}$, there exists a unique solution $\psi$ which will satisfy the following bound:
\begin{align}\label{oiio19281}
v^{-1}\left\vert\left\vert \psi\left(v\right) \right\vert\right\vert_{L^{\infty}\left(\mathbb{S}^2\right)} \lesssim \int_v^c \tau^{-2}\left\vert\left\vert H(\tau)\right\vert\right\vert_{L^{\infty}\left(\mathbb{S}^2\right)}\, d\tau + c^{-1}\left\vert\left\vert \tilde{\psi}\right\vert\right\vert_{L^{\infty}\left(\mathbb{S}^2\right)}.
\end{align}
\item If $A = 0$, and the right hand side of~\eqref{198ijnbhuknbghj} is finite,  then for every choice of $\tilde{\psi} \doteq \psi|_{v=c}$, there exists a unique solution $\psi$ which will satisfy the following bounds for any $i \in [0,j]$:
\begin{align}\label{198ijnbhuknbghj}
&\sum_{i_1+i_2 \leq i}\left\vert\left\vert \left(v\mathcal{L}_{\partial_v}\right)^{i_1}\psi(v)\right\vert\right\vert_{C^{i_2}(\mathbb{S}^2)} \lesssim_i 
\\ \nonumber &\qquad v^{-D(i)\epsilon} \left(\left(1+\left|\log(v)\right|\right)\sup_v\sum_{i_1+i_2\leq i}\left\vert\left\vert\left(v\mathcal{L}_{\partial_v}\right)^{i_1} H\right\vert\right\vert_{C^{i_2}\left(\mathbb{S}^2\right)}+ \left\vert\left\vert \tilde{\psi}\right\vert\right\vert_{C^i\left(\mathbb{S}^2\right)}\right),
\end{align}
for a sequence of non-negative constants $D(i)$ (where it may be the case that $D(0) > 0$).

Furthermore, if the right hand side of~\eqref{21ljoi1jo21} is finite, then there will also exist a solution $\psi$ such that
\begin{equation}\label{21ljoi1jo21}
\sum_{i_1+i_2\leq i}\left\vert\left\vert \left(v\mathcal{L}_{\partial_v}\right)^{i_1}\psi(v)\right\vert\right\vert_{C^{i_2}\left(\mathbb{S}^2\right)} \lesssim_{p^{-1}} v^p\sup_v\left[ v^{-p}\sum_{i_1+i_2 \leq i}\left\vert\left\vert \left(v\mathcal{L}_{\partial_v}\right)^{i_1}H\right\vert\right\vert_{C^{i_2}\left(\mathbb{S}^2\right)}\right].
\end{equation}

\end{enumerate}

\end{lemma}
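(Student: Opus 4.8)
The plan is to remove the degeneracy at $v = 0$ by the logarithmic change of variables $s = \log(v/c)$, under which $v\mathcal{L}_{\partial_v}$ becomes $\mathcal{L}_{\partial_s}$ and the strip $v \in (0,c]$ becomes the half-line $s \in (-\infty,0]$, with $s = 0$ the data surface $\{v = c\}$ and $s \to -\infty$ the degenerate limit $v \to 0$. In these variables~\eqref{3oj2omo402} becomes the genuinely non-degenerate transport equation $\mathcal{L}_{\partial_s}\psi + \mathcal{L}_q\psi - A\psi + W\cdot\psi = H$ on $(-\infty,0]\times\mathbb{S}^2$, which I would solve by the method of characteristics. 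Let $\Theta(\sigma;s,\theta)$ denote the flow of the vector field $q$ on $\mathbb{S}^2$, so that $\partial_\sigma\Theta = q(\sigma,\Theta)$ and $\Theta(s;s,\theta) = \theta$; since the hypothesis $\sum_{j_1+j_2\leq j+1}\|(v\mathcal{L}_{\partial_v})^{j_1}q\|_{C^{j_2}}\lesssim\epsilon$ makes $q$ Lipschitz uniformly in $\sigma$, this flow exists for all $\sigma \in (-\infty,0]$. Along each characteristic the PDE collapses to a linear ODE for $\psi$ — scalar in Cases 1--2 and a $2^k$-component system in Case 3 — which I then integrate explicitly. Uniqueness in the $v=c$-anchored statements is immediate, since the characteristics foliate $(0,c]\times\mathbb{S}^2$ and each meets $\{v=c\}$.

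For the scalar Cases 1 and 2 (where $W = 0$, $k = 0$) this produces the closed representation $\psi(v) = \tilde\psi(\Theta(0)) - \int_v^c H(\tau,\Theta(\tau))\,\frac{d\tau}{\tau}$, and the two advertised solutions correspond to the two natural anchorings. Anchoring the ODE at $s = 0$ (i.e.\ $v = c$) with data $\tilde\psi$ gives the unique solution attaining $\psi|_{v=c} = \tilde\psi$; the only $v$-growth comes from the measure $\frac{d\tau}{\tau}$, which integrates to $\log(c/v) \lesssim 1 + |\log v|$, yielding~\eqref{2om3om2o4} with $D(0) = 0$. Anchoring instead at $s = -\infty$ (i.e.\ $v = 0$) and demanding decay there gives $\psi(v) = \int_0^v H(\tau,\Theta(\tau))\,\frac{d\tau}{\tau}$; when $\|H(\tau)\| \lesssim \tau^p\sup_\tau(\tau^{-p}\|H\|)$ the integral $\int_0^v \tau^{p-1}\,d\tau = p^{-1}v^p$ converges and produces~\eqref{2om3omo2}. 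Case 2, with $A = 1$, is identical after introducing the integrating factor $v^{-1}$: along characteristics the equation becomes $\frac{d}{dv}(v^{-1}\psi) = v^{-2}H$, and integrating from $c$ down and taking $L^\infty$ norms gives exactly~\eqref{oiio19281} — here the characteristic only moves the base point, so $\|H(\tau,\Theta(\tau))\| \leq \|H(\tau)\|_{L^\infty}$ and no flow derivatives are needed at this order.

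For the tensorial Case 3 I would first write $(\mathcal{L}_q\psi)_{A_1\cdots A_k} = q^B\partial_B\psi_{A_1\cdots A_k} + \sum_i(\partial_{A_i}q^B)\psi_{\cdots B\cdots}$, absorbing the second, zeroth-order piece into $W$ to form an effective coefficient $\tilde{W}$, obtained by adjoining the first angular derivatives of $q$ to $W$, with $\|\tilde{W}\|_{L^\infty}\lesssim\epsilon$. Along characteristics the equation is then the linear ODE system $v\frac{d}{dv}(\text{components}) + \tilde{W}\cdot(\text{components}) = H$, and Gr\"onwall bounds the homogeneous propagator between $v$ and $c$ by $\exp(\int_v^c \|\tilde{W}\|\,\frac{d\tau}{\tau}) \lesssim (c/v)^{C\epsilon}\sim v^{-C\epsilon}$. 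This is precisely why $D(0)$ may now be positive: unlike Cases 1--2 there is a genuine zeroth-order term whose Gr\"onwall factor grows as $v\to 0$. The $v = c$ anchoring then yields~\eqref{198ijnbhuknbghj} at top order, while the $v = 0$ anchoring yields~\eqref{21ljoi1jo21}, the convergence of $\int_0^v \tau^{p-1-C\epsilon}\,d\tau$ and the resulting factor $\lesssim p^{-1}$ both being guaranteed by the standing hypothesis $\epsilon p^{-1}\ll_j 1$, which forces $p - C\epsilon > 0$.

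Finally, the higher-order estimates — the $C^{i_2}$ and $(v\mathcal{L}_{\partial_v})^{i_1}$ norms appearing in~\eqref{2om3om2o4},~\eqref{198ijnbhuknbghj}, and~\eqref{21ljoi1jo21} — are where the real work lies, and I expect them to be the main obstacle. Differentiating the characteristic representation in $\theta$ brings down derivatives of the flow map $\Theta$, whose $C^{i_2}$ norms are themselves governed by Gr\"onwall-type bounds in terms of $\|q\|_{C^{i_2+1}}$ and hence grow like $v^{-D(i)\epsilon}$ with $D(i)$ increasing in $i$; similarly, applying $v\mathcal{L}_{\partial_v}$ and commuting it through the equation generates the commutator $[v\mathcal{L}_{\partial_v},\mathcal{L}_q] = \mathcal{L}_{v\mathcal{L}_{\partial_v}q}$, controlled by the mixed-derivative hypotheses on $q$ and $W$. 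The whole argument is a direct degenerate-in-$v$ analogue of the $\mathbb{S}^2$ transport estimates of Proposition~\ref{somestuimdie} (Proposition 4.5 of~\cite{nakedone}), carried out uniformly along the characteristic foliation; the crux is organizing the bookkeeping so that each differentiation contributes only an additional power of $v^{-\epsilon}$, with no uncontrolled loss, so that the cumulative exponent is the claimed $D(i)\epsilon$.
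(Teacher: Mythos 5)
Your proposal is correct and follows essentially the same route as the paper: the logarithmic change of variables, integration along the characteristics of $v\partial_v+q$ anchored either at $\{v=c\}$ (for the unique solution with prescribed data) or at $\{v=0\}$ (for the decaying solution), the integrating factor $v^{-1}$ in the $A=1$ case, and commutation with angular derivatives plus Gr\"onwall to produce the $v^{-D(i)\epsilon}$ losses. The only cosmetic difference is that the paper constructs the decaying solution as a limit of solutions vanishing at $s=s_\infty$ via a reverse Gr\"onwall inequality rather than writing the integral from $v=0$ directly.
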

\begin{proof}We start with the proof of~\eqref{2om3om2o4}. Changing variables to $s = -\log(v)$, the equation becomes
\begin{equation}\label{2m4o2}
\left(\mathcal{L}_{\partial_s}-\mathcal{L}_q\right)\psi = -H.
\end{equation}
Letting, $\alpha$ denote a suitable multi-index, and $Z^{(\alpha)}$ then a product of angular momentum operators on $\mathbb{S}^2$, we may also commute~\eqref{2m4o2} with $\mathcal{L}_{Z^{(\alpha)}}$ to obtain
\begin{equation}\label{23412}
\left(\mathcal{L}_{\partial_s}-\mathcal{L}_q\right)\left(\mathcal{L}_{Z^{(\alpha)}}\psi\right) - \left[\mathcal{L}_{Z^{(\alpha)}},\mathcal{L}_q\right]\psi= -\mathcal{L}_{Z^{(\alpha)}}H.
\end{equation}
In either case we may simply integrate towards $s = \infty$ along the integral curves of the vector field $\partial_s - q$ starting from $s = -\log(c)$,  in the case of~\eqref{23412}, sum over $\alpha$ and apply Gr\"{o}nwall's inequality, and change back to the $v$ variable to obtain, for any $i \in [0,j]$:
\begin{equation}\label{3ioj1oiji4}
\left\vert\left\vert \psi(v) \right\vert\right\vert_{L^{\infty}\left(\mathbb{S}^2\right)} \lesssim \left(1+|\log(v)|\right) \sup_v\left\vert\left\vert H(v)\right\vert\right\vert_{L^{\infty}\left(\mathbb{S}^2\right)} + \left\vert\left\vert \tilde{\psi}\right\vert\right\vert_{L^{\infty}\left(\mathbb{S}^2\right)},
\end{equation}
\begin{equation}\label{oo1ipio214}
\left\vert\left\vert \psi(v) \right\vert\right\vert_{C^i\left(\mathbb{S}^2\right)} \lesssim v^{-\epsilon D(i)}\left(\left(1+\left|\log(v)\right|\right)\sup_v\left\vert\left\vert H(v)\right\vert\right\vert_{C^i\left(\mathbb{S}^2\right)} + \left\vert\left\vert \tilde{\psi}\right\vert\right\vert_{C^i\left(\mathbb{S}^2\right)}\right).
\end{equation}
Combining these estimates with the use of the equation (and $v\partial_v$ differentiated versions thereof) directly to estimate $v\mathcal{L}_{\partial_v}$ derivatives of the solution then leads to~\eqref{2om3om2o4}.

For the second bound~\eqref{2om3omo2}, we let $s_{\infty} \gg 1$ be an arbitrarily large constant and integrate~\eqref{2m4o2} along the integral curves of $q$ backwards from $s_{\infty}$ to define a solution $\psi_{s_{\infty}}\left(s,\theta^A\right) : (-\log(c),s_{\infty})\times\mathbb{S}^2 \to\mathbb{R}$ which satisfies $\psi_{s_{\infty}}|_{s = s_{\infty}} = 0$. Commuting with angular derivatives and applying the fundamental theorem of calculus leads to, for each $s \in (-\log(c),s_{\infty})$,
\[\left\vert\left\vert \psi_{s_{\infty}}(s)\right\vert\right\vert_{C^i\left(\mathbb{S}^2\right)} \lesssim \epsilon\int_s^{s_{\infty}}\left\vert\left\vert \psi_{s_{\infty}}(\tau)\right\vert\right\vert_{C^i\left(\mathbb{S}^2\right)}\, d\tau + e^{-ps}\sup_v\left[ (-v)^{-p}\left\vert\left\vert H\right\vert\right\vert_{C^i}\right].\]
Applying the ``reverse''-Gr\"{o}nwall's inequality\footnote{Namely, if $u(t) \leq \alpha(t) + \int_t^B\beta(s)u(s)\, ds$ for $t \in [A,B]$, $u \geq 0$, $\alpha \geq 0$, and $\beta \geq 0$, then
\[u(t) \leq \alpha(t) + \exp\left(\int_t^B\beta(\tau)\, d\tau\right)\int_t^B\left[\alpha(\tau)\beta(\tau)\exp\left(-\int_{\tau}^B\beta(s)\, ds\right)\right]\, d\tau,\qquad \forall t \in [A,B].\]
} and using that $\epsilon p^{-1} \ll 1$, we obtain the bound 
\[\left\vert\left\vert \psi_{s_{\infty}}(s)\right\vert\right\vert_{C^i\left(\mathbb{S}^2\right)} \lesssim  e^{-ps}\sup_v\left[ (-v)^{-p}\left\vert\left\vert H\right\vert\right\vert_{C^i}\right].\]
It is then straightforward to use these bounds to obtain a limiting solution $\psi \doteq \lim_{s_{\infty}\to \infty}\psi_{s_{\infty}}$ which will satisfy the desired estimates. As above, the equation may be used to directly estimate $v\mathcal{L}_{\partial_v}$ derivatives in terms of angular derivatives and $H$. 

In order to establish~\eqref{oiio19281} we first multiply the equation through by $v^{-1}$ to derive an equation for $v^{-1}\psi$ of the type  discussed above. Then~\eqref{oiio19281} follows immediately by integrating along the integral curves of $\partial_s+q$. Finally,~\eqref{198ijnbhuknbghj} and~\eqref{21ljoi1jo21} are established in the same fashion \emph{mutatis mutandis} as we established~\eqref{2om3om2o4} and~\eqref{2om3omo2}. 
\end{proof}

\subsection{A Nonlinear Degenerate Transport Equation}\label{nonlineardegentran}
In this section we will study a nonlinear degenerate transport equation that will arise in our discussion of canonical gauges in Section~\ref{3joi99192} where we will need to construct suitable self-similar solutions to the eikonal equation.
\begin{proposition}\label{solvedegennonlinear}Let $\hat{c} > 0$, $N$ be a sufficiently large integer, and $H\left(\hat{v},\theta^A\right) : [0,\hat{c}) \times \mathbb{S}^2 \to \mathbb{R}$  be a given function with $H\left(0,\theta^A\right) = 0$, $h^A(\hat{v})$ be an $\mathbb{S}^2$ vector field for each $\hat{v} \in [0,\hat{c})$, and $\slashed{g}_{AB}(\hat{v})$ be an $\mathbb{S}^2$ Riemannian metric for each $\hat{v} \in [0,\hat{c})$ such that 
\begin{align}\label{oiiuiu328932o}
&\sum_{i=0}^1\sum_{j=0}^{N-1}\sup_v\left\vert\left\vert |\hat{v}|^{di+j}\mathcal{L}_{\partial^{i+j}_{\hat{v}}}\left(h,\slashed{g}-(\hat{v}+1)^2\mathring{\slashed{g}}\right)\right\vert\right\vert_{C^{N-i-j}\left(\mathbb{S}^2\right)} 
\\ \nonumber &\qquad \qquad + \sum_{i=0}^1\sum_{k=0}^1\sum_{j=0}^{N-2}\sup_v\left\vert\left\vert |\hat{v}|^{dk+j}\mathcal{L}_{\partial^{i+j+k}_{\hat{v}}}H\right\vert\right\vert_{C^{N-i-j-k}\left(\mathbb{S}^2\right)}  \doteq \mathscr{A} \ll {\rm min}\left(d,N^{-1}\right),
\end{align}
where we use the metric $\mathring{\slashed{g}}_{AB}$ to compute the norm of tensors along $\mathbb{S}^2$ and $d > 0$ satisfies $d \ll 1$. 

Let $a \in \mathbb{R}$ satisfy $|a| \lesssim 1$. Then there exists a function $v\left(\hat{v},\theta^A\right) : \mathcal{R} \to (0,\infty)$, for some open $\mathcal{R} \subset (0,\hat{c})\times \mathbb{S}^2$ with $(0,\tilde{c}) \times \mathbb{S}^2 \subset \mathcal{R}$ for some $\tilde{c} > 0$ such that the map $\left(\hat{v},\theta^A\right) \mapsto \left(v,\theta^A\right)$ is a $C^1$-diffeomorphism onto its range which is equal to $\left(0,c\right)\times \mathbb{S}^2$ for some constant $c > 0$, $\lim_{\hat{v}\to 0}v = 0$, and so that, if we consider $\hat{v}$ as a function of $v$, then
\begin{equation}\label{3o29u9o}
(-v)\mathcal{L}_{\partial_v}\hat{v} + \hat{v} - h^A\slashed{\nabla}_A\hat{v} +a\left|\slashed{\nabla}\hat{v}\right|_{\slashed{g}}^2= H,
\end{equation}
where $h$, $\slashed{g}$, and $H$ are all evaluated at $\left(\hat{v}\left(v,\theta^A\right),\theta^A\right)$. 

Let $q > 0$ satisfy $\mathscr{A} \ll q \ll {\rm min}\left(d,N^{-1}\right)$. Then, considered as a function of $v$, $\hat{v}$ will satisfy the following estimates:
\begin{equation}\label{jwiojoqr}
\sup_v \left\vert\left\vert v^{1-q}\left(v-\hat{v}\right)\right\vert\right\vert_{C^{N-1}\left(\mathbb{S}^2\right)} \lesssim_q \mathscr{A},
\end{equation}
\begin{equation}\label{3oij2oij4o}
\sum_{j=0}^{\tilde{N}-1} \sup_v\left\vert\left\vert v^{\tilde{d}+j}\mathcal{L}_{\partial_v}^j\left(\log\mathcal{L}_{\partial_v}\hat{v}\right)\right\vert\right\vert_{C^{N-2-j}\left(\mathbb{S}^2\right)} + \sup_{(v,\theta^A)}\left|\frac{\log\mathcal{L}_{\partial_v}\hat{v}}{\log\left(v\right)}\right|\lesssim \mathscr{A}
\end{equation}
for a constant $\tilde{d}$ which satisfies $|\tilde{d}| \lesssim d+\mathscr{A}$.
\end{proposition}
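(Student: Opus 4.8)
The plan is to solve~\eqref{3o29u9o} directly for $\hat{v}$ as a function of $(v,\theta^A)$ by a weighted fixed-point iteration built on the linear transport estimates of Lemma~\ref{laptrans}, and only at the end to invert and produce $v(\hat v,\theta^A)$. The key algebraic observation is that the principal operator $(-v)\mathcal{L}_{\partial_v}+1$ annihilates the function $v$, since $(-v)\partial_v v + v = 0$; its kernel is exactly the resonant mode $v$. It is therefore natural to set $w \doteq \hat{v}-v$ and to note that $\slashed{\nabla}_A\hat{v}=\slashed{\nabla}_A w$ because $v$ is constant along $\mathbb{S}^2$. In terms of $w$ the equation reads
\[
(-v)\mathcal{L}_{\partial_v}w + w = H(\hat v,\cdot) + h^A(\hat v,\cdot)\,\slashed{\nabla}_A w - a\,|\slashed{\nabla} w|^2_{\slashed{g}(\hat v,\cdot)},
\]
with all coefficients composed with $\hat v = v+w$. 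Multiplying through by $-1$, this is of the coercive form of Lemma~\ref{laptrans} with constant $A=1$, vanishing tensorial weight $W=0$, and small transport vector field $q=h$; the crucial point is that the zeroth order term has the coercive sign when we integrate towards $v\to 0$, which is what both selects the correct solution and produces the $v$-weighted gains.

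First I would set up the iteration. Starting from $w_0=0$ (i.e.\ $\hat v_0=v$), given $w_n$ I freeze the coefficients at $\hat v_n = v+w_n$, regard the right-hand side as a known inhomogeneity $F_n$, and solve the \emph{linear} equation $(-v)\mathcal{L}_{\partial_v}w_{n+1}+w_{n+1}=F_n$ using the $A=1$ estimate~\eqref{oiio19281} applied to $w_{n+1}$ and to its angular-commuted versions $\mathcal{L}_{Z^{(\alpha)}}w_{n+1}$ (the commutators generate only lower-order terms carrying controlled weights, so the $C^{N-1}$ bounds follow by induction on the number of angular derivatives). Since $H(0,\theta^A)=0$ and $\partial_{\hat v}H$ is of size $\mathscr{A}$ by~\eqref{oiiuiu328932o}, the resonant coefficient of the problem is $\partial_{\hat v}H(0,\theta^A)=O(\mathscr{A})$; solving the equation self-consistently (rather than in a naive linearization) turns the would-be logarithmic resonance into a \emph{mild power law} $\hat v \sim c\,(v/c)^{1-O(\mathscr{A})}$, which is positive and tends to $0$ as $v\to 0$. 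The homogeneous multiple of $v$ is fixed by the outer normalization $\hat v|_{v=c}=c$, integrating the coercive equation inward; this yields $v-\hat v = O(\mathscr{A}\,v\,(1+|\log v|))$, whence the weight $v^{1-q}$ in~\eqref{jwiojoqr} (recall $\mathscr{A}\ll q$) absorbs both this logarithm and the mild angular-derivative losses $v^{-D(i)\mathscr{A}}$ supplied by Lemma~\ref{laptrans}. The composition differences $H(\hat v,\cdot)-H(v+w',\cdot)$ and the quadratic term $a|\slashed{\nabla}w|^2$ are estimated by the mean value theorem to give a genuine contraction in the corresponding weighted norm, and $w=\lim_n w_n$ solves~\eqref{3o29u9o}.

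The second estimate~\eqref{3oij2oij4o} controls the Jacobian $\mu\doteq \mathcal{L}_{\partial_v}\hat v = 1+\partial_v w$ of the change of variables. I would differentiate~\eqref{3o29u9o} once in $v$; this produces a degenerate transport equation for $\mu$ whose zeroth-order coefficient is built from $\partial_{\hat v}H$, $\partial_{\hat v}h$, and the scaling of $\slashed g$, all of size $O(\mathscr{A})$, and whose angular transport part is $\mathcal{L}_h$. Dividing by $\mu$ gives a transport equation for $\log\mu$ of exactly the degenerate type~\eqref{lapfrom} analysed in the introduction, i.e.\ an $A=0$ equation in the sense of Lemma~\ref{laptrans}; the bounds~\eqref{198ijnbhuknbghj} and~\eqref{21ljoi1jo21} then give the mild behaviour $\mu = v^{O(\mathscr{A})}$, the pointwise bound $|\log\mu/\log v|\lesssim\mathscr{A}$, and the weighted derivative estimates, where the constant $\tilde d$ inherits both the intrinsic exponent $\sim\mathscr{A}$ and the coefficient-regularity exponent, giving $|\tilde d|\lesssim d+\mathscr{A}$. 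In particular $\mu>0$ throughout, so for each fixed $\theta^A$ the map $v\mapsto \hat v(v,\theta^A)$ is strictly increasing with $\lim_{v\to 0}\hat v=0$; inverting (using that we control more than enough derivatives for a $C^1$ inverse) produces $v(\hat v,\theta^A)$ on the region $\mathcal{R}$, whose image under $(\hat v,\theta^A)\mapsto (v,\theta^A)$ is the strip $\{v\in(0,c)\}\times\mathbb{S}^2$ on which we solved, and which, since $\hat v\approx v$, contains $(0,\tilde c)\times\mathbb{S}^2$.

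The main obstacle is closing the nonlinear iteration in a single weighted norm that simultaneously (i) tolerates the degeneracy of $(-v)\mathcal{L}_{\partial_v}+1$ together with the resonance against its kernel element $v$, which must be resolved self-consistently into the mild power law rather than a spurious logarithm; (ii) absorbs the $v^{-D(i)\mathscr{A}}$ losses incurred at each angular-derivative order through the small positive weight $v^{q}$; and (iii) controls the composition of the given coefficients with the unknown $\hat v=v+w$, for which one needs the a priori knowledge that $\hat v$ remains in $[0,\hat c)$ and close to $v$. The hierarchy $\mathscr{A}\ll q\ll \min(d,N^{-1})$ is precisely what renders these three requirements compatible, and verifying the contraction estimate for the quadratic and composition nonlinearities in this weighted norm — while keeping track of the coercive integration direction that selects the correct outer normalization — is the technical heart of the argument.
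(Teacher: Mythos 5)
Your strategy matches the paper's in all essential respects: linearize around the kernel element $\hat v = v$ by setting $w=\hat v - v$, treat the resulting equation as a coercive ($A=1$) degenerate transport equation and close $C^{N-1}$ bounds via the estimate~\eqref{oiio19281} of Lemma~\ref{laptrans} after commuting with angular momentum operators and applying Gr\"{o}nwall, and then obtain~\eqref{3oij2oij4o} by differentiating the equation once in $v$ and reading off a linear degenerate transport equation for $\log\mathcal{L}_{\partial_v}\hat v$. The only methodological difference is that you produce the solution by a weighted contraction iteration, whereas the paper constructs it by characteristics and closes a bootstrap for the a priori estimates; this is an acceptable variant.

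There is, however, one genuine gap. Your weighted bound $|w|\lesssim \mathscr{A}\,v^{1-q}$ gives $\hat v \geq v\bigl(1-C\mathscr{A}\,v^{-q}\bigr)$, which becomes negative for $v\lesssim(C\mathscr{A})^{1/q}$; so this estimate alone does \emph{not} keep $\hat v$ in the domain $[0,\hat c)$ on which $H$, $h$, and $\slashed g$ are defined, nor does it justify the crucial bound $|H(\hat v,\cdot)|\lesssim\mathscr{A}\,\hat v$ that feeds the smallness of the inhomogeneity near $v=0$. You acknowledge that one ``needs the a priori knowledge that $\hat v$ remains in $[0,\hat c)$'' but supply no mechanism for it, and it is not a consequence of the contraction estimate. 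The paper closes this by a separate maximum-principle argument: setting $p\doteq \hat v\, v^{-1-q}$, the equation for $p$ has the form $(-v)\mathcal{L}_{\partial_v}p - qp - h\cdot\slashed\nabla p + a\slashed\nabla\hat v\cdot\slashed\nabla p = v^{-1-q}H$, and the first-derivative test at an interior minimum combined with $|H|\lesssim\mathscr{A}\hat v$ forces the minimum of $p$ to occur at $v=c$, yielding the lower bound $\sup v^{1+q}/\hat v\lesssim 1$ and hence positivity of $\hat v$. You need this step (or an equivalent, e.g.\ extending the coefficients to $\hat v<0$ in a bounded way and then proving positivity a posteriori) before the compositions in your iteration are well defined and before the inversion step at the end makes sense.
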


\begin{proof}Let $c > 0$ be a fixed constant with $\mathscr{A} \ll c$. An application of the method of characteristics  shows that there exists a solution $\hat{v}\left(v,\theta^A\right) : (0,c) \times \mathbb{S}^2 \to (0,\infty)$ to~\eqref{3o29u9o} which satisfies 
\begin{equation}\label{ooi1031}
\lim_{v \to c^-}\hat{v} = c,\qquad {\rm sup}_{\left(v,\theta^A\right) \in (0,c)\times \mathbb{S}^2}\frac{v^{1+q}}{\hat{v}} \lesssim 1,\qquad \sup_{v \in (0,c)} v^{-1+q}\left\vert\left\vert \hat{v}-v\right\vert\right\vert_{\mathring{H}^{N-1}\left(\mathbb{S}^2_v\right)} \lesssim_q \mathscr{A},
\end{equation}
if we can establish suitable a priori estimates for solutions to~\eqref{3o29u9o}. More specifically it suffices to establish the following for a suitable constant $C_{\rm boot} > 0$: Let $a \in (0,c)$ and suppose that $\hat{v} : (a,c) \times \mathbb{S}^2 \to (0,\infty)$ solves~\eqref{3o29u9o} and satisfies
 \begin{equation}\label{nicebootstrapassumption}
 \lim_{v \to c^-}\hat{v} = c, \qquad \sup_{v \in (0,c)} v^{-1+q}\left\vert\left\vert \hat{v}-v\right\vert\right\vert_{C^{N-1}\left(\mathbb{S}^2_v\right)} \leq  2C_{\rm boot}\mathscr{A}.
 \end{equation}
 Then we in fact have 
 \begin{equation}\label{tobeshownboot}
 {\rm sup}_{\left(v,\theta^A\right) \in (a,c)\times \mathbb{S}^2}\frac{v^{1+q}}{\hat{v}} \lesssim 1,\qquad \sup_{v \in (a,c)} v^{-1+q}\left\vert\left\vert \hat{v}-v\right\vert\right\vert_{C^{N-1}\left(\mathbb{S}^2_v\right)} \leq  C_{\rm boot}\mathscr{A}.
 \end{equation}

We now will show that this is the case. We start with the first estimate in~\eqref{tobeshownboot}. Note that as a consequence of~\eqref{oiiuiu328932o}, we have
\begin{equation}\label{9901}
\left|H\left(\hat{v},\theta^A\right)\right| \lesssim \mathscr{A}\hat{v}.
\end{equation} 
Dividing~\eqref{3o29u9o} by $v^{-1-q}$ leads to the following equation for for $p \doteq \hat{v} v^{-1-q}$:
\begin{equation}\label{3pokpokpo2}
(-v)\mathcal{L}_{\partial_v}p -qp - h^A\slashed{\nabla}_Ap +a\slashed{\nabla}\hat{v}\cdot \slashed{\nabla}p = v^{-1-q}H.
\end{equation}
Now  we claim that $p$ must attain its minimum on $c$. To see this, suppose for the sake of contradiction that $p$ attains its minimum at a point $\left(v_{\rm min},\theta^A_{\rm min}\right)$ with $v_{\rm min } < c$. By the first derivative test, we have that $\slashed{\nabla}p|_{\left(v_{\rm min},\theta^A_{\rm min}\right)} = 0$ and that $(-v)\mathcal{L}_{\partial_v}p|_{\left(v_{\rm min},\theta^A_{\rm min}\right)} \leq 0$. This implies that at the point $\left(v_{\rm min},\theta^A_{\rm min}\right)$
\[q p + v^{-1-q}H \leq 0.\]
In view of~\eqref{9901} and the assumption that $\hat{v} > 0$, we reach a contradiction.

We now turn to establishing the second estimates of~\eqref{tobeshownboot}. The function $\hat{v}\left(v,\theta^A\right) = v$ lies in the kernel of the operator on the left hand side of~\eqref{3o29u9o}. We will linearize around this solution by setting $\tilde{v} \doteq \hat{v} - v$ and studying the equation 
\begin{equation}\label{jiu3r2ui3rui}
(-v)\mathcal{L}_{\partial_v}\tilde{v}+ \tilde{v} - h^A\slashed{\nabla}_A\tilde{v} +a\left|\slashed{\nabla}\tilde{v}\right|_{\slashed{g}}^2= H.
\end{equation}

For a multi-index $\alpha$, we let $Z^{(\alpha)}$ denote a product of angular momentum operators on $\mathbb{S}^2$. We may differentiate the equation~\eqref{jiu3r2ui3rui} with $\mathcal{L}_{Z^{(\alpha)}}$ to produce an equation of the following form:
\begin{equation}\label{3oijoirio4}
(-v)\mathcal{L}_{\partial_v} \tilde{v}^{(\alpha)} + \tilde{v}^{(\alpha)} -h^A\slashed{\nabla}_A\tilde{v}^{(\alpha)} +2a\slashed{g}^{AB}\slashed{\nabla}_A\tilde{v}\slashed{\nabla}_B\tilde{v}^{(\alpha)}= H^{(\alpha)} + \mathscr{E}\left(\alpha\right),
\end{equation} 
where $\tilde{v}^{(\alpha)} \doteq \mathcal{L}_{Z^{(\alpha)}}\tilde{v}$ and $H^{(\alpha)} \doteq \mathcal{L}_{Z^{(\alpha)}}H$. In view of the assumed bound~\eqref{oiiuiu328932o}, the first estimate of~\eqref{tobeshownboot}, and the bootstrap assumption~\eqref{nicebootstrapassumption}, we have the following estimates for $\mathscr{E}\left(\alpha\right)$ and $H^{(\alpha)}$ for $\left|\alpha\right| \leq N-1$:
\begin{align}\label{3o24u89h284h892}
\left\vert\left\vert \mathscr{E}(\alpha) \right\vert\right\vert_{L^{\infty}\left(\mathbb{S}^2_v\right)} &\lesssim \mathscr{A}\left\vert\left\vert \tilde{v}\right\vert\right\vert_{C^{|\alpha|}},\qquad \left\vert\left\vert H^{(\alpha)}\right\vert\right\vert_{L^{\infty}\left(\mathbb{S}^2_v\right)} \lesssim \mathscr{A}\left\vert\left\vert \hat{v} \right\vert\right\vert_{C^{|\alpha|}},
\end{align}   
where we recall that $H$ is evaluated at $\left(\hat{v}\left(v,\theta^A\right),\theta^A\right)$. We then apply the estimate~\eqref{oiio19281} and Gr\"{o}nwall's inequality to obtain that 
\begin{equation}\label{3oij2oij4oi2}
\sup_{v \in (a,c)}v^{-2+2q}\left\vert\left\vert \tilde{v}\right\vert\right\vert_{C^{N-1}\left(\mathbb{S}^2_v\right)}^2 \lesssim \mathscr{A}^2.
\end{equation}
where the implied constant does not depend on $C_{\rm boot}$. This thus finishes establishing~\eqref{tobeshownboot}, and we obtain the existence of a solution $\hat{v}\left(v,\theta^A\right) : (0,c) \times \mathbb{S}^2 \to (0,\infty)$ to~\eqref{3o29u9o} which satisfies~\eqref{ooi1031}. We then note that using the equation directly to estimate the $\mathcal{L}_{\partial_v}$ derivatives of $\tilde{v}$ from angular derivatives of $\tilde{v}$ then yields the bound
\begin{equation}\label{3ij2oij4oi42}
\sup_{v \in (0,c)} v^{-2+2q}\sum_{j=0}^N v^{2j}\left\vert\left\vert \mathcal{L}^j_{\partial_v}\tilde{v} \right\vert\right\vert_{C^{N-1-j}\left(\mathbb{S}^2_v\right)}^2 \lesssim \mathscr{A}^2.
\end{equation}

We now turn to improving our estimate for $\mathcal{L}_{\partial_v}\hat{v}$. Whenever $\mathcal{L}_{\partial_v}\hat{v} \neq 0$, differentiating~\eqref{jiu3r2ui3rui} yields the following equation for $\mathcal{L}_{\partial_v}\hat{v}$:
\begin{align}\label{ekj2oh3oho2}
&(-v)\mathcal{L}_{\partial_v}\left(\mathcal{L}_{\partial_v}\hat{v}\right)-  h^A\slashed{\nabla}_A\mathcal{L}_{\partial_v}\hat{v} +2a\slashed{g}^{AB}\slashed{\nabla}_A\hat{v}\slashed{\nabla}_B\mathcal{L}_{\partial_v}\hat{v}
\\ \nonumber &\qquad = \mathcal{L}_{\partial_v}\hat{v}\left(\mathcal{L}_{\partial_{\hat{v}}}H-2a\mathcal{L}_{\partial_{\hat{v}}}\left(\slashed{g}^{AB}\right)\slashed{\nabla}_A\hat{v}\slashed{\nabla}_B\hat{v}-\mathcal{L}_{\partial_{\hat{v}}}h^A\slashed{\nabla}_A\hat{v} \right).
\end{align}
In view of the bound~\eqref{3ij2oij4oi42}, we in particular have that $\mathcal{L}_{\partial_v}\hat{v} > 0$ holds for $v$ sufficiently near $c$. Whenever $\mathcal{L}_{\partial_v}\hat{v} > 0$ we may then derive that 
\begin{align}\label{ekj2oh3oho2lijij}
&(-v)\mathcal{L}_{\partial_v}\left(\log\mathcal{L}_{\partial_v}\hat{v}\right)-  h^A\slashed{\nabla}_A\log\mathcal{L}_{\partial_v}\hat{v} +2a\slashed{g}^{AB}\slashed{\nabla}_A\hat{v}\slashed{\nabla}_B\log\mathcal{L}_{\partial_v}\hat{v}
\\ \nonumber &\qquad = \left(\mathcal{L}_{\partial_{\hat{v}}}H-2a\mathcal{L}_{\partial_{\hat{v}}}\left(\slashed{g}^{AB}\right)\slashed{\nabla}_A\hat{v}\slashed{\nabla}_B\hat{v}-\mathcal{L}_{\partial_{\hat{v}}}h^A\slashed{\nabla}_A\hat{v} \right).
\end{align}
This is a linear transport equation for $\log\mathcal{L}_{\partial_v}\hat{v}$, and, in particular, we conclude that $\log\mathcal{L}_{\partial_v}\hat{v}$ must in fact be finite for all $v \in (0,c]$. This then implies that $\mathcal{L}_{\partial_v}\hat{v} > 0$ holds for all $v \in (0,c]$. Applying Lemma~\ref{laptrans} we then obtain that
\[\sup_{v \in (0,c]}\sum_{j=0}^N v^{2\tilde{d}+2j}\left\vert\left\vert \mathcal{L}^j_{\partial_v}
\mathcal{L}_{\partial_v}\log\tilde{v}\right\vert\right\vert_{C^{N-1-j}\left(\mathbb{S}^2_v\right)}^2 \lesssim \mathscr{A}^2,\]
\[\sup_{v \in (0,c]}\left|\frac{\log\left(\mathcal{L}_{\partial_v}\hat{v}\right)}{\log\left(v\right)}\right| \lesssim \mathscr{A}.\]
The lemma then follows from an application of the implicit function theorem.

\end{proof}

\section{Twisted Self-Similar Spacetimes: Definitions and Canonical Gauges}\label{3joi99192}
Throughout this section we will use $\epsilon$ to denote a small constant. Our convention is that, unless said otherwise, $\epsilon$ may always be assumed sufficiently small depending on any implied constants or other fixed constants.

\subsection{Precise Definition of Twisted Self-Similarity and our Canonical Gauges}
The following defines the class of spacetimes which will ultimately be covered by our expansions.
\begin{definition}\label{ijo93u92}Let  $\epsilon > 0$ be a positive constant which is sufficiently small. Then we say that a $3+1$ dimensional Lorentzian manifold $\left(\mathcal{M},g\right)$ is an  $\epsilon$-twisted self-similar spacetime if 
\begin{enumerate}
	\item\label{firtimte} There exists some $c > 0$ so that $\mathcal{M}$ is diffeomorphic to $(-c,c) \times (-\infty,0) \times \mathbb{S}^2$. We will use coordinates $\left(q,s,\theta^A\right) \in (-c,c) \times (-\infty,0) \times \mathbb{S}^2$ to refer to this decomposition and denote  the copy of $(-\infty,0) \times \mathbb{S}^2$ at a value of $q$ by $\mathcal{H}_q$.  
	\item There exists a smooth vector field $K$ which satisfies $\mathcal{L}_Kg = 2g$ and so that the orbits associated to the flow of $K$ are complete and are given by a curves of constant $q$ and $\theta^A$. 
	\item\label{3ioijoi23} The metric $g$ lies in $C^0_qC^2_{s,\theta^A}$ and the restriction of $g$ to $\mathcal{M}\setminus \{q = 0\}$ is $C^2$. Moreover,  we have that whenever $s$ is restricted to lie in a compact set of $(-\infty,0)$,
	\[|q|^{\tilde{C}}\mathcal{L}_{\partial_q}g\]
	extends as a $C^0_qC^1_{s,\theta^A}$ tensor to $\{q = 0\}$ where it vanishes. Here $\tilde{C} > 0$ is a suitably small constant independent of $\epsilon$.	\item\label{thesphere} There exists a choice of one of the copies of $\mathbb{S}^2 \subset \mathcal{H}_0$ such that if  we call this hypersurface $\mathcal{S}$, then the induced metric $\slashed{g}$ on $\mathcal{S}$ satisfies $\left|\slashed{g}-\mathring{\slashed{g}}\right|_{C^N\left(\mathcal{S}\right)} \lesssim \epsilon$ and $\left\vert\left\vert \Pi_{T\mathcal{S}}K\right\vert\right\vert_{C^N\left(\mathcal{S}\right)} \lesssim \epsilon$ for some positive integer $N \gg 1$. Here $\mathring{\slashed{g}}$ refers to a choice of a round metric on $\mathbb{S}^2$, $\Pi_{T\mathcal{S}}: T\mathcal{H}_0 \to T\mathcal{S}$ denotes the orthogonal projection to the tangent bundle of $\mathcal{S}$, $C^N$ is defined with respect to the round metric $\mathring{\slashed{g}}$. 
\end{enumerate}

Sometimes it will be useful to replace the interval $(-c,c)$ in item~\ref{firtimte} with half-open intervals $(-c,0]$ or $[0,c)$. In this case we will refer to $\left(\mathcal{M},g\right)$ as a $1$-sided $\epsilon$-twisted self-similar spacetime.
\end{definition}

In order to undertake a detailed study of $\epsilon$-twisted self-similar spacetimes, we will need to fix a gauge. It will in fact be convenient to work with two separate gauges. Below we define the \emph{homothetic gauge}, which, \emph{mutatis mutandis} amounts to the requirement that $\left(\mathcal{M},g\right)$ is in \emph{normal form} in the sense of Definition 2.7 from~\cite{FG2}. The primary advantage of this gauge is that it will turn out to be regular across the null hypersurface $\mathcal{H}_0$, and it is thus natural to state our global regularity assumptions in the homothetic gauge. (In the case of $1$-sided $\epsilon$-twisted self-similarity, it will be regular up to and including the null hypersurface $\mathcal{H}_0$.)
\begin{definition}\label{3moi09989}We say that an $\epsilon$-twisted self-similar spacetime $\left(\mathcal{M},g\right)$ is in the homothetic gauge if $\mathcal{M}$ is covered by a coordinate system $\left(\rho,t,\theta^A\right) \in (-\tilde{c},\tilde{c}) \times (-\infty,0) \times \mathbb{S}^2$ where $\tilde{c} > 0$, the self-similar vector field $K$ takes the form
\[K = t\partial_t,\]
the metric $g$ takes the form
\begin{equation}\label{homotheticformyay}
g = P dt^2 + 2t\left(dt\otimes d\rho + d\rho \otimes dt\right) - h_A\left(dt\otimes d\theta^A + d\theta^A\otimes dt\right) + \slashed{g}_{AB}d\theta^A\otimes d\theta^B,
\end{equation}
where $P$ is a suitable function, $h_A$ is a suitable $1$-form which takes values in the cotangent bundle of the $\mathbb{S}^2$ at a given value of $\left(\rho,t\right)$, $\slashed{g}_{AB}$ is a suitable Riemannian metric on the $\mathbb{S}^2$ at any given value of $\left(\rho,t\right)$, the null hypersurface $\mathcal{H}_0$ (see Definition~\ref{ijo93u92}) corresponds to the hypersurface $\{\rho = 0\}$, and we require that $P|_{\{\rho = 0\}} = \left|h\right|_{\slashed{g}}^2|_{\{\rho = 0\}}$.

We say that an $\epsilon$-twisted self-similar spacetime is an $\left(\epsilon,N\right)$-regular twisted self-similar spacetime if it may be put into a homothetic gauge where the metric components have the following regularity:
\[P, \slashed{g}_{AB} \in C^0_{\rho}C^N_{t,\theta^A},\qquad P - \left|h\right|^2_{\slashed{g}}, h^{\sharp} \in \cup_{j=0}^1 C^j_{\rho}C^{N-j}_{t,\theta^A},\]
where $\left(h^{\sharp}\right)^A \doteq \slashed{g}^{AB}h_B$.

We further assume that there exists a small constant $d > 0$ (which is sufficiently small depending on $N$) such that
\[\sum_{j=0}^{N-1}|\rho|^{d+j}\mathcal{L}_{\partial^{1+j}_{\rho}}\left(\slashed{g},P,h\right) \in C^0_{\rho}C^{N-1-j}_{t,\theta^A}\]
and takes the value  $0$ at $\{\rho = 0\}$. Finally, we assume the following global smallness conditions:
\begin{align}\label{3ij2ioji}
&\sup_{t=-1}\sum_{i=0}^1\sum_{j=0}^{N-i}\sum_{\left|\alpha\right| \leq N-i-j}\left||\rho|^{id+j}\mathcal{L}^{i+j}_{\partial_{\rho}}\mathcal{L}_{t,\theta^A}^{(\alpha)}\left(\slashed{g} - t^2\left(1-\rho\right)^2\mathring{\slashed{g}},P-4\rho\right)\right|_{\mathring{\slashed{g}}}\lesssim \epsilon,
\end{align}
\begin{align}\label{3ij2ioji123}
&\sup_{t=-1}\sum_{i=0}^1\sum_{k=0}^1\sum_{j=0}^{N-i-k}\sum_{\left|\alpha\right| \leq N-i-j-k}\left| |\rho|^{j+dk}\mathcal{L}^{i+j+k}_{\partial_{\rho}}\mathcal{L}_{t,\theta^A}^{(\alpha)}\left(P-4\rho-\left|h\right|^2_{\slashed{g}},h^{\sharp}\right)\right|_{\mathring{\slashed{g}}} \lesssim \epsilon.
\end{align}
In the above we use some specification of a round metric $\mathring{\slashed{g}}_{AB}$ on each sphere along $\{t = -1\}$ so that $\mathcal{L}_{\rho}\mathring{\slashed{g}}_{AB} = 0$, and $\alpha$ denotes a suitable multi-index.

We make the obvious modifications for $1$-sided $\epsilon$-regular twisted self-similarity.
\end{definition}
\begin{remark}Even though the smallness conditions~\eqref{3ij2ioji} and~\eqref{3ij2ioji123} are only stated along $\{t = -1\}$, in view of the self-similarity of the spacetime (see~\eqref{oieiu123123123}) we may obtain suitable estimates at any point in the spacetime. 
\end{remark}
\begin{remark}The assumption~\eqref{3ij2ioji123} shows that the quantities $P-\left|h\right|^2_{\slashed{g}}$ and $h^{\sharp}$ (with the index upstairs!) have improved regularity in $\rho$ relative to the quantities $P$ or $\slashed{g}_{AB}$.
\end{remark}
\begin{remark}The smallness assumptions imply that $K = t\partial_t$ is spacelike for $\rho > 0$; however, no analogous conclusion may be drawn for $\rho < 0$. 
\end{remark}
\begin{remark}\label{normalizationsFG}When comparing explicitly our Definition~\ref{3moi09989} with Definition 2.7 from~\cite{FG2}, the reader should keep in mind that there are a few different normalizations. First of all, our $t$-coordinate runs over the negative real numbers, while in~\cite{FG2} the $t$-coordinate runs over the positive reals. Furthermore, the normalizations for $\partial_{\rho}$ along $\mathcal{H}$ are slightly different. We have that $g\left(\partial_{\rho},\partial_t\right) = 2t$ while in~\cite{FG2} they have that $g\left(\partial_{\rho},\partial_t\right) = t$. We have chosen these alternative normalizations so that our later formulas in the double-null gauge take a standard form for Minkowski space. 
\end{remark}

Most of our analysis will take place in a suitable double-null gauge where the self-similar vector field $K$ takes the simple form $u\partial_u+v\partial_v$. The key advantages of this gauge are its flexibility from the analytic point of view and also the relative simplicity/familiarity (compared to other gauges) of the algebraic relations induced by self-similarity. The key disadvantage of this gauge is that it will not extend in a regular fashion to the null hypersurface $\mathcal{H}_0$; this loss of regularity is why we have also introduced the homothethic gauge in Definition~\ref{3moi09989}. 
\begin{definition}\label{thisisselfdoublenull}We say that an $\left(\epsilon,N\right)$-regular twisted self-similar spacetime $\left(\mathcal{M},g\right)$ is in the self-similar double-null gauge with the shift in the $u$-direction if $\mathcal{M}\setminus \mathcal{H}_0$ is covered by a coordinate chart $\left(u,v,\theta^A\right) \in \mathcal{U}\times \mathbb{S}^2$ where, for some $\tilde{c} > 0$, 
\begin{equation}\label{whatisuu}
\mathcal{U} = \left\{(u,v) \in (-\infty,0) \times\mathbb{R} :  \frac{v}{-u} \in (-\tilde{c},\tilde{c})\setminus \{0\}\right\},
\end{equation}
the self-similar vector field $K$ takes the form
\[K = u\partial_u + v\partial_v,\]
and the metric $g$ takes the form
\begin{equation}\label{3k2oijo42}
g = -2\Omega^2\left(du\otimes dv + dv\otimes du\right) + \slashed{g}_{AB}\left(d\theta^A - b^Adu\right)\otimes\left(d\theta^B - b^Bdu\right),
\end{equation}
where $\Omega$ is a suitable function, $b^A$ is a suitable vector field which takes values in tangent bundle of the $\mathbb{S}^2$ at a given value of $\left(u,v\right)$, and $\slashed{g}_{AB}$ is a suitable Riemannian metric on the $\mathbb{S}^2$ at any given value of $\left(u,v\right)$. We make the obvious modifications for $1$-sided $\left(\epsilon,N\right)$-regular twisted self-similarity.

We will also refer to a metric $\left(\mathcal{U}\times \mathbb{S}^2,g\right)$ where $\mathcal{U}$ is given by~\eqref{whatisuu}, $g$ takes the form~\eqref{3k2oijo42}, and $K \doteq u\partial_u + v\partial_v$ satisfies $\mathcal{L}_Kg = 2g$, as a self-similar double-null metric. 

We may also consider self-similar double-null gauges with the shift in the $v$-direction if we replace~\eqref{3k2oijo42} with~\eqref{2l3kj2lijo42}. If we just write ``self-similar double-null gauge'' then it should be assumed that we mean for the shift to be in the $u$-direction.
\end{definition}

\subsection{Equipping a Twisted Self-Similar Spacetime with the Homothethic Gauge}
Our first goal will be to prove that any $\epsilon$-twisted self-similar spacetime, after restriction to a suitable open set containing $\mathcal{H}_0$, may be put into the homothetic gauge. The following lemma will be useful for this. 
\begin{lemma}\label{3ok001}Let $\left(\mathcal{M},g\right)$ be an $\epsilon$-twisted self-similar spacetime. Let $V$ be any smooth vector field along $\mathcal{H}_0$ such $V^q = \alpha$ never vanishes. Then there exists a unique extension of vector field $V$ defined in an open set containing $\mathcal{H}_0$ such that
\[D_VV = 0\]
holds weakly and the integral curves of $V$ lie in an appropriate function space (defined in the proof), where $D$ denotes the Levi-Civita connection of $\left(\mathcal{M},g\right)$. 

\end{lemma}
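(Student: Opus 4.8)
The natural construction is to define $V$ on a neighborhood of $\mathcal{H}_0$ as the velocity field of the geodesic congruence emanating from $\mathcal{H}_0$ with initial velocity $V$. Concretely, for each $p \in \mathcal{H}_0$ I would solve the geodesic equation $\ddot\gamma^\mu + \Gamma^\mu_{\nu\rho}(\gamma)\dot\gamma^\nu\dot\gamma^\rho = 0$ with initial data $\gamma_p(0) = p$, $\dot\gamma_p(0) = V|_p$, and then set $V(\gamma_p(\tau)) \doteq \dot\gamma_p(\tau)$; by construction this satisfies $D_VV = 0$ along each orbit. Since $V^q = \alpha$ never vanishes, $V$ is transverse to $\mathcal{H}_0$, so the map $(p,\tau) \mapsto \gamma_p(\tau)$ has invertible differential at $\tau = 0$ and hence is a local homeomorphism onto a neighborhood of $\mathcal{H}_0$, $C^1$ away from $\{q=0\}$; this both makes $V$ well defined and, together with uniqueness for the geodesic equation, yields the asserted uniqueness of the extension. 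Were $g$ of class $C^2$ everywhere, this would be entirely standard; the whole content of the lemma is the limited transversal regularity of the metric at $\{q = 0\}$.

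\textbf{The singular ODE and the bootstrap.} By item~\ref{3ioijoi23} of Definition~\ref{ijo93u92} we have that $|q|^{\tilde C}\mathcal{L}_{\partial_q}g$ extends continuously to $\{q=0\}$ and vanishes there, so the only transversal derivatives of $g$ obey $|\mathcal{L}_{\partial_q}g| \lesssim |q|^{-\tilde C}$, while $s$- and $\theta^A$-derivatives of $g$ are bounded. Consequently the Christoffel symbols are bounded except for those containing a $\partial_q$ of the metric, which blow up at worst like $|q|^{-\tilde C}$, and since $\tilde C > 0$ is small (in particular $\tilde C < 1$) this singularity is \emph{integrable}. I would therefore work with the integral (Carath\'eodory) form of the geodesic equation,
\[
\dot\gamma^\mu(\tau) = \dot\gamma^\mu(0) - \int_0^\tau \Gamma^\mu_{\nu\rho}(\gamma)\dot\gamma^\nu\dot\gamma^\rho\, d\tau', \qquad \gamma^\mu(\tau) = \gamma^\mu(0) + \int_0^\tau \dot\gamma^\mu\, d\tau',
\]
and run a short continuity/bootstrap argument: assuming $\dot\gamma^q$ stays close to its initial value $\alpha \neq 0$, the geodesic crosses $\{q=0\}$ transversally at a single time $\tau_0$ with $|q(\tau)| \sim |\alpha|\,|\tau - \tau_0|$, whence $\int |q(\tau)|^{-\tilde C}\, d\tau < \infty$ because $\tilde C < 1$. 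This integrable bound closes the assumption on $\dot\gamma^q$ and, via a contraction mapping on the integral equation with the integrable kernel, produces a solution $\gamma_p$ that is Lipschitz in $\tau$ with $\dot\gamma_p$ of bounded variation and continuous up to the crossing time.

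\textbf{Uniqueness, regularity, and the weak sense.} Uniqueness of $\gamma_p$ follows from a Gr\"onwall estimate against the integrable coefficient $|q|^{-\tilde C}$ (an Osgood-type criterion suffices), and continuous dependence on $p$ for this Carath\'eodory problem shows that the resulting field $V$ is continuous up to $\{q=0\}$ and $C^1$ on $\mathcal{M}\setminus\{q=0\}$, consistent with the stated regularity of $g$. Away from $\{q=0\}$, where $g$ is $C^2$, the congruence is a classical geodesic congruence, so $D_VV = 0$ holds in the ordinary sense there; across $\{q=0\}$, where $\dot\gamma$ need not be differentiable, the equation holds only in the integrated form above, and this is precisely the meaning of ``$D_VV = 0$ holds weakly.''

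\textbf{Main obstacle.} The essential difficulty is exactly this singular ODE: the Christoffel symbols are genuinely unbounded as $q \to 0$, so neither the classical Picard--Lindel\"of theory nor the usual $C^2$ geodesic-flow smoothness applies directly. The quantitative fact that rescues the argument is $\tilde C < 1$, which makes $|q|^{-\tilde C}$ integrable along the transversally crossing orbits; the bootstrap keeping $\dot\gamma^q$ away from zero (so that the singularity sits at an isolated, integrable crossing time) is the step requiring the most care.
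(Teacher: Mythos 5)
Your proposal is correct and follows essentially the same route as the paper: the paper also writes the geodesic equation in integral (weak) form starting from points of $\mathcal{H}_0$, observes that the only danger is the $|q|^{-\tilde{C}}$ blow-up of the Christoffel symbols, and closes the argument by a contraction mapping, using that $x^q \sim \tau\alpha$ makes this singularity integrable in $\tau$. The one device the paper makes explicit that you leave implicit is the weighted Banach norm (with weights $\tau^{-3/2}$ on $x^q-\tau\alpha$ and $\tau^{-1/2}$ on $\dot{x}^q-\alpha$) in which the fixed-point map is shown to be a contraction; this simultaneously encodes your bootstrap keeping $\dot{\gamma}^q$ near $\alpha$ and delivers uniqueness without a separate Osgood argument.
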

\begin{proof}Let us fix a point $p_0 \in \mathcal{H}_0$. We then desire to find a curve $x\left(\tau\right) = \left(q\left(\tau\right),s\left(\tau\right),\theta^A\left(\tau\right)\right) : [0,c) \to \mathcal{M}$ for some $c > 0$ such that $x\left(\tau\right)$ is the unique solution to the weak geodesic equation:
\begin{align}\label{3po2poj4ijp2}
x^{\lambda}\left(\tau\right) &= \int_0^{\tau} y^{\lambda}\left(w\right)\, dw + p^{\lambda}_0,
\\ \label{3ij2oij} y^{\lambda}\left(\tau\right) &= -\int_0^{\tau}\left[\Gamma^{\lambda}_{\mu\nu}\left(x\left(w\right)\right)y^{\mu}\left(w\right)y^{\nu}\left(w\right) \right]\, dw + V^{\lambda}|_{p_0},
\end{align}
where the indices range over $q$, $s$, and $\theta^A$, and $\Gamma^{\lambda}_{\mu\nu}$ denotes the Christoffel symbols of $g$.  The main potential problem in solving these equations is that the Christoffel symbols $\Gamma^{\lambda}_{\mu\nu}$ may blow up as $q \to 0$ due to the limited regularity that we assume for $\mathcal{L}_{\partial_q}g$. The resolution to this is that the allowed blow-up rate is quite slow (see item~\ref{3ioijoi23} of Definition~\ref{ijo93u92}) and thus will turn out not to pose a problem in the integral on the right hand side of~\eqref{3ij2oij}.

 We then define new variables $\left(\tilde{x}^{\lambda},\tilde{y}^{\lambda}\right)$ by $\tilde{y}^q = y^q - \alpha$, $\tilde{y}^s = y^s$, $\tilde{y}^{\theta^A} = y^{\theta^A}$, $\tilde{x}^q = x^q - \tau \alpha$, $\tilde{x}^s = x^s$, $\tilde{x}^{\theta^A} = x^{\theta^A}$. We then may equivalently think of~\eqref{3po2poj4ijp2} and~\eqref{3ij2oij} as equations for the unknowns $\tilde{x}^{\lambda}$ and $\tilde{y}^{\lambda}$. Now we may define a Banach space $B$ consisting of curves $z^{\lambda}(\tau),d^{\lambda}(\tau) : [0,c) \to \mathcal{M}$ by defining the norm
\[\left\vert\left\vert \left(z^{\lambda},d^{\lambda}\right)\right\vert\right\vert_B \doteq \sup_{\tau \in [0,c)} \left[\left|z^s(\tau)\right| + \left|z^{\theta^A}(\tau)\right|+ \tau^{-3/2}\left|z^q(\tau)\right| + \left|d^s(\tau)\right| + \left|d^{\theta^A}(\tau)\right| + \tau^{-1/2}\left|d^q\right|\right].\]
Now define $\hat{x}^{\lambda}(\tau)$ so that $\hat{x}^q = \tau \alpha$ and all other components of $\hat{x}$ vanish, and define $\hat{y}^{\lambda}(\tau)$ so that $\hat{y}^q = \alpha$ and so that all other components of $\hat{y}$ vanish. 

Then, in view of  item~\ref{3ioijoi23} of Definition~\ref{ijo93u92} and the fact that $p_0^q = 0$, it is straightforward to see that the following defines a Lipschitz continuous map $\mathscr{K} : B \to B$:
\[\left(z^{\lambda},d^{\lambda}\right) \mapsto \left(\int_0^{\tau} d^{\lambda}\left(w\right)\, dw + p^{\lambda}_0,-\int_0^{\tau}\left[\Gamma^{\lambda}_{\mu\nu}\left(\left(z^{\lambda}+\hat{x}^{\lambda}\right)\left(w\right)\right)\left(d+\hat{y}\right)^{\mu}\left(w\right)\left(d+\hat{y}\right)^{\nu}\left(w\right) \right]\, dw + V^{\lambda}|_{p_0}-\hat{y}^{\lambda}\right).\]
Moreover, if $c$ is sufficiently small, then the Lipschitz constant will be less than $1$. Since we can re-write~\eqref{3po2poj4ijp2} and~\eqref{3ij2oij} as
\[\left(\tilde{x}^{\lambda},\tilde{y}^{\lambda}\right)  = \mathscr{K}\left(\tilde{x}^{\lambda},\tilde{y}^{\lambda}\right),\]
the existence and uniqueness of the curve $x^{\lambda}(\tau)$ then follows from the contraction mapping principle. 

Since $V^q|_{\mathcal{H}_0}$ never vanishes, it is straightforward to see that the curves $x^{\lambda}(\tau)$ foliate an open set of $\mathcal{H}_0$. Setting $V = \frac{dx}{d\tau}$ thus defines (uniquely) the desired vector field in an open set around $\mathcal{H}_0$. 
\end{proof}

In this next proposition we show that any $\epsilon$-twisted self-similar spacetime $\left(\mathcal{M},g\right)$ possesses an open set around $\mathcal{H}_0$ which may be put into the homothetic gauge. We, moreover, provide an explicit procedure for the construction of these coordinates. 
\begin{proposition}\label{homotheticgaugeexist}Let $\left(\mathcal{M},g\right)$ be an $\epsilon$-twisted self-similar spacetime. Then there exists an open set $\tilde{\mathcal{M}} \subset \mathcal{M}$ which contains $\mathcal{H}_0$ and which is invariant under the flow of $K$ such that $\left(\tilde{\mathcal{M}},g\right)$ may be put in the homothetic gauge via the following procedure:
\begin{enumerate}
	\item\label{9090901902190} Let $\mathcal{S}$ be the sphere from item~\ref{thesphere} of Definition~\ref{ijo93u92}. Then we define coordinates $\left(t,\theta^A\right) \in (-\infty,0)\times \mathbb{S}^2$ covering $\mathcal{H}_0$ as follows. We define the function $t: \mathcal{H}_0 \to (-\infty,0)$ by setting $t|_{\mathcal{S}} = -1$ and then setting $K\log\left(-t\right) = 1$. For any choice of local coordinates $\{\theta^A\}$ on $\mathcal{S}$, we thus obtain a coordinate system $\left(t,\theta^A\right)$ on $\mathcal{H}_0$ by requiring that $K\left(\theta^A\right) = 0$. Let $\mathcal{S}_t$ denote the copy of $\mathbb{S}^2$ at $t$, and let $\slashed{g}$ denote the induced metric on these $\mathcal{S}_t$'s. Then the induced metric on $\mathcal{H}_0$ takes the form
	\begin{equation}\label{3opk2pok2}
	-b_A\left(dt\otimes d\theta^A+d\theta^A\otimes dt\right) + \left|b\right|_{\slashed{g}}^2dt\otimes dt + \slashed{g}_{AB}d\theta^A\otimes d\theta^B,
	\end{equation}
	where, for each $t$, $b_A$ denotes a suitable $1$-form along $\mathcal{S}_t$. 
	\item We then define a vector field $V$ along $\mathcal{H}_0$ by requiring that 
	\[g\left(V,V\right) = 0,\qquad g\left(V,K\right)|_{t= -1} = 2,\qquad \mathcal{L}_KV = 0,\]
	and then extend $V$ off of $\mathcal{H}_0$ by using Lemma~\ref{3ok001} and requiring that 
	\[D_VV = 0\]
	holds weakly where $D$ denotes the Levi--Civita connection of $g$. This extension is well defined in some open set $\tilde{\mathcal{M}} \subset \mathcal{M}$ which contains $\mathcal{H}_0$ and which is invariant under the flow of the self-similar vector field $K$. 
	\item We then define a function $\rho: \tilde{\mathcal{M}} \to \mathbb{R}$ by setting $\rho|_{\mathcal{H}_0} = 0$ and $V\rho = 1$. Finally, we extend $t$ and the $\mathcal{S}_t$-local coordinates $\{\theta^A\}$ to $\tilde{\mathcal{M}}$ by requiring that $Vt = V\theta^A = 0$. Then $\left(\tilde{\mathcal{M}},g\right)$ will be in the homothetic gauge in the coordinates $\left(\rho,t,\theta^A\right)$. 
\end{enumerate}

The form of the metric in this gauge is unique once the form of the induced metric~\eqref{3opk2pok2} along $\mathcal{H}_0$ is fixed. (The form of the induced metric along $\mathcal{H}_0$ however depends on the particular choice of the hypersurface $\mathcal{S}$.)

Finally, we note that it is straightforward to modify the construction to establish an analogous result in the case of $1$-sided self-similarity. 
\end{proposition}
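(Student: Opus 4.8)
The plan is to carry out the three-step construction explicitly and verify that the resulting coordinates $(\rho,t,\theta^A)$ put the metric into the form \eqref{homotheticformyay} with $K = t\partial_t$. The logical backbone is to build the coordinates so that the defining conditions $K = t\partial_t$, $g(V,V)=0$, $g(V,K)=2t$, $V\rho = 1$, and $Vt = V\theta^A = 0$ force every metric coefficient into the required shape.

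First I would establish step \ref{9090901902190}. Since $K$ is tangent to $\mathcal{H}_0$ with complete non-compact orbits (by the structural assumptions recalled in the introduction), the condition $K\log(-t)=1$ with $t|_{\mathcal{S}}=-1$ integrates to a well-defined global coordinate $t$ on $\mathcal{H}_0$ ranging over $(-\infty,0)$, and propagating $\theta^A$ by $K\theta^A=0$ gives a coordinate system in which $K = t\partial_t$ on $\mathcal{H}_0$. Because $\mathcal{H}_0$ is null with $K$ tangent, the induced (degenerate) metric has no $dt\otimes dt$ term beyond what the tangential projection of $K$ contributes; writing the tangential part of $K$ via a $1$-form $b_A$ and using that the generator direction is null, one recovers precisely the form \eqref{3opk2pok2}. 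The self-similarity $\mathcal{L}_K g = 2g$ then dictates the $t$-scaling of $b_A$ and $\slashed{g}_{AB}$ consistent with \eqref{oieiu123123123}.

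Next I would construct $V$ along $\mathcal{H}_0$ and extend it. The three algebraic conditions $g(V,V)=0$, $g(V,K)|_{t=-1}=2$, and $\mathcal{L}_K V = 0$ determine $V$ uniquely along $\mathcal{H}_0$: the null and normalization conditions fix $V$ transversally to $\mathcal{H}_0$ at $t=-1$, and $\mathcal{L}_K V = 0$ propagates it along the orbits (self-similarity guarantees the normalization $g(V,K)=2t$ at general $t$). The geodesic extension $D_V V = 0$ is exactly the content of Lemma~\ref{3ok001}, applied with $\alpha = V^q$, which by the transversality is nonvanishing along $\mathcal{H}_0$; this yields a flow-invariant open set $\tilde{\mathcal{M}}\supset\mathcal{H}_0$ on which $V$ is defined. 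One must check that $V$ can be taken invariant under $K$ on all of $\tilde{\mathcal{M}}$, which follows because $K$ is homothetic (so $K$ maps geodesics to geodesics up to affine reparametrization) and the initial data for $V$ is $K$-invariant.

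Finally I would introduce $\rho$ by $\rho|_{\mathcal{H}_0}=0$, $V\rho=1$, and extend $t,\theta^A$ by $Vt=V\theta^A=0$, and then read off the metric. In these coordinates $V = \partial_\rho$, so $g_{\rho\rho}=g(V,V)=0$ (the geodesic condition preserves $g(V,V)$ since $\partial_\rho g(V,V) = 2g(D_V V,V)=0$), and $g(V,\partial_t)=g(V,K/(\text{scaling}))$ is likewise parallel-propagated, giving $g_{\rho t}=t$ and $g_{\rho A}=0$ from the initial normalizations. The remaining components $g_{tt}=P$, $g_{tA}=-h_A$, $g_{AB}=\slashed{g}_{AB}$ are then whatever the initial data along $\mathcal{H}_0$ prescribes, and the constraint $P|_{\rho=0}=|h|^2_{\slashed{g}}|_{\rho=0}$ is inherited from the null condition \eqref{3opk2pok2}. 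Uniqueness follows because each defining condition determines the corresponding coordinate up to the single initial choice of $\mathcal{S}$. \textbf{The main obstacle} I anticipate is controlling the low regularity of $V$ and hence of the metric coefficients across $\{\rho=0\}$: the parallel transport passes through the region where $\mathcal{L}_{\partial_q}g$ may blow up, so verifying that $g_{\rho\rho}=0$ and $g_{\rho A}=0$ hold \emph{weakly} (and that the metric has the claimed $C^0_\rho C^2_{t,\theta}$ regularity) requires the careful weak-geodesic framework of Lemma~\ref{3ok001} rather than a naive pointwise computation, and one must confirm that the conserved quantities $g(V,\cdot)$ remain well-behaved through this degenerate region.
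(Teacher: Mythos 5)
Your proposal is correct and follows essentially the same route as the paper: the form \eqref{3opk2pok2} from the null condition, the construction and weak geodesic extension of $V$ via Lemma~\ref{3ok001}, the commutation of $V$ with $K$ using the homothetic property (the paper makes your ``homotheties map geodesics to geodesics'' heuristic precise via the deformation-tensor commutator formula and an ODE argument along integral curves of $K$), and the conservation of $g(V,V)$, $g(V,K)$, $g(V,\partial_{\theta^A})$ along integral curves of $V$. The only minor imprecision is attributing the constancy of $g(V,K)$ to parallel propagation alone --- it actually requires $g(D_VK,V)=\tfrac{1}{2}(\mathcal{L}_Kg)(V,V)=g(V,V)=0$, i.e.\ the homothety --- but this does not affect the correctness of the argument.
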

\begin{proof}We note that the proof of this proposition will be a minor adaption of the proof of Proposition 2.8 from~\cite{FG2}. 

We start by verifying the form~\eqref{3opk2pok2} on the induced metric along $\mathcal{H}_0$. It is an immediate consequences of the definition of an $\epsilon$-twisted self-similar vacuum spacetime that $\left(t,\theta^A\right)$ forms a set of coordinates along $\mathcal{H}_0$. Moreover, since $\mathcal{H}_0$ is a null hypersurface by assumption and $\mathcal{S}_t$ is spacelike, there must exist a vector field $b^A$, which is everywhere tangent to $\mathcal{S}_t$, such that
\begin{equation}\label{3ojoj492}
g\left(\partial_t+b,\partial_t+b\right) = 0,\qquad g\left(\partial_t+b,\partial_{\theta^A}\right) = 0.
\end{equation}
The equations~\eqref{3ojoj492} then leads to~\eqref{3opk2pok2}.

We now turn to the definition of the vector field $V$. Along $\mathcal{H}_0$, by elementary linear algebra, the vector field $V$ is uniquely determined by the specification at the point $\{t = -1\}$ by requiring that 
\[g\left(V,V\right) = g\left(V,\partial_{\theta^A}\right) = 0,\qquad g\left(V,\partial_t+b\right) = -2.\]
We then extend $V$ to all of $\mathcal{H}_0$ by requiring that $\mathcal{L}_KV = 0$. We claim that we will then have  
\[g\left(V,V\right) = g\left(V,\partial_{\theta^A}\right) = 0,\qquad g\left(V,\partial_t+b\right) = 2t.\]
To see this, it suffices to observe that, in view of $\mathcal{L}_Kg  = 2g$,  the above quantities all satisfy the following equations:
\[Kg\left(V,V\right) = 2g\left(V,V\right),\qquad Kg\left(V,\partial_{\theta^A}\right) = 2g\left(V,\partial_{\theta^A}\right),\qquad Kg\left(V,K\right) = 2g\left(V,K\right).\]
Moreover, it is immediate that $V^q$ is nowhere vanishing on $\mathcal{H}_0$.

We may then apply Lemma~\ref{3ok001} to uniquely locally extend $V$ to some open set of $\mathcal{H}_0$ such that $D_VV = 0$. We claim now that $\mathcal{L}_KV = 0$ everywhere where we have extended $V$. To see this, we define a new vector field $\tilde{V}$ by setting $\tilde{V}$ to be equal to $V$ along the integral curve of $V$ starting at $\mathcal{S}_{-1}$ and then extending $\tilde{V}$ off of this curve by $\mathcal{L}_K\tilde{V} = 0$. This defines $\tilde{V}$ in a dilation invariant neighborhood of $\mathcal{H}_0$. Now we recall the formula (see Lemma 7.1.3 from~\cite{CK})
\[\left[\mathcal{L}_X,D_{\alpha}\right]Y_{\beta} = -\frac{1}{2}\left(D_{\alpha}\pi_{\beta\gamma} +D_{\beta}\pi_{\alpha\gamma}-D_{\gamma}\pi_{\alpha\beta}\right)Y^{\gamma},\]
which holds for for any vector field $X$ and $1$-form $Y$, where $\pi$ denotes the deformation tensor of $X$. Since the deformation tensor of $K$ is parallel (as it is a multiple of the metric tensor $g$) we thus have 
\begin{equation}\label{3oj2919}
\mathcal{L}_KD_{\tilde{V}}\tilde{V}^{\alpha} = \mathcal{L}_K\left(\slashed{g}^{\alpha\gamma}\tilde{V}^{\beta}D_{\beta}\tilde{V}_{\gamma}\right) = -2\slashed{g}^{\alpha\gamma}\tilde{V}^{\beta}D_{\beta}\tilde{V}_{\gamma} = -2D_{\tilde{V}}\tilde{V}^{\alpha}.
\end{equation}
Thus, $\left(D_{\tilde{V}}\tilde{V}\right)^{\alpha}$ satisfies a linear system of first order ordinary differential equations along the integral curves of $K$. Since it vanishes along the integral curve of $V$ starting at $\mathcal{S}_t$, we conclude that $D_{\tilde{V}}\tilde{V}$ vanishes everywhere. In turn, since we also have $\tilde{V} = V$ along $\mathcal{H}_0$, it is straightforward to use the uniqueness part of Lemma~\ref{3ok001} to see that we have that $\tilde{V}$ equals to $V$ wherever $V$ is defined, and we may thus assume now that $V$ is defined in an open set $\tilde{\mathcal{M}}$ which contains $\mathcal{H}_0$ and Lie commutes with $K$. 

We have that $K$ is tangent to $\mathcal{H}_0$ and, along $\mathcal{H}_0$ we have that 
\[K\log(-t)  - 1 = K\theta^A = 0 \Rightarrow K|_{\mathcal{H}_0} = t\partial_t.\]
To obtain the form of $K$ off of $\mathcal{H}_0$ we then observe that in view of the fact that $\mathcal{L}_VK = 0$, $Vt = V\theta^A = 0$, and $V\rho = 1$, we have that the following holds everywhere
\[V\left(K\log\left(-t\right) - 1\right) = V\left(K\rho\right) = V\left(K\theta^A\right) = 0 \Rightarrow K = t\partial_t.\] 
As a consequence of $Vt = V\theta^A = 0$ and $V\rho = 1$, we have that
\begin{equation}\label{3iooi4891919}
V = \partial_{\rho}.
\end{equation}

Next we claim that $g\left(K,V\right)$ is constant along the integral curves of $V$. Indeed, we have
\[V g\left(K,V\right) = g\left(D_VK,V\right) = \frac{1}{2}Kg\left(V,V\right) = 0.\]
Similarly, since $\mathcal{L}_V\mathcal{L}_{\partial_{\theta^A}} = 0$,
\[Vg\left(V,\partial_{\theta^A}\right) = 0.\]

With all of these various facts established, the rest of the lemma now follows in a straightforward fashion.

\end{proof}

\subsection{From the Homothetic Gauge to the Self-Similar Double-Null Gauge}
In the next few propositions we will establish certain equivalences between the homothetic gauge and the double-null gauge. We start with a useful definition.
\begin{definition}\label{thisdefissomething}Let $\epsilon > 0$ be sufficiently small and $N \in \mathbb{Z}_{>0}$ be sufficiently large. We say that a self-similar double-null metric is ``$\left(\epsilon,N\right)$-regular up to $\{v = 0\}$'' if the metric coefficients and Ricci coefficients satisfy the following bounds along $\{u = -1\}$:
\begin{align}\label{3oj2ij4ino192939}
&\sup_v\left\vert\left\vert \left(b,\slashed{g}-\left(v+1\right)^2\mathring{\slashed{g}},|v|^d\log\Omega\right)|_{u=-1}\right\vert\right\vert_{C^N\left(\mathbb{S}^2_{-1,v}\right)}
\\ \nonumber &\qquad + \sup_v\left\vert\left\vert |v|^d\left(\Omega-1\right)|_{u=-1}\right\vert\right\vert_{C^0\left(\mathbb{S}^2_{-1,v}\right)} \lesssim \epsilon,
\end{align}
\begin{equation}\label{3ij2oijoi223456}
\sum_{j=0}^1\sum_{i=0}^{N-1}\sup_v\left\vert\left\vert |v|^{jd+i}\mathcal{L}_{\partial_v}^{j+i}\left(\eta,\Omega\underline{\omega}\right)|_{u=-1}\right\vert\right\vert_{C^{N-1-i-j}\left(\mathbb{S}^2_{-1,v}\right)} \lesssim \epsilon,
\end{equation}
\begin{equation}\label{2lk3jl1kjl21}
\sum_{i=0}^{N-1}\sup_v\left\vert\left\vert |v|^{d+i}\mathcal{L}_{\partial_v}^i\left(\Omega\chi - \left(v+1\right)\mathring{\slashed{g}}\right)|_{u=-1}\right\vert\right\vert_{C^{N-1-i-j}\left(\mathbb{S}^2_{-1,v}\right)} \lesssim \epsilon,
\end{equation}
where $d$ is a constant satisfying $0 < d \ll 1$ where the smallness may depend on $N$, $\mathring{\slashed{g}}_{AB}$ is some fixed choice of a round metric, and we assume that the quantities $\left(\slashed{g},b,\eta,\Omega\underline{\omega}\right)$ all have uniquely defined continuous extensions to $\{v = 0\}$.

It will also be convenient to have a version of this definition with the roles of $u$ and $v$ swapped in the sense of Remark~\ref{shiftshift}. We say that a self-similar double-null metric is ``$\left(\epsilon,N\right)$-regular up to $\{u = 0\}$'' if the metric is in the self-similar double-null gauge with the shift in the $v$-direction, the spacetime exists in a region $c < \frac{v}{-u} < \infty$ for some $c > 0$, and coefficients and the estimates~\eqref{3oj2ij4ino192939},~\eqref{3ij2oijoi223456}, and~\eqref{2lk3jl1kjl21} all hold with each $u$ replaced by $v$, each $v$ replaced by $u$, and $\left(\chi,\underline{\omega},\eta\right) \mapsto \left(\underline{\chi},\omega,\underline{\eta}\right)$. 

If we do not explicitly say, then our convention is that $\left(\epsilon,N\right)$ regularity is understood to be up to $\{v = 0\}$.
\end{definition}

In the next proposition we will show that any $\left(\epsilon,N\right)$-regular twisted self-similar vaccum spacetime may be put into the self-similar double-null gauge which is furthermore $\left(\epsilon,N-2\right)$-regular up to $\{v = 0\}$. 
\begin{proposition}\label{makeitadouble}Let $\left(\mathcal{M},g\right)$ be an $\left(\epsilon,N\right)$-regular twisted self-similar spacetime which is in the homothetic gauge. Then $\left(\mathcal{M},g\right)$ may be put into the self-similar double-null gauge by the following procedure:
\begin{enumerate}
	\item  We define a new coordinate system $\left(u,\hat{v},\theta^A\right)$ by setting 
	\[u \doteq t,\qquad \hat{v} \doteq -t\rho.\]
	In these new coordinates, the metric takes the form
	\begin{equation}\label{3io2oi384}
	g = \left(P - 4\frac{\hat{v}}{-u}\right) du^2 - 2\left(du\otimes d\hat{v} + d\hat{v}\otimes du\right) - h_A\left(du\otimes d\theta^A + d\theta^A\otimes du\right) + \slashed{g}_{AB}d\theta^A\otimes d\theta^B,
	\end{equation}
	and $K$ takes the form
	\begin{equation}\label{3oj2oijoi}
	K = u\partial_u + \hat{v}\partial_{\hat{v}}.
	\end{equation}
	The coordinates $\left(u,\hat{v}\right)$ must lie in the set $\mathcal{U} \doteq \left\{(u,\hat{v}) : u \in (-\infty,0) : \frac{\hat{v}}{-u} \in (-\tilde{c},\tilde{c})\setminus \{0\}\right\}$, where $\tilde{c}$ is such that $\rho \in (-\tilde{c},\tilde{c})$ where $\rho$ is the coordinate from the homothetic gauge (see Definition~\ref{3moi09989}).
	\item We then define another coordinate system $\left(v,u,\theta^A\right)$ by defining $v = v\left(\hat{v},u,\theta^A\right) = u\tilde{v}\left(\frac{\hat{v}}{u},\theta^A\right)$ for a suitable function $\tilde{v}$ defined in some dilation invariant neighborhood of $\mathcal{H}_0$. Wherever this coordinate change is non-singular, the metric will take the form:
	\begin{align}\label{3poj2oijoi4}
	&g = -2\frac{\partial \hat{v}}{\partial v}\left(du\otimes dv+dv\otimes du\right) +\left(P-4\frac{\hat{v}}{-u}-4\frac{\partial \hat{v}}{\partial u}\right)du^2
	\\ \nonumber &\qquad  -\left(h_A+2\frac{\partial \hat{v}}{\partial \theta^A}\right)\left(du\otimes d\theta^A + d\theta^A\otimes du\right) +\slashed{g}_{AB}d\theta^A\otimes d\theta^B, 
	\end{align}
	and, moreover, we will have that
	\begin{equation}\label{3ij2oij4o2}
	K = u\partial_u + v\partial_v.
	\end{equation}
	\item In order to obtain a double-null coordinate system we pick the function $v$ so that, in the $\left(u,v,\theta^A\right)$ coordinate system, we have
	\begin{equation}\label{3ij2oij42}
	P-4\frac{\hat{v}}{-u}-4\frac{\partial \hat{v}}{\partial u} = \left|h+2\slashed{\nabla}\hat{v}\right|_{\slashed{g}}^2.
	\end{equation}
	\item The resulting double-null metric will be $\left(\epsilon,N-2\right)$-regular up to $\{v = 0\}$.
	\end{enumerate}

Finally, we note that it is straightforward to modify the construction to establish an analogous result in the case of $1$-sided self-similarity. 
\end{proposition}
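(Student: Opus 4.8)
The plan is to verify the three coordinate changes in order and then read off the regularity from Proposition~\ref{solvedegennonlinear}. Steps~1 and~2 I would dispatch by direct computation. Substituting $u = t$, $\hat{v} = -t\rho$ into~\eqref{homotheticformyay} (using $d\rho = (-u)^{-1}d\hat{v} + \hat{v}(-u)^{-2}du$), the cross term $2t(dt\otimes d\rho + d\rho\otimes dt)$ collapses to $-2(du\otimes d\hat{v} + d\hat{v}\otimes du)$ together with a contribution $-4\tfrac{\hat{v}}{-u}\,du^2$, giving~\eqref{3io2oi384}; since both $t$ and $\hat{v} = -t\rho$ scale with weight one under $K = t\partial_t$, we obtain~\eqref{3oj2oijoi}. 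Substituting $d\hat{v} = \tfrac{\partial\hat{v}}{\partial u}du + \tfrac{\partial\hat{v}}{\partial v}dv + \tfrac{\partial\hat{v}}{\partial\theta^A}d\theta^A$ then yields~\eqref{3poj2oijoi4}, and as $v = u\tilde{v}(\hat{v}/u,\theta)$ again has weight one, $K = u\partial_u + v\partial_v$. Matching~\eqref{3poj2oijoi4} against~\eqref{3k2oijo42} identifies $\Omega^2 = \mathcal{L}_{\partial_v}\hat{v}$, $b_A = h_A + 2\slashed{\nabla}_A\hat{v}$, and shows the double-null requirement is precisely~\eqref{3ij2oij42}.

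The crux is Step~3: solving~\eqref{3ij2oij42} for $\hat{v}$. I would reduce it to~\eqref{3o29u9o}. Self-similarity, $K\hat{v} = \hat{v}$ with $K = u\partial_u + v\partial_v$, gives $\partial_u\hat{v} = u^{-1}(\hat{v} - v\mathcal{L}_{\partial_v}\hat{v})$, so that $-4\tfrac{\hat{v}}{-u} - 4\partial_u\hat{v} = \tfrac{4v}{u}\mathcal{L}_{\partial_v}\hat{v}$; on $\{u = -1\}$ (extending everywhere by self-similarity), where moreover $\rho = \hat{v}$, the left side of~\eqref{3ij2oij42} becomes $P - 4v\mathcal{L}_{\partial_v}\hat{v}$. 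Expanding the right side as $|h|^2_{\slashed{g}} + 4h^A\slashed{\nabla}_A\hat{v} + 4|\slashed{\nabla}\hat{v}|^2_{\slashed{g}}$ and dividing by $-4$ turns~\eqref{3ij2oij42} into $(-v)\mathcal{L}_{\partial_v}\hat{v} - h^A\slashed{\nabla}_A\hat{v} - |\slashed{\nabla}\hat{v}|^2_{\slashed{g}} = -\tfrac14(P - |h|^2_{\slashed{g}})$. The decisive move is to peel off the linear part: the homothetic-gauge identity $P|_{\rho=0} = |h|^2_{\slashed{g}}|_{\rho=0}$ and the smallness of $P - 4\rho$ let me write $\tfrac14(P - |h|^2_{\slashed{g}}) = \hat{v} + \tfrac14(P - 4\hat{v} - |h|^2_{\slashed{g}})$, and moving $\hat{v}$ to the left produces exactly~\eqref{3o29u9o} with $a = -1$ and $H = -\tfrac14(P - 4\hat{v} - |h|^2_{\slashed{g}})$, which vanishes at $\hat{v} = 0$. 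I would then check that the $(\epsilon,N)$-regularity~\eqref{3ij2ioji}--\eqref{3ij2ioji123} supplies the hypothesis~\eqref{oiiuiu328932o} with $\mathscr{A} \lesssim \epsilon \ll \min(d, N^{-1})$ and invoke Proposition~\ref{solvedegennonlinear} (separately for $\hat{v} > 0$ and $\hat{v} < 0$) to obtain the $C^1$-diffeomorphism $\hat{v}\mapsto v$; its degeneration as $\hat{v}\to 0$ is exactly why the double-null gauge cannot reach $\mathcal{H}_0$.

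For Step~4 I would transport the conclusions~\eqref{jwiojoqr}--\eqref{3oij2oij4o} of Proposition~\ref{solvedegennonlinear} into Definition~\ref{thisdefissomething}. The control of $v^{1-q}(v-\hat{v})$ in $C^{N-1}$ bounds $b = h + 2\slashed{\nabla}\hat{v}$ and the pullback metric $\slashed{g}(\hat{v}(v,\theta),\theta)$, while the bound on $\log\mathcal{L}_{\partial_v}\hat{v}$ controls $\log\Omega = \tfrac12\log\mathcal{L}_{\partial_v}\hat{v}$, giving~\eqref{3oj2ij4ino192939}. The Ricci coefficients $\eta$, $\Omega\underline{\omega}$, $\Omega\chi$ are then obtained from $(\Omega, b, \slashed{g})$ through the definitions~\eqref{blahblah1}--\eqref{blahblah3}, which costs one further derivative; combined with the derivative already spent passing from $\hat{v}$ to $\Omega$ via $\log\mathcal{L}_{\partial_v}\hat{v}$, this explains the drop from $N$ to $N-2$ and yields~\eqref{3ij2oijoi223456}--\eqref{2lk3jl1kjl21}. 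Uniqueness of the construction (given the induced data on $\mathcal{H}_0$) follows from the uniqueness in Proposition~\ref{solvedegennonlinear}, and the one-sided case is verbatim.

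The main obstacle is the reduction in Step~3 — specifically, using self-similarity to trade $\partial_u$ for $v\mathcal{L}_{\partial_v}$ and then isolating the linear term $\hat{v}$ so that the residual $H$ vanishes on $\{v = 0\}$, which is what brings the equation within the reach of Proposition~\ref{solvedegennonlinear}. A secondary but genuinely fiddly task is reconciling the homothetic background profile $t^2(1-\rho)^2\mathring{\slashed{g}}$ with the double-null profile $(\hat{v}+1)^2\mathring{\slashed{g}}$ when checking~\eqref{oiiuiu328932o}, and bookkeeping the two-derivative loss throughout Step~4.
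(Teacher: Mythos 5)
Your treatment of Steps 1--3 is essentially the paper's own argument: the same algebraic reduction (using $K\hat{v}=\hat{v}$ to trade $\partial_u\hat{v}$ for $\tfrac{1}{u}\hat{v}-\tfrac{v}{u}\mathcal{L}_{\partial_v}\hat{v}$, restricting to $\{u=-1\}$, and isolating the linear term so that $H=-\tfrac14\left(P-4\hat{v}-\left|h\right|^2_{\slashed{g}}\right)$ vanishes at $\hat{v}=0$ by the gauge condition and~\eqref{3ij2ioji123}), followed by two applications of Proposition~\ref{solvedegennonlinear}, one on each side of $\{\hat v = 0\}$. That part is correct.

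There is, however, a genuine gap in Step 4. You claim that $\eta$ and $\Omega\underline{\omega}$ are ``obtained from $(\Omega,b,\slashed{g})$ through the definitions~\eqref{blahblah1}--\eqref{blahblah3}, which costs one further derivative.'' But the bound~\eqref{3ij2oijoi223456} demands, at $i=j=0$, an \emph{unweighted} $C^{N-1}$ bound on $\eta$ and $\Omega\underline{\omega}$ together with continuous extension to $\{v=0\}$, whereas the only control you have on the lapse is the weighted bound on $|v|^{\tilde d}\log\mathcal{L}_{\partial_v}\hat{v}$ from~\eqref{3oij2oij4o}: $\log\Omega$ and its angular and $\partial_u$ derivatives may blow up like $|v|^{-O(d)}$ as $v\to 0$. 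Since $\eta_A=\zeta_A+\slashed{\nabla}_A\log\Omega$ and $\Omega\underline{\omega}=-\tfrac12\left(\partial_u+\mathcal{L}_b\right)\log\Omega$, naive differentiation of the metric coefficients produces only degenerate estimates and cannot yield~\eqref{3ij2oijoi223456}. The missing idea is a pair of structural cancellations. For $\eta$ one computes, using $\Omega^2=\partial\hat{v}/\partial v$ and $b^A=h^A+2\slashed{\nabla}^A\hat{v}$, that
\begin{equation*}
\eta^A=-\frac14\,\mathcal{L}_{\partial_{\hat{v}}}h^A-\frac12\left(\mathcal{L}_{\partial_{\hat{v}}}\slashed{g}^{-1}\right)^{AB}\slashed{\nabla}_B\hat{v},
\end{equation*}
so the $\log\Omega$ contributions cancel identically and the result is controlled by the improved homothetic bounds~\eqref{3ij2ioji123}. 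For $\Omega\underline{\omega}$ one observes that the combination $\left(-\tfrac{v}{u}\partial_v+b\cdot\slashed{\nabla}\right)\log\left(\partial\hat{v}/\partial v\right)$ is, up to a constant factor, exactly the left-hand side of the $\hat{v}$-differentiated eikonal equation~\eqref{3noinoi2jo4}, whose right-hand side involves only $\mathcal{L}_{\partial_{\hat{v}}}$ of $h$, $\slashed{g}^{-1}$, and $P-4\rho-\left|h\right|^2_{\slashed{g}}$, all controlled by~\eqref{3ij2ioji} and~\eqref{3ij2ioji123}. Without these two observations the verification of~\eqref{3ij2oijoi223456} does not go through.
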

\begin{proof} The equations~\eqref{3io2oi384}-\eqref{3ij2oij4o2} are all straightforward calculations. However, we have to justify that one may pick $v$ so that~\eqref{3ij2oij42} holds. We first observe that by self-similarity, wherever the coordinate change is well-defined, we have that
\[\frac{\partial \hat{v}}{\partial u} = \frac{1}{u}\hat{v} - \frac{v}{u}\frac{\partial \hat{v}}{\partial v}.\]
In particular, we can re-write~\eqref{3ij2oij42} as
\begin{equation}\label{3noinoi2jo4}
\frac{v}{u}\frac{\partial \hat{v}}{\partial v} - \frac{1}{u}\hat{v} - h^A\slashed{\nabla}_A\hat{v} -\left|\slashed{\nabla}\hat{v}\right|^2_{\slashed{g}} = \frac{1}{4}\left(\left|h\right|^2_{\slashed{g}} - \left(P - 4\frac{\hat{v}}{-u}\right)\right).
\end{equation}
Furthermore, in view of the self-similarity of both the $\hat{v}$ and $v$ coordinates, it suffices to define $v$ along $\{u = -1\}$, then extend $v$ by self-similarity off of $\{u = -1\}$, and guarantee that~\eqref{3noinoi2jo4} holds when $\{u = -1\}$. Now we simply observe that in view of the assumption~\eqref{3ij2ioji123} from Definition~\ref{3moi09989}, an application of Proposition~\ref{solvedegennonlinear} (and an additional application after the coordinate change $\hat{v} \mapsto -\hat{v}$) exactly implies that it is possible to pick such a function $v$ such that the corresponding coordinate system is valid for 
\[\left\{\left(u,v,\theta^A\right) \in (-\infty,0) \times \mathbb{R} \times \mathbb{S}^2: \frac{v}{-u} \in (-c,c)\setminus \{0\}\right\},\]
for a suitably small $c > 0$, and that this chart covers the complement of $\mathcal{H}_0$ in a dilation invariant neighborhood of $\mathcal{H}_0$. 

It remains to check that the double-null metric is $\left(\epsilon,\tilde{N}\right)$-regular up to $\{v = 0\}$ for some $\tilde{N} \gtrsim N$. The estimates~\eqref{3oj2ij4ino192939} and~\eqref{2lk3jl1kjl21} are straightforward consequences of our application of Proposition~\ref{solvedegennonlinear} and the original bounds for the metric from~\eqref{3ij2ioji} and~\eqref{3ij2ioji123}.  For~\eqref{3ij2oijoi223456} we need to establish improved estimates for $\Omega\underline{\omega}$ and $\eta$. We start with $\Omega\underline{\omega}$. We have 
\[\Omega^2 = \frac{\partial \hat{v}}{\partial v},\qquad b^A = h^A + 2\slashed{\nabla}^A\hat{v},\]
 and thus a short calculation using the underlying self-similarity yields that
\begin{equation}\label{formform}
\Omega\underline{\omega} = -\frac{1}{2}\left(\partial_u + \mathcal{L}_b\right)\log\Omega = -\frac{1}{4}\left(-\frac{v}{u}\partial_v + \left(h^A+2\slashed{\nabla}^A\hat{v}\right)\slashed{\nabla}_A\right)\log\left(\frac{\partial\hat{v}}{\partial v}\right).
\end{equation}
On the other hand, we may differentiate~\eqref{3noinoi2jo4} and obtain
\begin{align*}
&\frac{v}{u}\partial_v\left(\log\left(\frac{\partial \hat{v}}{\partial v}\right)\right) - h^A\slashed{\nabla}_A\log\left(\frac{\partial \hat{v}}{\partial v}\right) - 2\left(\slashed{\nabla}^A\hat{v}\right)\slashed{\nabla}_A\log\left(\frac{\partial \hat{v}}{\partial v}\right) = 
\\ \nonumber &\qquad \left(\mathcal{L}_{\partial_{\hat{v}}}h\right)^A\slashed{\nabla}_A\hat{v} + \left(\mathcal{L}_{\partial_{\hat{v}}}\slashed{g}^{-1}\right)^{AB}\slashed{\nabla}_A\hat{v}\slashed{\nabla}_B\hat{v} + \mathcal{L}_{\partial_{\hat{v}}}\left(\frac{1}{4}\left(\left|h\right|^2_{\slashed{g}} - \left(P - 4\frac{\hat{v}}{-u}\right)\right)\right).
\end{align*} 
The desired estimates for $\Omega\underline{\omega}$ then follow from the original metric bounds~\eqref{3ij2ioji} and~\eqref{3ij2ioji123} as well as the estimates for $\hat{v}$ which we obtained after the application of Proposition~\ref{solvedegennonlinear}. Finally, we turn to the improved estimates for $\eta$. We have 
\begin{align*}
\eta^A  &= -\frac{1}{4}\Omega^{-2}\mathcal{L}_{\partial_v}b^A + \slashed{\nabla}^A\log\Omega
\\ \nonumber &= -\frac{1}{4}\frac{\partial v}{\partial \hat{v}}\mathcal{L}_{\partial_v}\left(h^A + 2\slashed{\nabla}^A\hat{v}\right) + \frac{1}{2}\slashed{\nabla}^A\log\left(\frac{\partial \hat{v}}{\partial v}\right)
\\ \nonumber &= -\frac{1}{4}\mathcal{L}_{\partial_{\hat{v}}}h^A - \frac{1}{2}\left(\mathcal{L}_{\partial_{\hat{v}}}\slashed{g}^{-1}\right)^{AB}\slashed{\nabla}_B\hat{v}.
\end{align*}
Then, yet again, the desired estimates for $\eta$ follow now from the original metric bounds~\eqref{3ij2ioji} and~\eqref{3ij2ioji123} as well as the estimates for $\hat{v}$ which we obtained after the application of Proposition~\ref{solvedegennonlinear}.

\end{proof}
In this next proposition we show that given a self-similar double-null metric which is $\left(\epsilon,N\right)$-regular up to $\{v = 0\}$, then the metric in fact arises from a $\left(\epsilon,N\right)$-regular twisted self-similar vacuum spacetime in the homothetic gauge which has been put into the self-similar double-null gauge. As in Proposition~\ref{homotheticgaugeexist}, we moreover provide an explicit procedure for the construction of the corresponding homothetic gauge. 
\begin{proposition}\label{nonononononoo}Let $\epsilon > 0$ be sufficiently small, $N \in \mathbb{Z}_{>0}$ be sufficiently large, and let $\left(\mathcal{U}\times\mathbb{S}^2,g\right)$ be a self-similar double-null metric which is $\left(\epsilon,N\right)$-regular up to $\{v = 0\}$. 

Then we may define a new coordinate system $\left(\hat{v},u,\theta^A\right)$ by setting
\[\hat{v}\left(v,u,\theta^A\right) \doteq \int_0^v\Omega^2\left(\tilde{v},u,\theta^A\right)\, d\tilde{v}.\]
In the new $\left(\hat{v},u,\theta^A\right)$ coordinates, the metric takes the form
\begin{align}\label{3ij290901}
&g = -2\left(du\otimes d\hat{v} + d\hat{v}\otimes du\right) - \left(b_A-4\int_0^v\left(\Omega^2\mathcal{L}_{\partial_{\theta^A}}\log\Omega\right)\, d\tilde{v}\right)\left(du\otimes d\theta^A + d\theta^A\otimes du\right) 
\\ \nonumber &\qquad + \left(|b|^2 +8\int_0^v\left(\Omega^2\mathcal{L}_{\partial_u}\log\Omega\right)\, d\tilde{v}\right)\, du^2 + \slashed{g}_{AB}d\theta^A\otimes d\theta^B,
\end{align}
and the self-similar vector field $K$ takes the form
\begin{equation}\label{3ij2j3o42}
K = u\partial_u + \hat{v}\partial_{\hat{v}}.
\end{equation}

Finally, we define $\left(t,\rho,\theta^A\right)$ coordinates by setting
\[t \doteq u,\qquad \rho \doteq \frac{\hat{v}}{-u}.\]
In these coordinates, the metric takes the form
\begin{align}\label{32oi4810}
&g = 2t\left(dt\otimes d\rho + d\rho \otimes dt\right) +  \left(|b|^2 +8\int_0^v\left(\Omega^2\mathcal{L}_{\partial_u}\log\Omega\right)\, d\tilde{v}+4\rho\right)dt^2 
\\ \nonumber &\qquad - \left(b_A-4\int_0^v\left(\Omega^2\mathcal{L}_{\partial_{\theta^A}}\log\Omega\right)\, d\tilde{v}\right)\left(dt\otimes d\theta^A + d\theta^A\otimes dt\right) +\slashed{g}_{AB}d\theta^A\otimes d\theta^B,
\end{align}
and the self-similar vector field $K$ now takes the form
\begin{equation}\label{32oi4oo2}
K = t\partial_t.
\end{equation}

Our assumptions on the metric and Ricci coefficients show that the metric defined by~\eqref{3ij290901} in fact extends continuously to $\{\hat{v} = 0\}$. In turn, after possibly shrinking the original neighborhood where the $\left(v,u,\theta^A\right)$ are defined, we find that $g$ expressed in the $\left(t,\rho,\theta^A\right)$ will be an $\left(\epsilon,N-2\right)$-regular twisted self-similar spacetime which is in the homothetic gauge.

Finally, we note that it is straightforward to modify the construction to establish an analogous result in the case of $1$-sided self-similarity. 
\begin{proof}The formulas~\eqref{3ij290901}-\eqref{32oi4oo2} are straightforward calculations. 

It remains to check the regularity statements about $g$. The estimate~\eqref{3ij2ioji} is an immediate consequence of the fact that our original metric was $\left(\epsilon,N\right)$-regular up to $\{v = 0\}$. In order to check~\eqref{3ij2ioji123} we need to show that $P-4\rho - \left|h\right|^2_{\slashed{g}}$ and $h^A$ have improved regularity as $\rho \to 0$. We start with $P-4\rho -\left|h\right|^2_{\slashed{g}}$:
\begin{align*}
&\mathcal{L}_{\partial_{\rho}}\left(P-4\rho -\left|h\right|^2_{\slashed{g}}\right) =
\\ \nonumber &\qquad  (-u)\mathcal{L}_{\partial_{\hat{v}}}\left(8\int_0^v\left(\Omega^2\mathcal{L}_{\partial_u}\log\Omega\right)\, d\tilde{v}+8b^A\int_0^v\left(\Omega^2\mathcal{L}_{\partial_{\theta^A}}\log\Omega\right)\, d\tilde{v}-16\left|\int_0^v\left(\Omega^2\slashed{\nabla}\log\Omega\right)\, d\tilde{v}\right|^2\right) =
\\ \nonumber &\qquad -4(-u)\left(\Omega\underline{\omega}\right) + 8(-u)\mathcal{L}_{\partial_{\hat{v}}}b^A\int_0^v\left(\Omega^2\mathcal{L}_{\partial_{\theta^A}}\log\Omega\right)\, d\tilde{v} -32(-u)\slashed{\nabla}^A\log\Omega\left(\int_0^v\left(\Omega^2\slashed{\nabla}_A\log\Omega\right)\, d\tilde{v}\right)
\\ \nonumber &\qquad - 16 (-u)\mathcal{L}_{\partial_{\hat{v}}}\left(\slashed{g}^{-1}\right)^{AB}\left(\int_0^v\left(\Omega^2\slashed{\nabla}_A\log\Omega\right)\, d\tilde{v}\right)\left(\int_0^v\left(\Omega^2\slashed{\nabla}_B\log\Omega\right)\, d\tilde{v}\right).
\end{align*}
From this identity, the necessary improved regularity for $P-4\rho - \left|h\right|^2$ follows immediately from the $\left(\epsilon,N\right)$-regularity assumptions. For $h^A$ we have
\begin{align*}
\mathcal{L}_{\partial_{\rho}}h^A = (-u)\mathcal{L}_{\partial_{\hat{v}}}\left(b^A -4\slashed{g}^{AB}\int_0^v\left(\Omega^2\mathcal{L}_{\partial_{\theta^B}}\log\Omega\right)\, d\tilde{v}\right) = u\eta^A +4u\mathcal{L}_{\hat{v}}\left(\slashed{g}^{-1}\right)^{AB}\int_0^v\left(\Omega^2\mathcal{L}_{\partial_{\theta^B}}\log\Omega\right)\, d\tilde{v}.
\end{align*}
From this identity, the necessary improved regularity for $h^A$ follows immediately from the $\left(\epsilon,N\right)$-regularity assumptions. 

\end{proof}

\end{proposition}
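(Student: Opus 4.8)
The plan is to treat the three coordinate changes as a sequence of explicit substitutions and then concentrate all the real work on the regularity bookkeeping at the end. First I would verify the formulas \eqref{3ij290901}--\eqref{32oi4oo2} by direct computation. The substitution $\hat v = \int_0^v\Omega^2\,d\tilde v$ is engineered so that $d\hat v = \Omega^2\,dv + \tfrac{\partial\hat v}{\partial u}\,du + \tfrac{\partial\hat v}{\partial\theta^A}\,d\theta^A$; solving for $dv$ and inserting into \eqref{3k2oijo42} turns the term $-2\Omega^2(du\otimes dv + dv\otimes du)$ into $-2(du\otimes d\hat v + d\hat v\otimes du)$ plus a $du^2$ contribution $4\,\tfrac{\partial\hat v}{\partial u}\,du^2$ and angular cross-terms $2\,\tfrac{\partial\hat v}{\partial\theta^A}(du\otimes d\theta^A+d\theta^A\otimes du)$, which is exactly what combines with $|b|^2\,du^2$ and $-b_A(\cdots)$ from the expanded shifted part to give \eqref{3ij290901}. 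For $K$ I would check $K\hat v = \hat v$: since $\Omega^2$ is homogeneous of degree zero ($u\partial_u\Omega^2 = -\tilde v\partial_{\tilde v}\Omega^2$ by \eqref{constrtobeself}), one has $K\hat v = -\int_0^v \tilde v\,\partial_{\tilde v}\Omega^2\,d\tilde v + v\Omega^2(v)$, and integrating by parts gives $-v\Omega^2(v) + \hat v + v\Omega^2(v) = \hat v$. The final change $t=u,\ \rho=\hat v/(-u)$ is the inverse of the map in Proposition~\ref{makeitadouble}; substituting $d\hat v = -\rho\,dt - t\,d\rho$ produces the $2t(dt\otimes d\rho + d\rho\otimes dt)$ term and the extra $+4\rho$ in the $dt^2$ coefficient, and a one-line computation gives $Kt = t$, $K\rho = 0$, i.e.\ \eqref{32oi4oo2}.

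The essential observation that makes the regularity tractable is an algebraic simplification of the integrals appearing in \eqref{3ij290901}. Since $\Omega^2\slashed{\nabla}_A\log\Omega = \tfrac12\slashed{\nabla}_A(\Omega^2)$ and $\Omega^2\partial_u\log\Omega = \tfrac12\partial_u(\Omega^2)$, the angular integral is literally $\int_0^v \Omega^2\slashed{\nabla}_A\log\Omega\,d\tilde v = \tfrac12\slashed{\nabla}_A\hat v$ and the $du^2$ integral is $\tfrac12\,\tfrac{\partial\hat v}{\partial u}$. Both vanish as $v\to 0$ because $\hat v\to 0$ together with its angular and $u$-derivatives. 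Hence along $\{\hat v = 0\}$ the off-diagonal coefficient $h_A$ reduces to $b_A$ and the $du^2$ coefficient to $|b|^2$, each of which extends continuously by the $(\epsilon,N)$-regularity hypothesis. This gives the continuous extension of \eqref{3ij290901} to $\{\hat v = 0\}$ and, since the integrals vanish there, the homothetic-gauge constraint $P|_{\rho=0} = |h|^2_{\slashed{g}}|_{\rho=0}$ of Definition~\ref{3moi09989}.

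Next I would establish the two quantitative bounds. The estimate \eqref{3ij2ioji} is essentially immediate: $\slashed{g}$, $b$, and the leading profile of $\Omega\chi$ are directly controlled by \eqref{3oj2ij4ino192939}--\eqref{2lk3jl1kjl21}, and $P-4\rho = |b|^2 + 8\int_0^v\Omega^2\mathcal{L}_{\partial_u}\log\Omega\,d\tilde v$ inherits this control. The real content is the improved $\rho$-regularity \eqref{3ij2ioji123} for $P-4\rho-|h|^2_{\slashed{g}}$ and $h^A$. Here the mechanism is that $\mathcal{L}_{\partial_\rho} = (-u)\mathcal{L}_{\partial_{\hat v}}$ and, by the fundamental theorem of calculus together with $\partial\hat v/\partial v = \Omega^2$, $\mathcal{L}_{\partial_{\hat v}}\int_0^v\Omega^2 X\,d\tilde v = X|_{\tilde v = v}$. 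Applying this to the explicit expression $P-4\rho-|h|^2 = 8\int_0^v\Omega^2\mathcal{L}_{\partial_u}\log\Omega\,d\tilde v + 8b^A\int_0^v\Omega^2\slashed{\nabla}_A\log\Omega\,d\tilde v - 16\big|\int_0^v\Omega^2\slashed{\nabla}\log\Omega\,d\tilde v\big|^2$, the surviving boundary terms assemble into a multiple of $(\partial_u + b\cdot\slashed{\nabla})\log\Omega = -2\,\Omega\underline{\omega}$, which is $(\epsilon,N)$-regular by \eqref{3ij2oijoi223456}, while every other term is a product of the vanishing integrals $\tfrac12\slashed{\nabla}_A\hat v$ with a bounded factor. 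The same computation applied to $h^A = b^A - 4\slashed{g}^{AB}\int_0^v\Omega^2\mathcal{L}_{\partial_{\theta^B}}\log\Omega\,d\tilde v$ shows $\mathcal{L}_{\partial_\rho}h^A$ reduces to $u\eta^A$ plus a product term, and $\eta$ is regular by \eqref{3ij2oijoi223456}. This is exactly the improved regularity claimed.

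The hard part will be the bookkeeping in this last step rather than any single identity. The underlying tension is that $\log\Omega$ (and hence $\Omega$ itself) is only permitted to blow up mildly like $|v|^{-O(\epsilon)}$ as $v\to 0$, so a naive $\rho$-derivative of the metric is genuinely singular; the proposition survives only because of the structural cancellation above, in which every potentially singular term is either absorbed into the regular combination $\Omega\underline{\omega}$ or paired with a factor $\slashed{\nabla}\hat v$ that vanishes at the boundary. Carrying this through to all orders requires (i) checking that $v\mapsto\hat v$ is a genuine $C^{N-2}$ diffeomorphism near $\{v=0\}$, which uses $\Omega^2 > 0$ and its regularity, (ii) tracking how the change of variables $v\to\hat v\to\rho$ converts the $v$-weighted $(\epsilon,N)$-norms into the $\rho$-weighted homothetic norms, and (iii) verifying that the two-derivative loss $N\to N-2$ is enough to close the nonlinear products $b\cdot I$, $\slashed{g}^{AB}I_AI_B$ and the commutators with $\slashed{\nabla}$. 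None of these is conceptually difficult, but ensuring the weights and derivative counts match the statement of Definitions~\ref{ijo93u92} and~\ref{3moi09989} is where the care is needed.
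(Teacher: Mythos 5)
Your proposal follows essentially the same route as the paper: verify the coordinate-change formulas by direct substitution, then obtain the improved $\rho$-regularity of $P-4\rho-\left|h\right|^2_{\slashed{g}}$ and $h^A$ by applying $\mathcal{L}_{\partial_{\rho}} = (-u)\mathcal{L}_{\partial_{\hat{v}}}$ to the explicit integral expressions, observing that the boundary terms assemble into $\Omega\underline{\omega}$ and $\eta$ (which are regular by~\eqref{3ij2oijoi223456}) while the remaining products carry vanishing factors $\int_0^v\Omega^2\slashed{\nabla}\log\Omega\, d\tilde{v} = \tfrac12\slashed{\nabla}\hat{v}$. The argument is correct and matches the paper's proof; your explicit identification of the integrals with $\tfrac12\slashed{\nabla}_A\hat{v}$ and $\tfrac12\partial_u\hat{v}$ is a minor streamlining of the same computation.
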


We now turn to self-similar spacetimes whose homothetic vector filed is everywhere spacelike.
\begin{definition}\label{3901o1111}Let $N \in \mathbb{Z}_{>0}$ be sufficiently large. Then we say that a $3+1$ dimensional Lorentzian manifold $\left(\mathcal{M},g\right)$ is in the spacelike $N$-regular self-similar regime if 
\begin{enumerate}
\item  There exists some $0 < a_0 < a_1 < \infty$ so that $\mathcal{M}$ is diffeomorphic to $(a_1-a_0,a_1+a_0) \times (-\infty,0) \times \mathbb{S}^2$. We will use coordinates $\left(q,s,\theta^A\right) \in (a_1-a_0,a_1+a_0) \times (-\infty,0) \times \mathbb{S}^2$ to refer to this decomposition and denote  the copy of $(-\infty,0) \times \mathbb{S}^2$ at a value of $q$ by $\mathcal{H}_q$.  
	\item There exists a vector field $K$ which satisfies $\mathcal{L}_Kg = 2g$, so that the orbits associated to the flow of $K$ are complete and given by a curves of constant $q$ and $\theta^A$, and $K$ is everywhere spacelike.
	\item The metric $g \in C^N$. 
	\end{enumerate}

\end{definition}

We now introduce the analogue of the $\left(\epsilon,N\right)$-regularity concept for spacelike self-similar metrics.
\begin{definition}\label{defdefdefdefdeflko}We say that a spacelike self-similar metric is in the spacelike homothethic gauge if the metric $g$ takes the form~\eqref{homotheticformyay} for coordinates $\left(\rho,t,\theta^A\right) \in (\rho_0-c,\rho_0+c) \times (-\infty,0) \times \mathbb{S}^2$ and suitable $0 < c < \rho_0$ and if the self-similar vector field $K$ takes the form $t\partial_t$.

We then say that a metric $g$ is in the $\left(\epsilon,N\right)$-small spacelike self-similar regime if it is a spacelike self-similar metric and, after being put into the homothetic gauge, it satisfies the following smallness assumption relative to Minkowski space in the homothethic gauge along $\{t = -1\}:$
	\begin{align}
	&\left\vert\left\vert \left(g_{tt} - 4\rho,g_{t\rho}-2t,g_{tA},\slashed{g}_{AB} - t^2\left(\rho-1\right)^2\mathring{\slashed{g}}_{AB}\right)|_{t=-1}\right\vert\right\vert_{C^N} \lesssim \epsilon,
	\end{align}
	where $\mathring{\slashed{g}}_{AB}$ denotes a fixed choice of a round metric and we use the metric $d\rho^2 + \mathring{\slashed{g}}_{AB}$ to define the $C^N$ norm along $\{t = -1\}$. We require that $N$ be a sufficiently large positive integer.
\end{definition} 
\begin{remark}From a mild adaption of the proof of Proposition~\ref{3ok001}, it is clear that any self-similar spacelike spacetime may be put into the spacelike homothetic gauge in some dilation invariant neighborhood of any given point in the spacetime.
\end{remark}
\begin{remark}Even though the smallness assumption is only stated along $\{t = -1\}$, in view of the underlying self-similarity, this should be considered a global smallness assumption.
\end{remark}

\subsection{The Spacelike Self-Similar Regime}
In the next lemma, we show that any spacetime which is in the $\left(\epsilon,N\right)$-small spacelike self-similar regime may be put into a double-null gauge.
\begin{lemma}\label{oioioioi12}Let $\left(\mathcal{M},g\right)$ be in the $\left(\epsilon,N\right)$-small spacelike self-similar regime. Then there exist coordinates 
\begin{equation}\label{3ij2oijoi2}
\left\{\left(u,v,\theta^A\right) \in (-\infty,0) \times \mathbb{R} \times \mathbb{R}^2 : \frac{v}{-u} \in (\lambda-\tilde{c},\lambda+\tilde{c}) \right\},
\end{equation}
for some $0 < \tilde{c} < \lambda < \infty$,  covering a dilation invariant neighborhood in $\mathcal{M}$, such that $g$ takes the double-null form~\eqref{3k2oijo42}, the vector field $K$ takes the form $K = u\partial_u + v\partial_v$, and the metric satisfies the following smallness condition along $\{u = -1\}:$
\begin{equation}\label{32oijoi991}
\left\vert\left\vert \left(\log\Omega,b,\slashed{g} - \left(v-u\right)^2\mathring{\slashed{g}}\right)|_{u=-1}\right\vert\right\vert_{C^{\tilde{N}}} \lesssim \epsilon,
\end{equation}
for some $\tilde{N} \gtrsim N$ and where we use the metric $dv^2 + \mathring{\slashed{g}}$ to define the $C^N$ norm.

It will also be useful to note that the same proof allows us to equip the spacetime with the coordinate system~\eqref{2l3kj2lijo42} instead of~\eqref{3k2oijo42}, and we will still have the estimate~\eqref{32oijoi991}. 
\end{lemma}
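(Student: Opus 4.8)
The plan is to follow the procedure of Proposition~\ref{makeitadouble} essentially verbatim, the one essential new point being that in the spacelike self-similar regime the relevant eikonal equation is \emph{non-degenerate}, so it may be solved by the standard method of characteristics and the delicate Proposition~\ref{solvedegennonlinear} is not needed. First I would use Definition~\ref{defdefdefdefdeflko} to work in the spacelike homothetic gauge and carry out Step~1 of Proposition~\ref{makeitadouble}: set $u \doteq t$ and $\hat v \doteq -t\rho$, so that $g$ takes the form~\eqref{3io2oi384} with $K = u\partial_u + \hat v\partial_{\hat v}$. Exactly as there, the absence of a $d\hat v^2$ term and of a $d\hat v\,d\theta^A$ term shows that $\partial_{\hat v}$ is null and orthogonal to the spheres, so that $g(\partial_{\hat v},\cdot) = -2\,du$ and hence $\{u = \mathrm{const}\}$ is automatically one of the two null foliations; it remains only to manufacture the second.

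The second null coordinate is produced by reparametrizing $\hat v \mapsto v = u\tilde v(\hat v/u,\theta^A)$ subject to the null condition~\eqref{3ij2oij42}, which after invoking self-similarity becomes the first-order equation~\eqref{3noinoi2jo4} for $\hat v$ as a function of $v$; this derivation only used the self-similar relation $\partial_u\hat v = \tfrac{1}{u}\hat v - \tfrac{v}{u}\partial_v\hat v$ and so is valid here. The crucial difference with the twisted case is that now $\hat v/(-u) = \rho$ ranges over $(\rho_0 - c,\rho_0 + c)$ with $\rho_0 > c > 0$, so the coefficient $v/u$ of the leading transport term in~\eqref{3noinoi2jo4} is bounded away from zero throughout the domain: the equation is a genuinely non-degenerate nonlinear transport equation. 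I would solve it along $\{u = -1\}$ by the method of characteristics (as in the first paragraph of the proof of Proposition~\ref{solvedegennonlinear}, but with no degenerate weights) and then extend by self-similarity. Rather than tracking an explicit solution, it is cleanest to observe that the solution map of this non-degenerate equation is Lipschitz in its coefficients $(P,h,\slashed g)$; since the Minkowski values $P = 4\rho$, $h = 0$, $\slashed g = t^2(1-\rho)^2\mathring{\slashed g}$ yield an explicit, smooth reference solution $\hat v_{\mathrm{Mink}}$, the $O(\epsilon)$ closeness of the seed data to Minkowski in Definition~\ref{defdefdefdefdeflko} gives $\|\hat v - \hat v_{\mathrm{Mink}}\|_{C^{\tilde N}} \lesssim \epsilon$ for some $\tilde N \gtrsim N$, with the loss of a bounded number of derivatives.

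Given $\hat v$, I would read off the double-null data from~\eqref{3poj2oijoi4}, namely $\Omega^2 = \partial\hat v/\partial v$, $b^A = h^A + 2\slashed\nabla^A\hat v$, and $\slashed g_{AB}$ unchanged, and check that $g$ takes the form~\eqref{3k2oijo42} with $K = u\partial_u + v\partial_v$. Because the reference solution transforms precisely into the Minkowski double-null values $\Omega = 1$, $b = 0$, $\slashed g = (v-u)^2\mathring{\slashed g}$, the estimate~\eqref{32oijoi991} then follows from the $C^{\tilde N}$ bound on $\hat v - \hat v_{\mathrm{Mink}}$ together with the formulas for $\Omega$, $b$, $\slashed g$; the centering value $\lambda$ and the range $\tilde c$ in~\eqref{3ij2oijoi2} are read off from the Minkowski relation between $\rho_0$ and $v/(-u)$. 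Finally, the variant with the shift in the $v$-direction (metric~\eqref{2l3kj2lijo42}) follows from the same argument after interchanging the roles of the two null coordinates, i.e.\ by keeping the $v$-foliation fixed and solving the eikonal equation for the $u$-coordinate, as permitted by the $u \leftrightarrow v$ symmetry of Remark~\ref{shiftshift}.

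The main obstacle I anticipate is not the solvability of~\eqref{3noinoi2jo4} — which is routine once non-degeneracy is in hand — but the bookkeeping needed to show that the constructed map $(\hat v, u,\theta^A) \mapsto (v,u,\theta^A)$ is a genuine $C^{\tilde N}$-diffeomorphism onto a domain of the form~\eqref{3ij2oijoi2}. This requires the uniform positivity $\partial\hat v/\partial v = \Omega^2 > 0$ on the whole dilation-invariant neighborhood, which one gets from the smallness of $\hat v - \hat v_{\mathrm{Mink}}$ and positivity for Minkowski, and it requires verifying that the explicit Minkowski reference really does transform into the double-null Minkowski values with the correct sign conventions, so that $\epsilon$-smallness in the homothetic gauge is faithfully transferred to~\eqref{32oijoi991}.
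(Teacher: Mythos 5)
Your argument is correct, but it takes a genuinely different route from the paper's. The paper does not return to the eikonal equation here at all: it builds the double-null gauge geometrically, by fixing a null pair $L'=\partial_\rho$, $\underline{L}'$ on the sphere $\{t=-1\}\cap\{\rho=\rho_0\}$, propagating it over $\{\rho=\rho_0\}$ via the commutation conditions $\mathcal{L}_KL'=-L'$, $\mathcal{L}_K\underline{L}'=-\underline{L}'$, extending geodesically off $\{\rho=\rho_0\}$ (with the identity \eqref{3joij991} used to show that the geodesic extension remains $K$-equivariant), and then defining $u,v$ as the optical functions annihilated by $L'$ and $\underline{L}'$; the smallness \eqref{32oijoi991} is then obtained by ODE estimates along the null generators. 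Your route instead recycles Proposition~\ref{makeitadouble}: keep $u=t$, which is automatically optical because the homothetic form \eqref{homotheticformyay} has no $d\rho^2$ or $d\rho\otimes d\theta^A$ terms, and solve the scalar eikonal equation \eqref{3noinoi2jo4} for the second optical function, the key observation being that $\hat{v}/(-u)=\rho$ is bounded away from zero so the transport term is uniformly non-degenerate and Proposition~\ref{solvedegennonlinear} is not needed. The two constructions are geometrically identical (the characteristics of \eqref{3noinoi2jo4} are the integral curves of the paper's $\underline{L}'$), but yours buys closed-form expressions $\Omega^2=\partial\hat{v}/\partial v$, $b^A=h^A+2\slashed{\nabla}^A\hat{v}$ for the double-null data, which makes \eqref{32oijoi991} transparent, whereas the paper's construction treats the two null directions symmetrically and so yields the $v$-shift variant with no additional work. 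Two points to tighten: state where the Cauchy data for \eqref{3noinoi2jo4} is posed (e.g.\ $\hat{v}=-u\rho_0$ on a single sphere of $\{u=-1\}$, extended by self-similarity), since the non-degenerate equation no longer selects its own data at $v=0$; and for the variant with the shift in the $v$-direction the thing that actually changes is the propagation of the angular coordinates --- they must be taken constant along the generators of $\{v=\mathrm{const}\}$ rather than of $\{u=\mathrm{const}\}$ (equivalently, apply Remark~\ref{shiftshift} to the chart you have already built) --- not the choice of which optical function is solved for, since $u=t$ is handed to you for free by the homothetic gauge either way.
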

\begin{proof}We may assume our metric is in the homothetic gauge. Then, along the sphere at $\{t = -1\} \cap \{\rho = \rho_0\}$, we set $L' = \partial_{\rho}$ and define a null vector field $\underline{L}'$ such that 
\[g\left(L',\underline{L}'\right) = -2,\qquad \left\vert\left\vert \underline{L}' - \left(\partial_t - t^{-1}\rho\partial_{\rho} \right)\right\vert\right\vert_{C^N} \lesssim \epsilon. \]
(We would be able to take $\underline{L}' - t^{-1}\rho\partial_{\rho} = 0$ on exact Minkowski space in the homothetic gauge.) Then we extend $L'$ and $\underline{L}'$ to all of $\{\rho = \rho_0\}$ by requiring that 
\begin{equation}\label{32oij482}
\mathcal{L}_KL' = -L',\qquad \mathcal{L}_K\underline{L}' = -\underline{L}'.
\end{equation}
We then extend $L'$ and $\underline{L}'$ to a neighborhood of $\{\rho = \rho_0\}$ by solving $D_{L'}L' = D_{\underline{L}'}\underline{L}' = 0$. Alternatively we define vector field $\tilde{L}'$ and $\underline{\tilde{L}}'$ in a neighborhood of $\{\rho = \rho_0\}$ by seeting $\tilde{L}' = L$ and $\underline{\tilde{L}}' = \underline{L}'$ along the integral curves of $L'$ and $\underline{L}'$ which start at $\{t = -1\} \cap \{\rho = \rho_0\}$ and then extend $\underline{\tilde{L}}'$ and $\tilde{L}'$ by requiring that~\eqref{32oij482} holds everywhere with $L'$ and $\underline{L}'$ replaced by $\tilde{L}'$ and $\tilde{\underline{L}}'$. We then claim that $\tilde{L}' = L'$ and $\underline{\tilde{L}}' = \underline{L}'$ on their common domain. To see this, it suffices to check that $D_{\tilde{L}'}\tilde{L}' = D_{\underline{\tilde{L}}'}\underline{\tilde{L}}' = 0$. We first consider $\tilde{L}'$. Arguing as in the derivation of~\eqref{3oj2919}, we have that
\begin{equation}\label{3joij991}
\mathcal{L}_K\left(D_{\tilde{L}'}\tilde{L}'\right) = -3\left(D_{\tilde{L}'}\tilde{L}'\right).
\end{equation}
Thus $D_{\tilde{L}'}\tilde{L}'$ satisfies a first order ordinary differential equation along the integral curves of $K$ and vanishes along the integral curve of $L'$ which starts at $\{t = -1\} \cap \{\rho = \rho_0\}$. Thus $D_{\tilde{L}'}\tilde{L}'$ must vanish everywhere. The same argument works for $\underline{L}'$. Thus, without loss of generality we may assume that $L' = \tilde{L}'$ and $\underline{L}' = \tilde{\underline{L}}'$ and that $L'$ and $\underline{L}'$ are defined in a dilation invariant neighborhood of $\{\rho = \rho_0\}$. We then define the $u$ and $v$ coordinates by
\[L'\left(u\right) = \underline{L}'\left(v\right) = 0,\qquad u|_{\rho=\rho_0} = t,\qquad v|_{\rho=\rho_0} = (-t)\rho_0.\]
The functions $u$ and $v$ are eikonal functions, and, in a suitable dilation invariant neighborhood of $\{\rho = \rho_0\}$, we have that their level sets $\underline{\mathcal{H}}_u$ and $\mathcal{H}_v$ form regular hypersurfaces which intersect traversally in topological spheres $\mathbb{S}^2_{u,v}$. Finally we define the angular coordinates $\{\theta^A\}$ in the usual fashion: We start with local coordinates $\{\theta^A\}$ defined along a suitable coordinate chart on each $\mathbb{S}^2$ on $\{\rho = \rho_0\}$ so that $\mathcal{L}_{\partial_t}\theta^A = 0$. Then we define $\{\theta^A\}$ on a suitable coordinate chart at the sphere $\mathbb{S}^2_{u,v}$ by requiring that $L\left(\theta^A\right) = 0$. It is then straightforward to see that $\left(u,v,\theta^A\right)$ yields a double-null coordinate system.

Next we need to check that $K = u\partial_u + v\partial_v$. To see this, it suffices to observe that
\[K\left(u\right)|_{\rho = \rho_0} = u,\qquad K\left(v\right)|_{\rho = \rho_0} = v,\qquad K\left(\theta^A\right)|_{\rho = \rho_0} = 0,\]
\[L'\left(K\left(u\right)-u\right) = 0,\qquad \underline{L}'\left(K\left(v\right)-v\right) = 0,\qquad L'\left(K\left(\theta^A\right)\right) = 0. \]

Finally, the estimate~\eqref{32oijoi991} is a straightforward consequence of the initial closeness to Minkowski space and ODE estimates. We omit the details. 
\end{proof}

In this final lemma of the section, we state a converse of Lemma~\ref{oioioioi12}.
\begin{lemma}\label{0990239023}Suppose that $\left(\mathcal{M},g\right)$ is covered by a double-null coordinate system in the region~\eqref{3ij2oijoi2}, has a self-similar vector field $K = u\partial_u + v\partial_v$, and moreover satisfies the smallness condition~\eqref{32oijoi991} for some suitably large positive integer $\tilde{N}$. Then $\left(\mathcal{M},g\right)$ is in the $\left(\epsilon,N\right)$-small spacelike self-similar regime for a positive integer $N \gtrsim \tilde{N}$.
\end{lemma}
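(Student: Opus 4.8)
The plan is to verify directly the two ingredients that constitute the $\left(\epsilon,N\right)$-small spacelike self-similar regime: the abstract structure of Definition~\ref{3901o1111}, and, after passage to the homothetic gauge, the smallness of Definition~\ref{defdefdefdefdeflko}. I would begin with the abstract structure, which is the soft part. The topology is handled by the change of variables $\left(u,v,\theta^A\right) \mapsto \left(q,s,\theta^A\right) \doteq \left(-v/u,\,u,\,\theta^A\right)$, under which the region~\eqref{3ij2oijoi2} is carried diffeomorphically onto $\left(\lambda-\tilde{c},\lambda+\tilde{c}\right) \times (-\infty,0) \times \mathbb{S}^2$ and, by a one-line computation, $K = u\partial_u + v\partial_v$ becomes $s\partial_s$; hence the orbits of $K$ are exactly the curves of constant $q$ and $\theta^A$, with $a_1 = \lambda$ and $a_0 = \tilde{c}$. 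That $K$ is everywhere spacelike follows from the explicit form~\eqref{3k2oijo42}, which gives $g\left(K,K\right) = u^2\left|b\right|^2_{\slashed{g}} - 4uv\Omega^2$; since $u < 0$ and $v > 0$ throughout~\eqref{3ij2oijoi2} we have $uv < 0$, and as $\Omega^2 > 0$ this expression is strictly positive. The relation $\mathcal{L}_Kg = 2g$ is part of the hypothesis (equivalently it is~\eqref{constrtobeself}), while the $C^N$-regularity of $g$ with $N \gtrsim \tilde{N}$ is immediate from the $C^{\tilde{N}}$ bound~\eqref{32oijoi991} along $\{u=-1\}$ together with self-similarity, which propagates the regularity to every $u<0$ since the domain stays away from the scaling origin.

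Next I would produce the homothetic gauge. Existence of some homothetic gauge is guaranteed by the adaptation of Proposition~\ref{homotheticgaugeexist} noted in the remark following Definition~\ref{defdefdefdefdeflko}; what remains is the smallness, and for that it is cleanest to exhibit the gauge explicitly by reversing Lemma~\ref{oioioioi12} exactly as in Proposition~\ref{nonononononoo}. Since the spacelike region keeps $v$ bounded away from $0$, I cannot integrate $\Omega^2$ from $\{v=0\}$ as in Proposition~\ref{nonononononoo}; instead I would fix the reference null hypersurface $\{v/(-u) = \lambda\}$ inside~\eqref{3ij2oijoi2} and set $\hat{v}\left(v,u,\theta^A\right) \doteq \lambda(-u) + \int_{\lambda(-u)}^{v} \Omega^2\left(\tilde{v},u,\theta^A\right)\, d\tilde{v}$, the base point being homogeneous of degree one so that $\hat{v}$ is itself degree one and $K\hat{v} = \hat{v}$. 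As in Proposition~\ref{nonononononoo} this makes $\partial_{\hat{v}}$ null with $g\left(\partial_u,\partial_{\hat{v}}\right) = -2$, so that in coordinates $\left(\hat{v},u,\theta^A\right)$ the metric takes the form~\eqref{3ij290901}; setting $t \doteq u$ and $\rho \doteq \hat{v}/(-u)$ then yields the homothetic form~\eqref{32oi4810} with $K = t\partial_t$ and with $\rho = \lambda$ on the reference surface, so that $\rho_0 = \lambda$ in the language of Definition~\ref{defdefdefdefdeflko}.

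It then remains to transport the smallness. Because $\left|\log\Omega\right| \lesssim \epsilon$ one has $\Omega^2 = 1 + O(\epsilon)$, hence $\hat{v} = v + O(\epsilon)$ and $\rho = v/(-u) + O(\epsilon)$; feeding this, together with the bounds on $b$ and on $\slashed{g}$ from~\eqref{32oijoi991}, into the explicit expressions~\eqref{32oi4810} should show that the homothetic components $g_{tt}$, $g_{t\rho}$, $g_{tA}$, and $\slashed{g}_{AB}$ differ from the Minkowskian reference values of Definition~\ref{defdefdefdefdeflko} by $O(\epsilon)$ in $C^N$ along $\{t = -1\}$. The integral correction terms appearing in~\eqref{32oi4810}, such as $\int \Omega^2\mathcal{L}_{\partial_u}\log\Omega\, d\tilde{v}$ and $\int \Omega^2\mathcal{L}_{\partial_{\theta^A}}\log\Omega\, d\tilde{v}$, are manifestly $O(\epsilon)$ since each carries a factor of $\partial_u\log\Omega$ or $\slashed{\nabla}\log\Omega$. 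The differentiations involved in the coordinate change, together with the single integration in $v$, cost a bounded number of derivatives, which is precisely the source of the loss $N \gtrsim \tilde{N}$. Finally, the same argument runs verbatim with the roles of $u$ and $v$ interchanged to cover the shift in the $v$-direction~\eqref{2l3kj2lijo42}.

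The step I expect to be the main obstacle is this last one, and specifically the requirement that the double-null Minkowskian reference for $\slashed{g}$ be carried \emph{exactly} onto the homothetic reference $t^2(\rho-1)^2\mathring{\slashed{g}}_{AB}$ under the nonlinear, $\theta^A$-dependent change $v \mapsto \hat{v} \mapsto \rho$. Getting this matching right forces one to compute the coordinate change to leading order on exact Minkowski in order to pin down the correct orientation and normalization of the base point in the definition of $\hat{v}$ (the causal bookkeeping relating $v-u$, $v+u$, and $t(\rho\mp 1)$ is delicate and is exactly where the sign conventions must be fixed), and then to bound the resulting nonlinear error terms in every $C^{N-O(1)}$ norm. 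All of the remaining steps are either soft (the topology and the causal character of $K$) or amount to routine propagation of the transport/ODE estimates already developed earlier in the section.
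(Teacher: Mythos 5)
Your proposal is correct, and it is in substance the argument the paper intends but leaves implicit: the paper's proof is the single sentence that one repeats Proposition~\ref{homotheticgaugeexist} \emph{mutatis mutandis} with ODE estimates, i.e.\ it constructs the homothetic gauge abstractly via the null geodesic vector field $V$ with $D_VV=0$ emanating from a reference sphere. You instead adapt the explicit construction of Proposition~\ref{nonononononoo}, replacing the integration of $\Omega^2$ from $\{v=0\}$ by integration from the dilation-invariant hypersurface $\{v/(-u)=\lambda\}$. These are the same gauge in disguise — $\Omega^{-2}\partial_v$ is precisely the affinely parametrized null geodesic field transversal to the constant-$u$ foliation, so your $\partial_{\hat v}$ coincides with the paper's $V$ up to the normalization on the reference surface — but your version has the advantage that the smallness transfer becomes a matter of reading off explicit formulas analogous to~\eqref{32oi4810} rather than of estimating a geodesic flow, and it makes the homogeneity $K\hat v=\hat v$ and the identification $\rho_0=\lambda$ completely transparent. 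Your verification of the soft structure (the change of variables $q=-v/u$, $s=u$ turning $K$ into $s\partial_s$, and the computation $g(K,K)=-4\Omega^2uv+u^2|b|^2_{\slashed g}>0$) is correct and is more than the paper records. On the one point you flag as the main obstacle: you are right to be suspicious of the reference $t^2(\rho-1)^2\mathring{\slashed g}_{AB}$ in Definition~\ref{defdefdefdefdeflko}. Tracing exact Minkowski through your (or the paper's) coordinate change gives $\slashed g=(v-u)^2\mathring{\slashed g}=t^2(\rho+1)^2\mathring{\slashed g}$ with $\rho=v/(-u)$, which matches the double-null reference $(v+1)^2\mathring{\slashed g}$ at $u=-1$ but not the sign written in Definitions~\ref{3moi09989} and~\ref{defdefdefdefdeflko}; this appears to be a sign slip in the paper's stated reference metric rather than a gap in your argument, and once the reference is taken as $t^2(\rho+1)^2\mathring{\slashed g}_{AB}$ your smallness transfer closes as described.
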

\begin{proof}This follows by repeating the proof of Proposition~\ref{homotheticgaugeexist} \emph{mutatis mutandis} and using straightforward ODE estimates. We omit the details.
\end{proof}

\section{Algebraic Consequences of Self-Similarity and Seed Data}
In this section we first discuss various algebraic consequences of self-similarity. (Many of these identities have already been derived in our work~\cite{nakedinterior}.) Then, motivated by some of these formulas, we will define a notion of seed data. Later we will see that a choice of seed data will parametrize our formal expansions.

\subsection{Self-Similar Identities}
The following lemma translates the self-similar assumption into a direct statement about the metric coefficients.
\begin{lemma}\label{32ijioijoij2oi}
The following hold for any self-similar double-null spacetime in a coordinate frame:
\begin{equation}\label{scaleinvrelations2}
\Omega\left(u,v,\theta^A\right) = \tilde{\Omega}\left(\frac{v}{u},\theta\right),\qquad b_A\left(u,v,\theta^A\right) = u\tilde{b}_A\left(\frac{v}{u},\theta^A\right),\qquad \slashed{g}_{AB}\left(u,v,\theta^A\right) = u^2\tilde{\slashed{g}}_{AB}\left(\frac{v}{u},\theta\right),
\end{equation}
for some functions $\tilde\Omega$, $\tilde{b}_A$, and $\tilde{\slashed{g}}_{AB}$.
\end{lemma}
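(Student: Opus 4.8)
The plan is to compute $\mathcal{L}_K g$ directly in the coordinate components and match it against $2g$, component by component. First I would record the coordinate components of the metric~\eqref{3k2oijo42}: expanding the double-null form gives
\[
g_{uv} = g_{vu} = -2\Omega^2, \qquad g_{uu} = \left|b\right|^2_{\slashed{g}}, \qquad g_{uA} = g_{Au} = -b_A, \qquad g_{AB} = \slashed{g}_{AB},
\]
with $g_{vv} = g_{vA} = 0$, where $b_A = \slashed{g}_{AB}b^B$. (The shift-in-$v$ case~\eqref{2l3kj2lijo42} is handled identically, with the roles of the $u$- and $v$-rows interchanged, and produces the same homogeneity weights.)

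Since $K = u\partial_u + v\partial_v$ has $\partial_u K^u = \partial_v K^v = 1$ and all other coordinate derivatives of its components vanishing, the Lie-derivative formula
\[
\left(\mathcal{L}_K g\right)_{\mu\nu} = K g_{\mu\nu} + g_{\lambda\nu}\partial_\mu K^\lambda + g_{\mu\lambda}\partial_\nu K^\lambda
\]
collapses to a short list of terms. Carrying this out on the nontrivial components, the equation $\mathcal{L}_K g = 2g$ reduces to
\[
K g_{uv} = 0, \qquad K g_{uA} = g_{uA}, \qquad K g_{AB} = 2 g_{AB},
\]
together with $K g_{uu} = 0$ (and trivial identities on the components that vanish). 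In terms of the metric coefficients these read $K\Omega = 0$, $K b_A = b_A$, and $K\slashed{g}_{AB} = 2\slashed{g}_{AB}$; the $g_{uu}$-relation $K\left|b\right|^2_{\slashed{g}} = 0$ is then automatic and serves only as a consistency check, since $\left|b\right|^2_{\slashed{g}} = \slashed{g}^{AB}b_A b_B$ carries weight $-2 + 1 + 1 = 0$.

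It remains to integrate these Euler-type transport equations. The elementary fact I would invoke is that, on the region $u \in (-\infty,0)$, a function $f\left(u,v,\theta\right)$ satisfies $\left(u\partial_u + v\partial_v\right)f = \lambda f$ if and only if $f$ is positively homogeneous of degree $\lambda$ in $\left(u,v\right)$. This follows by the method of characteristics: parametrizing the rays $\left\{v/u = \text{const},\ \theta = \text{const}\right\}$ by the scaling $\tau \mapsto \left(\tau u, \tau v\right)$ one computes $\frac{d}{d\tau}f\left(\tau u, \tau v, \theta\right) = \frac{\lambda}{\tau}f$, whence $f\left(\tau u, \tau v, \theta\right) = \tau^\lambda f\left(u,v,\theta\right)$. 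Evaluating at $\tau = -1/u > 0$ expresses any such $f$ as $u^\lambda$ times a function of $\left(v/u, \theta\right)$, absorbing the factor of $(-1)^\lambda$ coming from $(-u)^\lambda$ into the defining function. Applying this with $\lambda = 0$ to $\Omega$, $\lambda = 1$ to $b_A$, and $\lambda = 2$ to $\slashed{g}_{AB}$ yields exactly~\eqref{scaleinvrelations2}.

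There is no genuine obstacle here: the content is the bookkeeping of the Lie-derivative computation and the observation that self-similarity forces precisely these three homogeneity weights. The only point requiring a little care is the sign convention in the characteristic solution, since $u$ is negative throughout; for the even weights ($\Omega$ and $\slashed{g}_{AB}$) this is invisible, while for $b_A$ the sign is harmless because the coefficient functions $\tilde{b}_A$ are only required to exist, not to be canonically normalized.
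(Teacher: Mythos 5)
Your proof is correct and follows essentially the same route as the paper, which simply asserts that the relations follow immediately from writing out $\mathcal{L}_Kg = 2g$ in the double-null coordinates with $K = u\partial_u + v\partial_v$; you have supplied the component-by-component bookkeeping and the integration of the resulting Euler equations along the scaling rays, both of which check out (including the weight count for $g_{uu} = \left|b\right|^2_{\slashed{g}}$ and the sign issue from $u < 0$).
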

\begin{proof}This is an immediate consequence of writing out $\mathcal{L}_Kg = 2g$ in the doubl-null coordinates and using that $K = u\partial_u+v\partial_v$. 
\end{proof}
\begin{remark}By differentiation of the relations in~\eqref{scaleinvrelations2} one may easily derive the self-similar form for the various Ricci coefficients. In particular, it is straightforward to see that for any Ricci coefficient $\psi$ in an \underline{orthonormal frame} must satisfy $\psi = u^{-1}H\left(\frac{v}{u},\theta^C\right)$ for a suitable function $H$. Using this fact and the expression for $\slashed{g}$ in~\eqref{scaleinvrelations2}, one can then easily derive the corresponding expression in a coordinate frame by modifying the power of $u$ depending on how many indices the Ricci coefficient $\psi$ has. For example, we must have that in the coordinate frame 
\[\eta_A = H_A\left(\frac{v}{u},\theta^B\right),\]
for a suitable function $H_A$. 
\end{remark}

In the following proposition, we collect various consequences of self-similarity which have been established in~\cite{scaleinvariant,nakedinterior}.
\begin{proposition}\label{somanyformulassolittletime}On any self-similar double-null spacetime, the following hold:
\begin{equation}\label{20asdww}
\Omega{\rm tr}\underline{\chi} + \Omega \frac{v}{u}{\rm tr}\chi = \frac{2}{u} + \slashed{\rm div}b,\qquad \Omega\hat{\underline{\chi}} + \Omega \frac{v}{u}\hat{\chi} = \frac{1}{2}\slashed{\nabla}\hat{\otimes}b,
\end{equation}
\begin{equation}\label{okodwok22}
\Omega\underline{\omega} + \frac{v}{u}\Omega\omega +\frac{1}{2}\mathcal{L}_b\log\Omega = 0,
\end{equation}
\begin{align}\label{eqnyaydivb}
&-\frac{v}{u}\mathcal{L}_{\partial_v}\slashed{\rm div}b+ \mathcal{L}_b\slashed{\rm div}b  + \frac{1}{u}\slashed{\rm div}b  +\frac{1}{2}\left(\slashed{\rm div}b\right)^2 + 8\left(\Omega\underline{\omega}\right)u^{-1} + 4\left(\Omega\underline{\omega}\right)\slashed{\rm div}b +\left|\Omega\hat{\underline{\chi}}\right|^2=
\\ \nonumber &\qquad  \frac{2v}{u^2}\Omega{\rm tr}\chi  +\left(\mathcal{L}_{\partial_u} + b\cdot\slashed{\nabla}\right)\left( \frac{v}{u}\Omega{\rm tr}\chi\right)+ 2\frac{v}{u}\slashed{\rm div}b\Omega{\rm tr}\chi -\frac{v^2}{2u^2}\left(\Omega{\rm tr}\chi\right)^2+4\left(\Omega\underline{\omega}\right) \frac{v}{u}\Omega{\rm tr}\chi -\Omega^2{\rm Ric}_{33}, 
\end{align}
\begin{align}\label{3pk2o294}
&\frac{v}{u}\mathcal{L}_{\partial_v}\eta_A - \mathcal{L}_b\eta_A -\eta_A\left(\Omega{\rm tr}\underline{\chi}\right)- 4\slashed{\nabla}_A\left(\Omega\underline{\omega}\right) =  \slashed{\nabla}^B\left(\Omega\hat{\underline{\chi}}\right)_{AB} - \frac{1}{2}\slashed{\nabla}_A\left(\Omega{\rm tr}\underline{\chi}\right) -\Omega{\rm Ric}_{3A},
\end{align}
\begin{align}\label{2o4oijoiouoiu2}
&-u^{-1}\left(\Omega^{-1}{\rm tr}\chi\right) -\frac{v}{u}\mathcal{L}_{\partial_v}\left(\Omega^{-1}{\rm tr}\chi\right) + \mathcal{L}_b\left(\Omega^{-1}{\rm tr}\chi\right)+ \left(\Omega^{-1}{\rm tr}\chi\right)\left(\Omega{\rm tr}\underline{\chi}\right) =
\\ \nonumber &\qquad \qquad  \qquad \qquad -2K+ 2\slashed{\rm div}\eta + 2\left|\eta\right|^2 + \left(R+{\rm Ric}_{34}\right),
\end{align}
\begin{align}\label{kwdkodwok23dg}
& -\frac{v}{u}\Omega\nabla_4\left(\Omega^{-1}\hat{\chi}\right)_{AB} +\mathscr{L}\left(\Omega^{-1}\hat{\chi}\right)_{AB} -\frac{v}{u}\Omega^{-1}{\rm tr}\chi\left(\Omega\hat{\chi}\right)_{AB} - 4\left(\Omega\underline{\omega}\right)\Omega^{-1}\hat{\chi}_{AB}= 
\\ \nonumber &\qquad \left(\left(\slashed{\nabla}\hat\otimes \eta\right)_{AB} + \left(\eta\hat\otimes \eta\right)_{AB}\right) - \frac{1}{4}\left(\Omega^{-1}{\rm tr}\chi\right)\left(\slashed{\nabla}\hat{\otimes}b\right) + \widehat{{\rm Ric}}_{AB},
 \end{align}
 \begin{equation}\label{3o1984982}
 \mathscr{L}f_{AB} \doteq \mathcal{L}_bf_{AB}- \left(\slashed{\nabla}\hat{\otimes}b\right)^C_{\ \ (A}f_{B)C} -\frac{1}{2}\slashed{\rm div}bf_{AB},\end{equation}
\end{proposition}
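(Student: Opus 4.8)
The plan is to derive every identity from a single mechanism. The homothety $\mathcal{L}_Kg = 2g$ with $K = u\partial_u + v\partial_v$ gives, via Lemma~\ref{32ijioijoij2oi}, the scaling relations~\eqref{scaleinvrelations2}, and these are equivalent to the Euler relation $\mathcal{L}_K\Phi = (\deg\Phi)\,\Phi$ for any quantity $\Phi$ homogeneous of a definite degree in $(u,v)$, the degree being read off from~\eqref{scaleinvrelations2} and the number of coordinate indices (exactly as in the remark following Lemma~\ref{32ijioijoij2oi}). Since $K$ has no angular component, $\mathcal{L}_K$ on a sphere tensor equals $u\mathcal{L}_{\partial_u} + v\mathcal{L}_{\partial_v}$, so the Euler relation lets me trade the transversal derivative $\mathcal{L}_{\partial_u}$ — and hence, using $\partial_u = \Omega e_3 - b\cdot\slashed{\nabla}$ together with $\partial_v = \Omega e_4$, the operator $\nabla_3$ — for $\tfrac{v}{u}\mathcal{L}_{\partial_v}$ plus an explicit zeroth order term. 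Feeding this substitution into the definitions~\eqref{blahblah1}--\eqref{blahblah3} and the null structure equations of Proposition~\ref{thenullstructeqns} produces all of~\eqref{20asdww}--\eqref{kwdkodwok23dg}.

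First I would dispatch the purely algebraic identities~\eqref{20asdww} and~\eqref{okodwok22}. For~\eqref{20asdww} I apply $\mathcal{L}_K\slashed{g}_{AB} = 2\slashed{g}_{AB}$ and substitute $\mathcal{L}_{\partial_v}\slashed{g}_{AB} = 2\Omega\chi_{AB}$ (immediate from $e_4 = \Omega^{-1}\partial_v$ and~\eqref{blahblah3}) together with $\mathcal{L}_{\partial_u}\slashed{g}_{AB} = 2\Omega\underline{\chi}_{AB} - (\mathcal{L}_b\slashed{g})_{AB}$, the latter being the content of the projected Lie-derivative definition of $\underline{\chi}$ in the shifted gauge, where $e_3 = \Omega^{-1}(\partial_u + b\cdot\slashed{\nabla})$. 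Taking the $\slashed{g}$-trace and using $\slashed{g}^{AB}(\mathcal{L}_b\slashed{g})_{AB} = 2\slashed{\rm div}b$ gives the first relation of~\eqref{20asdww}; taking the trace-free part and using $\widehat{\mathcal{L}_b\slashed{g}} = \slashed{\nabla}\hat{\otimes}b$ gives the second. For~\eqref{okodwok22} I write $\Omega\omega = -\tfrac12\partial_v\log\Omega$ and $\Omega\underline{\omega} = -\tfrac12(\partial_u + b\cdot\slashed{\nabla})\log\Omega$ from~\eqref{blahblah1}, and apply the degree-zero Euler relation $u\partial_u\log\Omega + v\partial_v\log\Omega = 0$.

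Next come the transport identities. Each of~\eqref{2o4oijoiouoiu2} and~\eqref{kwdkodwok23dg} starts from the $\nabla_3$ equation for the corresponding quantity — the trace/Raychaudhuri equation~\eqref{3trchi} for $\Omega^{-1}{\rm tr}\chi$, and the $\nabla_3\hat{\chi}$ equation~\eqref{3hatchi} for $\Omega^{-1}\hat{\chi}$. The recipe is: conjugate by the indicated power of $\Omega$, rewrite $\Omega\nabla_3$ acting on the conjugated quantity as $\partial_u + \mathcal{L}_b$ (modulo the $\underline{\omega}$ term produced by the conjugation — visible as the $-4(\Omega\underline{\omega})\Omega^{-1}\hat{\chi}$ term in~\eqref{kwdkodwok23dg} — and, in the tensorial case, the first-order shear corrections in $b$), and then replace $\partial_u$ by $\tfrac1u(\deg) - \tfrac{v}{u}\mathcal{L}_{\partial_v}$ using the Euler relation. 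For the trace-free equation this is where the operator $\mathscr{L}$ of~\eqref{3o1984982} is born: the difference between the coordinate derivative $\partial_u$ and the geometric $\nabla_3$ of a trace-free sphere tensor is exactly $\mathcal{L}_b$ corrected by $-(\slashed{\nabla}\hat{\otimes}b)^C_{\ \ (A}f_{B)C} - \tfrac12\slashed{\rm div}b\, f_{AB}$, the terms needed to preserve trace-freeness. The curvature terms on the right are then repackaged using the Gauss-type identity~\eqref{slashr} and~\eqref{ricequality}. The identity~\eqref{3pk2o294} for $\eta$ is slightly different: I would obtain it from $\mathcal{L}_{\partial_v}b^A = -4\Omega^2\zeta^A$ in~\eqref{blahblah1} together with $\eta = \zeta + \slashed{\nabla}\log\Omega$, self-similarity, and the Codazzi equation~\eqref{tcod2} for $\hat{\underline{\chi}}$ — indeed its right-hand side is precisely $\Omega$ times the Codazzi combination. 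Finally,~\eqref{eqnyaydivb} is the self-similar recasting of the $\nabla_3$-Raychaudhuri equation~\eqref{3truchi}: using the first relation of~\eqref{20asdww} to eliminate $\Omega{\rm tr}\underline{\chi}$ in favor of $\slashed{\rm div}b$ (and~\eqref{okodwok22} to handle $\underline{\omega}$), and once more converting $\nabla_3$ into $\partial_v$-form, accounts for every displayed term, including $|\Omega\hat{\underline{\chi}}|^2$ and $\Omega^2{\rm Ric}_{33}$.

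The main obstacle is the bookkeeping around the shift $b$, not any conceptual difficulty. Because $e_3 = \Omega^{-1}(\partial_u + b\cdot\slashed{\nabla})$ has nonvanishing angular components, the transversal geometric derivatives ($\nabla_3$ and the projected Lie derivative along $e_3$) differ from the naive coordinate derivative $\partial_u$ by $\mathcal{L}_b$ and by first-order shear corrections in $b$, and one must separate trace from trace-free parts consistently; this is the crux both of producing $\mathscr{L}$ and of correctly tracing~\eqref{20asdww}. The cleanest way to control this is to record once and for all the projection identities relating $\nabla_3$, the projected Lie derivative, and $\partial_u + \mathcal{L}_b$ on sphere tensors of each valence, together with the commutators of $\mathcal{L}_b$ with $\slashed{\rm div}$ and with $\slashed{\nabla}\hat{\otimes}$; after that every identity collapses to the substitution scheme above. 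Since these tensorial computations have essentially been carried out in~\cite{scaleinvariant,nakedinterior}, I would present the derivations of~\eqref{20asdww} and~\eqref{okodwok22} in full as the model cases and then indicate the identical reductions leading to~\eqref{eqnyaydivb}--\eqref{kwdkodwok23dg}, citing those works for the routine manipulations.
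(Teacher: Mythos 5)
Your strategy is sound and is in substance the same as the paper's: the paper's proof of Proposition~\ref{somanyformulassolittletime} is a bare citation to Lemma B.1 of~\cite{scaleinvariant} and Lemmas 8.2, 8.3, 8.6 and 8.11 of~\cite{nakedinterior}, and those lemmas are established by exactly the mechanism you describe --- the Euler relation encoded in~\eqref{scaleinvrelations2} to trade $\mathcal{L}_{\partial_u}$ for $\tfrac{1}{u}\mathcal{L}_K-\tfrac{v}{u}\mathcal{L}_{\partial_v}$, the identification of $\Omega\nabla_3$ with $\partial_u+\mathcal{L}_b$ modulo projection and conjugation corrections, and substitution into Proposition~\ref{thenullstructeqns}. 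Your two fully worked model cases~\eqref{20asdww} and~\eqref{okodwok22} are correct as written.

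One point needs explicit attention, because your recipe as stated does not land on the displayed formula in one instance. For~\eqref{2o4oijoiouoiu2} the transported quantity is $\Omega^{-1}{\rm tr}\chi$, but the power of the lapse that cancels the $2\underline{\omega}{\rm tr}\chi$ term of the $\nabla_3$ equation~\eqref{3trchi} is $\Omega^{+1}$, not $\Omega^{-1}$ (the power $\Omega^{-1}$ is adapted to the $\nabla_4$ equation~\eqref{4trchi}). Carrying out the conjugation literally gives
\begin{equation*}
\left(\partial_u+\mathcal{L}_b\right)\left(\Omega^{-1}{\rm tr}\chi\right) = \left(\nabla_3{\rm tr}\chi - 2\underline{\omega}{\rm tr}\chi\right) + 4\left(\Omega\underline{\omega}\right)\left(\Omega^{-1}{\rm tr}\chi\right),
\end{equation*}
so the left-hand side of~\eqref{2o4oijoiouoiu2} should acquire a term $-4\left(\Omega\underline{\omega}\right)\left(\Omega^{-1}{\rm tr}\chi\right)$ --- the exact analogue of the $-4\left(\Omega\underline{\omega}\right)\Omega^{-1}\hat{\chi}_{AB}$ term that does appear on the left of~\eqref{kwdkodwok23dg} --- and I do not see a cancellation supplied by~\eqref{okodwok22} or by self-similarity. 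You must either exhibit the cancellation or record the extra term; it is of size $O(\epsilon)\cdot\left(\Omega^{-1}{\rm tr}\chi\right)$ and harmless for the way the identity is used downstream, but a derivation claiming to reproduce the formula verbatim cannot silently drop it. A smaller instance of the same bookkeeping occurs in~\eqref{3pk2o294}: its right-hand side is the Codazzi combination of~\eqref{tcod2} with the lapse inside the derivatives and with $\Omega{\rm Ric}_{3A}$ subtracted, so after invoking~\eqref{tcod2} the ${\rm Ric}_{3A}$ contributions cancel, the curvature term $R_{A343}$ survives to feed the $\nabla_3\underline{\eta}$ equation, and the leftover $\slashed{\nabla}\log\Omega$ terms must be reabsorbed using $\eta-\underline{\eta}=2\zeta$ and $\eta+\underline{\eta}=2\slashed{\nabla}\log\Omega$; this is not ``precisely $\Omega$ times the Codazzi combination,'' though the ingredients you list suffice. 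With these two computations pinned down, your plan coincides with the content of the cited lemmas.
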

\begin{proof}These follow from Lemma B.1 of~\cite{scaleinvariant}, Lemmas 8.3 and  8.6 of~\cite{nakedinterior}, the proof of Lemma~8.11 from~\cite{nakedinterior}, and Lemma 8.2 from~\cite{nakedinterior}.
\end{proof}

This next equation will be used to compute the expansions for $\Omega\underline{\omega}$.
\begin{lemma}\label{2kn3kn2i39}On any self-similar double-null spacetime, the following hold:
\begin{align}\label{kfejifei}  \Omega\nabla_4\left(\Omega\underline{\omega}\right) &=  -\Omega^2\frac{v}{u}\mathcal{L}_{\partial_v}\left(\Omega^{-1}{\rm tr}\chi\right) + \Omega^2\mathcal{L}_b\left(\Omega^{-1}{\rm tr}\chi\right) -u^{-1}\Omega{\rm tr}\chi + \frac{1}{2}\left(\Omega{\rm tr}\chi\right)\left(\Omega{\rm tr}\underline{\chi}\right)+ \frac{1}{2}\Omega^2\left|\eta\right|^2 
\\ \nonumber &\qquad \qquad - \Omega^2\eta\cdot\underline{\eta}-4\left(\Omega\underline{\omega}\right)\left(\Omega{\rm tr}\chi\right) - 2\Omega^2\slashed{\rm div}\eta - 2\Omega^2\left|\eta\right|^2 + \left(\Omega\hat{\chi}\right)\cdot\left(\Omega\hat{\underline{\chi}}\right)+\frac{1}{4}\Omega^2{\rm Ric}_{34}.
\end{align}
\end{lemma}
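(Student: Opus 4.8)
The plan is to read off the left-hand side from the null structure equation \eqref{4uomega}, exploiting the fact that we are computing $\Omega\nabla_4\left(\Omega\underline{\omega}\right)$ rather than $\nabla_4\underline{\omega}$ — this conjugation by $\Omega$ is precisely the device that removes $\omega$ from the answer. Since $\Omega\underline{\omega}$ is a scalar, $\nabla_4\left(\Omega\underline{\omega}\right) = e_4\left(\Omega\underline{\omega}\right)$, and using $\omega = -\tfrac12\nabla_4\log\Omega$, i.e.\ $\nabla_4\Omega = -2\omega\Omega$, the Leibniz rule gives $\Omega\nabla_4\left(\Omega\underline{\omega}\right) = \Omega^2\nabla_4\underline{\omega} - 2\Omega^2\omega\underline{\omega}$. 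Substituting \eqref{4uomega} for $\nabla_4\underline{\omega}$, the term $2\underline{\omega}\omega$ appearing there cancels exactly against the $-2\omega\underline{\omega}$ produced by the conjugation, leaving the manifestly $\omega$-free expression $\Omega^2\left[\tfrac14\left({\rm Ric}_{34} + \slashed{g}^{AB}R_{3AB4}\right) + \tfrac12\left|\eta\right|^2 - \eta\cdot\underline{\eta}\right]$. This cancellation is the structural heart of the lemma and the reason for working with the conjugated quantity in the first place.

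The remaining task is to trade the curvature scalar $\slashed{g}^{AB}R_{3AB4}$ for the degenerate transport terms in the statement. I would do this with two inputs: the twice-contracted Gauss relation \eqref{slashr} (being careful that $R_{3AB4} = -R_{3A4B}$), which rewrites $\slashed{g}^{AB}R_{3AB4}$ through $K$, $R$, ${\rm Ric}_{34}$, $\hat{\chi}\cdot\hat{\underline{\chi}}$ and ${\rm tr}\chi\,{\rm tr}\underline{\chi}$, and the self-similar identity \eqref{2o4oijoiouoiu2}. The latter holds because $\Omega^{-1}{\rm tr}\chi$ is homogeneous of degree $-1$ under $K = u\partial_u + v\partial_v$, so that $\partial_u\left(\Omega^{-1}{\rm tr}\chi\right) = -u^{-1}\left(\Omega^{-1}{\rm tr}\chi\right) - \tfrac{v}{u}\partial_v\left(\Omega^{-1}{\rm tr}\chi\right)$; this is exactly what converts $\Omega\nabla_3\left(\Omega^{-1}{\rm tr}\chi\right)$ into the degenerate operator $-\tfrac{v}{u}\mathcal{L}_{\partial_v}\left(\Omega^{-1}{\rm tr}\chi\right) + \mathcal{L}_b\left(\Omega^{-1}{\rm tr}\chi\right) - u^{-1}\left(\Omega^{-1}{\rm tr}\chi\right)$, and, together with \eqref{3trchi} and \eqref{slashr}, packages the curvature content of $\slashed{g}^{AB}R_{3AB4}$. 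Multiplying \eqref{2o4oijoiouoiu2} through by $\Omega^2$ and substituting therefore eliminates $\slashed{g}^{AB}R_{3AB4}$ and introduces the $\mathcal{L}_{\partial_v}$, $\mathcal{L}_b$, and $u^{-1}\Omega{\rm tr}\chi$ transport terms of the desired right-hand side.

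Finally I would collect terms. In passing between $\nabla_3{\rm tr}\chi$ and the $\Omega$-weighted operator $\Omega\nabla_3\left(\Omega^{-1}{\rm tr}\chi\right) = \nabla_3{\rm tr}\chi + 2\underline{\omega}\,{\rm tr}\chi$, a cross term $\underline{\omega}\,{\rm tr}\chi$ is generated which, once weighted by $\Omega^2$ (recall $\Omega^2\underline{\omega}\,{\rm tr}\chi = \left(\Omega\underline{\omega}\right)\left(\Omega{\rm tr}\chi\right)$), assembles into the stated $-4\left(\Omega\underline{\omega}\right)\left(\Omega{\rm tr}\chi\right)$ term; the residual quadratic Ricci-coefficient terms ($\left|\eta\right|^2$, $\slashed{\rm div}\eta$, $\hat{\chi}\cdot\hat{\underline{\chi}}$, and $\left(\Omega{\rm tr}\chi\right)\left(\Omega{\rm tr}\underline{\chi}\right)$) are reconciled using the self-similar trace and lapse relations \eqref{20asdww} and \eqref{okodwok22}. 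I expect this last bookkeeping to be the main obstacle: one must be scrupulous about the powers of $\Omega$ and the factors of $v/u$ dictated by the scaling weights — so that no stray $\omega$ reappears — and about the curvature index conventions distinguishing $R_{3A4B}$, $R_{3AB4}$ and $\slashed{g}^{AB}R_{3AB4}$, since a sign or factor slip there is exactly what would turn a genuine $\omega$-free identity into a spurious extra constraint on the curvature. No analysis is involved: the result is a purely algebraic consequence of the null structure equations and the homothety.
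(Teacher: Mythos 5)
Your proposal is correct and takes essentially the same route as the paper: conjugate \eqref{4uomega} by $\Omega$ so that the $2\omega\underline{\omega}$ term cancels, eliminate $\slashed{g}^{AB}R_{3AB4}$ in favour of the $\nabla_3{\rm tr}\chi$ data, and use the homogeneity of $\Omega^{-1}{\rm tr}\chi$ to convert $\Omega\nabla_3\left(\Omega^{-1}{\rm tr}\chi\right)$ into $-\tfrac{v}{u}\mathcal{L}_{\partial_v}\left(\Omega^{-1}{\rm tr}\chi\right)+\mathcal{L}_b\left(\Omega^{-1}{\rm tr}\chi\right)-u^{-1}\Omega^{-1}{\rm tr}\chi$. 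The only cosmetic difference is that the paper substitutes \eqref{3trchi} directly and applies self-similarity at the end, whereas you route the same content through \eqref{slashr} and the already-self-similarized identity \eqref{2o4oijoiouoiu2}; the relations \eqref{20asdww} and \eqref{okodwok22} you invoke for the final bookkeeping are not actually needed for this lemma.
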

\begin{proof}From~\eqref{4uomega} and~\eqref{3trchi}, we obtain, after multiplying through by $\Omega^2$: 

\begin{align}\label{2om3om2o}  \Omega\nabla_4\left(\Omega\underline{\omega}\right) &=  \Omega^3\nabla_3\left(\Omega^{-1}{\rm tr}\chi\right) + \Omega^2\frac{1}{2}\left(\Omega^{-1}{\rm tr}\chi\right)\left(\Omega{\rm tr}\underline{\chi}\right)+ \frac{1}{2}\Omega^2\left|\eta\right|^2 
\\ \nonumber &\qquad \qquad - \Omega^2\eta\cdot\underline{\eta}-4\Omega^2\left(\Omega\underline{\omega}\right)\left(\Omega^{-1}{\rm tr}\chi\right) - 2\Omega^2\slashed{\rm div}\eta - 2\Omega^2\left|\eta\right|^2 + \Omega^2\left(\Omega^{-1}\hat{\chi}\right)\cdot\left(\Omega\hat{\underline{\chi}}\right) +\frac{1}{4}\Omega^2{\rm Ric}_{34}.
\end{align}
Due to self-similarity, we have that $\Omega^{-1}{\rm tr}\chi = u^{-1}H\left(\frac{v}{u},\theta^A\right)$ for a suitable function $H$. Thus
\[\Omega\nabla_3\left(\Omega^{-1}{\rm tr}\chi\right) =-\frac{v}{u}\mathcal{L}_{\partial_v}\left(\Omega^{-1}{\rm tr}\chi\right) + \mathcal{L}_b\left(\Omega^{-1}{\rm tr}\chi\right) -u^{-1}\Omega^{-1}{\rm tr}\chi.\]
Plugging this into~\eqref{2om3om2o} yields~\eqref{kfejifei}.

\end{proof}

This next equation will be used when we compute the expansion for $\eta$.
\begin{lemma}\label{2om3mo020202}On any self-similar double-null spacetime, the following hold:
\begin{align}\label{2lmo204202940204i}
&\mathcal{L}_{\partial_v}\eta_A = 
\\ \nonumber &\qquad \Omega^2\left[\left(\Omega^{-1}\chi\right)_{AB}\underline{\eta}^B+ \slashed{\nabla}^B\left(\Omega^{-1}\hat{\chi}\right)_{AB}-\frac{1}{2}\slashed{\nabla}_A\left(\Omega^{-1}{\rm tr}\chi\right)-\frac{1}{2}\left(\Omega^{-1}{\rm tr}\chi\right) \eta_A + \eta^B\left(\Omega^{-1}\hat{\chi}\right)_{AB} - \Omega{\rm Ric}_{A4}\right].
\end{align}
\end{lemma}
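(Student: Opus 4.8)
The plan is to reduce the statement to the two null structure equations of Proposition~\ref{thenullstructeqns} that control $\eta$ in the $e_4$-direction, namely the evolution equation \eqref{4eta} and the Codazzi equation \eqref{tcod1}, together with the defining relations \eqref{blahblah2} and \eqref{blahblah3}. The only genuinely geometric input is the conversion between the coordinate Lie derivative $\mathcal{L}_{\partial_v}$ appearing on the left and the projected covariant derivative $\nabla_4$ appearing in \eqref{4eta}; once this is in hand, the rest is algebraic rearrangement entirely parallel to the proof of Lemma~\ref{2kn3kn2i39}.

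First I would record the conversion identity. Since $\eta$ is an $\mathbb{S}^2_{u,v}$-tangent one-form and $\partial_v$ is a coordinate vector field, $\mathcal{L}_{\partial_v}\eta_A=\partial_v\eta_A$. Writing $\partial_v=\Omega e_4$ and expanding the projected covariant derivative gives
\[\mathcal{L}_{\partial_v}\eta_A=\Omega\nabla_4\eta_A+\eta_C\Gamma^C_{Av},\]
and a short inspection of the metric \eqref{iooioi284} shows that the only Christoffel symbol surviving the projection is $\Gamma^C_{Av}=\tfrac12\slashed{g}^{CD}\partial_v\slashed{g}_{DA}$. By \eqref{blahblah3} with $e_4=\Omega^{-1}\partial_v$ one has $\partial_v\slashed{g}_{AB}=2\Omega\chi_{AB}$, hence $\Gamma^C_{Av}=\Omega\chi_A{}^C$ and
\[\mathcal{L}_{\partial_v}\eta_A=\Omega\nabla_4\eta_A+\Omega\chi_A{}^B\eta_B.\]
Substituting \eqref{4eta} the two $\Omega\chi\cdot\eta$ terms cancel, leaving the clean intermediate formula $\mathcal{L}_{\partial_v}\eta_A=\Omega\chi_A{}^B\underline{\eta}_B-\tfrac{\Omega}{2}R_{A434}$. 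The first term is exactly $\Omega^2(\Omega^{-1}\chi)_A{}^B\underline{\eta}_B$, so it remains only to identify $-\tfrac{\Omega}{2}R_{A434}$ with the remaining terms on the right-hand side of \eqref{2lmo204202940204i}.

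For this I would feed the Codazzi equation \eqref{tcod1} into the $R_{A434}$ term, which replaces $-\tfrac12 R_{A434}$ by $\slashed{\nabla}^B\hat{\chi}_{AB}-\tfrac12\slashed{\nabla}_A{\rm tr}\chi-{\rm Ric}_{4A}-\tfrac12{\rm tr}\chi\,\zeta_A+\zeta^B\hat{\chi}_{AB}$. I then pass from $\hat{\chi},{\rm tr}\chi$ to the conformally weighted quantities $\Omega^{-1}\hat{\chi},\Omega^{-1}{\rm tr}\chi$ via the Leibniz rule $\slashed{\nabla}^B\hat{\chi}_{AB}=\Omega\,\slashed{\nabla}^B(\Omega^{-1}\hat{\chi})_{AB}+\hat{\chi}_{AB}\slashed{\nabla}^B\log\Omega$ (and the analogue for the trace), and eliminate the torsion via $\zeta_A=\eta_A-\slashed{\nabla}_A\log\Omega$ from \eqref{blahblah2}. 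The only real obstacle is the bookkeeping of the $\slashed{\nabla}\log\Omega$ terms: the ones produced by differentiating the factors $\Omega^{-1}$ must cancel exactly against the ones produced by the $\zeta\mapsto\eta$ substitution. A direct check shows that all four such terms cancel in pairs, leaving precisely $\slashed{\nabla}^B(\Omega^{-1}\hat{\chi})_{AB}-\tfrac12\slashed{\nabla}_A(\Omega^{-1}{\rm tr}\chi)-\tfrac12(\Omega^{-1}{\rm tr}\chi)\eta_A+\eta^B(\Omega^{-1}\hat{\chi})_{AB}$ together with the residual curvature term. Collecting the common factor $\Omega^2$ yields \eqref{2lmo204202940204i}; the last point requiring care is tracking the correct power of $\Omega$ on the residual ${\rm Ric}_{A4}$ term, which is fixed by the single conformal factor introduced in the $\mathcal{L}_{\partial_v}\leftrightarrow\nabla_4$ conversion above.
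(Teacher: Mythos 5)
Your proposal is correct and follows essentially the same route as the paper: the paper's proof likewise substitutes the Codazzi equation \eqref{tcod1} into \eqref{4eta}, multiplies by the lapse, and uses exactly the conversion identity $\Omega\nabla_4\eta = \mathcal{L}_{\partial_v}\eta - \left(\Omega\chi\right)\cdot\eta$ that you derive from first principles (the paper simply quotes it), after which the $\zeta\mapsto\eta-\slashed{\nabla}\log\Omega$ substitution and the Leibniz rule for $\Omega^{-1}$ produce the stated cancellations. Your remark about the power of $\Omega$ on the residual curvature term is well taken — the conversion indeed yields $\Omega\,{\rm Ric}_{A4}$ overall, so the factor written inside the bracket of \eqref{2lmo204202940204i} appears to carry a typographical extra power of $\Omega$, which is immaterial since that term is dropped in the vacuum applications.
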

\begin{proof}From~\eqref{4eta} and~\eqref{tcod1}, we have
\begin{align*}
\nabla_4\eta_A = -\chi_{AB}\cdot\left(\eta^B-\underline{\eta}^B\right)+ \slashed{\nabla}^B\hat{\chi}_{AB}-\frac{1}{2}\slashed{\nabla}_A{\rm tr}\chi-\frac{1}{2}{\rm tr}\chi \zeta_A + \zeta^B\hat{\chi}_{AB} - {\rm Ric}_{A4}.
\end{align*}
To obtain~\eqref{2lmo204202940204i} we simply multiply though by the lapse and write $\Omega\nabla_4\eta = \mathcal{L}_{\partial_v}\eta - \left(\Omega\chi\right)\cdot\eta$.

\end{proof}

\subsection{Seed Data}

We will close the section with a definition of ``seed data'' for our expansions. 
\begin{definition}\label{thisistheseed}Let $N$ be a sufficiently large positive integer and $\epsilon > 0$ be sufficiently small. We say that a $7$-tuple 
\[\left(\slashed{g}^{({\rm seed})},b^{({\rm seed})},\left(\Omega\underline{\omega}\right)^{({\rm seed})},\left(\Omega^{-1} \hat{\chi}\right)^{(\rm seed,+)},\left(\Omega^{-1} \hat{\chi}\right)^{(\rm seed,-)},\log\Omega^{(\rm seed,+)},\log\Omega^{(\rm seed,-)}\right)\]
of a Riemannian metric $\slashed{g}^{({\rm seed})}$ on $\mathbb{S}^2$, a vectorfield $b^{({\rm seed})}$ on $\mathbb{S}^2$, a function $\left(\Omega\underline{\omega}\right)^{({\rm seed})}$ on $\mathbb{S}^2$, symmetric $\mathbb{S}^2_v$ valued $(0,2)$-tensors $\left(\Omega\hat{\chi}\right)^{(\rm seed,\pm)}$ defined for $v \in (0,1]$ and $v \in [-1,0)$ respectively, and functions  $\log\Omega^{(\rm seed,\pm)}\left(v,\theta^A\right) : I \times \mathbb{S}^2 \to \mathbb{R}$ for $I = (0,1]$ and $[-1,0)$ respectively, form ``$\left(\epsilon,N\right)$-seed data'' if the following equations hold (where we have dropped the ``seed'' superscript from the various quantities and we may extend $\slashed{g}$, $b$, and $\Omega\underline{\omega}$ to $v \in (-1,1)\setminus \{0\}$ by declaring them to be independent of $v$):
\begin{equation}\label{m2om3o2}
 \mathcal{L}_b\slashed{\rm div}b  - \slashed{\rm div}b  +\frac{1}{2}\left(\slashed{\rm div}b\right)^2 -8\left(\Omega\underline{\omega}\right) + 4\left(\Omega\underline{\omega}\right)\slashed{\rm div}b +\frac{1}{4}\left|\slashed{\nabla}\hat{\otimes}b\right|^2= 0,
 \end{equation}
\begin{align}\label{3oi2oij34iuorhiuehiu}
&-\frac{v}{u}\mathcal{L}_{\partial_v}\left(\Omega^{-1} \hat{\chi}\right)^{\pm}_{AB} +\mathcal{L}_b\left(\Omega^{-1} \hat{\chi}\right)^{\pm}_{AB}- \left(\slashed{\nabla}\hat{\otimes}b\right)^C_{\ \ (A}\left(\Omega^{-1} \hat{\chi}\right)^{\pm}_{B)C} 
\\ \nonumber &\qquad -\frac{1}{2}\slashed{\rm div}b\left(\Omega^{-1} \hat{\chi}\right)^{\pm}_{AB} -4\left(\Omega\underline{\omega}\right)\left(\Omega^{-1}\hat{\chi}\right)_{AB} 
 = \left(\left(\slashed{\nabla}\hat\otimes \eta^{(0)}\right)_{AB} + \left(\eta^{(0)}\hat\otimes \eta^{(0)}\right)_{AB}\right) - \frac{1}{4}\left(\Omega^{-1}{\rm tr}\chi\right)\left(\slashed{\nabla}\hat{\otimes}b\right)_{AB}, 
\end{align}
\[\slashed{g}^{AB}\left(\Omega^{-1}\hat{\chi}\right)^{\pm}_{AB}|_{v=\pm 1} = 0,\]
\begin{equation}\label{3jiio2}
\left(v\mathcal{L}_{\partial_v}+\mathcal{L}_b\right)\log\Omega^{\pm} = \Omega\underline{\omega},
\end{equation}
and if we have that
 \[ \left\vert\left\vert \left(\slashed{g}^{({\rm seed})}-\mathring{\slashed{g}},b^{({\rm seed})},\left(\Omega\omega\right)^{({\rm seed})},\left(\Omega^{-1}\hat{\chi}\right)^{(\rm seed,\pm)}|_{v=\pm 1},\log\Omega^{(\rm seed,\pm)}|_{v=\pm 1}\right)\right\vert\right\vert_{C^N\left(\mathbb{S}^2\right)} \lesssim \epsilon.\]
Here $\eta^{(0)}$ is defined along $\mathbb{S}^2$ by applying Proposition~\ref{somestuimdie} to solve the equation 
\begin{equation}\label{3ioj2oij4oi2}
 - \mathcal{L}_b\eta_A^{(0)} -\eta_A^{(0)}\left(-2+\slashed{\rm div}b\right)- 4\slashed{\nabla}_A\left(\Omega\underline{\omega}\right) =  \frac{1}{2}\slashed{\nabla}^B\left(\slashed{\nabla}\hat{\otimes}b\right)_{AB} - \frac{1}{2}\slashed{\nabla}_A\slashed{\rm div}b,
 \end{equation}
and is then extended to $v \in (-1,1)$ by declaring $\eta$ to be independent of $v$.

 If we drop $\left(\Omega^{-1}\hat{\chi}\right)^{(\rm seed,-)}$ and $\log\Omega^{(\rm seed,-)}$ then we call the corresponding $5$-tuple ``$1$-sided $\left(\epsilon,N\right)$-seed data.''
\end{definition}

\begin{remark}The equation~\eqref{m2om3o2} may be formally derived from~\eqref{eqnyaydivb} by taking $(u,v) = (-1,0)$, taking $\Omega^2{\rm Ric}_{33}|_{v=0} = 0$, and assuming that all terms multiplied by $\frac{v}{u}$ vanish when $v = 0$. 
\end{remark}

The following proposition provides one way to parametrize choices of solutions to~\eqref{m2om3o2}.
\begin{proposition}\label{2km3o2o23}Let $M$ be a sufficiently large positive integer, $\slashed{g}$ be any Riemannian metric on $\mathbb{S}^2$ satisfying $\left\vert\left\vert \slashed{g}-\mathring{\slashed{g}}\right\vert\right\vert_{\mathring{H}^M} \lesssim \epsilon$ and let $\left(\mathcal{O},W\right)$ be functions on $\mathbb{S}^2$ which satisfy $\int_{\mathbb{S}^2}\mathcal{O}\slashed{\rm dVol} =\int_{\mathbb{S}^2}W\slashed{\rm dVol}= 0$ and $\left\vert\left\vert \left(\mathcal{O},\mathring{\nabla}W\right)\right\vert\right\vert_{\mathring{H}^{M-1}} \lesssim \epsilon$. 

Then there exists $\left(\slashed{g}^{({\rm seed})},b^{({\rm seed})},(\Omega\underline{\omega})^{({\rm seed})}\right)$ solving~\eqref{m2om3o2} with $\slashed{g}^{({\rm seed})} = \slashed{g}$, $\slashed{\rm curl}b^{({\rm seed})} = \mathcal{O}$, and $\left(\Omega\underline{\omega}\right)^{(\rm seed)} = W + a$, where $a$ is a constant satisfying $|a| \lesssim \epsilon$. 
\end{proposition}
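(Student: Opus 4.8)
The plan is to treat \eqref{m2om3o2} as an equation for the pair $\left(b,a\right)$ with $\slashed{g}$ fixed equal to the given metric, $\slashed{\rm curl}b$ fixed equal to $\mathcal{O}$, and $\Omega\underline{\omega}=W+a$, and to solve it by a fixed point argument. Since $H^1\left(\mathbb{S}^2\right)=0$, I would Hodge decompose $b=\slashed{\nabla}f+{}^*\slashed{\nabla}\psi$ into its gradient and curl parts; the curl part is fixed once and for all by solving the mean-zero elliptic equation $\slashed{\Delta}\psi=\mathcal{O}$, so that the genuine unknowns become the scalar $D\doteq\slashed{\rm div}b=\slashed{\Delta}f$ (equivalently $f$, recovered from $D$ by another mean-zero Poisson solve) together with the constant $a$.

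The key structural observation is that, writing $D=\slashed{\rm div}b$ and collecting the terms linear in $D$, the equation \eqref{m2om3o2} takes the form of a transport equation
\begin{equation*}
\mathcal{L}_{-b}D + D - 4\left(\Omega\underline{\omega}\right)D = -8\left(\Omega\underline{\omega}\right) + \tfrac{1}{4}\left|\slashed{\nabla}\hat{\otimes}b\right|^2 + \tfrac{1}{2}D^2,
\end{equation*}
whose zeroth-order coefficient equals $1$ up to the $O(\epsilon)$ term $-4\Omega\underline{\omega}$. This is precisely the situation covered by Proposition~\ref{somestuimdie} (with $q=-b$, $A=1$, and $W=-4\Omega\underline{\omega}$): the relevant linearized operator is therefore invertible on $\mathring{H}^{M-1}\left(\mathbb{S}^2\right)$ with no loss of derivatives. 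I would emphasize that this transport formulation is essential: a naive substitution $b=\slashed{\nabla}f$ renders the term $\mathcal{L}_b\slashed{\rm div}b=\slashed{\nabla}^Af\,\slashed{\nabla}_A\slashed{\Delta}f$ third order in $f$ against a second-order principal part, and a direct elliptic or implicit-function scheme would lose a derivative; inverting the first-order transport operator with its good zeroth-order term instead avoids this entirely.

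The constant $a$ would then be fixed by a solvability (mean-value) condition. Integrating the displayed equation over $\left(\mathbb{S}^2,\slashed{g}\right)$ and using $\int_{\mathbb{S}^2}\mathcal{L}_bD\,\slashed{\rm dVol}=-\int_{\mathbb{S}^2}\left(\slashed{\rm div}b\right)D\,\slashed{\rm dVol}=-\int_{\mathbb{S}^2}D^2\,\slashed{\rm dVol}$, together with $\int_{\mathbb{S}^2}\slashed{\rm div}b\,\slashed{\rm dVol}=0$ and $\int_{\mathbb{S}^2}W\,\slashed{\rm dVol}=O\left(\epsilon^2\right)$, produces a single scalar equation for $a$ in which the coefficient of $a$ is $-8\,\mathrm{Vol}\left(\mathbb{S}^2,\slashed{g}\right)\neq 0$ and all remaining terms are quadratically small; this both determines $a=O\left(\epsilon^2\right)$ (so in particular $|a|\lesssim\epsilon$) and is exactly the condition guaranteeing that the solution $D$ is mean-zero, hence a genuine divergence. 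I would then run a contraction mapping for the coupled map $\left(D,a\right)\mapsto\left(D',a'\right)$ on a small ball in $\mathring{H}^{M-1}\left(\mathbb{S}^2\right)\times\mathbb{R}$: at each step one reconstructs $b$ from $D$ and $\psi$, solves the transport equation via Proposition~\ref{somestuimdie} to update $D$, and updates $a$ by the mean condition. All nonlinear terms are at least quadratic, so for $\epsilon$ small the map is a contraction, and the resulting fixed point yields the desired $\left(\slashed{g}^{({\rm seed})},b^{({\rm seed})},\left(\Omega\underline{\omega}\right)^{({\rm seed})}\right)$.

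The step I expect to be most delicate is the bookkeeping that ties together the determination of $a$ with the mean-zero property of $D$: because the transport equation is solved with frozen coefficients at each iterate, the identity $\int_{\mathbb{S}^2}\mathcal{L}_bD\,\slashed{\rm dVol}=-\int_{\mathbb{S}^2}D^2\,\slashed{\rm dVol}$ only holds up to an error controlled by the mean of $D$, so one must verify that the fixed point genuinely satisfies $\int_{\mathbb{S}^2}D\,\slashed{\rm dVol}=0$ (equivalently that the projection error driving the $a$-update closes). This is a standard but careful argument using that the linear response of the mean of $D$ to $a$ is nondegenerate, and it is the only place where the two solvability structures — the transport inversion and the mean-value condition — must be handled simultaneously.
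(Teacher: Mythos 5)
Your proposal is correct and follows essentially the same route as the paper: the paper likewise fixes the curl part of $b$ by solving $\slashed{\Delta}\mathcal{P}=\mathcal{O}$, writes $b_A=\slashed{\nabla}_Af-\slashed{\epsilon}_A^{\ \ B}\slashed{\nabla}_B\mathcal{P}$, inverts the resulting first-order transport operator in $\slashed{\Delta}f$ via Proposition~\ref{somestuimdie}, fixes $a(i)$ at each stage by the mean-zero condition $\int_{\mathbb{S}^2}\slashed{\Delta}f(i)\,\slashed{\rm dVol}=0$, and closes by iteration as in Proposition A.1 of~\cite{nakedone}. The only cosmetic differences are that you work with $D=\slashed{\rm div}b$ rather than $f$ itself and keep the $-4(\Omega\underline{\omega})D$ term on the left as the $W$-term of Proposition~\ref{somestuimdie} instead of freezing it at the previous iterate.
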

\begin{proof}The proof is essentially a minor adapter of the proof of Proposition A.1 of~\cite{nakedone}, so we will just provide a sketch: Let $\mathcal{P} \doteq \slashed{\Delta}^{-1}\mathcal{O}$. Then, if we set 
\[b_A \doteq \slashed{\nabla}_Af - \slashed{\epsilon}_A^{\ \ B}\slashed{\nabla}_B\mathcal{P},\qquad \left(\Omega\underline{\omega}\right) \doteq W + a,\]
for a function $f$ and a constant $a$ to be determined, we find that the equation~\eqref{m2om3o2} becomes
\begin{equation*}
 \mathcal{L}_b\slashed{\Delta}f  - \slashed{\Delta}f  +\frac{1}{2}\left(\slashed{\Delta}f\right)^2 -8\left(W+a\right) + 4\left(W+a\right)\slashed{\Delta}f +\frac{1}{4}\left|\slashed{\nabla}\hat{\otimes}\left(\slashed{\nabla}f - {}^*\slashed{\nabla}\mathcal{P}\right)\right|^2= 0,
 \end{equation*}
 We then solve this by  running an iteration scheme  indexed by non-negative integers $i$ where each function $b(i)$, $f(i)$, and constant $a(i)$ are required to satisfy, when $i > 0$,
 \begin{align}\label{ij32iojio}
& \mathcal{L}_{b(i-1)}\slashed{\Delta}f(i)  - \slashed{\Delta}f(i)  +\frac{1}{2}\left(\slashed{\Delta}f(i-1)\right)^2 -8\left(W+a(i)\right)
 \\ \nonumber &\qquad  + 4\left(W+a(i)\right)\slashed{\Delta}f(i-1) +\frac{1}{4}\left|\slashed{\nabla}\hat{\otimes}\left(\slashed{\nabla}f(i-1) - {}^*\slashed{\nabla}\mathcal{P}\right)\right|^2= 0,
 \end{align}
 where $b_A(i) = \slashed{\nabla}_Af(i) - \slashed{\epsilon}_A^{\ \ B}\slashed{\nabla}_B\mathcal{P}$. For $i = 0$, we simply set $f(0) = 0$ and $a(0) = 0$. Via Proposition~\ref{somestuimdie}, we see that~\eqref{ij32iojio} can be expected to determine $\slashed{\Delta}f(i)$ from the previous terms in the iteration and a choice of the constant $a(i)$. We fix the choice of $a(i)$ by the requirement that
 \[\int_{\mathbb{S}^2}\slashed{\Delta}f(i) \slashed{\rm dVol} = 0.\]
We omit the rest of the argument as the iterates may now be shown to converge exactly as in the proof of Proposition A.1 of~\cite{nakedone}.
 \end{proof}

\begin{remark}\label{rmkonfg}As discussed in the introduction, Fefferman--Graham type geometries correspond to self-similar metrics where the homothetic vector field $K$ is null along $\mathcal{H}_0$. For seed data associated to these solutions, we must thus take $b^{(\rm seed)} = 0$ (as $b^{(\rm seed)}$ will end up corresponding to the limit of $b$ as $v\to 0$). In turn~\eqref{m2om3o2} then implies that $\left(\Omega\underline{\omega}\right)^{(\rm seed)} = 0$. Furthermore, since $b^{(\rm seed)} = 0$, choices of $\log\Omega^{(\rm seed,\pm)}$ and $\left(\Omega^{-1}\hat{\chi}\right)^{(\rm seed,\pm)}$ are then determined just by a choice of a function(s), or trace-free symmetric $(0,2)$-tensor(s) respectively, along the sphere $\mathbb{S}^2$ which is then extended to have a trivial dependence in $v$. Thus, if one is only interested in Fefferman--Graham type geometries, then it is natural to define the seed data to just consist of a choice of a Riemannnian metric $\slashed{g}$ on $\mathbb{S}^2$, positive function(s) $\Omega$ on $\mathbb{S}^2$, and symmetric trace-free $(0,2)$-tensor(s) $\Omega^{-1}\hat{\chi}$ along $\mathbb{S}^2$. It then in fact turns out that, without loss of generality, one can take $\slashed{g}$ to be the round metric and $\Omega$ to be identically $1$, and thus that, up to self-similar change of double-null coordinates, the formal expansions corresponding to Fefferman--Graham geometries are parametrized by choices of symmetric trace-free $(0,2)$-tensors along $\mathbb{S}^2$.  See Appendix~\ref{feffgrahamreggauge} for the detailed statement. 
\end{remark}

\section{Statement of Main Results}\label{sectionwithmaintheorems}
In this section we state explicitly our main results. Let us introduce the convention that, in this section, when we compute the norm of an $\mathbb{S}^2_{u,v}$ tensor along a sphere $\mathbb{S}^2_{u,v}$, we use the metric $\left(v-u\right)^2\mathring{\slashed{g}}_{AB}$, where $\mathring{\slashed{g}}_{AB}$ is a choice of $(u,v)$ independent  round metric at each $\mathbb{S}^2_{u,v}$.

Our first theorem concerns the existence and uniqueness of formal expansions for twisted self-similar metrics.
\begin{theorem}\label{theoremexpand}Let $\left(\slashed{g}^{({\rm seed})},b^{({\rm seed})},\left(\Omega\underline{\omega}\right)^{({\rm seed})},\left(\Omega^{-1} \hat{\chi}\right)^{(\rm seed,\pm)},\log\Omega^{(\rm seed,\pm)}\right)$ be $\left(\epsilon,N\right)$-seed data for $N \in \mathbb{Z}_{> 0}$ sufficiently large and $\epsilon > 0$ sufficiently small (where the smallness of $\epsilon$ may depend on $N$). Then there exists an $\tilde{N} \gtrsim N$ (where the implied constant is independent of $N$ and $\epsilon$), a constant $\hat{C} > 0$ (independent of $\epsilon$), and a self-similar double null metric $g$ such that the following hold with all implicit constants possibly depending on $N$:
\begin{enumerate}
	\item The metric $g$ is $\left(\epsilon,\tilde{N}\right)$-regular up to $\{v = 0\}$. In particular, $g$ is a $C^{\tilde{N}}$ metric in the double-null coordinates (in both of the regions $\{v > 0\}$ and $\{v < 0\}$) and may moreover, by an application of Proposition~\ref{nonononononoo}, be put into the homothetic gauge where it will be a $\left(\epsilon,\tilde{N}-2\right)$-regular twisted self-similar spacetime. We emphasize that we do \underline{not} need any compatibility of $\Omega^{-1}\hat{\chi}^{(\rm seed,+)}$ and  $\Omega^{-1}\hat{\chi}^{(\rm seed,-)}$ or $\log\Omega^{(\rm seed,+)}$ and $\log\Omega^{(\rm seed,-)}$ in order to be an $\left(\epsilon,\tilde{N}-2\right)$-regular twisted self-similar spacetime.
	\item The seed data determines the behavior of $g$ in the self-similar double-null gauge near $\{v = 0\}$ in the following sense:
	\begin{equation}\label{3poj2poj4po2223}
	\sum_{j=0}^{\tilde{N}}\left\vert\left\vert \left(v\mathcal{L}_{\partial_v}\right)^j\left(u\Omega\underline{\omega}-u\Omega\underline{\omega}^{(\rm seed)},u\eta - u\eta^{(0)},\slashed{g} - \slashed{g}^{(\rm seed)},b - b^{(\rm seed)}\right)\right\vert\right\vert_{C^{\tilde{N}-j}\left(\mathbb{S}^2_{u,v}\right)} \lesssim \epsilon \left|\frac{v}{u}\right|^{1-\hat{C}\epsilon},
	\end{equation}
	\begin{equation}\label{3poj2poj4po2}
	\sum_{j=0}^{\tilde{N}}\left\vert\left\vert \left(v\mathcal{L}_{\partial_v}\right)^j\left(u\Omega^{-1}\hat{\chi}-u\left(\Omega^{-1}\hat{\chi}\right)^{(\rm seed,\pm)},\log\Omega - \log\Omega^{(\rm seed,\pm)}\right)\right\vert\right\vert_{C^{\tilde{N}-j}\left(\mathbb{S}^2_{u,v}\right)} \lesssim \epsilon \left|\frac{v}{u}\right|^{1-\hat{C}\epsilon},
	\end{equation}
	where~\eqref{3poj2poj4po2223} holds for all $(u,v)$,~\eqref{3poj2poj4po2} with the ``$+$'' holds for $v > 0$,~\eqref{3poj2poj4po2} with the ``$-$'' folds for $v < 0$, and $\eta^{(0)}$ is the quantity defined in Definition~\ref{thisistheseed}. 
	\item The Ricci tensor ${\rm Ric}\left(g\right)$ satisfies the following bounds in the double null coordinates:
	\begin{equation}\label{3oij2oi829}
	\sum_{j=0}^{\tilde{N}}\left\vert\left\vert \left(v\mathcal{L}_{\partial_v}\right)^j{\rm Ric}\left(X,Y\right)\right\vert\right\vert_{C^{\tilde{N}-j}\left(\mathbb{S}^2_{u,v}\right)} \lesssim \epsilon u^{-2}\left|\frac{v}{u}\right|^{\tilde{N}-1-\hat{C}\epsilon},
	\end{equation}
	where $X,Y \in \{\mathcal{L}_{\partial_v},\mathcal{L}_{\partial_u},u^{-1}\partial_{\theta^A}\}$.	
	\item The Ricci tensor ${\rm Ric}\left(g\right)$ satisfies the following bounds in the homothetic gauge when $\rho \neq 0$:
	\begin{equation}\label{9198j32}
	\sum_{j=0}^{\tilde{N}}\left\vert\left\vert \left(\rho\mathcal{L}_{\partial_{\rho}}\right)^j{\rm Ric}\left(X,Y\right)\right\vert\right\vert_{C^{\tilde{N}-j}\left(\mathbb{S}^2_{t,\rho}\right)} \lesssim \epsilon t^{-2}\left|\rho\right|^{\tilde{N}-1-\hat{C}\epsilon},
	\end{equation}
	where $X,Y \in \{\partial_t,t^{-1}\partial_{\rho},t^{-1}\partial_{\theta^A}\}$. The metric $g$ will, in general, only be H\"{o}lder continuous $\rho = 0$. However, the Ricci tensor (and up to $\tilde{N}$ applications of $\rho\mathcal{L}_{\rho}$ or angular derivatives) will be defined as an $L^1$ function in a weak sense and~\eqref{9198j32} will then continue to hold. 
	\item The metric components $g_{\alpha\beta}$ in the double-null coordinate system may be written as a sum
	\[g_{\alpha\beta} = \sum_{j=0}^{\tilde{N}}g_{\alpha\beta}^{(j)},\]
	where each $g_{\alpha\beta}^{(j)}$ satisfies
	\begin{equation}\label{32ojoj42}
	\sum_{i=0}^{\tilde{N}}\left\vert\left\vert \left(v\mathcal{L}_{\partial_v}\right)^ig^{(j)}_{\alpha\beta}\right\vert\right\vert_{C^{\tilde{N}-i}\left(\mathbb{S}^2_{-1,v}\right)} \lesssim \epsilon \left|v\right|^{j-\hat{C}\epsilon},
	\end{equation}
	and is defined inductively by either solving explicit transport equations of the type discussed in Lemma~\ref{laptrans}, or solving equations of the type discussed in Proposition~\ref{somestuimdie}, or by integration in $v$. In all three cases the right hand sides depend on previously computed terms in the expansions, and the base case of the inductive procedure is determined by the choice of seed data. (See the proof of Theorem~\ref{2omo2o492} for the explicit procedure.) We note that even though the bound~\eqref{32ojoj42} is only stated along $\{u = -1\}$, for any given metric component one may use the underlying self-similarity to obtain an estimate in the entire spacetime. In view of (the proof of) Proposition~\ref{nonononononoo} one may obtain corresponding expansions for the metric components in the homothetic gauge in an algorithmic fashion.
	\item\label{9090909090232} Let $g$ be a metric arising from an $\left(\epsilon,N\right)$-regular twisted self-similar spacetime and which weakly solves ${\rm Ric}(g) = 0$. We may then put $g$ into a double-null gauge via Proposition~\ref{makeitadouble}. There then exists a unique choice of seed data and a corresponding metric $\tilde{g}$ produced by the formal expansion procedure described above so that in the double-null coordinate system
	\begin{equation}\label{3981jno3}
	\sum_{j=0}^{\tilde{N}}\left\vert\left\vert \left(v\mathcal{L}_{\partial_v}\right)^j\left(g_{\alpha\beta}-\tilde{g}_{\alpha\beta}\right)\right\vert\right\vert_{C^{\tilde{N}-j}\left(\mathbb{S}^2_{-1,v}\right)} \lesssim \epsilon \left|v\right|^{\tilde{N}-\hat{C}\epsilon},
	\end{equation}
	for some $\tilde{N} \gtrsim N$.	As with the bound~\eqref{32ojoj42}, even though~\eqref{3981jno3} is only stated along $\{u = -1\}$, for any given metric component one may use the underlying self-similarity to obtain an estimate in the entire spacetime. (Analogous statements hold for $1$-sided $\left(\epsilon,N\right)$-regular spacetimes which solve the Einstein vacuum equations.)
		\end{enumerate}

Finally, we observe that if we are instead given $1$-sided $\left(\epsilon,N\right)$-regular seed data, then the analogues of the above results hold for $1$-sided self-similar solutions. (By a change of variables $v\mapsto -v$ we may consider $\left(\Omega^{-1}\hat{\chi}\right)^{\rm seed,+}$ and $\log\Omega^{{\rm seed,+}}$ as being defined for either $v\in (0,1]$ or $v \in [-1,0)$ depending on which ``side'' of $\{v = 0\}$ we want to define our expansions.)
\end{theorem}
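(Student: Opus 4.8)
The plan is to construct the metric $g$ by an explicit induction carried out entirely in the self-similar double-null gauge, following the dichotomy anticipated in the introduction: every metric or Ricci coefficient other than $\log\Omega$ and $\Omega^{-1}\hat{\chi}$ is governed by a $\nabla_4$ equation which, after conjugation by the correct power of the lapse, takes the form~\eqref{fourofoureqn} and is integrated directly in $v$, while $\log\Omega$ and $\Omega^{-1}\hat{\chi}$ satisfy the degenerate transport equations~\eqref{lapfromsecond} and~\eqref{123456789second}. First I would fix the base case at $\{v=0\}$ from the seed data: the triple $(b,\slashed{g},\Omega\underline{\omega})^{(\rm seed)}$ is constrained by~\eqref{m2om3o2}, which is precisely the $v=0$ limit of~\eqref{eqnyaydivb} under the ansatz that $\Omega^2{\rm Ric}_{33}$ and all $\tfrac{v}{u}$-weighted terms vanish there; from these one recovers $\Omega^{-1}{\rm tr}\chi|_{v=0}$ and $\eta|_{v=0}=\eta^{(0)}$ by solving the angular equations~\eqref{3ioj2oij4oi2} via Proposition~\ref{somestuimdie}. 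The prescribed $\log\Omega^{(\rm seed,\pm)}$ and $(\Omega^{-1}\hat{\chi})^{(\rm seed,\pm)}$ supply exactly the kernel elements needed to pin down the otherwise-undetermined homogeneous solutions of~\eqref{lapfromsecond} and~\eqref{123456789second}, and because these are posed separately on the two sides the entire construction runs independently on $\{v>0\}$ and $\{v<0\}$, which is why no compatibility of the $\pm$ data is required.

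For the inductive step, having produced $g^{(0)},\dots,g^{(j-1)}$, I would substitute the partial sum into the right-hand sides $F$, $G_1$, $G_2$ of~\eqref{fourofoureqn},~\eqref{lapfromsecond}, and~\eqref{123456789second} and solve for the order-$j$ corrections: the $\nabla_4$ equations by direct $v$-integration, the $\log\Omega$ and $\Omega^{-1}\hat{\chi}$ equations by the estimates of Lemma~\ref{laptrans}, and the $v=0$ angular constraints by Proposition~\ref{somestuimdie}. The self-similar identities of Proposition~\ref{somanyformulassolittletime} are used throughout to trade $\nabla_3$ derivatives for $\tfrac{v}{u}\mathcal{L}_{\partial_v}$ and $\mathcal{L}_b$ derivatives and to recover $\zeta$, $\underline{\eta}$, and $\omega$ algebraically rather than by separate evolution. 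The per-level decay $g^{(j)}=\epsilon O(|v|^{j-\hat{C}\epsilon})$ of~\eqref{32ojoj42}, and hence the summed bounds~\eqref{3poj2poj4po2223} and~\eqref{3poj2poj4po2}, follow from the $v^{p}$-type estimates~\eqref{2om3omo2} and~\eqref{21ljoi1jo21}: each application loses only a factor $|v|^{-D(i)\epsilon}$, so choosing $\epsilon$ small relative to $N$ is precisely what keeps the exponent positive after $\tilde{N}$ levels and the attendant angular commutations. The crucial structural input is that conjugating by the right power of $\Omega$ eliminates every occurrence of the non-integrable coefficient $\omega$ from the sources, placing all transport equations in the degenerate-but-mild class handled by Lemma~\ref{laptrans}.

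The main obstacle is the \emph{preservation of constraints}, that is, upgrading the partial Ricci vanishing produced by the construction to the full estimate~\eqref{3oij2oi829}. By design only those components of ${\rm Ric}(g)$ dual to the subset of null-structure equations actually solved are automatically $\epsilon O(|v|^{\tilde{N}-1-\hat{C}\epsilon})$; the remaining components must be shown to inherit this vanishing. I would exploit the twice-contracted Bianchi identity $\nabla^{\mu}\bigl({\rm Ric}-\tfrac12 R g\bigr)_{\mu\nu}=0$, which, expanded in the double-null frame, yields transport equations for the uncontrolled Ricci components with sources built from the already-controlled components and their angular derivatives; solving these by Lemma~\ref{laptrans}, with the degenerate weights tracked, propagates the order-$(\tilde{N}-1)$ vanishing to all components. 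The reason this is \emph{dramatically simpler} than the analogous step in~\cite{nakedinterior} is that self-similarity collapses the two-variable propagation to a family of one-dimensional transport equations in $v$ on each constant-$u$ cone, so no coupled hierarchy of energy estimates is required.

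Finally, the exhaustiveness claim (item~\ref{9090909090232}) and the homothetic-gauge statements~\eqref{9198j32} follow by transfer and comparison. Given a genuine $\left(\epsilon,N\right)$-regular vacuum solution $\hat{g}$, Proposition~\ref{makeitadouble} places it in a self-similar double-null gauge that is $\left(\epsilon,N-2\right)$-regular up to $\{v=0\}$; its limiting data $(b,\slashed{g},\Omega\underline{\omega})|_{v=0}$ and kernel data $(\log\Omega,\Omega^{-1}\hat{\chi})$ are read off as seed data, from which the procedure above produces a formal expansion $\tilde{g}$. The difference $\hat{g}-\tilde{g}$ then solves the \emph{same} transport and elliptic system, with right-hand sides differing only by the exact vanishing of ${\rm Ric}(\hat{g})$ against the $O(|v|^{\tilde{N}-1})$ residual of $\tilde{g}$, so a Gr\"onwall argument on these difference equations yields~\eqref{3981jno3}. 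Passing the bounds~\eqref{3oij2oi829} through the explicit coordinate change of Proposition~\ref{nonononononoo} gives~\eqref{9198j32}, with the weak/$L^1$ interpretation across $\{\rho=0\}$ justified by the H\"older extension of the metric guaranteed by $\left(\epsilon,\tilde{N}\right)$-regularity; the one-sided and $v\mapsto -v$ variants follow from the symmetry of Remark~\ref{shiftshift} with no new ideas.
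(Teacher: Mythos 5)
Your proposal is correct and follows essentially the same route as the paper: an inductive construction in the self-similar double-null gauge splitting the quantities into those integrated directly in $v$ from their $\nabla_4$ equations and the pair $\left(\log\Omega,\Omega^{-1}\hat{\chi}\right)$ handled by the degenerate transport estimates of Lemma~\ref{laptrans}, with the seed data fixing the $v=0$ limits and the kernel elements, followed by constraint propagation via the divergence-free Einstein tensor and transfer to the homothetic gauge through Proposition~\ref{nonononononoo}. The only places you are slightly lighter than the paper are cosmetic: the Bianchi-derived transport equations for $R+{\rm Ric}_{34}$, ${\rm Ric}_{3A}$, ${\rm Ric}_{33}$ are non-degenerate and integrate directly from $v=0$ rather than needing Lemma~\ref{laptrans}, and the weak $L^1$ interpretation across $\{\rho=0\}$ requires checking that the lapse-weighted frame components of ${\rm Ric}$ are locally integrable or of the form $P+\partial_{\hat{v}}f$ with $f$ continuous, which is the same cancellation-of-$\omega$ observation you already invoke for the transport equations.
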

The proof of Theorem~\ref{theoremexpand} will be given in Section~\ref{3ioj2oij42}.

\begin{remark}In the case when $b^{(\rm seed)} = 0$ then the resulting formal expansions correspond to a Fefferman--Graham type geometry where the homothetic vector field $K$ is null along $\mathcal{H}_0$. In this case, one may show that the metric in the corresponding self-similar double-null foliation exhibits greatly improved regularity as $v\to 0$ compared to the case of generic seed data. See Appendix~\ref{feffgrahamreggauge} for the detailed statements.
\end{remark}
\begin{remark}\label{u0canwedoityest}In the case of a metric $g$ which is $\left(\epsilon,N\right)$-regular up to $\{u=0\}$ (see the second part of Definition~\ref{thisdefissomething}), we may swap the roles of $u$ and $v$ in the sense of Remark~\ref{shiftshift}, and then apply part~\ref{9090909090232} of Theorem~\ref{theoremexpand} to the resulting metric. Finally, we can then un-swap $u$ and $v$ to deduce a result for the original metric. Thus, Theorem~\ref{theoremexpand} may be used to understand the asymptotic behavior as $u\to 0$ for metrics $g$ which are $\left(\epsilon,N\right)$-regular up to $\{u=0\}$.
\end{remark}

The next theorem concerns the existence of $1$-sided $\left(\epsilon,N\right)$-regular self-similar spacetimes corresponding to our expansions. In order to keep the exposition in the paper as streamlined as possible, we will not provide the proof of this result. However, given the expansions established by Theorem~\ref{theoremexpand}, one may prove Theorem~\ref{notprovedhere} below by an amalgamation of the techniques from~\cite{scaleinvariant,nakedone}.
\begin{theorem}\label{notprovedhere}Let $\left(\slashed{g}^{({\rm seed})},b^{({\rm seed})},\left(\Omega\underline{\omega}\right)^{({\rm seed})},\left(\Omega^{-1} \hat{\chi}\right)^{(\rm seed,+)},\log\Omega^{(\rm seed,+)}\right)$ be $1$-sided $\left(\epsilon,N\right)$-seed data for $N \in \mathbb{Z}_{> 0}$ sufficiently large and $\epsilon > 0$ sufficiently small. Let $\tilde{g}$ be the $1$-sided self-similar metric produced by the expansions of Theorem~\ref{theoremexpand} which is defined for $\{v > 0\}$ and is associated to the $1$-sided $\left(\epsilon,N\right)$-seed data. Then, for some $\tilde{N} \gtrsim N$ (for a constant which is independent of $N$ and $\epsilon$) there exists a $1$-sided $\left(\epsilon,\tilde{N}\right)$-regular self-similar spacetime which futhermore solves the Einstein vacuum equations weakly so that when the metric $g$ is put into the self-similar gauge, we have that~\eqref{3981jno3} holds for $\{v > 0\}$.  
\end{theorem}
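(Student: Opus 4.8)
The plan is to reduce the matching statement~\eqref{3981jno3} to a pure \emph{existence} statement and then construct the required genuine solution by combining the local existence scheme for self-similar spacetimes of~\cite{scaleinvariant} with the degenerate transport and elliptic a priori estimates of~\cite{nakedone}. Indeed, suppose we can produce a genuine $1$-sided $\left(\epsilon,\tilde{N}\right)$-regular self-similar spacetime $g$, defined for $\{v>0\}$, which weakly solves $\mathrm{Ric}(g)=0$ and whose induced seed data at $\{v=0\}$ coincides with the prescribed seed data. Then $g$ is already in the self-similar double-null gauge, and item~\ref{9090909090232} of Theorem~\ref{theoremexpand} furnishes a formal expansion matching $g$ to order $\tilde{N}$ in the sense of~\eqref{3981jno3}. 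Since the seed data extracted from $g$ agrees with the given seed data, and since the inductive procedure of Theorem~\ref{theoremexpand} associates to each choice of seed data a uniquely determined metric, this expansion must be the given $\tilde{g}$. Hence~\eqref{3981jno3} holds for $\tilde{g}$, and the entire content of the theorem becomes the construction of $g$.

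To construct $g$ I would work throughout in the self-similar double-null gauge and use the algebraic identities of Proposition~\ref{somanyformulassolittletime} to rewrite the Einstein vacuum equations, for the subset of null-structure and Bianchi equations selected in Section~\ref{computecompute}, as a closed system of $v$-transport equations for $\eta$, $\Omega^{-1}\mathrm{tr}\chi$, $\Omega\mathrm{tr}\underline{\chi}$, $\Omega\hat{\underline{\chi}}$, and $\Omega\underline{\omega}$, together with the degenerate transport equations~\eqref{lapfromsecond} and~\eqref{123456789second} for $\log\Omega$ and $\Omega^{-1}\hat{\chi}$. As in Section~\ref{regreg}, each transport equation is first conjugated by the appropriate power of $\Omega$ so as to eliminate the non-integrable coefficient $\omega$. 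The formal expansion $\tilde{g}$ then plays two roles: it supplies a smooth self-similar datum on a spacelike hypersurface $\{\frac{v}{-u}=c_0\}$ with $0<c_0<\tilde{c}$, where $\tilde{g}$ is regular; and it serves as the reference metric against which the genuine solution is measured and whose seed data at $\{v=0\}$ the construction must reproduce.

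Using the datum on $\{\frac{v}{-u}=c_0\}$, the local existence theory of~\cite{scaleinvariant} produces a genuine self-similar solution on a slab $\frac{v}{-u}\in(\delta,c_0]$ for some small $\delta>0$. The crucial step is then to continue this solution down to $\{v=0\}$, i.e.\ to send $\delta\to 0$, with uniform $\left(\epsilon,\tilde{N}\right)$-regularity. For this I would run a contraction-mapping argument in a Banach space of self-similar tensors equipped with the weighted $C^k(\mathbb{S}^2)$ norms dictated by the expected asymptotics~\eqref{3poj2poj4po2223}--\eqref{3poj2poj4po2}: each iterate is obtained by solving the conjugated $v$-transport equations via Lemma~\ref{laptrans} and the elliptic equations on the spheres via Proposition~\ref{somestuimdie}, with the nonlinear terms and the lower-order iterates as sources. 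The a priori estimates of Sections~7--9 of~\cite{nakedone} provide precisely the weighted bounds needed to show that this map is a contraction for $\epsilon$ small, yielding a genuine solution of the selected subsystem with the H\"older regularity up to $\{v=0\}$ demanded by Definition~\ref{thisdefissomething}.

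Finally, because the null-structure equations are overdetermined, solving the selected subsystem only guarantees the vanishing of the corresponding components of $\mathrm{Ric}(g)$, so one must run the preservation-of-constraints argument --- using that the Einstein tensor is divergence free --- to upgrade this partial vanishing to the full $\mathrm{Ric}(g)=0$ weakly, exactly as in~\cite{nakedinterior} but, as noted in Section~\ref{computecompute}, in a substantially simpler form here. I expect the main obstacle to be this continuation to $\{v=0\}$ through the \emph{degenerate} transport equations: the factor $\frac{v}{u}$ multiplying $\mathcal{L}_{\partial_v}$ degenerates on $\{v=0\}$ while $\omega$ and $\hat{\chi}$ may blow up mildly, so the weighted estimates must be sharp enough both to close the contraction and to preserve H\"older continuity up to the cone, and the constraint-propagation argument must then be carried out at this borderline regularity. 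Once $g$ has been constructed, Proposition~\ref{nonononononoo} allows one to transfer it to the homothetic gauge if desired, and~\eqref{3981jno3} follows from the reduction of the first paragraph.
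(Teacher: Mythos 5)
First, a caveat: the paper deliberately does not prove Theorem~\ref{notprovedhere} --- it only asserts that the result follows from ``an amalgamation of the techniques from~\cite{scaleinvariant,nakedone}'' --- so there is no official argument to compare yours against line by line. Your toolbox (the conjugated transport equations of Section~\ref{computecompute}, Lemma~\ref{laptrans}, Proposition~\ref{somestuimdie}, the a priori estimates of Sections 7--9 of~\cite{nakedone}, and the divergence-free constraint propagation) is the right one, and your final constraint-propagation step mirrors what the paper does for the formal expansions.

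However, there is a genuine gap in the architecture of your construction: you pose data on the spacelike hypersurface $\{\frac{v}{-u}=c_0\}$ induced by $\tilde{g}$ and solve toward $\{v=0\}$. A solution built this way will \emph{not} satisfy~\eqref{3981jno3} with the \emph{given} $\tilde{g}$. The difference $h=g-\tilde{g}$ satisfies transport equations whose source is the Ricci error of $\tilde{g}$, which by~\eqref{3oij2oi829} is $O(|v|^{\tilde{N}-1-O(\epsilon)})$, with $h=0$ at $\frac{v}{-u}=c_0$; integrating from $c_0$ down to $v$ gives $|h(v)|\lesssim\int_v^{c_0}\tau^{\tilde{N}-1-O(\epsilon)}\,d\tau= O\bigl(c_0^{\tilde{N}-O(\epsilon)}\bigr)$, a nonzero constant as $v\to 0$, not $O(|v|^{\tilde{N}-O(\epsilon)})$. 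Consequently the induced seed data of $g$ at $\{v=0\}$ is merely \emph{close} to, not equal to, the prescribed seed data, so the reduction in your first paragraph does not close: item~\ref{9090909090232} of Theorem~\ref{theoremexpand} matches $g$ to \emph{some} expansion, but not to the given $\tilde{g}$. (Your scheme is essentially the mechanism of Theorem~\ref{spacespaceself}, which answers the converse question.) The fix is to anchor the construction at the cone rather than at $c_0$: write $g=\tilde{g}+h$ and run the contraction for $h$ in spaces weighted by $|v|^{\tilde{N}-O(\epsilon)}$, solving each transport equation for $h$ by integration \emph{from} $\{v=0\}$ using the decaying-solution branches~\eqref{2om3omo2} and~\eqref{21ljoi1jo21} of Lemma~\ref{laptrans} (whose hypothesis $\epsilon p^{-1}\ll 1$ with $p\sim\tilde{N}$ is exactly why the smallness of $\epsilon$ must depend on $N$). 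This is the construction of~\cite{scaleinvariant} transplanted to the twisted setting, and it is what forces $h$ to vanish to order $\tilde{N}$ at the cone so that~\eqref{3981jno3} holds.
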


Our final theorem concerns spacetimes in the $\left(\epsilon,N\right)$-small spacelike self-similar regime.
\begin{theorem}\label{spacespaceself} Let $\left(\mathcal{M},g\right)$ be a spacetime in the $\left(\epsilon,N\right)$-small spacelike self-similar regime. Then we may apply Lemma~\ref{oioioioi12} to equip $\left(\mathcal{M},g\right)$ with a double-null coordinate system valid in the region~\eqref{3ij2oijoi2}. Then, the maximal development of $\left(\mathcal{M},g\right)$ to the past of the hypersurface $\{\frac{v}{-u} = \tilde{b}\}$ includes the region
\[\left\{\left(u,v,\theta^A\right) \in (-\infty,0) \times \mathbb{R}\times\mathbb{S}^2 : 0 < \frac{v}{-u} \leq \tilde{b}\right\},\]
and is moreoever $\left(\epsilon,\tilde{N}\right)$-regular up to $\{v = 0\}$ for some $\tilde{N} \gtrsim N$ (where the implied constant is independent of $N$ and $\epsilon$). We may thus use Theorem~\ref{theoremexpand} to describe the asymptotic behavior of the metric as $v \to 0$.

Similarly, we may apply Lemma~\ref{oioioioi12} to equip $\left(\mathcal{M},g\right)$ with the double-null coordinate system~\eqref{2l3kj2lijo42} valid in the region~\eqref{3ij2oijoi2}. We may also consider the maximal development of $\left(\mathcal{M},g\right)$ to the future of the hypersurface $\{\frac{v}{-u} = \tilde{b}\}$. This development includes the region 
\[\left\{\left(u,v,\theta^A\right) \in (-\infty,0) \times \mathbb{R}\times\mathbb{S}^2 : \tilde{b} \leq \frac{v}{-u} < \infty\right\},\]
and is $\left(\epsilon,\tilde{N}\right)$-regular up to $\{u = 0\}$ for some $\tilde{N} \gtrsim N$ (where the implied constant is independent of $N$ and $\epsilon$). We may thus use Theorem~\ref{theoremexpand} (see Remark~\ref{u0canwedoityest}) to describe the asymptotic behavior of the metric as $u \to 0$.
\end{theorem}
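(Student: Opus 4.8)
The plan is to combine the gauge construction of Lemma~\ref{oioioioi12} with an a~priori-estimates-driven continuity argument in the self-similar variable, modeled on the estimates of Sections 7--9 of~\cite{nakedone} and the local existence scheme of~\cite{scaleinvariant}. First I would apply Lemma~\ref{oioioioi12} to place $\left(\mathcal{M},g\right)$ in the self-similar double-null gauge~\eqref{3k2oijo42}, valid in a slab $\frac{v}{-u} \in (\tilde{b}-\tilde{c},\tilde{b}+\tilde{c})$ (i.e.\ taking $\lambda = \tilde{b}$), together with the initial smallness~\eqref{32oijoi991} along $\{u=-1\}$. The key structural observation is that, because $K = u\partial_u + v\partial_v$ is homothetic, every metric component and Ricci coefficient is determined by its restriction to the spheres $\mathbb{S}^2_{u,v}$ as a function of the single self-similar variable $\xi \doteq \frac{v}{-u}$ (Lemma~\ref{32ijioijoij2oi}). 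The Einstein vacuum equations therefore collapse, via the null structure equations of Proposition~\ref{thenullstructeqns} together with the self-similar identities of Proposition~\ref{somanyformulassolittletime} and Lemmas~\ref{2kn3kn2i39}--\ref{2om3mo020202}, into a coupled system of transport equations in $\xi$ whose coefficients are angular differential operators on $\mathbb{S}^2$. Evolving the maximal development to the past of $\{\xi = \tilde{b}\}$ is precisely evolving this reduced system toward $\xi \to 0^+$.

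Next I would run a continuity/bootstrap argument in $\xi$. On a region $\{c \le \xi \le \tilde{b}\}$ I would posit the $\left(\epsilon,\tilde{N}\right)$-regularity bounds of Definition~\ref{thisdefissomething} with an enlarged constant and seek to recover them with the original constant. The decisive point, as in the heuristics of Section~\ref{regreg}, is that $\omega$ blows up violently as $\xi \to 0$, so before integrating any $\nabla_4$ equation for a component $\left(\phi,\psi\right)$ one must conjugate by the correct power $\Omega^s$ of the lapse to eliminate $\omega$, producing equations of the schematic form~\eqref{fourofoureqn} whose right-hand sides are integrable in $v$; these are then estimated using Proposition~\ref{somestuimdie} and Lemma~\ref{laptrans}. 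The two components governed by $\nabla_3$ equations, namely $\log\Omega$ and $\Omega^{-1}\hat{\chi}$, are handled through the degenerate transport equations~\eqref{lapfrom} and~\eqref{123456789}, whose mild $\xi^{-\hat{C}\epsilon}$ loss is controlled by Lemma~\ref{laptrans} precisely because $b = O(\epsilon)$ is small. The quantitative heart of this step is to reproduce the a~priori estimates of Sections 7, 8, and 9 of~\cite{nakedone} in the present configuration, which is what forces the exponent loss to be only $\hat{C}\epsilon$ and hence consistent with Definition~\ref{thisdefissomething}.

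Having closed the bootstrap, I would invoke the local existence technique of~\cite{scaleinvariant} to continue the self-similar solution to strictly smaller $\xi$ whenever the a~priori bounds hold; combined with the improved estimates, this lets $c \to 0^+$, so that the development exhausts $\{0 < \xi \le \tilde{b}\}$ and, in particular, does not break down before reaching $\{v=0\}$. The uniform bounds then force the quantities $\left(\slashed{g},b,\eta,\Omega\underline{\omega}\right)$ to extend continuously to $\{v=0\}$, and verifying~\eqref{3oj2ij4ino192939},~\eqref{3ij2oijoi223456}, and~\eqref{2lk3jl1kjl21} establishes $\left(\epsilon,\tilde{N}\right)$-regularity up to $\{v=0\}$. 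At this point Theorem~\ref{theoremexpand} applies directly to describe the asymptotics as $v \to 0$.

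The second assertion, concerning the development to the future of $\{\xi = \tilde{b}\}$ and regularity up to $\{u=0\}$, follows by first equipping the spacetime with the shift in the $v$-direction (the variant of Lemma~\ref{oioioioi12} noted at the end of its statement) and then applying the $u \leftrightarrow v$ symmetry of Remark~\ref{shiftshift}, under which $\left(\chi,\underline{\omega},\eta\right) \mapsto \left(\underline{\chi},\omega,\underline{\eta}\right)$, to repeat the argument verbatim. I expect the main obstacle to be the degenerate limit $\xi \to 0$: without the lapse-conjugation the non-integrable blow-up of $\omega$ would obstruct the transport estimates entirely, and the genuinely nontrivial point is that the data here is posed on a \emph{spacelike} self-similar hypersurface rather than a characteristic one, so the estimates of~\cite{nakedone} must be adapted rather than quoted directly. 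This is the reason the full analytic details are deferred to those references.
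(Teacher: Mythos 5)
Your proposal is correct and follows essentially the same route as the paper: the paper's proof likewise reduces the statement to the a priori estimates of Sections 7--9 of~\cite{nakedone} (applied \emph{mutatis mutandis}, with the $u\leftrightarrow v$ swap of Remark~\ref{shiftshift} for the second assertion), the identities~\eqref{2lmo204202940204i} and~\eqref{kfejifei} for the $\mathcal{L}_{\partial_v}$ derivatives of $\eta$ and $\Omega\underline{\omega}$, the local existence theory of~\cite{scaleinvariant}, and the observation that the smallness from Lemma~\ref{oioioioi12} makes the relevant norms finite and small along $\{\frac{v}{-u}=\tilde{b}\}$. Your sketch simply expands the same citation-based argument into its constituent steps (lapse conjugation, degenerate transport estimates, continuation in the self-similar variable), so there is no substantive divergence.
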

\begin{proof}This is an immediate consequence of applying the a priori estimates from Sections 7, 8, and 9 of~\cite{nakedone} \emph{mutatis mutandis} (possibly after swapping the roles of $u$ and $v$ as in Remark~\ref{u0canwedoityest}), the equations~\eqref{2lmo204202940204i} and~\eqref{kfejifei} (for the estimates of $\mathcal{L}_{\partial_v}$ derivatives of $\eta$ and $\underline{\omega}$ as $v \to 0$ and, after swapping $u$ and $v$ as in Remark~\ref{u0canwedoityest}, for the estimates for $\mathcal{L}_{\partial_v}$ derivatives of $\underline{\eta}$ and $\omega$ as $u \to 0$),  the local existence theory for self-similar metrics from~\cite{scaleinvariant} \emph{mutatis mutandis}, and the fact that our assumptions on $\left(\mathcal{M},g\right)$ will exactly imply that all of the relevant norms from Section 7 of~\cite{nakedone} are finite and sufficiently small along $\frac{v}{-u} = \tilde{b}$ when applied to $\left(\mathcal{M},g\right)$. 
\end{proof}

We now state two corollaries of Theorem~\ref{spacespaceself}. These corollaries will refer to and assume familiarity with our previous works~\cite{scaleinvariant} and~\cite{nakedone}. The first corollary concerns the global behavior of Fefferman--Graham spacetimes (in the spacetime region $\{v > 0\}$).

\begin{corollary}\label{afirstcororor}Let $N \in \mathbb{Z}_{>0}$ be sufficiently large, and let $\left(\slashed{g}_0\right)_{AB}$ be a Riemannian metric on $\mathbb{S}^2$ and $h_{AB}$ be a $\slashed{g}_0$-trace free symmetric $(0,2)$-tensor on $\mathbb{S}^2$ so that $\left\vert\left\vert \slashed{g}_0 - \mathring{\slashed{g}}\right\vert\right\vert_{\mathring{H}^N} + \left\vert\left\vert h\right\vert\right\vert_{\mathring{H}^N}$ is sufficiently small, where, as usual, $\mathring{\slashed{g}}$ denotes a choice of a round metric on $\mathbb{S}^2$. Then, applying Theorem 1.3 of~\cite{scaleinvariant}, let $\left(\mathcal{M},g\right)$ denote the corresponding Fefferman--Graham spacetime given in double-null coordinates 
\[\left\{\left(u,v,\theta^A\right) \in (-\infty,0)\times [0,\infty)\times \mathbb{S}^2 : 0 \leq \frac{v}{-u} < \delta\right\},\]
for some $\delta \gtrsim 1$, where
\[\lim_{(u,v) \to (-1,0)}\slashed{g} = \slashed{g}_0,\qquad \lim_{(u,v)\to (-1,0)}\hat{\chi}_{AB} = h_{AB}.\]

Then, in a dilation invariant neighborhood of $\frac{v}{-u} = \delta/2$, $\left(\mathcal{M},g\right)$ is in the $\left(\epsilon,\tilde{N}\right)$-small spacelike self-similar regime for suitable $\tilde{N}$, and Theorem~\ref{spacespaceself} may be applied to describe the asymptotic behavior of the maximal development of $\left(\mathcal{M},g\right)$. 
\end{corollary}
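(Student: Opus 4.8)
The plan is to show that, in a dilation invariant neighborhood of $\{\frac{v}{-u} = \delta/2\}$, the Fefferman--Graham solution $\left(\mathcal{M},g\right)$ furnished by Theorem 1.3 of~\cite{scaleinvariant} already satisfies the hypotheses of the converse Lemma~\ref{0990239023}, after which the conclusion is an immediate appeal to Theorem~\ref{spacespaceself}. The point is that the genuine work has all been done: Theorem~\ref{spacespaceself} only needs a spacetime in the small spacelike self-similar regime of Definition~\ref{defdefdefdefdeflko} as input, and Lemma~\ref{0990239023} is precisely the tool that recognizes such a regime from double-null data.

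First I would recall that the metric produced by Theorem 1.3 of~\cite{scaleinvariant} is a self-similar vacuum metric already presented in the self-similar double-null gauge with the shift in the $u$-direction, with $K = u\partial_u + v\partial_v$, on the region $0 \leq \frac{v}{-u} < \delta$. The quantitative a priori estimates underlying that construction guarantee that, for $\left(\slashed{g}_0,h\right)$ sufficiently small, the metric remains uniformly close to self-similar Minkowski space throughout its region of existence. Since self-similar Minkowski space corresponds in this gauge to $\log\Omega = 0$, $b = 0$, and $\slashed{g}_{AB} = \left(v-u\right)^2\mathring{\slashed{g}}_{AB}$, this control translates into a bound of exactly the form $\left\vert\left\vert \left(\log\Omega,b,\slashed{g}_{AB}-\left(v-u\right)^2\mathring{\slashed{g}}_{AB}\right)|_{u=-1}\right\vert\right\vert_{C^{\tilde{N}}} \lesssim \epsilon$ on any compact $\frac{v}{-u}$-subinterval of $\left(0,\delta\right)$.

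Next I would fix $\lambda \doteq \delta/2$ and choose $\tilde{c} > 0$ small enough that $\left(\lambda-\tilde{c},\lambda+\tilde{c}\right) \subset \left(0,\delta\right)$. Restricted to the dilation invariant region where $\frac{v}{-u} \in \left(\lambda-\tilde{c},\lambda+\tilde{c}\right)$, the metric is a $C^{\tilde{N}}$ self-similar double-null metric---being bounded away from the cone $\{v=0\}$, it avoids all the degeneracies of the homothetic-to-double-null transition---and it satisfies precisely the smallness condition~\eqref{32oijoi991} on the slice $\{u=-1\}$. Lemma~\ref{0990239023} then applies verbatim and places this restriction of $\left(\mathcal{M},g\right)$ in the $\left(\epsilon,N\right)$-small spacelike self-similar regime of Definition~\ref{defdefdefdefdeflko} for some $N \gtrsim \tilde{N}$; note in particular that the everywhere-spacelike property of $K$ required by Definition~\ref{3901o1111} is not checked separately but is delivered automatically by the smallness assumption through Lemma~\ref{0990239023}. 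With the small spacelike regime established, Theorem~\ref{spacespaceself} applies directly and describes the maximal development both to the past of $\{\frac{v}{-u}=\lambda\}$, down toward $\{v=0\}$ where the expansions of Theorem~\ref{theoremexpand} govern the behavior, and to its future, up toward $\{u=0\}$.

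The main obstacle is the identification carried out in the second paragraph: one must confirm that the closeness-to-Minkowski norm controlled by Theorem 1.3 of~\cite{scaleinvariant} does not degenerate as $\frac{v}{-u}$ increases away from $\mathcal{H}_0$ toward $\delta/2$, and that it dominates the specific $C^{\tilde{N}}$ norm appearing in~\eqref{32oijoi991}. This amounts to unwinding the norms used in~\cite{scaleinvariant} and observing that the uniform smallness is built into the existence statement over the entire region $0 \leq \frac{v}{-u} < \delta$ rather than being localized near the cone. Once this matching of norms is made, the remainder is a direct invocation of Lemma~\ref{0990239023} followed by Theorem~\ref{spacespaceself}, with no further estimates required.
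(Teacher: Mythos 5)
Your proposal is correct and follows essentially the same route as the paper: the paper's proof likewise reduces the claim, via Lemma~\ref{0990239023}, to checking the smallness condition~\eqref{32oijoi991} along $\{\frac{v}{-u} = \delta/2\}$, and then observes that this smallness is an immediate consequence of the proof of Theorem 1.3 of~\cite{scaleinvariant}. Your additional remarks on matching the norms of~\cite{scaleinvariant} with the $C^{\tilde{N}}$ norm in~\eqref{32oijoi991} simply spell out what the paper leaves implicit.
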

\begin{proof}In view of Lemma~\ref{0990239023}, it suffices to check that the smallness condition of~\eqref{32oijoi991} holds along the hypersurface $\frac{v}{-u} = \frac{\delta}{2}$. This smallness is however an immediate consequence of the proof of Theorem 1.3 in~\cite{scaleinvariant}.
\end{proof}

Our second corollary concerns the asymptotic behavior of the naked singularity exterior constructed in~\cite{nakedone} and also the notion of self-similar extraction from~\cite{scaleinvariant}.
\begin{corollary}\label{nakedcorcor}Let $\left(\mathcal{M},g\right)$ denote the naked singularity exterior solution constructed in Theorem 1 of~\cite{nakedone}. For every $\lambda > 0$, we let $\Phi_{\lambda}$ be the map $\left(u,v,\theta^A\right) \mapsto \left(\lambda u,\lambda v,\theta^A\right)$. Then,
by applying the proof of Theorem 1.2 from~\cite{scaleinvariant} \emph{mutatis mutandis} to the rescaled metrics $g_{\lambda} \doteq \lambda^{-2}\Phi_{\lambda}^*g$, we will have that $g_{\lambda}$ converges as $\lambda \to \infty$ to a self-similar double-null metric $\left(\mathcal{M}_{\infty},g_{\infty}\right)$. Furthermore, $\left(\mathcal{M}_{\infty},g_{\infty}\right)$ will be in the $\left(\epsilon,N\right)$-small spacelike self-similar regime for suitable $\epsilon$ and $N$, and thus Theorem~\ref{spacespaceself} may be applied to describe the asymptotic behavior of the maximal development of $\left(\mathcal{M}_{\infty},g_{\infty}\right)$. 
\end{corollary}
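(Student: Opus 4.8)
The plan is to follow the self-similar extraction scheme of Theorem 1.2 of~\cite{scaleinvariant}, adapted to the naked singularity exterior $g_{\rm naked}$ of~\cite{nakedone}. The conceptual point underlying the definition of $g_{\lambda} \doteq \lambda^{-2}\Phi_{\lambda}^*g$ is a scaling identity: since $K = u\partial_u + v\partial_v$ is exactly the generator of the flow $\Phi_{\lambda}$, any genuinely self-similar metric (one satisfying $\mathcal{L}_Kg = 2g$) obeys $\Phi_{\lambda}^*g = \lambda^2 g$, so that $g_{\lambda}$ is $\lambda$-independent. The rescaled family $g_{\lambda}$ is therefore engineered to measure, and in the limit to annihilate, the deviation of $g_{\rm naked}$ from exact self-similarity. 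First I would recall from~\cite{nakedone} that, in the double-null gauge, $g_{\rm naked}$ is a small perturbation of a self-similar reference metric and that the relevant metric and Ricci-coefficient norms (those of Section~7 of~\cite{nakedone}) are finite and small on the region in which the extraction is carried out.

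The core analytic step is to show that $\{g_{\lambda}\}$ is precompact and converges as $\lambda\to\infty$. For this I would establish uniform-in-$\lambda$ bounds on $g_{\lambda}$ in the norms of~\cite{nakedone}; because those norms are scale-covariant under $\Phi_{\lambda}$, the smallness of the original solution propagates to uniform smallness of the $g_{\lambda}$. The argument of Theorem 1.2 of~\cite{scaleinvariant} then upgrades these uniform bounds to genuine convergence by proving that the family is Cauchy in $\lambda$ with a limiting metric $g_{\infty}$ of the same regularity. This is where I expect the bulk of the work to lie: the main obstacle is precisely controlling the decay, under the rescaling $\Phi_{\lambda}$, of the non-self-similar part of $g_{\rm naked}$, and the phrase \emph{mutatis mutandis} signals that one must re-run the commutator and transport estimates of~\cite{scaleinvariant} with $g_{\rm naked}$ in place of an exactly self-similar background.

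Once the limit $g_{\infty}$ exists, its self-similarity is automatic from the semigroup property $\Phi_{\lambda}\circ\Phi_{\mu} = \Phi_{\lambda\mu}$ of the rescaling. Indeed, for any $\mu > 0$,
\[
\Phi_{\mu}^*g_{\infty} = \lim_{\lambda\to\infty}\Phi_{\mu}^*\left(\lambda^{-2}\Phi_{\lambda}^*g\right) = \mu^2\lim_{\lambda\to\infty}(\mu\lambda)^{-2}\Phi_{\mu\lambda}^*g = \mu^2 g_{\infty},
\]
which is equivalent to $\mathcal{L}_Kg_{\infty} = 2g_{\infty}$; moreover $g_{\infty}$ is again a double-null metric since the form~\eqref{3k2oijo42} is preserved both by each $\Phi_{\lambda}$ and by the $C^0$-limit, with $K = u\partial_u + v\partial_v$.

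It remains to place $\left(\mathcal{M}_{\infty},g_{\infty}\right)$ in the $\left(\epsilon,N\right)$-small spacelike self-similar regime, which I would do via Lemma~\ref{0990239023}. Concretely, one checks the smallness condition~\eqref{32oijoi991} along a hypersurface $\{\tfrac{v}{-u} = \text{const}\}$ in the region of validity; this bound is inherited directly from the uniform smallness of the $g_{\lambda}$ established in the convergence step, and in particular it guarantees that $K$ remains everywhere spacelike there. Lemma~\ref{0990239023} then yields that $g_{\infty}$ is in the $\left(\epsilon,N\right)$-small spacelike self-similar regime for a suitable $\epsilon$ and $N$, whereupon Theorem~\ref{spacespaceself} applies verbatim to describe the asymptotic behavior of the maximal development of $\left(\mathcal{M}_{\infty},g_{\infty}\right)$ as $u\to 0$.
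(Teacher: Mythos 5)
Your proposal is correct and follows essentially the same route as the paper: the convergence and self-similarity of the limit are delegated to the proof of Theorem 1.2 of~\cite{scaleinvariant} applied \emph{mutatis mutandis}, and the only substantive verification is that the uniform smallness needed for the extraction and for placing $g_{\infty}$ in the $\left(\epsilon,N\right)$-small spacelike self-similar regime is supplied by the a priori estimates for $g_{\rm naked}$ (the paper cites Theorem 6.1 of~\cite{nakedone} for exactly this). Your additional remarks on the semigroup identity $\Phi_{\mu}^*g_{\infty} = \mu^2 g_{\infty}$ and the reduction to Lemma~\ref{0990239023} are consistent with, and merely elaborate on, what the paper leaves implicit.
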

\begin{proof}The needed estimates for $\left(\mathcal{M},g\right)$ follow from Theorem 6.1 from~\cite{nakedone}.
\end{proof}

\section{The Expansion: Proof of Theorem~\ref{theoremexpand} }\label{3ioj2oij42}
In this section we now explain how to generate our formal expansions and prove Theorem~\ref{theoremexpand}. Our starting point is a choice of, possibly $1$-sided, $\left(\epsilon,N\right)$-regular seed data in the sense of Definition~\ref{thisistheseed}. 

\subsection{Conventions and Notation}
We will work with double-null coordinate systems defined in regions $\mathcal{W} \times \mathbb{S}^2$, $\mathcal{W}^+ \times \mathbb{S}^2$, and $\mathcal{W}^- \times \mathbb{S}^2$, where
\[\mathcal{W}^+ \doteq \left\{ (u,v) \in (-\infty,0) \times (0,\infty) : 0 < \frac{v}{-u} < 1\right\}, \qquad \mathcal{W}^- \doteq \left\{ (u,v) \in (-\infty,0) \times (-\infty,0) : -1 < \frac{v}{-u} < 0\right\},\]
\[\mathcal{W} = \mathcal{W}^- \cup \mathcal{W}^+.\]
We will also denote by $\overline{\mathcal{W}}$, $\overline{\mathcal{W}^-}$, and $\overline{\mathcal{W}^+}$ the union of the various regions with the ray $\{u \in (-\infty,0)\text{ and }v = 0\}$.

The quantities $\left(\slashed{g}^{(\rm seed)},b^{(\rm seed)},\left(\Omega\underline{\omega}\right)^{(\rm seed)}\right)$ are initially defined along a sphere $\mathbb{S}^2$. We will now define these as suitable $\mathbb{S}^2_{u,v}$ tensors in $\overline{\mathcal{W}}\times \mathbb{S}^2$ as follows. We first identify the original sphere where they are defined with $\mathbb{S}^2_{-1,0}$ in  $\overline{\mathcal{W}}\times \mathbb{S}^2$. Then we extend $\left(\slashed{g}^{(\rm seed)},b^{(\rm seed)},\left(\Omega\underline{\omega}\right)^{(\rm seed)}\right)$ to the region $\{u = -1\} \cap \left(\overline{\mathcal{W}} \times \mathbb{S}^2\right)$ by declaring them to be independent of $v$. Finally, they may be extended to the entire region $\overline{\mathcal{W}} \times \mathbb{S}^2$ in the unique fashion dictated by the self-similar formulas~\eqref{scaleinvrelations2}. We next extend $\left(\left(\Omega^{-1} \hat{\chi}\right)^{(\rm seed,\pm)},\log\Omega^{(\rm seed,\pm)}\right)$  to $\mathcal{W}^{\pm}\times \mathbb{S}^2$ by identifying their domain with $\{u = -1\}$ and then using the self-similar formulas~\eqref{scaleinvrelations2}. More generally, throughout this section we will often define metric coefficients or Ricci coefficients only along the hypersurface $\{u=-1\}$ and then consider them automatically extended in the unique way determined by self-similarity. \underline{We will often not say so explicitly.} 

We next note that for each new term that we define in our expansions, we will generally be able to estimate strictly less angular derivatives of the term than we did for the previous terms. Furthermore, the further out in the expansion one desires to compute, the more angular derivatives one must control of the seed data and the smaller one must potentially take $\epsilon$. We will not attempt to track precisely the derivative loss; it will instead be clear that at each step of the expansion, we can, in principle, control as many angular derivatives as we would like if we suitable increase the regularity of the seed data and take $\epsilon$ smaller. Of course, any finite number of terms in the expansion will only require finitely many angular derivatives of the seed data to be regular. Given these considerations, throughout this section we will write bounds for $\left\vert\left\vert \cdot \right\vert\right\vert_{C^j}$ norms with the understanding that $j$ may be changing line to line, but that, conditional on increasing the regularity of the initial data and possibly taking $\epsilon$ smaller, we could take $j$ as high as we would like for any given term in the expansion. We will also have constants $D(j)$ that appear in many of our estimates. Our convention will be that each $D(j)$ may be distinct from the other $D(j)$'s that appear; instead it merely denotes some non-negative $j$ dependent constant relevant only for the specific estimate that is being written. For any fixed value of $N$, there will only be finitely many appearances of these $D(j)$ constants.

The general strategy will be as follows: For each metric component and a subset of the Ricci coefficients, we will multiply by a suitable power of the lapse to define a quantity $\phi$. Then we will define a sequence of terms $\{\phi^{(j)}\}_{j=0}^{J_0}$ for some $1 \ll J_0 \ll N$. Restricted to $\{u = -1\}$, each $\phi^{(j)}$  will decay uniformly as $v\to 0$ at least as fast as $v^{j-O_j\left(\epsilon\right)}$.  We emphasize that the constant in the $O_j$ may grow as $j \to \infty$. (This is why if we want to compute $P$ terms in our expansion, for some $P \gg 1$, then we need for the smallness of $\epsilon$ to depend on $P$.) We then will set $\phi[j] \doteq \phi^{(0)} + \cdots +\phi^{(j)}$. We will refer to $\phi[j]$ as the \emph{$j$th expansion for $\phi$}. If $\phi[j-1]$ has already been defined, then a definition of either $\phi^{(j)}$ or $\phi[j]$ will determine the other.  Lastly, we note that our estimates on the $\phi[j]$ will be so that given an expansion for $\phi$, we may immediately obtain a corresponding expansion for any (finite number of) angular derivative applied to $\phi$ by simply applying the (finite number of) angular derivative to each term in the expansion. 

Once we have computed $\phi[j]$ for each metric component $\phi$, we may then associate a corresponding spacetime metric $g[j]$ defined via~\eqref{iooioi284} by replacing the metric components with the corresponding $\phi[j]$. Associated to each $g[j]$ are the various set of Ricci coefficients $\psi$. When we want to refer to the $\psi$ associated to $g[j]$, we will write $\psi\left(g[j]\right)$. We note that it will generally \emph{not} be the case that $\psi[j] = \psi\left(g[j]\right)$. However, we will have $\left(\psi[j] -\psi\left(g[j]\right)\right)|_{u=-1} = O\left(v^{j-O\left(\epsilon\right))}\right)$.

In order to track the underlying smallness of our expansions, it will be convenient to work with the quantities:
\[\widetilde{\slashed{g}} \doteq \slashed{g} -\left(v+1\right)^2\mathring{\slashed{g}},\qquad \widetilde{\Omega^{-1}{\rm tr}\chi} \doteq \Omega^{-1}{\rm tr}\chi - 2\left(v+1\right)^{-1},\]
instead of $\slashed{g}$ and $\Omega^{-1}{\rm tr}\chi$. Our convention will be that given a known $j$th expansion for $\widetilde{\slashed{g}}$ or $\widetilde{\Omega^{-1}{\rm tr}\chi}$, we then define
\[\slashed{g}[j] \doteq \widetilde{\slashed{g}}[j] + \left(v+1\right)^2\mathring{\slashed{g}},\qquad \left(\Omega^{-1}{\rm tr}\chi\right)[j] \doteq \widetilde{\Omega^{-1}{\rm tr}\chi}[j] + 2(v+1)^{-1}.\]

In view of the relations~\eqref{20asdww},~\eqref{okodwok22}, and~\eqref{blahblah2} we will be able to remove the Ricci coefficients $\underline{\eta}$, $\zeta$, $\Omega{\rm tr}\underline{\chi}$, $\Omega\hat{\underline{\chi}}$, and $\omega$ from the null-structure equations and instead we will compute expansions only for $\left(\Omega,b,\slashed{g},\eta,\Omega^{-1}{\rm tr}\chi,\Omega^{-1}\hat{\chi}\right)$ (and any finite number of angular derivatives thereof). After all of our expansions are defined, one may, of course, revisit~\eqref{20asdww},~\eqref{okodwok22}, and~\eqref{blahblah2} to define expansions for the omitted quantities.

Finally, it is convenient to introduce the notation 
\[\left\vert\left\vert f\right\vert\right\vert_{\tilde{C}^j\left(\mathbb{S}^2\right)} \doteq \sum_{j_1+j_2 \leq j}\left\vert\left\vert \left(v\mathcal{L}_{\partial_v}\right)^{j_1}f\right\vert\right\vert_{C^{j_2}\left(\mathbb{S}^2\right)}.\]

\subsection{Computing the Expansion}
We now provide the details for our expansions.
\begin{theorem}\label{2omo2o492}Let $k \in \mathbb{Z}_{\geq 1}$, $J \in \mathbb{Z}_{ >0 }$ be any sufficiently large integer depending on $k$, and assume that $N$ is sufficiently large and $\epsilon$ is sufficiently small depending on both $k$ and $J$. Let 
\[\left(\slashed{g}^{({\rm seed})},b^{({\rm seed})},\left(\Omega\underline{\omega}\right)^{({\rm seed})},\left(\Omega^{-1} \hat{\chi}\right)^{(\rm seed,+)},\log\Omega^{(\rm seed,+)}\right)\]
be $1$-sided $\left(\epsilon,N\right)$-seed data. Then, for every integer $i \in [0,k-1]$ there exists Ricci coefficient expansions $\eta[i]$, $\widetilde{\Omega^{-1}{\rm tr}\chi}[i]$, $\left(\Omega^{-1}\hat{\chi}\right)[i]$ and $\left(\Omega\underline{\omega}\right)[i]$ defined as suitable $\mathbb{S}^2_{u,v}$ tensors on $\mathcal{W}^+\times \mathbb{S}^2$,  for every integer $i \in [0,k]$ there exists metric component expansions $\log\Omega[i]$, $b[i]$, and $\widetilde{\slashed{g}}[i]$ defined as suitable $\mathbb{S}^2_{u,v}$ tensors on $\mathcal{W}^+\times \mathbb{S}^2$, and for every integer $i \in [0,k]$ there is a metric $g[i]$ on $\mathcal{W}^+\times \mathbb{S}^2$ defined via~\eqref{iooioi284} by replacing the metric components with the corresponding metric component expansions so that $g[i]$ is a self-similar double-null metric which is $\left(\epsilon,N\right)$-regular up to $\{v = 0\}$ and so that the following hold \underline{along $\{u=-1\}$}:
\begin{enumerate}
	\item For every $i$ where the corresponding Ricci or metric coefficient is defined:
	\begin{align}\label{3kj2jkhkj23}
	&\left\vert\left\vert\left(\Omega\underline{\omega}[i]-\Omega\underline{\omega}^{(\rm seed)},\eta[i] - \eta^{(0)},\slashed{g}[i] - \slashed{g}^{(\rm seed)},b[i] - b^{(\rm seed)}\right)\right\vert\right\vert_{\tilde{C}^J\left(\mathbb{S}^2_{-1,v}\right)}
	 \lesssim \epsilon \left|v\right|^{1-\hat{C}\epsilon},
	\end{align}
	\begin{equation}\label{2klj3lkjl2}
	\left\vert\left\vert \left(\Omega^{-1}\hat{\chi}[i]-\left(\Omega^{-1}\hat{\chi}\right)^{(\rm seed,+)},\log\Omega[i] - \log\Omega^{(\rm seed,+)}\right)\right\vert\right\vert_{\tilde{C}^J\left(\mathbb{S}^2_{-1,v}\right)} \lesssim \epsilon \left|v\right|^{1-\hat{C}\epsilon},
	\end{equation}
	where $\eta^{(0)}$ is defined as in Definition~\ref{thisistheseed}.
	\item For every $0 \leq i \leq k-1$:
	\[\left\vert\left\vert \left(\eta^{(i)},\widetilde{\Omega^{-1}{\rm tr}\chi}^{(i)},\left(\Omega^{-1}\hat{\chi}\right)^{(i)}\right)\right\vert\right\vert_{\tilde{C}^J\left(\mathbb{S}^2_{-1,v}\right)} \lesssim \epsilon v^{i-D(i)\epsilon}.\]
	\item For every $0 \leq i \leq k$:
	\[\left\vert\left\vert \left(\left(\Omega\underline{\omega}\right)^{(i)},\left(\log\Omega\right)^{(i)},b^{(i)},\widetilde{\slashed{g}}^{(i)}\right)\right\vert\right\vert_{\tilde{C}^J\left(\mathbb{S}^2_{-1,v}\right)} \lesssim \epsilon v^{i-D(i)\epsilon}.\]
	\item  We have
	 \begin{align*}
	&\Big\vert\Big\vert\Big(\eta[k-1] - \eta\left(g[k]\right),\widetilde{\Omega^{-1}{\rm tr}\chi}[k-1]- \widetilde{\Omega^{-1}{\rm tr}\chi}\left(g[k]\right),
	\\ \nonumber &\qquad \qquad \qquad \qquad \left(\Omega^{-1}\hat{\chi}\right)[k-1]-\left(\Omega^{-1}\hat{\chi}\right)\left(g[k]\right)\Big)\Big\vert\Big\vert_{\tilde{C}^J\left(\mathbb{S}^2_{-1,v}\right)} \lesssim \epsilon v^{k-1-D(k-1)\epsilon},
	 \end{align*}
	 \[\left\vert\left\vert\left( \left(\Omega\underline{\omega}\right)[k] - \left(\Omega\underline{\omega}\right)\left(g[k]\right)\right)\right\vert\right\vert_{\tilde{C}^J\left(\mathbb{S}^2_{-1,v}\right)} \lesssim \epsilon v^{k - D(k)\epsilon}.\]
	 \item We have
	 \[\left\vert\left\vert \left({\rm Ric}_{33}\left(g[k]\right),{\rm Ric}_{3A}\left(g[k]\right),\left(R+{\rm Ric}_{34}\right)\left(g[k]\right)\right)\right\vert\right\vert_{\tilde{C}^J\left(\mathbb{S}^2_{-1,v}\right)} \lesssim \epsilon v^{1-D(1)\epsilon},\]
\[\left\vert\left\vert \left({\rm Ric}_{44}\left(g[k]\right),  {\rm Ric}_{4A}\left(g[k]\right) \right)\right\vert\right\vert_{\tilde{C}^J\left(\mathbb{S}^2_{-1,v}\right)} \lesssim \epsilon v^{(k-1) - D(k-1)\epsilon},\]
\[\left\vert\left\vert \left({\rm Ric}_{34}\left(g[k]\right),  \widehat{\rm Ric}_{AB}\left(g[k]\right) \right)\right\vert\right\vert_{\tilde{C}^J\left(\mathbb{S}^2_{-1,v}\right)}  \lesssim \epsilon v^{k - D(k-1)\epsilon}.\]
\end{enumerate}

Furthermore, if $g$ is any $1$-sided self-similar double-null metric which is $\left(\epsilon,N\right)$-regular up to $\{v = 0\}$ and which solves the Einstein vacuum equations for $\{v = 0\}$, then there exists $1$-sided seed data and corresponding metric $\hat{g}[\hat{N}]$ produced by the formal expansion procedure described above so that in the self-similar double-null coordinate system
	\begin{equation}\label{2klj3lkj2o9}
	\left\vert\left\vert \left(g_{\alpha\beta}-\hat{g}_{\alpha\beta}\right)\right\vert\right\vert_{\tilde{C}^J\left(\mathbb{S}^2_{-1,v}\right)} \lesssim_{\hat{N}} \epsilon \left|v\right|^{\hat{N}-\hat{C}\epsilon},
	\end{equation}
	for some $\hat{N} \gtrsim N$, where the implied constant is independent of $N$ and $\epsilon$. 	
	
	Finally, we note that the modifications needed are straightforward to prove a version of this theorem for (non-$1$ sided) $\left(\epsilon,N\right)$-regular seed data or for $1$-sided solutions defined on $\mathcal{W}^-\times \mathbb{S}^2$ instead of $\mathcal{W}^+\times \mathbb{S}^2$.
	\end{theorem}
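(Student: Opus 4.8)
The plan is to construct the expansion by induction on the order $i$, at each stage solving a fixed list of transport and elliptic equations whose right-hand sides are assembled from the already-constructed lower-order terms, and then to deduce the concluding approximation statement~\eqref{2klj3lkj2o9} by reading off seed data from a given solution and comparing. For the base case I would use the seed data directly: it supplies the $\{v=0\}$ values of $\log\Omega$, $b$, $\slashed{g}$, and $\Omega\underline{\omega}$ together with the constraint~\eqref{m2om3o2}, while the remaining base values $\eta^{(0)}$ and $(\Omega^{-1}{\rm tr}\chi)|_{v=0}$ are produced by solving the elliptic equation~\eqref{3ioj2oij4oi2} on $\mathbb{S}^2$ through Proposition~\ref{somestuimdie} (and its trace analogue). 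Self-similarity via~\eqref{scaleinvrelations2} then fixes the $u$-dependence of every quantity, so all work may be done along $\{u=-1\}$, and the relations~\eqref{20asdww},~\eqref{okodwok22},~\eqref{blahblah2} let me eliminate $\underline{\eta}$, $\zeta$, $\Omega{\rm tr}\underline{\chi}$, $\Omega\hat{\underline{\chi}}$, and $\omega$, leaving only the seven propagated quantities.

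For the inductive step I would distinguish three families. The coefficients $\eta$, $\Omega^{-1}{\rm tr}\chi$, $b$, $\slashed{g}$, $\Omega\underline{\omega}$ satisfy $\nabla_4$ equations which, after conjugation by the correct power of $\Omega$ to remove the non-integrable factor $\omega$, take the schematic form~\eqref{fourofoureqn} (with explicit right-hand sides from~\eqref{2lmo204202940204i} for $\eta$ and~\eqref{kfejifei} for $\Omega\underline{\omega}$); these I solve by integrating in $v$ from $\{v=0\}$, so that an order-$(i-1)$ right-hand side of size $\epsilon v^{i-1-D\epsilon}$ yields a new term $\phi^{(i)}$ of size $\epsilon v^{i-D(i)\epsilon}$. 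The lapse $\log\Omega$ and the shear $\Omega^{-1}\hat{\chi}$ instead solve the degenerate transport equations~\eqref{lapfromsecond},~\eqref{123456789second}, which I would treat with Lemma~\ref{laptrans}; the degeneracy of $-\tfrac{v}{u}\mathcal{L}_{\partial_v}+\mathcal{L}_b$ at $v=0$ is exactly what forces the mild blow-up encoded in the $D(i)\epsilon$, and the kernel freedom of these operators is what the seed records as $\log\Omega^{(\rm seed)}$ and $(\Omega^{-1}\hat{\chi})^{(\rm seed)}$. Summing the resulting $v^{-D\epsilon}$-type series gives the bounds of items (1)--(4), the accumulation of the losses $D(i)\epsilon$ being precisely why $\epsilon$ must be small depending on $k$ and $J$.

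The per-component Ricci estimates of item (5) then follow by tracking which equation each component is tied to. Since we solve the $\nabla_4$ equations for $\Omega^{-1}{\rm tr}\chi$ and $\eta$ through order $k$, the components ${\rm Ric}_{44}$ and ${\rm Ric}_{4A}$ vanish to order $v^{k-1}$; since~\eqref{kfejifei} and~\eqref{kwdkodwok23dg} are solved through order $k$, the components ${\rm Ric}_{34}$ and $\widehat{\rm Ric}$ vanish to order $v^{k}$; whereas ${\rm Ric}_{33}$, ${\rm Ric}_{3A}$, and $R+{\rm Ric}_{34}$ are governed by the self-similar identities~\eqref{eqnyaydivb},~\eqref{3pk2o294},~\eqref{2o4oijoiouoiu2}, which the construction forces to vanish only at $\{v=0\}$ through the seed constraint and the choices of $\eta^{(0)}$ and $(\Omega^{-1}{\rm tr}\chi)|_{v=0}$, hence only the weaker rate $v^{1-D(1)\epsilon}$. (The full uniform decay of the entire Ricci tensor is not claimed here; it is obtained separately in the proof of Theorem~\ref{theoremexpand} by a divergence-free Einstein-tensor argument.) The discrepancy between $\psi[j]$ and $\psi(g[j])$ in item (4) I would estimate by the same transport bounds, noting that substituting the expansion coefficients for the exact Ricci coefficients of $g[j]$ introduces only next-order errors.

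For the final approximation statement, given a vacuum solution $g$ that is $(\epsilon,N)$-regular up to $\{v=0\}$, I would define its seed data as the $v\to 0$ limits of $(\slashed{g},b,\Omega\underline{\omega})$ — which satisfy~\eqref{m2om3o2} because $g$ is Ricci-flat and the $\tfrac{v}{u}$-weighted terms in~\eqref{eqnyaydivb} drop out at $v=0$ — together with the restrictions of $\log\Omega$ and $\Omega^{-1}\hat{\chi}$ as the kernel elements. Running the expansion on this seed data and subtracting, the difference $g-\hat{g}[\hat N]$ solves the same transport equations, with inhomogeneity controlled by the Ricci discrepancy (of order $v^{\hat N}$) plus terms linear in the difference itself which are absorbed by smallness and Grönwall, and with $\{v=0\}$ data matching to high order by construction; Lemma~\ref{laptrans} and integration in $v$ then close~\eqref{2klj3lkj2o9}. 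The hardest point, I expect, is exactly this comparison: because the degenerate operators have nontrivial kernel, uniqueness holds only once the kernel elements are correctly matched to $g$, so one must verify that the extracted seed data is genuinely admissible (the limits exist with the required regularity and obey~\eqref{m2om3o2}) and that the $D(i)\epsilon$-losses accumulated over $\hat N$ steps can be absorbed by taking $\epsilon$ small relative to $\hat N$ without erasing the gain of a full power of $v$ per order.
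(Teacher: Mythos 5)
Your proposal follows essentially the same route as the paper's proof: base case from the seed data plus the elliptic solves for $\eta^{(0)}$ and $\widetilde{\Omega^{-1}{\rm tr}\chi}^{(0)}$ via Proposition~\ref{somestuimdie}, an inductive step alternating integration in $v$ from $\{v=0\}$ for the lapse-conjugated $\nabla_4$ quantities with Lemma~\ref{laptrans} for the degenerate transport equations governing $\log\Omega$ and $\Omega^{-1}\hat{\chi}$ (whose kernel freedom is exactly what the seed data records), the same accounting of which Ricci components vanish to which order, and the same extraction of seed data from the $v\to 0$ limits of a given vacuum solution to obtain~\eqref{2klj3lkj2o9}. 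The only cosmetic difference is that you spell out the Gr\"onwall absorption in the final comparison slightly more explicitly than the paper, which simply observes that the expansion procedure reproduces the inductive integration of the equations the given solution itself satisfies.
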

\begin{proof}

We start by defining $\log\Omega^{(0)} \doteq \log\Omega^{(\rm seed,+)}$. In view of Lemma~\ref{laptrans}, this will satisfy the following estimate:
\begin{equation}\label{2om4o2029}
\left\vert\left\vert \log\Omega^{(0)}\right\vert\right\vert_{\tilde{C}^0\left(\mathbb{S}^2_{-1,v}\right)} \lesssim \epsilon \left(1+\left|\log(v)\right|\right), \qquad \left\vert\left\vert \log\Omega^{(0)}\right\vert\right\vert_{\tilde{C}^j\left(\mathbb{S}^2_{-1,v}\right)} \lesssim_j \epsilon v^{-D(j)\epsilon}
\end{equation}
We next set $b^{(0)} \doteq b^{(\rm seed)}$ and $\slashed{g}^{(0)} \doteq \slashed{g}^{(\rm seed)}$.

We may now define a metric $g[0]$ on $\mathcal{W}^+\times \mathbb{S}^2$ in the double-null form~\eqref{iooioi284} by substituting $\left(\Omega^{(0)},b^{(0)},\slashed{g}^{(0)}\right)$ for $\left(\Omega,b,\slashed{g}\right)$. We note that we do \emph{not} expect that the metric $g[0]$  extends continuously to $v = 0$ due to the degeneration of the estimate~\eqref{2om4o2029} as $v\to 0$. Finally, we observe that since $\chi\left(g[0]\right)$ clearly vanishes, it follows from~\eqref{eqnyaydivb} of Proposition~\ref{somanyformulassolittletime} and~\eqref{m2om3o2} that
\[{\rm Ric}_{33}\left(g[0]\right) = 0.\]

We now turn to the next order in the expansion. We define $\eta^{(0)}$ as in Definition~\ref{thisistheseed}. In view of Proposition~\ref{somestuimdie}  and Sobolev inequalities we will have
\[\left\vert\left\vert \eta^{(0)}\right\vert\right\vert_{\tilde{C}^j\left(\mathbb{S}^2\right)} \lesssim \epsilon.\]

We next define a function $\widetilde{\Omega^{-1}{\rm tr}\chi}^{(0)}$ along $\mathbb{S}^2$ by uniquely solving the equation
\begin{align}\label{2om3om2}
&\mathcal{L}_b\widetilde{\Omega^{-1}{\rm tr}\chi}^{(0)}+\widetilde{\Omega^{-1}{\rm tr}\chi}^{(0)}\left(-1+\slashed{\rm div}b\right) = -2\left(K-1\right) - 2\slashed{\rm div}b+ 2\slashed{\rm div}\eta + 2\left|\eta\right|^2,
\end{align}
where we substitute $\slashed{g} = \slashed{g}^{(\rm seed)}$, $\Omega\underline{\omega} = \left(\Omega\underline{\omega}\right)^{(\rm seed)}$, $b = b^{(\rm seed)}$, and $\eta = \eta^{(0)}$. We have obtained this equation by formally taking $(u,v) = (-1,0)$ in~\eqref{2o4oijoiouoiu2} from Proposition~\ref{somanyformulassolittletime}, using~\eqref{20asdww} from Proposition~\ref{somanyformulassolittletime}, and then dropping the term $R+{\rm Ric}_{34}$. We can use Proposition~\ref{somestuimdie} to solve for $\widetilde{\Omega^{-1}{\rm tr}\chi}^{(0)}$ which will then satisfy
\[\left\vert\left\vert\widetilde{\Omega^{-1}{\rm tr}\chi}^{(0)}\right\vert\right\vert_{\tilde{C}^j\left(\mathbb{S}^2\right)} \lesssim \epsilon.\]
We then extend $\widetilde{\Omega^{-1}{\rm tr}\chi}$ to $\{u = -1\}$ by declaring it to be independent of $v$.

We set $\left(\Omega^{-1}\hat{\chi}\right)^{(0)} \doteq \left(\Omega^{-1}\hat{\chi}\right)^{(\rm seed,+)}$, and have then the estimate
\begin{equation}\label{2ok4om2}
\left\vert\left\vert \left(\Omega^{-1}\hat{\chi}\right)^{(0)}\right\vert\right\vert_{\tilde{C}^j\left(\mathbb{S}^2_{-1,v}\right)} \lesssim \epsilon v^{-\epsilon D(j)},
\end{equation}
for suitable non-negative constants $D(j)$. (Note that $D(0)$ may be positive here.)

Now we turn to next term in the expansion of $\left(\Omega\underline{\omega}\right)$. We define, along $\{u = -1\}$ a function $\left(\Omega\underline{\omega}\right)^{(1)}$ by
\begin{align}\label{3omoj2ijo3}
&\left(\Omega\underline{\omega}\right)^{(1)} = \int_0^v\Omega^2\Bigg[\mathcal{L}_b\left(\Omega^{-1}{\rm tr}\chi\right) + \frac{1}{2}\left(\Omega^{-1}{\rm tr}\chi\right)\slashed{\rm div}b+ \frac{1}{2}\left|\eta\right|^2 
\\ \nonumber &\qquad \qquad - \eta\cdot\left(-\eta +2\slashed{\nabla}\log\Omega\right)-4\left(\Omega\underline{\omega}\right)\left(\Omega^{-1}{\rm tr}\chi\right) - 2\slashed{\rm div}\eta - 2\left|\eta\right|^2 + \left(\Omega^{-1}\hat{\chi}\right)\cdot\left(v\Omega^2\left(\Omega^{-1}\hat{\chi}\right)+\frac{1}{2}\slashed{\nabla}\hat{\otimes}b\right)\Bigg]\, dv,
\end{align}
where we substitute $\Omega = \Omega^{(0)}$, $\slashed{g} = \slashed{g}^{(0)}$, $\Omega\underline{\omega} = \left(\Omega\underline{\omega}\right)^{(0)}$, $b = b^{(0)}$, $\eta = \eta^{(0)}$, $\left(\Omega^{-1}{\rm tr}\chi\right) = \left(\Omega^{-1}{\rm tr}\chi\right)^{(0)}$, and $\Omega^{-1}\hat{\chi} = \left(\Omega^{-1}\hat{\chi}\right)^{(0)}$. We have obtained this equation by formally integrating along $\{u=-1\}$ the equation~\eqref{kfejifei} from Lemma~\ref{2kn3kn2i39}, using~\eqref{20asdww} from Proposition~\ref{somanyformulassolittletime}, and then dropping the term $\Omega^2{\rm Ric}_{34}$. It is immediate that we have the estimate
\[\left\vert\left\vert \left(\Omega\underline{\omega}\right)^{(1)}\right\vert\right\vert_{\tilde{C}^j\left(\mathbb{S}^2_{-1,v}\right)} \lesssim \epsilon v^{1-D(j)\epsilon}.\]

We are now ready to update the components of our metric with their next leading order term. We define, along $\{u=-1\}$, a $1$-parameter family of vectorfields $b^{(1)}$ on $\mathbb{S}^2$  in the coordinate frame by
\begin{equation}\label{om4om2o}
\left(b^{(1)}\right)^A = -4\int_0^v\Omega^2\left(\eta^A-\slashed{\nabla}^A\log\Omega\right)\, dv,
\end{equation}
where we substitute $\Omega = \Omega^{(0)}$, $\eta = \eta^{(0)}$, and $\slashed{g} = \slashed{g}^{(0)}$. We then have the estimate
\[\left\vert\left\vert b^{(1)}\right\vert\right\vert_{\tilde{C}^j\left(\mathbb{S}^2_{-1,v}\right)} \lesssim \epsilon v^{1-D(j)\epsilon}.\]

We next define, along $\{u=-1\}$, a $1$-parameter family of symmetric $(0,2)$-tensors $\widetilde{\slashed{g}}^{(1)}$ on $\mathbb{S}^2$ in the coordinate frame by
\begin{align}\label{2oom2o4o3}
\left(\widetilde{\slashed{g}}^{(1)}\right)_{AB} &= \int_0^v\Omega^2\left(\left(\widetilde{\Omega^{-1}{\rm tr}\chi}\right)\slashed{g}_{AB}+\left(\Omega^{-1}{\rm tr}\chi \right)\widetilde{\slashed{g}}_{AB} + \left(\Omega^{-1}\hat{\chi}\right)_{AB}\right)\, dv
+ 2\int_0^v\left(\Omega^2-1\right)\left(v+1\right)\mathring{\slashed{g}}_{AB}\, dv,
\end{align}
where we substitute all quantities on the right hand side with their $0$th expansion. We then have the estimate
\[\left\vert\left\vert \widetilde{\slashed{g}}^{(1)}\right\vert\right\vert_{\tilde{C}^j\left(\mathbb{S}^2_{-1,v}\right)} \lesssim \epsilon v^{1-D(j)\epsilon}.\]
Finally we use Lemma~\ref{laptrans} to define $\log\Omega^{(1)}$ to be the determined solution to
\begin{equation}\label{2mmo2}
\left(v\mathcal{L}_{\partial_v}+\mathcal{L}_{b^{(0)}}\right)\log\Omega^{(1)} = -\mathcal{L}_{b^{(1)}}\log\Omega^{(0)}+\left(\Omega\underline{\omega}\right)^{(1)},
\end{equation}
which decays to $0$ as $v\to 0$. We will then have the estimate
\[\left\vert\left\vert \log\Omega^{(1)}\right\vert\right\vert_{\tilde{C}^j\left(\mathbb{S}^2_{-1,v}\right)} \lesssim \epsilon v^{1-D(j)\epsilon}.\]

We then define $\log\Omega[1] \doteq \log\Omega^{(0)} + \log\Omega^{(1)}$, $b[1] \doteq b^{(0)} + b^{(1)}$, $\slashed{g}[1] \doteq \slashed{g}^{(0)} + \slashed{g}^{(1)}$,  and then obtain a corresponding double-null metric $g[1]$. One may easily then check that the following estimates are satisfied along $\{u=-1\}$:
\[\left\vert\left\vert \left[\left(\eta\left(g[1]\right) - \eta^{(0)}\right),\left(\Omega^{-1}\hat{\chi}\left(g[1]\right) - \left(\Omega^{-1}\hat{\chi}\right)^{(0)}\right),\left(\widetilde{\Omega^{-1}{\rm tr}\chi}\left(g[1]\right) - \widetilde{\Omega^{-1}{\rm tr}\chi}^{(0)}\right)\right]\right\vert\right\vert_{\tilde{C}^j\left(\mathbb{S}^2_{-1,v}\right)} \lesssim \epsilon v^{1-D(j)\epsilon}.\]
With these estimates in mind, we then also easily see from the way the expansions have been computed that
\begin{equation}\label{3om4om2o}
\left\vert\left\vert \left({\rm Ric}_{33},{\rm Ric}_{3A},\left(R+{\rm Ric}_{34}\right),\widehat{\rm Ric}_{AB},{\rm Ric}_{34}\right)\left(g[1]\right)\right\vert\right\vert_{\tilde{C}^j\left(\mathbb{S}^2_{-1,v}\right)} \lesssim \epsilon v^{1-D(j)\epsilon}.
\end{equation}

Now we come to the computation of the metric to second order. This will be the final case which requires a specific analysis, as the computation of the third order and higher expansions will be the same \emph{mutatis mutandis}. We start by defining $\eta_A^{(1)}$ in the coordinate frame along $\{u = -1\}$ via the formula
\begin{align}
\eta^{(1)}_A = \int_0^v\Omega^2&\Big[-\left(\Omega^{-1}\hat{\chi}\right)_{AB}\cdot\left(\eta^B-2\slashed{\nabla}^B\log\Omega\right)-\frac{1}{2}\left(\Omega^{-1}{\rm tr}\chi\right)\cdot\left(\eta_A-2\slashed{\nabla}_A\log\Omega\right)+\slashed{\nabla}^B\left(\Omega^{-1}\hat{\chi}\right)_{AB}
\\ \nonumber &\qquad \qquad \qquad -\frac{1}{2}\slashed{\nabla}_A\left(\Omega^{-1}{\rm tr}\chi\right)- \frac{1}{2}\left(\Omega^{-1}{\rm tr}\chi\right) \eta_A + \eta^B\left(\Omega^{-1}\hat{\chi}\right)_{AB}\Big]\, dv,
\end{align}
where on the right hand side we use the $0$th expansion of each quantity on the right hand side. We have obtained this equation by formally integrating along $\{u=-1\}$ the equation~\eqref{2lmo204202940204i} from Lemma~\ref{2om3mo020202}  and dropping the term $\Omega{\rm Ric}_{4A}$.  We immediately obtain the estimate
\[\left\vert\left\vert \eta^{(1)} \right\vert\right\vert_{\tilde{C}^j\left(\mathbb{S}^2_{-1,v}\right)} \lesssim \epsilon v^{1-D(j)\epsilon}.\]

Next we turn to $\widetilde{\Omega^{-1}{\rm tr}\chi}^{(1)}$. We define the function $\left(\Omega^{-1}{\rm tr}\chi\right)^{(1)}$ along $\{u = -1\}$ by the formula
\begin{align}\label{3ok4km2}
\widetilde{\Omega^{-1}{\rm tr}\chi}^{(1)} &\doteq -\int_0^v\Omega^2\left[2(v+1)^{-1}\widetilde{\Omega^{-1}{\rm tr}\chi} + \frac{1}{2}\left(\widetilde{\Omega^{-1}{\rm tr}\chi}\right)^2 + \left|\Omega^{-1}\hat{\chi}\right|^2\right]\, dv
\\ \nonumber &\qquad \qquad  - 2\int_0^v\left(\Omega^2-1\right)\left(v+1\right)^{-2}\, dv,
\end{align}
where on the right hand side we use the $0$th order expansions for the various quantities. We have obtained this equation by formally integrating along $\{u=-1\}$ the equation~\eqref{4trchi} (after conjugation of the equation with $\Omega^{-1}$),  dropping the term ${\rm Ric}_{44}$, and then writing the resulting equation in terms of $\widetilde{\Omega^{-1}{\rm tr}\chi}$.  This will satisfy the estimate
\[\left\vert\left\vert \widetilde{\Omega^{-1}{\rm tr}\chi}^{(1)}\right\vert\right\vert_{\tilde{C}^j\left(\mathbb{S}^2_{-1,v}\right)} \lesssim \epsilon v^{1-D(j)\epsilon}.\]

Now we come to $\left(\Omega^{-1}\hat{\chi}\right)^{(1)}$. We define $\mathscr{L}^{(0)}$ to be the operator from~\eqref{3o1984982} computed with respect to $b^{(0)}$ and $\slashed{g}^{(0)}$. We will define $\left(\Omega^{-1}\hat{\chi}\right)^{(1)}$ along $\{u = -1\}$ so that the following equation is satisfied
\begin{align}\label{kn2k22o330}
& v\mathcal{L}_{\partial_v}\left(\Omega^{-1}\hat{\chi}\right)^{(1)}_{AB} +\mathscr{L}^{(0)}\left(\Omega^{-1}\hat{\chi}\right)^{(1)}_{AB} -4\left(\Omega\underline{\omega}\right)^{(0)}\left(\Omega^{-1}\hat{\chi}\right)^{(1)}_{AB} = 
\\ \nonumber &\qquad \qquad \qquad \qquad \qquad \qquad -v\mathcal{L}_{\partial_v}\left(\Omega^{-1}\hat{\chi}\right)^{(0)}_{AB} -\mathscr{L}\left(\Omega^{-1}\hat{\chi}\right)^{(0)}_{AB} +4\left(\Omega\underline{\omega}\right)\left(\Omega^{-1}\hat{\chi}\right)^{(0)}_{AB}
\\ \nonumber &\qquad \qquad \qquad \qquad \qquad \qquad+ \left(\left(\slashed{\nabla}\hat\otimes \eta\right)_{AB} + \left(\eta\hat\otimes \eta\right)_{AB}\right) - \frac{1}{4}\left(\Omega^{-1}{\rm tr}\chi\right)\left(\slashed{\nabla}\hat{\otimes}b\right)_{AB},
 \end{align}
 where, unless otherwise indicated, the quantities on the right hand side are all computed with respect to their first order expansion, that is, we set $\slashed{g} = \slashed{g}[1]$, $b = b[1]$, etc. It follows from the definition of $\left(\Omega^{-1}\hat{\chi}\right)^{(0)}$ that the right hand side of~\eqref{kn2k22o330} is then $O\left(v^{1-D(j)\epsilon}\right)$ in the  $\tilde{C}^j\left(\mathbb{S}^2_{-1,v}\right)$ norm. In particular, we may use Lemma~\ref{laptrans} to define $\left(\Omega^{-1}\hat{\chi}\right)^{(1)}$ to be the solution to~\eqref{kn2k22o330} obtained by integrating from $\{v = 0\}$ and which satisfies
 \[\left\vert\left\vert \left(\Omega^{-1}\hat{\chi}\right)^{(1)}\right\vert\right\vert_{\tilde{C}^j\left(\mathbb{S}^2_{-1,v}\right)} \lesssim \epsilon v^{1-D(j)\epsilon}.\]

 We are now ready for $\left(\Omega\underline{\omega}\right)^{(2)}$. It is, of course, equivalent to define $\left(\Omega\underline{\omega}\right)[2] = \left(\Omega\underline{\omega}\right)^{(2)} + \left(\Omega\underline{\omega}\right)^{(1)} + \left(\Omega\underline{\omega}\right)^{{\rm seed}}$. In analogy with~\eqref{3omoj2ijo3}, we define $\left(\Omega\underline{\omega}\right)[2]$ along $\{u=-1\}$ by
 \begin{align}\label{knkrk2k}
&\left(\Omega\underline{\omega}\right)[2] -\left(\Omega\underline{\omega}\right)^{(\rm seed)}= \int_0^v\Omega^2\Bigg[\mathcal{L}_b\left(\Omega^{-1}{\rm tr}\chi\right) + \frac{1}{2}\left(\Omega^{-1}{\rm tr}\chi\right)\slashed{\rm div}b+ \frac{1}{2}\left|\eta\right|^2 
\\ \nonumber &\qquad \qquad - \eta\cdot\left(-\eta+2\slashed{\nabla}\log\Omega\right)-4\left(\Omega\underline{\omega}\right)\left(\Omega^{-1}{\rm tr}\chi\right) - 2\slashed{\rm div}\eta - 2\left|\eta\right|^2 + \left(\Omega^{-1}\hat{\chi}\right)\cdot\left(v\Omega^2\left(\Omega^{-1}\hat{\chi}\right)+\frac{1}{2}\slashed{\nabla}\hat{\otimes}b\right)\Bigg]\, dv,
\end{align}
where on the right hand side we use the $1$st expansion of each quantity. It is then immediate from the definition of $\left(\Omega\underline{\omega}\right)^{(1)}$ that
\[\left\vert\left\vert \left(\Omega\underline{\omega}\right)^{(2)}\right\vert\right\vert_{\tilde{C}^j\left(\mathbb{S}^2_{-1,v}\right)} \lesssim \epsilon v^{2-D(j)\epsilon}.\]

We are now ready to update the metric components. We start by defining a $1$-parameter family of vectorfields $\left(b[2]\right)^A$ on $\mathbb{S}^2$  in the coordinate frame by, in analogy with~\eqref{om4om2o},
\begin{equation}\label{3km2om2o2o}
\left(b[2]\right)^A - \left(b[0]\right)^A= -4\int_0^v\Omega^2\left(\eta^A-\slashed{\nabla}^A\log\Omega\right)\, dv,
\end{equation}
where we substitute $\Omega = \Omega[1]$, $\eta = \eta[1]$, and $\slashed{g} = \slashed{g}[1]$. We then have the estimate
\[\left\vert\left\vert b^{(2)}\right\vert\right\vert_{\tilde{C}^j\left(\mathbb{S}^2_{-1,v}\right)} \lesssim \epsilon v^{2-D(j)\epsilon}.\]
We next define, along $\{u=-1\}$, a $1$-parameter family of symmetric $(0,2)$-tensors $\left(\slashed{g}[2]\right)_{AB}$ on $\mathbb{S}^2$ in the coordinate frame by, in analogy with~\eqref{2oom2o4o3},
\begin{align}\label{3kn2kn2kn}
&\left(\widetilde{\slashed{g}}[2]\right)_{AB}-\left(\widetilde{\slashed{g}}[0]\right)_{AB} = \\ \nonumber &  \int_0^v\Omega^2\left(\left(\widetilde{\Omega^{-1}{\rm tr}\chi}\right)\slashed{g}_{AB}+\left(\Omega^{-1}{\rm tr}\chi \right)\widetilde{\slashed{g}}_{AB} + \left(\Omega^{-1}\hat{\chi}\right)_{AB}\right)\, dv
+ 2\int_0^v\left(\Omega^2-1\right)\left(v+1\right)\mathring{\slashed{g}}_{AB}\, dv,
\end{align}
where we now use the $1$st expansion for each of the quantities on the right hand side. It is then immediate that we have 
\[\left\vert\left\vert \widetilde{\slashed{g}}^{(2)}\right\vert\right\vert_{\tilde{C}^j\left(\mathbb{S}^2_{-1.v}\right)} \lesssim \epsilon v^{2-D(j)\epsilon}.\]
Finally we come to $\log\Omega^{(2)}$. In analogy to~\eqref{2mmo2} we use Lemma~\ref{laptrans} to define $\log\Omega^{(2)}$ to be the unique solution to
\begin{equation}\label{k3mo2mo2mo}
\left(v\mathcal{L}_{\partial_v}+\mathcal{L}_{b^{(0)}}\right)\log\Omega^{(2)} = -\left(v\mathcal{L}_{\partial_v} + \mathcal{L}_{b[2]}\right)\log\Omega[1]+\left(\Omega\underline{\omega}\right)^{(2)},
\end{equation}
which decays to $0$ as $v\to 0$. We will then have the estimate
\[\left\vert\left\vert \log\Omega^{(2)}\right\vert\right\vert_{\tilde{C}^j\left(\mathbb{S}^2_{-1,v}\right)} \lesssim \epsilon v^{2-D(j)\epsilon}.\]
We then obtain from $\log\Omega[2]$, $b[2]$, and $\slashed{g}[2]$ a corresponding double-null metric $g[2]$. One may easily then check that the following estimates are satisfied along $\{u=-1\}$:
\[\left\vert\left\vert \left[\left(\eta\left(g[2]\right) - \eta[1]\right),\left(\Omega^{-1}\hat{\chi}\left(g[2]\right) - \left(\Omega^{-1}\hat{\chi}\right)[1]\right),\left(\widetilde{\Omega^{-1}{\rm tr}\chi}\left(g[2]\right) - \left(\Omega^{-1}{\rm tr}\chi\right)[1]\right)\right]\right\vert\right\vert_{\tilde{C}^j\left(\mathbb{S}^2_{-1,v}\right)} \lesssim \epsilon v^{2-D(j)\epsilon}.\]
With these in mind, we then also obtain that 
\begin{equation}\label{m2lm2lml2}
\left\vert\left\vert \left({\rm Ric}_{33},{\rm Ric}_{44},{\rm Ric}_{4A}, {\rm Ric}_{3A},\left(R+{\rm Ric}_{34}\right)\right)\left(g[2]\right) \right\vert\right\vert_{\tilde{C}^j\left(\mathbb{S}^2_{-1,v}\right)} \lesssim \epsilon v^{1-D(j)\epsilon},
\end{equation}
\begin{equation}\label{2km2m4o}
\left\vert\left\vert \left(\widehat{\rm Ric}_{AB},{\rm Ric}_{34}\right)\left(g[2]\right)\right\vert\right\vert_{\tilde{C}^j\left(\mathbb{S}^2_{-1,v}\right)} \lesssim \epsilon v^{2-D(j)\epsilon},
\end{equation}

We are now ready to define the rest of the expansion by an inductive procedure. Let $k \geq 3$ and assume that the $k-2$ expansions of the Ricci coefficients $\eta$, $\widetilde{\Omega^{-1}{\rm tr}\chi}$, $\Omega^{-1}\hat{\chi}$ have been defined and that the the $k-1$ expansions of $\Omega\underline{\omega}$, $\log\Omega$, $b$, and $\widetilde{\slashed{g}}$ have been defined. 

We then define $\eta[k-1]$ in the coordinate frame along $\{u = -1\}$ via the formula
\begin{align}
\eta[k-1] - \eta^{(0)} = \int_0^v\Omega^2&\Big[-\left(\Omega^{-1}\hat{\chi}\right)\cdot\left(\eta-2\slashed{\nabla}\log\Omega\right)-\frac{1}{2}\left(\Omega^{-1}{\rm tr}\chi\right)\cdot\left(\eta-2\slashed{\nabla}\log\Omega\right)+\slashed{\rm div}\left(\Omega^{-1}\hat{\chi}\right)
\\ \nonumber &\qquad \qquad \qquad -\frac{1}{2}\slashed{\nabla}\left(\Omega^{-1}{\rm tr}\chi\right)- \frac{1}{2}\left(\Omega^{-1}{\rm tr}\chi\right) \eta + \eta\cdot\left(\Omega^{-1}\hat{\chi}\right)\Big]\, dv,
\end{align}
where we use the $k-2$ expansions of the various quantities on the right hand side. We define the function $\widetilde{\Omega^{-1}{\rm tr}\chi}[k-1]$ along $\{u = -1\}$ by the formula
\begin{align}\label{3ok4km2}
\widetilde{\Omega^{-1}{\rm tr}\chi}[k-1] -\widetilde{\Omega^{-1}{\rm tr}\chi}^{(0)}&\doteq -\int_0^v\Omega^2\left[2(v+1)^{-1}\widetilde{\Omega^{-1}{\rm tr}\chi} + \frac{1}{2}\left(\widetilde{\Omega^{-1}{\rm tr}\chi}\right)^2 + \left|\Omega^{-1}\hat{\chi}\right|^2\right]\, dv
\\ \nonumber &\qquad \qquad  - 2\int_0^v\left(\Omega^2-1\right)\left(v+1\right)^{-2}\, dv,
\end{align}
where we use the $k-2$ expansions of the various quantities on the right hand side. We then define $\left(\Omega^{-1}\hat{\chi}\right)^{(k-1)}$ along $\{u = -1\}$ so that the following equation is satisfied
\begin{align}\label{k2mm2m2o}
& v\mathcal{L}_{\partial_v}\left(\Omega^{-1}\hat{\chi}\right)^{(k-1)}_{AB} +\mathscr{L}^{(0)}\left(\Omega^{-1}\hat{\chi}\right)^{(k-1)}_{AB} -4\left(\Omega\underline{\omega}\right)^{(0)}\left(\Omega^{-1}\hat{\chi}\right)^{(k-1)}_{AB} = 
\\ \nonumber &\qquad \qquad \qquad \qquad -v\mathcal{L}_{\partial_v}\left(\Omega^{-1}\hat{\chi}\right)[k-2]_{AB} -\mathscr{L}\left(\Omega^{-1}\hat{\chi}\right)[k-2]_{AB} +4\left(\Omega\underline{\omega}\right)\left(\Omega^{-1}\hat{\chi}\right)[k-2]_{AB}
\\ \nonumber &\qquad \qquad \qquad \qquad \qquad \qquad+ \left(\left(\slashed{\nabla}\hat\otimes \eta\right)_{AB} + \left(\eta\hat\otimes \eta\right)_{AB}\right) - \frac{1}{4}\left(\Omega^{-1}{\rm tr}\chi\right)\left(\slashed{\nabla}\hat{\otimes}b\right)_{AB},
 \end{align}
 where, unless otherwise indicated, the quantities on the right hand side are all computed with respect to their $k-1$st expansion. We then define $\left(\Omega\underline{\omega}\right)[k]$ along $\{u=-1\}$ by
 \begin{align}\label{kn3no2o}
&\left(\Omega\underline{\omega}\right)[k] -\left(\Omega\underline{\omega}\right)^{(0)}= \int_0^v\Omega^2\Bigg[\mathcal{L}_b\left(\Omega^{-1}{\rm tr}\chi\right) + \frac{1}{2}\left(\Omega^{-1}{\rm tr}\chi\right)\slashed{\rm div}b+ \frac{1}{2}\left|\eta\right|^2 
\\ \nonumber &\qquad \qquad - \eta\cdot\left(-\eta+2\slashed{\nabla}\log\Omega\right)-4\left(\Omega\underline{\omega}\right)\left(\Omega^{-1}{\rm tr}\chi\right) - 2\slashed{\rm div}\eta - 2\left|\eta\right|^2 + \left(\Omega^{-1}\hat{\chi}\right)\cdot\left(v\Omega^2\left(\Omega^{-1}\hat{\chi}\right)+\frac{1}{2}\slashed{\nabla}\hat{\otimes}b\right)\Bigg]\, dv,
\end{align}
where on the right hand side we use the $k-1$st expansion of each quantity. Next we define $\left(b[k]\right)^A$ on $\mathbb{S}^2$  in the coordinate frame by
\begin{equation}\label{k3m2om2o02003}
\left(b[k]\right)^A - \left(b[0]\right)^A= -4\int_0^v\Omega^2\left(\eta^A-\slashed{\nabla}^A\log\Omega\right)\, dv,
\end{equation}
where we use the $k-1$st expansion on the right hand side. We next define, along $\{u=-1\}$, $\left(\slashed{g}[k]\right)_{AB}$ on $\mathbb{S}^2$ in the coordinate frame by
\begin{align}\label{km2om2oop1}
&\left(\widetilde{\slashed{g}}[k]\right)_{AB}-\left(\widetilde{\slashed{g}}[0]\right)_{AB} = \\ \nonumber &  \int_0^v\Omega^2\left(\left(\widetilde{\Omega^{-1}{\rm tr}\chi}\right)\slashed{g}_{AB}+\left(\Omega^{-1}{\rm tr}\chi \right)\widetilde{\slashed{g}}_{AB} + \left(\Omega^{-1}\hat{\chi}\right)_{AB}\right)\, dv
+ 2\int_0^v\left(\Omega^2-1\right)\left(v+1\right)\mathring{\slashed{g}}\, dv,
\end{align}
where we now use the $k-1$st expansion for each of the quantities on the right hand side. Finally we use Lemma~\ref{laptrans} to define $\log\Omega^{(k)}$ to be the unique solution to
\begin{equation}\label{3om2om2o4i}
\left(v\mathcal{L}_{\partial_v}+\mathcal{L}_{b^{(0)}}\right)\log\Omega^{(k)} = -\left(v\mathcal{L}_{\partial_v} + \mathcal{L}_{b[k]}\right)\log\Omega[k-1]+\left(\Omega\underline{\omega}\right)[k].
\end{equation}

It is now straightforward to see that this definition of the expansions finishes the proof of the proposition with the exception of the final estimate~\eqref{2klj3lkj2o9}. To prove this final statement, we observe the following facts about the metric $g$:
\begin{enumerate}
	\item The quantities $\slashed{g}$, $b$, $\Omega\underline{\omega}$, and $\eta$ must all have limits as $v\to 0$, and, in view of~\eqref{eqnyaydivb} and~\eqref{3pk2o294},  these limits at $(u,v) = (-1,0)$ must satisfy~\eqref{m2om3o2} and~\eqref{3ioj2oij4oi2}. These limiting values then define $\left(\slashed{g}^{(\rm seed)},b^{(\rm seed)},\Omega\underline{\omega}^{(\rm seed)}\right)$.
	\item By integrating \eqref{4trchi}, we see that $\Omega^{-1}{\rm tr}\chi$ has a limiting value as $v\to 0$. Moreover, in view of~\eqref{2o4oijoiouoiu2}, this limiting value of $\Omega^{-1}{\rm tr}\chi$ must satisfy the equation~\eqref{2om3om2}. Then,  in view of~\eqref{blahblah1} and~\eqref{kwdkodwok23dg}, there must exist $\log\Omega^{(\rm seed,+)}$ and $\left(\Omega^{-1}\hat{\chi}\right)^{(\rm seed,+)}$ so that $\log\Omega - \log\Omega^{(\rm seed,+)}$ and $\Omega^{-1}\hat{\chi} - \left(\Omega^{-1}\hat{\chi}\right)^{(\rm seed,+)}$  are the solutions given by Lemma~\ref{laptrans} to their respective equations which have the maximal possible decay to $\{v = 0\}$.
	\item Having determined the seed data which corresponds to $g$, we now observe that $b$, $\slashed{g}$, $\Omega^{-1}{\rm tr}\chi$, $\eta$, and $\Omega\underline{\omega}$ all satisfy equations which relate a $\mathcal{L}_{\partial_v}$ derivative to (angular derivatives of) the other Ricci coefficients. 
	\item Now we observe that our formal expansions are computed above exactly by inductively using these $\mathcal{L}_{\partial_v}$ equations to compute further terms in the expansion for $b$, $\slashed{g}$, $\Omega^{-1}{\rm tr}\chi$, $\eta$, and $\Omega\underline{\omega}$ and then updating $\log\Omega- \log\Omega^{(\rm seed,+)}$ and $\Omega^{-1}\hat{\chi} - \left(\Omega^{-1}\hat{\chi}\right)^{(\rm seed,+)}$ by integrating their equations from $\{v = 0\}$ with an application of Lemma~\ref{laptrans}. 
\end{enumerate}
It thus follows that~\eqref{2klj3lkj2o9} must hold.

\end{proof}

\subsection{Constraint Propagation}\label{woahconstraintsarecool}

Since the desired regularity statements for the expansions $g[k]$ in the homothetic gauge follow from Proposition~\ref{nonononononoo}, in order to finish the proof of Theorem~\ref{theoremexpand}, it only remains to show that the full Ricci tensor of the metrics $g[k]$ produced by Theorem~\ref{2omo2o492} vanish to a sufficiently higher order so as to satisfy~\eqref{3oij2oi829} and~\eqref{9198j32}. \emph{We note that, in view of the underlying self-similarity, it suffices to establish that~\eqref{3oij2oi829} and~\eqref{9198j32} hold along $\{u = -1\}$.}

In order to do this we will exploit the well known fact that the Einstein tensor is always divergence free. This will impose certain differential relations between the Ricci curvature components, and allow us to improve on the vanishing established in Theorem~\ref{2omo2o492}. In the following lemma (which is Lemma 16.3 from~\cite{nakedinterior}) we write out in a double-null formalism the results of the fact that the Einstein tensor is divergence free.
\begin{lemma}Every $3+1$ dimensional Lorentzian spacetime satisfies the following:
\begin{align}\label{3kcom1}
&-\frac{1}{2}\nabla_3{\rm Ric}_{44} +2\underline{\omega}{\rm Ric}_{44} + 2\eta^A{\rm Ric}_{4A} -\frac{1}{2}\nabla_4\left({\rm Ric}_{34}+ R\right) + \underline{\eta}^A{\rm Ric}_{A4}
\\ \nonumber &\qquad +\slashed{\nabla}^A{\rm Ric}_{A4} - \frac{1}{2}{\rm tr}\underline{\chi}{\rm Ric}_{44} - \frac{1}{2}{\rm tr}\chi\left({\rm Ric}_{34} + R\right) + \zeta^A{\rm Ric}_{A4}-\frac{1}{2}{\rm tr}\chi{\rm Ric}_{34} - \hat{\chi}^{AB}\widehat{{\rm Ric}}_{AB} = 0,
\end{align}

\begin{align}\label{3kcom2}
&-\frac{1}{2}\nabla_4{\rm Ric}_{33} +2\omega{\rm Ric}_{33} + 2\underline{\eta}^A{\rm Ric}_{3A} -\frac{1}{2}\nabla_3\left({\rm Ric}_{34}+ R\right) + \eta^A{\rm Ric}_{A3}
\\ \nonumber &\qquad +\slashed{\nabla}^A{\rm Ric}_{A3} - \frac{1}{2}{\rm tr}\chi{\rm Ric}_{33} - \frac{1}{2}{\rm tr}\underline{\chi}\left({\rm Ric}_{34} + R\right) - \zeta^A{\rm Ric}_{A3}-\frac{1}{2}{\rm tr}\underline{\chi}{\rm Ric}_{34} - \hat{\underline{\chi}}^{AB}\widehat{{\rm Ric}}_{AB} = 0,
\end{align}

\begin{align}\label{3kcom3}
&-\frac{1}{2}\nabla_3{\rm Ric}_{4A} + \underline{\omega}{\rm Ric}_{4A} + \eta^B{\rm Ric}_{BA} + \frac{1}{2}\eta_A{\rm Ric}_{34} -\frac{1}{2}\nabla_4{\rm Ric}_{3A} + \omega{\rm Ric}_{3A} + \underline{\eta}^B{\rm Ric}_{BA} + \frac{1}{2}\underline{\eta}_A{\rm Ric}_{34}
\\ \nonumber &\qquad \slashed{\nabla}^B\widehat{{\rm Ric}}_{BA} + \frac{1}{2}\slashed{\nabla}_A{\rm Ric}_{34}-\frac{1}{2}{\rm tr}\underline{\chi}{\rm Ric}_{4A} - \frac{1}{2}{\rm tr}\chi{\rm Ric}_{3A} - \frac{1}{2}\underline{\chi}_A^{\ \ B}{\rm Ric}_{B4}-\frac{1}{2}\chi_A^{\ \ B}{\rm Ric}_{B3} = 0.
\end{align}

\end{lemma}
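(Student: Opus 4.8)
The plan is to obtain the three identities as the null-frame components of the twice-contracted second Bianchi identity. Recall that on any Lorentzian manifold the Einstein tensor $G_{\mu\nu} = {\rm Ric}_{\mu\nu} - \frac{1}{2}Rg_{\mu\nu}$ is divergence free, $\nabla^{\mu}G_{\mu\nu} = 0$. The idea is simply to write out this vector identity in the null frame $\{e_3,e_4,e_A\}$, taking the free index $\nu$ equal to $4$, to $3$, and to $A$; these three contractions should produce \eqref{3kcom1}, \eqref{3kcom2}, and \eqref{3kcom3} respectively.

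First I would record the components of $G$ in the frame. Since $g\left(e_3,e_4\right) = -2$ and the $e_A$ are orthogonal to $e_3,e_4$, one has $G_{44} = {\rm Ric}_{44}$, $G_{33} = {\rm Ric}_{33}$, $G_{4A} = {\rm Ric}_{4A}$, $G_{3A} = {\rm Ric}_{3A}$, $\widehat{G}_{AB} = \widehat{{\rm Ric}}_{AB}$, and, crucially, $G_{34} = {\rm Ric}_{34} + R$ (because $g_{34} = -2$ exactly cancels the factor $-\frac{1}{2}$). The inverse metric in the frame has $g^{34} = g^{43} = -\frac{1}{2}$ and $g^{AB} = \slashed{g}^{AB}$, so the divergence expands as
\[
\nabla^{\mu}G_{\mu\nu} = -\frac{1}{2}\left(D_{e_3}G\right)_{4\nu} -\frac{1}{2}\left(D_{e_4}G\right)_{3\nu} + \slashed{g}^{AB}\left(D_{e_A}G\right)_{B\nu}.
\]
This already accounts for the three ``principal'' derivative terms in each identity, namely $-\frac{1}{2}\nabla_3 G_{4\nu}$, $-\frac{1}{2}\nabla_4 G_{3\nu}$, and $\slashed{\nabla}^A G_{A\nu}$, once the frame indices are projected; in particular the appearance of $\nabla_4\left({\rm Ric}_{34}+R\right)$ and $\nabla_3\left({\rm Ric}_{34}+R\right)$ is explained by the identity $G_{34} = {\rm Ric}_{34}+R$.

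The remaining, algebraic terms arise entirely from the action of the connection on the frame indices. Concretely, each $\left(D_{e_a}G\right)_{bc}$ must be rewritten in terms of the projected derivatives $\nabla_3$, $\nabla_4$, $\slashed{\nabla}$ acting on the components of $G$, plus corrections in which $D_{e_a}e_b$ and $D_{e_a}e_c$ are re-expanded into the frame. I would substitute the standard double-null connection formulas, which express $D_{e_a}e_b$ through $\omega$, $\underline{\omega}$, $\zeta$, $\eta$, $\underline{\eta}$, ${\rm tr}\chi$, ${\rm tr}\underline{\chi}$, $\hat{\chi}$, and $\hat{\underline{\chi}}$, compatibly with the normalizations \eqref{blahblah1}--\eqref{blahblah3}. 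Collecting these corrections should reproduce precisely the terms $2\underline{\omega}{\rm Ric}_{44}$, $2\eta^A{\rm Ric}_{4A}$, $\zeta^A {\rm Ric}_{A4}$, $-\frac{1}{2}{\rm tr}\underline{\chi}{\rm Ric}_{44}$, $-\hat{\chi}^{AB}\widehat{{\rm Ric}}_{AB}$, and the rest of \eqref{3kcom1}; the $\nu = 3$ contraction then yields \eqref{3kcom2} via the reflection $3\leftrightarrow 4$ (which sends $\omega\mapsto\underline{\omega}$, $\eta\mapsto\underline{\eta}$, $\chi\mapsto\underline{\chi}$ and flips the sign of $\zeta$), and the mixed contraction $\nu = A$ yields \eqref{3kcom3}.

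The routine but delicate part — and the main obstacle — will be the bookkeeping of these connection corrections: matching every coefficient and sign to the paper's specific conventions (the $g\left(e_3,e_4\right)=-2$ normalization, the sign in $\omega=-\frac{1}{2}\nabla_4\log\Omega$, and the placement of $\zeta$ relative to $\eta$ and $\underline{\eta}$). Since the three identities together are exactly Lemma 16.3 of \cite{nakedinterior}, the most efficient route is to invoke that computation directly rather than reproduce it in full; the only thing that must be checked here is that the frame normalizations of the present paper coincide with those of \cite{nakedinterior}, so that the cited identities may be used verbatim in the constraint-propagation argument that follows.
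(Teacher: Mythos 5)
Your proposal is correct and takes essentially the same route as the paper: the paper states the lemma with no proof beyond the parenthetical attribution to Lemma 16.3 of~\cite{nakedinterior}, which is exactly the citation you fall back on, and the normalization check you flag is already settled by the paper's declaration in Section~\ref{nullstreqn} that it uses the same conventions as Section 3.2 of~\cite{nakedinterior}. Your sketch of the underlying computation (frame components of $\nabla^{\mu}G_{\mu\nu}=0$ with $g^{34}=-\tfrac{1}{2}$, $G_{34}={\rm Ric}_{34}+R$, and the $3\leftrightarrow 4$ reflection sending $\zeta\mapsto-\zeta$) is an accurate account of where that cited identity comes from.
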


We can now prove that the Ricci tensors vanishes quickly as $v\to 0$ in the double-null coordinate system.
\begin{proposition}Let $g[k]$ be a metric produced by Theorem~\ref{2omo2o492}. Then, along $\{u = -1\}$, we have
\begin{equation}\label{3oiw2j23ijo32ijo23}
\left\vert\left\vert {\rm Ric}\left(g[k]\right)\right\vert\right\vert_{\tilde{C}^{J-2}\left(\mathbb{S}^2_{-1,v}\right)} \lesssim v^{k-1 - \epsilon D(k-1)}.
\end{equation}
\end{proposition}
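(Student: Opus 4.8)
The plan is to view the three components that Theorem~\ref{2omo2o492} controls only to order $v^{1-D(1)\epsilon}$---namely ${\rm Ric}_{33}$, ${\rm Ric}_{3A}$ and $R+{\rm Ric}_{34}$---as the unknowns of a coupled family of $\nabla_4$-transport equations, and to bootstrap their vanishing up to the order $v^{k-1-D(k-1)\epsilon}$ already enjoyed by the remaining components ${\rm Ric}_{44}$, ${\rm Ric}_{4A}$, ${\rm Ric}_{34}$ and $\widehat{\rm Ric}_{AB}$. The engine is the twice-contracted Bianchi system \eqref{3kcom1}--\eqref{3kcom3}, which holds for the exact metric $g[k]$: each of these identities contains precisely one $\nabla_4$-derivative of one of the three ``bad'' quantities ($\nabla_4(R+{\rm Ric}_{34})$ in \eqref{3kcom1}, $\nabla_4{\rm Ric}_{33}$ in \eqref{3kcom2}, $\nabla_4{\rm Ric}_{3A}$ in \eqref{3kcom3}). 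Since $\nabla_4 = \Omega^{-1}\partial_v$, after multiplying by $\Omega$ each identity becomes a transport equation in $v$ for the corresponding bad quantity. As the identity is exact and the a priori bounds force the bad quantities to vanish at $\{v=0\}$, it suffices---by the self-similarity reduction to $\{u=-1\}$ already noted---to integrate these transport equations from $\{v=0\}$ and track the resulting order of vanishing.

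First I would sort the terms on each right-hand side into ``good'' contributions---a good component, or a $\slashed{\nabla}$- or $\nabla_3$-derivative of a good component, times a metric or Ricci coefficient---and ``bad'' contributions, which involve one of the three unknowns. The only subtlety in this classification is the undifferentiated tangential Ricci: using $\slashed{g}^{AB}{\rm Ric}_{AB} = R+{\rm Ric}_{34}$ one splits ${\rm Ric}_{AB}$ into $\widehat{\rm Ric}_{AB}$ (good) plus a multiple of the bad trace. Carrying this out, one finds the system is in fact \emph{triangular}: \eqref{3kcom1} for $R+{\rm Ric}_{34}$ involves only good quantities and $R+{\rm Ric}_{34}$ itself; \eqref{3kcom3} for ${\rm Ric}_{3A}$ involves, among the bad quantities, only ${\rm Ric}_{3A}$ and the already-treated $R+{\rm Ric}_{34}$; and \eqref{3kcom2} for ${\rm Ric}_{33}$ involves all three. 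Thus the three equations may be solved in the order $R+{\rm Ric}_{34}$, then ${\rm Ric}_{3A}$, then ${\rm Ric}_{33}$.

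The key reduction is to eliminate the $\nabla_3$-derivatives, which would otherwise take us off $\{u=-1\}$. Since each framed Ricci component is homogeneous of some weight $w$ under $K=u\partial_u+v\partial_v$, along $\{u=-1\}$ its $\partial_u$-derivative equals $w$ times the component plus $v$ times its $\partial_v$-derivative; hence every $\nabla_3 = \Omega^{-1}(\partial_u + b\cdot\slashed{\nabla})$ reduces to a combination of the component, its $v\partial_v$-derivative, and an angular derivative, all on $\{u=-1\}$. Applied to good components this yields only good quantities (now measured in one lower angular order, i.e.\ in $\tilde{C}^{J-1}$), which I absorb into the source. Applied to the one bad $\nabla_3$-term present---$\nabla_3(R+{\rm Ric}_{34})$ in \eqref{3kcom2}---it produces $v\partial_v(R+{\rm Ric}_{34})$, which by the already-established transport equation for $R+{\rm Ric}_{34}$ is a good quantity times $v$; this is exactly why the triangular ordering closes. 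A second, essential manipulation is to conjugate each bad quantity by the correct power of $\Omega$: the terms $2\omega\,{\rm Ric}_{33}$ and $\omega\,{\rm Ric}_{3A}$ carry the violently singular Ricci coefficient $\omega$, but since $-2\Omega\omega = \partial_v\log\Omega$ the combinations $\partial_v(\Omega^2{\rm Ric}_{33})$ and $\partial_v(\Omega\,{\rm Ric}_{3A})$ are free of $\omega$. After these two steps each of the three equations has the degenerate-transport form covered by Lemma~\ref{laptrans}, with inhomogeneity of size $\epsilon v^{k-1-D(k-1)\epsilon}$ in $\tilde{C}^{J-2}$ and coefficients blowing up at most like $v^{-O(\epsilon)}$.

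Finally I would integrate the three equations in the triangular order, from $\{v=0\}$, where the a priori bounds of Theorem~\ref{2omo2o492} force each bad quantity (and hence its $\Omega$-conjugate) to vanish. Applying Lemma~\ref{laptrans}---with a Gr\"{o}nwall step to absorb self-coupling coefficients such as $\Omega{\rm tr}\chi$---propagates the order of the inhomogeneity, so each bad quantity is $O(v^{k-1-D(k-1)\epsilon})$ in $\tilde{C}^{J-2}$. Combined with the good components, which already satisfy this bound, this yields \eqref{3oiw2j23ijo32ijo23}. The two-derivative loss from $\tilde{C}^J$ to $\tilde{C}^{J-2}$ is the combined cost of differentiating the Ricci components once inside the Bianchi identities and of commuting through the angular coupling $\mathcal{L}_q$ in Lemma~\ref{laptrans}. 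I expect the main obstacle to be the bookkeeping of this reduction: confirming the triangular structure, performing the self-similar elimination of $\nabla_3$ together with the correct $\Omega$-conjugations so that no uncontrolled coefficient survives, and verifying that the resulting degenerate coefficients are mild enough for Lemma~\ref{laptrans} to apply.
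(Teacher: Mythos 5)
Your proposal follows essentially the same route as the paper: the paper also treats the twice-contracted Bianchi identities as $v$-transport equations for the three undercontrolled components, solves them in the triangular order $R+{\rm Ric}_{34}$, then ${\rm Ric}_{3A}$, then ${\rm Ric}_{33}$, with the same $\Omega$-conjugations ($\Omega\,{\rm Ric}_{3A}$ and $\Omega^2{\rm Ric}_{33}$) to eliminate $\omega$, the same trace/trace-free splitting of ${\rm Ric}_{AB}$, and integration from $\{v=0\}$ where these quantities vanish. Your explicit discussion of eliminating $\nabla_3$ via self-similar homogeneity is a correct elaboration of a step the paper leaves implicit.
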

\begin{proof}From~\eqref{3kcom1} and the already established estimates for the Ricci tensor from Theorem~\ref{2omo2o492}, we obtain
\begin{equation}\label{3lm3om23o}
\mathcal{L}_{\partial_v}\left(R+{\rm Ric}_{34}\right) + 2\Omega^2\left(\Omega^{-1}{\rm tr}\chi\right)\left(R+{\rm Ric}_{34}\right)= H_1,
\end{equation}
where $H_1$ satisfies
\[\left\vert\left\vert H_1\right\vert\right\vert_{\tilde{C}^{J-1}\left(\mathbb{S}^2_{-1,v}\right)} \lesssim \epsilon v^{k-1 - \epsilon D\left(k-1\right)}.\]
Since $R + {\rm Ric}_{34}$ vanishes when $v = 0$. We immediately obtain from~\eqref{3lm3om23o} that
\begin{equation}\label{2pom3om2}
\left\vert\left\vert \left(R+{\rm Ric}_{34}\right) \right\vert\right\vert_{\tilde{C}^{J-1}\left(\mathbb{S}^2_{-1,v}\right)} \lesssim \epsilon v^{k - \epsilon D\left(k-1\right)}.
\end{equation}

Next, from~\eqref{3kcom1}, the already established estimates for the Ricci tensor from Theorem~\ref{2omo2o492}, the estimate~\eqref{2pom3om2}, and the fact that ${\rm Ric}_{AB} = \frac{1}{2}\left(R+{\rm Ric}_{34}\right)\slashed{g}_{AB} + \widehat{\rm Ric}_{AB}$, we obtain that
\begin{equation}\label{3lmom2oio4}
\nabla_{\partial_v}\left(\Omega{\rm Ric}_{3A}\right) + \Omega^2\left(\frac{3}{2}\Omega^{-1}{\rm tr}\chi + \Omega^{-1}\hat{\chi}   \right)\left(\Omega{\rm Ric}_{3A}\right) = H_2,
\end{equation}
where 
\[\left\vert\left\vert H_2\right\vert\right\vert_{\tilde{C}^{J-1}\left(\mathbb{S}^2_{-1,v}\right)} \lesssim \epsilon v^{k-1 - \epsilon D\left(k-1\right)}.\]
Since $\Omega{\rm Ric}_{3A}$ vanishes when $v = 0$, we may integrate this from $v=0$ to obtain
\begin{equation}\label{i3jri3jri3jri3}
\left\vert\left\vert {\rm Ric}_{3A}\right\vert\right\vert_{C^{J-1}\left(\mathbb{S}^2_{-1,v}\right)} \lesssim \epsilon v^{k- \epsilon D\left(k-1\right)}.
\end{equation}

Finally, from~\eqref{3kcom2}, the already established estimates for the Ricci tensor from Theorem~\ref{2omo2o492}, the estimate~\eqref{2pom3om2}, and the estimate~\eqref{i3jri3jri3jri3}, we obtain that
\begin{equation}\label{3o4ok395rij39}
\mathcal{L}_{\partial_v}\left(\Omega^2{\rm Ric}_{33}\right) + \Omega^2\left(\Omega^{-1}{\rm tr}\chi\right)\left(\Omega^2{\rm Ric}_{33}\right) = H_3,
\end{equation}
where 
\[\left\vert\left\vert H_3 \right\vert\right\vert_{\tilde{C}^{J-2}\left(\mathbb{S}^2_{-1,v}\right)} \lesssim \epsilon v^{k-1 - \epsilon D\left(k-1\right)}.\]
Since $\Omega^2{\rm Ric}_{33}$ vanishes when $v= 0$, we may integrate~\eqref{3o4ok395rij39} to obtain that 
\begin{equation}\label{2kmek2nke2}
\left\vert\left\vert {\rm Ric}_{33}\right\vert\right\vert_{\tilde{C}^{J-2}\left(\mathbb{S}^2_{-1,v}\right)} \lesssim \epsilon v^{k- \epsilon D\left(k-1\right)}.
\end{equation}
This completes the proof since all components of the Ricci curvature tensor have now been estimated.
\end{proof}

It remains to check that in the homothetic gauge,~\eqref{9198j32} holds (in a weak sense).
\begin{proposition}\label{ioioioioio39289389}Let $g[k]$ be a metric produced by Theorem~\ref{2omo2o492}. Then, we may apply Proposition~\ref{nonononononoo} to put $g[k]$ into the homothetic gauge. In the homothetic gague we then have that~\eqref{9198j32} holds in a weak sense.
\end{proposition}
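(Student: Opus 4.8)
The plan is to separate the region $\{\rho\neq 0\}$, where the Ricci tensor is classical and~\eqref{9198j32} follows by a change of frame, from the hypersurface $\{\rho=0\}$, across which $g[k]$ is only Hölder continuous and the weak formulation must be invoked. By self-similarity it suffices to work along $\{u=-1\}=\{t=-1\}$, exactly as for the double-null estimate~\eqref{3oij2oi829}.

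First I would treat $\{\rho\neq 0\}$ using the explicit diffeomorphism of Proposition~\ref{nonononononoo}, namely $t=u$ and $\rho=\hat{v}/(-u)$ with $\hat{v}=\int_0^v\Omega^2\,d\tilde{v}$. Since by Lemma~\ref{32ijioijoij2oi} the lapse depends only on $(v/u,\theta)$, differentiating $-t\rho=\hat{v}$ gives
\[
\partial_\rho=\frac{-u}{\Omega^2}\,\partial_v,\qquad
\rho\,\partial_\rho=\frac{\hat{v}}{\Omega^2 v}\,(v\partial_v),\qquad
|\rho|\sim\Big|\tfrac{v}{u}\Big|^{\,1-O(\epsilon)}.
\]
Using the bound $\log\Omega=O(\epsilon)\log|v/u|^{-1}$ from Theorem~\ref{2omo2o492}, so that $\Omega^{\pm 2}=|v/u|^{\mp O(\epsilon)}$, the coefficient $-\Omega^{-2}$ of $t^{-1}\partial_\rho=-\Omega^{-2}\partial_v$ and the coefficient $\hat{v}/(\Omega^2 v)$ of $\rho\partial_\rho$ are bounded (up to $|v/u|^{O(\epsilon)}$ factors). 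Hence the homothetic frame $\{\partial_t,t^{-1}\partial_\rho,t^{-1}\partial_{\theta^A}\}$ is a bounded linear combination of the double-null frame $\{\mathcal{L}_{\partial_v},\mathcal{L}_{\partial_u},u^{-1}\partial_{\theta^A}\}$, and $\rho\mathcal{L}_{\partial_\rho}$ equals $v\mathcal{L}_{\partial_v}$ up to bounded angular corrections. Substituting these into~\eqref{3oij2oi829} and converting $|v/u|^{\tilde{N}-1-\hat{C}\epsilon}$ into $|\rho|^{\tilde{N}-1-\hat{C}\epsilon}$ via $|\rho|\sim|v/u|^{1-O(\epsilon)}$ (absorbing the $O(\epsilon)$ shift into a redefinition of $\hat{C}$) yields~\eqref{9198j32} on $\{\rho\neq 0\}$.

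The second and main step is the weak sense across $\{\rho=0\}$. By Definition~\ref{3moi09989} the components of $g[k]$ extend continuously across $\{\rho=0\}$, while their first $\rho$-derivatives blow up at most like $|\rho|^{-c}$ with $c$ small; thus $g[k]$ is continuous with $g[k],g[k]^{-1}\in L^\infty$ and $\partial g[k]\in L^p_{\rm loc}$ for every $p<c^{-1}$, in particular in $L^2_{\rm loc}$. Consequently the Christoffel symbols lie in $L^2_{\rm loc}$ and the quadratic terms $\Gamma\cdot\Gamma$ in $L^1_{\rm loc}$, so the Ricci tensor of $g[k]$ is a well-defined distribution after a single integration by parts on the $\partial\Gamma$ terms. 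I would then define ${\rm Ric}(g[k])$ as the $L^1$ function given by its classical values on $\{\rho\neq 0\}$ — which is legitimate since the exponent $\tilde{N}-1-\hat{C}\epsilon$ in~\eqref{9198j32} is large and positive, so these components are bounded and tend to $0$ as $\rho\to 0$ — and argue, via a regularization that excises $\{|\rho|<\delta\}$ and lets $\delta\to 0$, that this $L^1$ function coincides with the distributional Ricci tensor. The conclusion is that~\eqref{9198j32} holds distributionally on all of spacetime.

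The main obstacle is precisely to rule out a distributional (thin-shell/Israel-type) contribution supported on the measure-zero set $\{\rho=0\}$. The favorable structural facts are that $g[k]$ is genuinely continuous across $\{\rho=0\}$ — note that Theorem~\ref{theoremexpand}(1) requires no compatibility between the $\pm$ seed data for $\log\Omega$ or $\Omega^{-1}\hat{\chi}$ — and that its first derivatives, although unbounded, are integrable with no finite jump, so no $\delta$-mass is produced by differentiating the metric itself. The genuinely delicate point is that individual second-derivative terms are only finite-part distributions, with divergent boundary contributions at $\{\rho=\pm\delta\}$; these must cancel in the specific combination forming the Ricci tensor. This cancellation is exactly what is guaranteed by the boundedness of ${\rm Ric}(g[k])$ already established in the double-null gauge: the boundary terms are controlled by the first derivatives of the metric in the good double-null frame (equivalently, by the Ricci coefficients $\Omega\underline{\omega}$, $\eta$, $\Omega^{-1}{\rm tr}\chi$, etc., which extend continuously to $\{v=0\}$), so the regularized boundary integrals converge and leave no residual singular part.
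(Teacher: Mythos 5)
Your overall architecture matches the paper's: a classical frame-change argument on $\{\rho\neq 0\}$ (which is fine as you present it, and is essentially what the paper does via the intermediate $(\hat{v},u,\theta^A)$ coordinates of Proposition~\ref{nonononononoo}), followed by an argument that no singular part concentrates on the null hypersurface. You have also correctly located the crux: ruling out a thin-shell contribution supported on $\{\rho=0\}$. But the step you give to resolve it does not close. The assertion that the cancellation of the divergent finite-part/boundary contributions ``is exactly what is guaranteed by the boundedness of ${\rm Ric}(g[k])$ already established in the double-null gauge'' is logically insufficient: a classical Ricci tensor that is bounded (even vanishing) off a hypersurface is perfectly compatible with a delta-supported distributional part on it --- this is precisely the thin-shell phenomenon you name, and the one-dimensional model $h={\rm sgn}(\rho)$, with $h'=0$ classically but $\partial_\rho h=2\delta_0$, shows the estimate \eqref{3oij2oi829} away from the hypersurface cannot by itself do the work. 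Moreover, your generic decomposition ${\rm Ric}=\partial\Gamma+\Gamma\cdot\Gamma$ in homothetic coordinates is the wrong starting point for seeing the cancellation: since $\partial_\rho g$ blows up like $|\rho|^{-c}$, the individual terms $\partial_\rho\Gamma$ behave like $|\rho|^{-1-c}$ and are not locally integrable, so everything hinges on structure that this decomposition does not expose.

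What actually closes the argument in the paper is a term-by-term verification using the null structure equations in the specific lapse-weighted frame $\{\Omega^{-1}e_4,\Omega e_3,e_A\}$: in that frame the powers of $\Omega$ cancel every occurrence of $\omega$ (which is genuinely non-integrable, $\Omega\omega\sim u/v$ by \eqref{okodwok22}), and every transversal derivative that appears is $\Omega^{-1}e_4=\partial_{\hat{v}}$ applied to a quantity ($\Omega^{-1}{\rm tr}\chi$, $\eta$, $\Omega\underline{\omega}$, $b$) that extends \emph{continuously} across $\{\hat{v}=0\}$ --- so each Ricci component is of the form $P+\partial_{\hat{v}}f$ with $P\in L^1_{\rm loc}$ and $f$ continuous, and the weak derivative $\partial_{\hat{v}}f$ carries no atom. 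Note that continuity across the hypersurface of precisely the quantities hit by $\partial_{\hat{v}}$ is essential and not automatic: $\log\Omega$ and $\Omega^{-1}\hat{\chi}$ are allowed to have incompatible limits from the two sides, but in this frame they are only ever differentiated tangentially. Your proposal points at the right objects (``the Ricci coefficients $\Omega\underline{\omega}$, $\eta$, $\Omega^{-1}{\rm tr}\chi$, etc., which extend continuously'') but never identifies the frame in which the $\omega$-terms drop out, nor carries out the check that the transversal derivatives land only on the continuous quantities; that check is the content of the proposition and is missing.
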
 
\begin{proof}This proof is very similar to the proof of Theorem 18.1 in~\cite{nakedinterior}, and we will thus be brief in our arguments. Let $\left(\hat{v},u,\theta^A\right)$ denote the coordinates defined in Proposition~\ref{nonononononoo}. A computation shows that it suffices to check that~\eqref{3oiw2j23ijo32ijo23} holds weakly in $\left(\hat{v},u,\theta^A\right)$ coordinates (with $v$ replaced by $\hat{v}$).

Let us refer to $\partial_u$ and $\partial_{\theta^A}$ in $\left(\hat{v},u,\theta^A\right)$ coordinates as $\partial_{\hat{u}}$ and $\partial_{\hat{\theta}^A}$. The symbols $\partial_u$, $\partial_{\theta^A}$, and $e_A$ will always be defined with respect to the $\left(v,u,\theta^A\right)$ coordinates. Under the change of coordinates $\left(v,u,\theta^A\right) \mapsto \left(\hat{v},u,\theta^A\right)$ we have
\begin{equation}\label{3oij23jioioj}
\partial_v \mapsto \Omega^2\partial_{\hat{v}},\qquad \partial_u \mapsto \left(\int_0^v \partial_u\left(\Omega^2\right)\, dv\right)\partial_{\hat{v}} + \partial_{\hat{u}},\qquad  \partial_{\theta^A} \mapsto \left(\int_0^v\partial_{\theta^A}\left(\Omega^2\right)\, dv\right)\partial_{\hat{v}}  +\partial_{\hat{\theta}^A}.
\end{equation}
In particular, it is clear that~\eqref{3oiw2j23ijo32ijo23} continues to hold in the $\left(\hat{v},u,\theta^A\right)$ coordinates for $\hat{v} \neq 0$. 

We now study the behavior of the Ricci tensor at $\{\hat{v} = 0\}$. We let $X,Y$ denote vector fields and introduce the convention that all Ricci tensors refer to the Ricci tensor of $g[k]$. We will say that~\eqref{3oiw2j23ijo32ijo23} holds (weakly) for ${\rm Ric}\left(X,Y\right)$  if~\eqref{3oiw2j23ijo32ijo23} holds (weakly) with ${\rm Ric}\left(g[k]\right)$ replaced by ${\rm Ric}\left(X,Y\right)$. Similarly, for any function $h(v)$, we will say that~\eqref{3oiw2j23ijo32ijo23} holds (weakly) for ${\rm Ric}\left(X,Y\right)$ after multiplying both sides by $h(v)$  if~\eqref{3oiw2j23ijo32ijo23} holds (weakly) with ${\rm Ric}\left(g[k]\right)$ replaced by $h(v){\rm Ric}\left(X,Y\right)$ and the right hand side of~\eqref{3oiw2j23ijo32ijo23} multiplied by $h(v)$.  The following two principles will be useful:
\begin{itemize}
	\item If every term in the distribution ${\rm Ric}(X,Y)$ is in fact a locally integrable function, then~\eqref{3oiw2j23ijo32ijo23} holds for ${\rm Ric}\left(X,Y\right)$ weakly everywhere in $\left(\hat{v},u,\theta^A\right)$ coordinates. 
	\item If ${\rm Ric}(X,Y) = P + \partial_{\hat{v}}f$ where $P$ is locally integrable and $f$ is continuous at $\hat{v} = 0$, then the weak derivative $\partial_{\hat{v}}f$ is an $L^1_{\rm loc}$ function and~\eqref{3oiw2j23ijo32ijo23} also continues to hold weakly everywhere in $\left(\hat{v},u,\theta^A\right)$ coordinates. 
\end{itemize}
In view of these observations and Proposition~\ref{thenullstructeqns}, one may then check that for $X,Y \in \{\Omega^{-1}e_4,\Omega e_3,e_A\}$, we will have that~\eqref{3oiw2j23ijo32ijo23} holds weakly for ${\rm Ric}\left(X,Y\right)$ and continues to hold weakly for ${\rm Ric}\left(X,Y\right)$ after multiplication by any function $h(v)$ which satisfies, say, $|h(v)| \lesssim |v|^{1/2}$ (see the formulas for the Ricci tensor in the proof of Theorem 18.1 of~\cite{nakedinterior}; note that the powers of the lapse $\Omega$ multiplying the frame vector fields result in the cancellation of all terms in ${\rm Ric}\left(X,Y\right)$ which would potentially contain $\omega$). For the reader's convenience we reproduce the formulas for the Ricci coefficients in the frame $\left\{\Omega^{-1}e_4,\Omega e_3,e_A\right\}$ from~\cite{nakedinterior}:
\begin{align*}
{\rm Ric}\left(\Omega^{-1}e_4,\Omega^{-1}e_4\right)& = -\Omega^{-1}e_4\left(\Omega^{-1}{\rm tr}\chi\right) - \frac{1}{2}\left({\rm tr}\chi\right)^2 -\left|\hat{\chi}\right|^2,
\\ \nonumber {\rm Ric}\left(\Omega^{-1}e_4,e_A\right) &= -\Omega^{-1}e_4 \eta -\Omega^{-1}\hat{\chi}\cdot\left(\eta-\underline{\eta}\right) +\Omega^{-1}\slashed{\nabla}^B\hat{\chi}_{AB} -\frac{1}{2}\Omega^{-1}\slashed{\nabla}_A{\rm tr}\chi 
\\ \nonumber &\qquad -\frac{1}{2}\Omega^{-1}{\rm tr}\chi \zeta_A + \Omega^{-1}\zeta^B\hat{\chi}_{AB},
\\ \nonumber {\rm Ric}\left(\Omega^{-1}e_4,\Omega e_3\right) &= 4\Omega^{-1}e_4\left(\Omega\underline{\omega}\right) -2\left|\eta\right|^2 +4\eta\cdot\underline{\eta} + \left(\Omega e_3\right)\left(\Omega^{-1}{\rm tr}\chi\right) + \frac{1}{2}\Omega^{-1}{\rm tr}\chi \left(\Omega{\rm tr}\underline{\chi}\right) -2\left(\Omega\underline{\omega}\right)\Omega^{-1}{\rm tr}\chi
\\ \nonumber &\qquad -2\slashed{\rm div}\eta +2\left|\eta\right|^2 + \left(\Omega^{-1}\hat{\chi}\right)\cdot \Omega\hat{\chi},
\\ \nonumber {\rm R} + {\rm Ric}\left(\Omega^{-1}e_4,\Omega e_3\right) &= \Omega e_3\left(\Omega^{-1}{\rm tr}\chi\right) + \frac{1}{4}\Omega^{-1}{\rm tr}\chi\left(\Omega{\rm tr}\underline{\chi}\right) -4 \Omega\underline{\omega}\left(\Omega^{-1}{\rm tr}\chi\right) -2\slashed{\rm div}\eta - 2\left|\eta\right|^2 + K,  
\\ \nonumber \widehat{\rm Ric}\left(e_A,e_B\right) &= \Omega e_3\left(\Omega^{-1}\hat{\chi}\right)_{AB} +\frac{1}{2}\left(\Omega{\rm tr}\underline{\chi}\right)\Omega^{-1}\hat{\chi}_{AB} -2 \left(\Omega\underline{\omega}\right)\Omega^{-1}\hat{\chi}_{AB} - \left(\slashed{\nabla}\hat{\otimes}\eta\right)_{AB} 
\\ \nonumber &\qquad -\left(\eta\hat{\otimes}\eta\right)_{AB} + \frac{1}{2}\left(\Omega^{-1}{\rm tr}\chi\right)\Omega\hat{\underline{\chi}}_{AB},
\\ \nonumber {\rm Ric}\left(\Omega e_3,e_A\right) &= -\frac{v}{u}\mathcal{L}_{\partial_v}\eta_A  + \mathcal{L}_b\eta_A +\eta_A\left(\Omega{\rm tr}\underline{\chi}\right) + 4\slashed{\nabla}_A\left(\Omega\underline{\omega}\right) + \slashed{\nabla}^B\left(\Omega\hat{\underline{\chi}}\right)_{AB} -\frac{1}{2}\slashed{\nabla}_A\left(\Omega{\rm tr}\underline{\chi}\right),
\\ \nonumber {\rm Ric}\left(\Omega e_3,\Omega e_3\right) &= -\left(\Omega e_3\right)\left(\Omega{\rm tr}\underline{\chi}\right) -\frac{1}{2}\left(\Omega {\rm tr}\underline{\chi}\right)^2-4\left(\Omega\underline{\omega}\right)\Omega{\rm tr}\underline{\chi} - \left|\Omega \hat{\underline{\chi}}\right|^2.
\end{align*}

The proof is then concluded by using~\eqref{3oij23jioioj} to write ${\rm Ric}\left(\hat{X},\hat{Y}\right)$ for $\hat{X},\hat{Y} \in \left\{\partial_{\hat{v}},\partial_{\hat{u}},\partial_{\hat{\theta}^A}\right\}$ in terms of ${\rm Ric}\left(X,Y\right)$ for $X,Y \in \{\Omega^{-1}e_4,\Omega e_3,e_A\}$, keeping in mind that
\[\Omega^{-1}e_4 = \Omega^{-2}\partial_v,\qquad \Omega e_3 = \partial_u + \partial_{\theta^A},\]
and using the fact proved in Theorem~\ref{2omo2o492} that
\[\left\vert\left\vert \left(\int_0^v\partial_u\left(\Omega^2\right)\, dv,\int_0^v\partial_{\theta^A}\left(\Omega^2\right)\, dc\right) \right\vert\right\vert_{\tilde{C}^j\left(\mathbb{S}^2_{-1,v}\right)} \lesssim |v|^{1-d},\]
for some $|d| \ll 1$. 
\end{proof}
\appendix
\section{Fefferman--Graham Geometries: Regularity and Gauges}\label{feffgrahamreggauge}
In this appendix, we briefly discuss the specialization of our formal expansions to Fefferman--Graham type geometries.
\begin{theorem}Let $g[k]$ be a metric produced by Theorem~\ref{2omo2o492} and assume that $b^{(\rm seed)} = 0$. Then the following all hold:
\begin{enumerate}
	\item We have $\lim_{v\to 0}b = 0$, $\lim_{v\to 0}\left(\Omega\underline{\omega}\right) = 0$, and $\lim_{v\to 0 }\eta = 0$. The metric corresponds to a Fefferman--Graham geometry in that $K$ is a null vector field along $\mathcal{H}_0$.
	\item\label{123456123456} The double-null self-similar coordinate system defined for $\{v > 0\}$, respectively $\{v < 0\}$, in fact extends to $\{v \geq 0\}$, respectively $\{v \leq 0\}$, where the metric is $C^{\tilde{N}}$ for $\tilde{N} \gtrsim N$. 
	\item\label{alternativeyayayay123} The spacetime corresponding to the region $\{v \geq 0\}$ (or $\{v \leq 0\}$) may be covered by an alternative self-similar double-null system where $\Omega|_{v = 0} = 1$ and $\slashed{g}_{AB}|_{v=0} = u^2\mathring{\slashed{g}}_{AB}$, where $\mathring{\slashed{g}}_{AB}$ denotes a round metric on $\mathbb{S}^2$. 
	\item As noted in Remark~\ref{rmkonfg}, when $b^{(\rm seed)} = 0$, then the choice of $\log\Omega^{(\rm seed,\pm)}$ and $\left(\Omega^{-1}\hat{\chi}\right)^{(\rm seed,\pm)}$ are determined by the choice of functions and symmetric trace-free $(0,2)$-tensors corresponding to $\lim_{v\to 0^{\pm}}\log\Omega$ and $\lim_{v\to 0^{\pm}}\left(\Omega^{-1}\hat{\chi}\right)_{AB}$. The compatibility of these choices determine the regularity over $\{v = 0\}$ of the metric $g$ which is obtained by gluing together the two separate double-null metrics defined in $\{v \leq 0\}$ and $\{v \geq 0\}$. We have the following possibilities:
	\begin{enumerate}
		\item\label{couldbenice} If $\lim_{v\to 0^+}\log\Omega^{(\rm seed,+)} = \lim_{v\to 0^-}\log\Omega^{(\rm seed,-)}$ and $\lim_{v\to 0^+}\left(\Omega^{-1}\hat{\chi}\right)^{(\rm seed,+)} = \lim_{v\to 0^-}\left(\Omega^{-1}\hat{\chi}\right)^{(\rm seed,-)}$, then the metric will be $C^{\tilde{N}}$ across $\{v = 0\}$. The estimate~\eqref{3oij2oi829} for the Ricci tensor will holds everywhere in the spacetime.
		\item\label{couldbenice2} If $\lim_{v\to 0^+}\log\Omega^{(\rm seed,+)} = \lim_{v\to 0^-}\log\Omega^{(\rm seed,-)}$ and $\lim_{v\to 0^+}\left(\Omega^{-1}\hat{\chi}\right)^{(\rm seed,+)} \neq \lim_{v\to 0^-}\left(\Omega^{-1}\hat{\chi}\right)^{(\rm seed,-)}$, then the metric will be $C^{0,1}$ across $\{v = 0\}$. The Einstein tensor expressed in $\left(u,v,\theta^A\right)$ and understood weakly will correspond to a locally integrable function for which~\eqref{3oij2oi829} holds.
		\item If $\lim_{v\to 0^+}\log\Omega^{(\rm seed,+)} \neq \lim_{v\to 0^-}\log\Omega^{(\rm seed,-)}$, then it is not possible to glue the metrics in the $\{v \geq 0\}$ and $\{v \leq 0\}$ regions and produce a continuous metric across $\{v = 0\}$. However, in view of item~\ref{alternativeyayayay123} above, one can pick new self-similar double-null coordinate systems in the regions $\{v < 0\}$ and $\{v > 0\}$ so that  $\lim_{v\to 0^+}\log\Omega^{(\rm seed,+)} = \lim_{v\to 0^-}\log\Omega^{(\rm seed,-)} = 1$. Then, one will be in the situation of item~\ref{couldbenice} or~\ref{couldbenice2} above.
					\end{enumerate}
\end{enumerate}
\end{theorem}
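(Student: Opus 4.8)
The plan is to exploit the fact that the hypothesis $b^{(\rm seed)} = 0$ collapses the entire expansion of Theorem~\ref{2omo2o492} onto pure-scaling transport equations, thereby eliminating every H\"{o}lder loss. First I would record the algebraic consequences of $b^{(\rm seed)} = 0$ at the level of the seed data. Feeding $b^{(\rm seed)} = 0$ into the constraint~\eqref{m2om3o2} forces $(\Omega\underline{\omega})^{(\rm seed)} = 0$, since every surviving term carries a factor of $b$; the defining equation~\eqref{3ioj2oij4oi2} for $\eta^{(0)}$ then degenerates to $2\eta^{(0)} = 0$, so $\eta^{(0)} = 0$ as well. Moreover, with these vanishing the seed transport equations~\eqref{3jiio2} and~\eqref{3oi2oij34iuorhiuehiu} reduce to $v\mathcal{L}_{\partial_v}\log\Omega^{(\rm seed,\pm)} = 0$ and $-\frac{v}{u}\mathcal{L}_{\partial_v}(\Omega^{-1}\hat{\chi})^{(\rm seed,\pm)} = 0$, so $\log\Omega^{(\rm seed,\pm)}$ and $(\Omega^{-1}\hat{\chi})^{(\rm seed,\pm)}$ are independent of $v$ and extend smoothly to $\{v=0\}$. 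Part~1 is then immediate: the estimates~\eqref{3kj2jkhkj23}, with the now-vanishing seed values, give $b,\Omega\underline{\omega},\eta\to 0$ as $v\to 0$; and since $g(K,K)|_{\{v=0\}} = u^2|b|^2_{\slashed{g}}|_{\{v=0\}} = 0$ when $b|_{v=0}=0$, the vector $K$ is null along $\mathcal{H}_0$, i.e.\ this is a Fefferman--Graham geometry.

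I expect the main obstacle to be Part~\ref{123456123456}, the $C^{\tilde{N}}$ extendibility. The point is that the two genuinely degenerate transport equations in Theorem~\ref{2omo2o492}---\eqref{3om2om2o4i} for $\log\Omega^{(k)}$ and~\eqref{k2mm2m2o} for $(\Omega^{-1}\hat{\chi})^{(k-1)}$---are driven by $v\mathcal{L}_{\partial_v}+\mathcal{L}_{b^{(0)}}$ and $v\mathcal{L}_{\partial_v}+\mathscr{L}^{(0)}-4(\Omega\underline{\omega})^{(0)}$, where $\mathscr{L}^{(0)}$ is the operator~\eqref{3o1984982} built from $b^{(0)}$. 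Since $b^{(0)}=b^{(\rm seed)}=0$ we have $\mathscr{L}^{(0)}=0$, and since $(\Omega\underline{\omega})^{(0)}=(\Omega\underline{\omega})^{(\rm seed)}=0$ the zeroth-order potential also drops out; both operators become the pure scaling operator $v\mathcal{L}_{\partial_v}$. I would then invoke Lemma~\ref{laptrans} with $q=0$ and $W=0$: inspecting its proof, the loss constants $D(i)$ arise solely from Gr\"{o}nwall applied to the commutators $[\mathcal{L}_{Z^{(\alpha)}},\mathcal{L}_q]$ and from $W$, so with $q=W=0$ one has $D(i)=0$ for all $i$ and the factors $v^{-D(i)\epsilon}$ disappear. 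An induction on expansion order then shows every term extends to $\{v=0\}$ in $C^{\tilde{N}}$ with clean integer-power decay: the base terms are $v$-independent, and at each step $\eta,\Omega^{-1}{\rm tr}\chi,b,\slashed{g},\Omega\underline{\omega}$ are produced by convergent $\int_0^v$ integrals of bounded $C^{\tilde{N}}$ integrands (which preserve $C^{\tilde{N}}$), while $\log\Omega$ and $\Omega^{-1}\hat{\chi}$ are produced by the now-regular scaling transport. Hence $g$ is $C^{\tilde{N}}$ up to $\mathcal{H}_0$.

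For Part~\ref{alternativeyayayay123} I would use the residual self-similar gauge freedom. Along $\mathcal{H}_0$ the angular metric is $\slashed{g}_{AB}|_{v=0} = u^2\tilde{\slashed{g}}^{(0)}_{AB}(\theta)$ with $\tilde{\slashed{g}}^{(0)}$ within $O(\epsilon)$ of round. The freedoms, all respecting $\mathcal{L}_K g = 2g$, are: (i) an angular diffeomorphism transported off the generators; (ii) a rescaling $\bar{u}=c(\theta)u$ of the incoming null coordinate along $\mathcal{H}_0$, which preserves $K\bar{u}=\bar{u}$ since $K|_{\mathcal{H}_0}=u\partial_u$, extended off $\mathcal{H}_0$ by solving a self-similar eikonal equation exactly as in the constructions of Proposition~\ref{homotheticgaugeexist} and Lemma~\ref{oioioioi12}; and (iii) a transversal rescaling $\bar{v}=u\,G(v/u,\theta)$ with $G(0,\theta)=0$. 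The rescaling $u\mapsto c(\theta)u$ conformally rescales $\tilde{\slashed{g}}^{(0)}\mapsto c(\theta)^{-2}\tilde{\slashed{g}}^{(0)}$, so by the small-data uniformization theorem on $\mathbb{S}^2$ I can choose a small diffeomorphism and $c=e^{\psi}$ with $\psi$ small making the result exactly $\mathring{\slashed{g}}$; the transversal rescaling (iii) then normalizes $\Omega|_{v=0}$ to $1$ without disturbing $\slashed{g}_{AB}|_{v=0}$, since $G(0,\theta)=0$ forces $\partial_{\theta^A}\bar{v}|_{v=0}=0$. Smallness of all three transformations keeps the chart valid and preserves the $C^{\tilde{N}}$ regularity of Part~\ref{123456123456}; the only delicate bookkeeping is checking, as in Lemmas~\ref{oioioioi12} and~\ref{0990239023}, that the eikonal extension keeps the metric in self-similar double-null form with $b|_{v=0}=0$ retained.

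Finally, Part~4 is a gluing analysis. Because the data $(\slashed{g}^{(\rm seed)},b^{(\rm seed)},(\Omega\underline{\omega})^{(\rm seed)},\eta^{(0)})$ is shared, the full expansions of $\slashed{g},b,\eta,\Omega\underline{\omega},\Omega^{-1}{\rm tr}\chi$ agree from both sides at $\{v=0\}$, so only $\log\Omega$ and $\Omega^{-1}\hat{\chi}$ can jump; and since their seed parts are $v$-independent, matching limits already means matching to all orders. In case~\ref{couldbenice}, equality of both limits makes all metric components and derivatives agree to order $\tilde{N}$ across $\{v=0\}$, yielding a $C^{\tilde{N}}$ metric for which~\eqref{3oij2oi829} holds globally. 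In case~\ref{couldbenice2}, a jump only in $\Omega^{-1}\hat{\chi}$ produces a jump in $\partial_v\slashed{g}$ and hence a $C^{0,1}$ metric; the crucial point I would emphasize is that the resulting $\delta$-singularity in $\partial_v^2\slashed{g}$ enters the curvature only through the Weyl component $\widehat{R_{A4B4}}$ via~\eqref{4hatchi} (an impulsive gravitational wave), whereas the trace part ${\rm tr}\chi$---which governs the Ricci component ${\rm Ric}_{44}$ through~\eqref{4trchi}---remains continuous, so the distributional Einstein tensor is a genuine $L^1_{\rm loc}$ function satisfying~\eqref{3oij2oi829}. This no-$\delta$-in-Ricci observation is the conceptual heart of the case. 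The remaining case reduces to these: if the $\log\Omega$ limits differ, one first applies Part~\ref{alternativeyayayay123} on each side to normalize $\Omega|_{v=0}=1$, after which the limits agree and one is back in case~\ref{couldbenice} or~\ref{couldbenice2}.
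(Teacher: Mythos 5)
Your proposal is correct and, for parts 1, 2, and 4, follows essentially the same strategy as the paper: the cascade $b^{(\rm seed)}=0 \Rightarrow (\Omega\underline{\omega})^{(\rm seed)}=0$ via~\eqref{m2om3o2} $\Rightarrow \eta^{(0)}=0$ via~\eqref{3ioj2oij4oi2}, the resulting $v$-independence of the seed quantities, and the observation that the degenerate transport equations for $\log\Omega$ and $\Omega^{-1}\hat{\chi}$ collapse to $v\mathcal{L}_{\partial_v}(\cdot)=H$ with $H$ vanishing at $\{v=0\}$, so no H\"{o}lder loss is introduced. The one genuine divergence is part~\ref{alternativeyayayay123}: the paper routes the gauge normalization through the homothetic gauge --- it notes that $h|_{v=0}=0$ and $(1,\mathcal{L}_{\partial_{\hat{v}}})(P-4\rho)|_{\hat{v}=0}=0$, uses uniformization to replace the reference sphere $\mathcal{S}$ by the section $\{t=-e^{-\varphi}\}$ on which the induced metric is exactly round, and then returns to a double-null gauge via a refined version of Proposition~\ref{solvedegennonlinear} in which $\partial\hat{v}/\partial v|_{v=0}=1$ forces $\Omega|_{v=0}=1$ --- whereas you construct the new optical functions $\bar{u}=c(\theta)u$ and $\bar{v}=uG(v/u,\theta)$ directly in the double-null picture. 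Your route is viable but the ``delicate bookkeeping'' you defer is precisely the content of Propositions~\ref{homotheticgaugeexist}, \ref{makeitadouble}, and~\ref{solvedegennonlinear}: the level sets of $c(\theta)u$ and of $uG(v/u,\theta)$ are not null for $\theta$-dependent $c$ and $G$, so both must be corrected by solving self-similar eikonal equations, which is what the paper's detour through the homothetic gauge packages for free; the paper's version is therefore more economical given the existing machinery, while yours is more self-contained conceptually. In part 4(b), your observation that the jump in $\Omega^{-1}\hat{\chi}$ puts the $\delta$-singularity into the Weyl component $\widehat{R_{A4B4}}$ while ${\rm tr}\chi$ (and hence ${\rm Ric}_{44}$) stays continuous is a clean and correct way to see why the Einstein tensor remains $L^1_{\rm loc}$; the paper reaches the same conclusion by inspecting the explicit Ricci formulas in the frame $\{\Omega^{-1}e_4,\Omega e_3,e_A\}$ (as in Proposition~\ref{ioioioioio39289389}) and noting that no transversal derivative of a discontinuous quantity ever appears there.
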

\begin{proof}(sketch) We briefly indicate how to show that these various assertions are true:
\begin{enumerate}
	\item Since we have $b|_{v=0} = b^{(\rm seed)}$, we have $b|_{v= 0} = 0$. Then~\eqref{m2om3o2} implies that $\left(\Omega\underline{\omega}\right)^{(\rm seed)}$, and hence $\Omega\underline{\omega}|_{v=0}$ vanishes, and~\eqref{3ioj2oij4oi2} implies that $\eta|_{v=0} = 0$.  The fact that $K$ is null along $\mathcal{H}_0$ then follows from~\eqref{3ij290901}.
	\item In view of the fact that $\left(b,\Omega\underline{\omega},\eta\right)|_{v=0} = 0$, the equations which we use in the proof of Theorem~\ref{2omo2o492} to inductively solve for the expansions of $\log\Omega$ and $\Omega^{-1}\hat{\chi}$ now take the form
	\[v\mathcal{L}_{\partial_v}\left(\log\Omega,\Omega^{-1}\hat{\chi}\right) = \left(H_1,H_2\right),\]
	where $H_1$ and $H_2$ vanish at $\{v = 0\}$. Thus the equations for $\log\Omega$ and $\Omega^{-1}\hat{\chi}$ do not introduce any singular behavior and each term in the expansion will extend to $\{v = 0\}$ in a $C^{\tilde{N}}$ fashion.
	\item We find the new double-null coordinate system in a three-step process:
	\begin{enumerate}
	\item We first observe that in the homothetic gauge constructed by Proposition~\ref{nonononononoo} we will have that $h|_{v=0} = 0$ and $\left(1,\mathcal{L}_{\partial_{\hat{v}}}\right)\left(P-4\rho\right)|_{\hat{v} = 0} = 0$. 
	\item We next observe that the particular homothetic gauge produced by Proposition~\ref{homotheticgaugeexist} depends on the choice of the sphere $\mathcal{S}$ in item~\ref{9090901902190} of Proposition~\ref{homotheticgaugeexist} which will eventually correspond to $\mathcal{S}_{-1}$.  Given any choice of $\mathcal{S}_t$, as a consequence of self-similarity and the uniformization theorem, the metric $\slashed{g}_{AB}|_{v=0}$ takes the form $t^2e^{2\varphi}\mathring{\slashed{g}}_{AB}$ for some round metric $\mathring{\slashed{g}}_{AB}$ and $\varphi : \mathbb{S}^2 \to \mathbb{R}$. Now we  replace the original choice of the sphere $\mathcal{S}$ with the sphere corresponding $t = -e^{-\varphi}$; the induced metric on this new sphere will then exactly be the round metric. The corresponding new homothetic gauge will still satisfy that $h|_{v=0} = 0$ and that $\left(1,\mathcal{L}_{\partial_{\hat{v}}}\right)\left(P-4\rho\right)|_{\hat{v} = 0} = 0$.
	\item With this new homothetic gauge in hand, we then turn to Proposition~\ref{makeitadouble} to define a new double-null coordinate system. The key freedom in  the construction of the coordinate system from the proof of Proposition~\ref{makeitadouble} is the choice of the solution $v$ to the equation~\eqref{3ij2oij42}. In turn, this solution is constructed by an application of Proposition~\ref{solvedegennonlinear}. We first observe that if we add the hypothesis to Proposition~\ref{solvedegennonlinear} that $H$ is a $C^N$ function of $\left(\hat{v},\theta^A\right)$ and that $\mathcal{L}_{\partial_{\hat{v}}}H|_{\hat{v}=0} = 0$, then one may take the solution $v$ produced by Proposition~\ref{solvedegennonlinear} to satisfy the properties that $\hat{v}\left(v,\theta^A\right)$ is a $C^{N-2}$ function in the region $\{v \geq 0\}$ and that $\frac{\partial \hat{v}}{\partial v}|_{v=0} = 1$. In view of the facts that $h|_{v=0} = 0$ and that $\left(1,\mathcal{L}_{\partial_{\hat{v}}}\right)\left(P-4\rho\right)|_{\hat{v} = 0} = 0$, we will exactly have the necessary hypothesis to apply the new version of Proposition~\ref{solvedegennonlinear}. The resulting double-null coordinate system will be regular up to $\{v =0 \}$ and will satisfy that $\Omega|_{v = 0} = 1$ and $\slashed{g}|_{v = 0} = u^2\mathring{\slashed{g}}$. 
	\item The various possibilities for the metric's regularity across $\{v = 0\}$ are straightforward to establish, and the fact that when $\log\Omega^{(\rm seed,+)} = \log\Omega^{(\rm seed,-)}$ we have a suitable weak solution is established by reasoning as in the proof of Proposition~\ref{ioioioioio39289389} (this situation is in fact simpler than the setting of Proposition~\ref{ioioioioio39289389}). 
	\end{enumerate}
		
\end{enumerate} 
\end{proof}

\end{document}